\definecolor{UCLAblue}{RGB}{30, 75, 135} 
\definecolor{UCLAgold}{RGB}{255, 232, 0} 
\definecolor{usccardinal}{rgb}{0.6, 0.0, 0.0}
\definecolor{Matlabblue}{rgb}{0,0.447,0.741}
\definecolor{bleudefrance}{rgb}{0.19,0.55, 0.91}
\definecolor{cobalt}{rgb}{0.0, 0.28,0.67}
\definecolor{darkblue}{rgb}{0.0, 0.0,0.55}
\definecolor{darkcerulean}{rgb}{0.03,0.27, 0.49}
\definecolor{darkpowderblue}{rgb}{0.0,0.2, 0.6}
\definecolor{bleudefrance}{rgb}{0.19,0.55, 0.91}
\definecolor{darkblue}{rgb}{0.0, 0.2, 0.6}
\definecolor{black}{rgb}{0.0, 0.0, 0.0}
\newcommand\one{\mathds{1}}
\newcommand\R{\mathbb{R}}
\newcommand\Var{\mathbb{V}\text{ar}}
\renewcommand\implies{\Rightarrow}
\newcommand\E{\mathbb{E}}
\newcommand\diag{\text{diag}}
\newcommand\Diag{\text{Diag}}
\newcommand{\udot}[1]{\underaccent{\dot}{#1}}
\newcommand{\K}{\mathcal{K}}
\newcommand{\kk}{\mathbcal{k}}
\newcommand\grad{\nabla_{z_{k,t}}}
\newtheorem{prop}{Proposition}
\pgfplotsset{compat=newest} 
\pgfplotsset{plot coordinates/math parser=false} 
\pgfplotsset{compat=newest} 
\pgfplotsset{plot coordinates/math parser=false} 
\newlength\figureheight 
\newlength\figurewidth 
\DeclareSymbolFont{tipa}{T3}{cmr}{m}{n}
\DeclareMathAccent{\invbreve}{\mathalpha}{tipa}{16}
\title{\textbf{The Elasticity of Quantitative Investment}\thanks{I thank Michael Weber, Joachim Freyberger, and Andreas Neuhierl for providing excellent data. I thank Lubos Pastor, Stefan Nagel, Ralph Koijen, Niels Gormsen for countless hours of guidance, advice, and help. I also thank Elena Asparouhova, Philip Bond, Brian Boyer, Jonathan Brogaard, Svetlana Bryzgalova, James Choi, Christopher Clayton, Roberto Gomez Cram, Victor DeMiguel, Doug Diamond, Karl Diether, Stefano Giglio, Lars Hansen, John Heaton, Christopher Hennessy, Christian Heyerdahl-Larsen, Craig Holden, Christopher Hrdlicka, Avner Kalay, Bryan Kelly, Steven Kou, Jiacui Li, Song Ma, Toby Moskowitz, Scott Robertson, Tyler Shumway, Stephan Siegel, Yang Song, Léa Stern, Noah Stoffman, Hao Xing, Anthony Zhang, and seminar participants at University of Utah, University of Washington, Yale, Purdue, London Business School, Boston University, Brigham Young University Economics Department, Brigham Young University Finance Department and Indiana University for their helpful comments and input. This paper was earlier circulated with the title "Machine Learning, Quantitative Portfolio Choice, and Mispricing." All remaining errors are my own. }}
\author{Carter Davis\thanks{Indiana University Kelley School of Business. \href{mailto:ckd1@iu.edu}{ckd1@iu.edu}}}
\date{Original: January 3, 2021 $\;\;\;$ Current: September 25, 2024} 
\begin{document}
{\fontfamily{qpl}\selectfont
\doparttoc 
\faketableofcontents 
\onehalfspacing
\setcounter{page}{1}

\newpage
\clearpage
\setcounter{section}{0}
\setcounter{table}{0}

\sloppy
\maketitle
\thispagestyle{empty}

    \begin{abstract}

    What is the demand elasticity of statistical arbitrageurs that invest according to the advice of modern cross-sectional asset pricing models? Thirteen models from the literature exhibit strikingly inelastic demand, in contrast to classical models that rely on statistical arbitrageurs to create elastic market demand for assets. This inelasticity arises from the difficulty of trading against price changes. A quantitative equilibrium model shows that aggregate demand remains inelastic even with these statistical arbitrageurs in the market. 
    
    \vspace{3mm} 
    \bigskip
    
    \noindent 
    \textsc{\textbf{Keywords}}: price elasticity, demand-based asset pricing, machine learning.
  
    \medskip
    \noindent
    \textsc{\textbf{JEL Classification}}: G11, G12, G17.  
  \end{abstract}

\maketitle
\thispagestyle{empty}

\newpage
\doublespacing
\setcounter{page}{1}
\setcounter{equation}{0}

How investors react to price changes---the price elasticity of demand---is a central quantity of interest in financial economics \citep{ky}. It is the arbitrageurs, or in the case of statistical alpha, it is the statistical arbitrageurs that trade aggressively against price fluctuations and generate a highly elastic demand curve for assets in classical models. In fact, the famous Arbitrage Pricing Theory model of \cite{apt} relies only on these statistical arbitrageurs to generate elastic demand, absorbing the rest of the flows of the market and generating the key pricing predictions.\footnote{See, for example, the \cite{huberman1982simple} version of the APT, and more generally, those reviewed in Section 4 of \cite{huberman2005arbitrage}.} The literature has uncovered many limits to arbitrage---frictions that prevent arbitrageurs from trading against price deviations \citep{limitsarb}. These limits, by keeping statistical arbitrageurs out of their classical flow-absorbing position in the market, can generate inelastic demand for the entire market. Elasticity is the inverse of price impact from flows and supply shocks, meaning that inelastic demand corresponds to large price impacts of trading flows \citep{GK}. Thus, with inelastic demand and concomitant high price impacts, flows induced from behavioral biases, sentiment, and other frictions can have significant pricing impacts. 

In this paper, I quantify the elasticity of statistical arbitrage strategies using portfolio choice methods from the academic literature. Statistical arbitrageurs are defined here as investors who aim to maximize expected return per unit of volatility, possibly with or without hedging out systematic risk. In this study, I examine a broad array of such strategies, and the results indicate that these specific methods consistently result in inelastic demand. I consider thirteen different statistical portfolio choice models from the literature \citep[e.g.,][]{ff3, brandt, kelly, shrinking} and calculate not only the portfolio weights from these models, but also how these models react to changing prices. 

The elasticity of demand matters economically. To illustrate, consider a simple case where a stock is priced efficiently based on its fundamentals. If an inflow equal to 10\% of shares outstanding occurs for non-fundamental reasons, the extent of overpricing depends critically on the demand elasticity. Calibrations of classic models result in high elasticities, e.g., 6,000 in \cite{petajisto} and the cited 5,000 in \cite{GK}. I provide a calibration of classic model elasticity and find a value-weighted average demand elasticity of about 900. With a demand elasticity of 900, the stock would be overpriced by only 0.01\% ($\approx (1 / 900) \times 10\%$) due to this inflow. However, if the elasticity is in the range of 0.3 and 1.6, which is the approximate literature range as surveyed by \cite{GK}, the stock could be overpriced by anywhere from 6\% ($\approx (1 / 1.6) \times 10\%$) to 33\% ($\approx (1 / 0.3) \times 10\%$). This example highlights the economic significance of elasticity: smaller elasticities lead to larger price distortions from trading flows. 

The extensive array of factor models in the academic literature---including those based on modern machine learning---provides a large set of statistical arbitrageur demand functions with measurable elasticities. While standard factor models that price the cross-section\footnote{I use the typical definition of a factor model that prices the cross-section: all test assets have zero alpha relative to this model.} are usually viewed through the lens of describing returns across assets, they can also be seen from a second perspective: as investment advice for Sharpe-maximizing investors (i.e., mean-variance investors).\footnote{A factor model prices the cross-section if and only if it delivers the highest possible Sharpe ratio. See the "agnostic interpretation" of factor models in \cite{famaprize}, based on the logic in \cite{hubermankandel}.} Although the first perspective is more common in academic papers, the second is well-established. A rich literature explores factor models through this portfolio optimization lens. For example, \cite{pastorjmp}, \cite{brandt}, and \cite{demiguel} study methods for computing optimal factor weights from a portfolio choice perspective. More recently, machine learning-based factor models have been developed to create statistical arbitrage strategies designed to maximize the Sharpe ratio out-of-sample using a wide range of input variables \citep[e.g.,][]{gu, forest}. In this paper, I measure the price elasticity of many of these statistical arbitrageur demand functions.

Measuring demand elasticity requires assessing how price changes affect a large set of input variables and how these variables influence demand. The asset pricing literature has identified a wide range of asset characteristics, referred to as predictors here, which serve as input variables for statistical arbitrageur models. The large set of portfolios formed based on these predictors is often known as the factor zoo and has been discussed at length in the literature \citep{cochrane, harvey, taming}. Predictors fall into two categories: those that are a direct function of price (e.g., price ratios like book-to-market and earnings-to-market) and those independent of price \citep[e.g.,][profitability and investment are purely accounting-based]{famafrench15}. To calculate a statistical arbitrageur’s elasticity, one must assess how a 1\% price change affects predictors, and subsequently how these changes influence portfolio weights. I use the 62 predictors from \cite{weber}, though some models, such as \cite{famafrench15} and \cite{zhang}, rely on smaller subsets. While portfolio weights are often formed using discontinuous functions of predictors \citep[e.g.,][]{famafrench15}, I use differentiable approximations of portfolio weight functions as in \cite{kelly} and \cite{shrinking} to calculate price responsiveness with derivatives.

Applying this methodology, the main finding of this paper is that value-weighted statistical arbitrageur demand elasticities are typically below 10. Pure-alpha trading strategies exhibit similarly inelastic demand. Estimates of systematic elasticities---representing demand responses to correlated price changes---are even more inelastic, ranging from approximately $-1$ to 1. Finally, I introduce a simple demand system asset pricing model that incorporates the demand functions of these statistical arbitrageurs and demonstrate that aggregate demand elasticity for individual stocks changes very little when these model-based arbitrageurs manage capital counterfactually.

The reason for the inelastic demand of these methods is straightforward: trading against stock-level mispricing is risky and difficult. While this concept is known, quantifying it using methods from the literature is novel. Statistical arbitrage strategies invest based on historical performance and, unlike many classical models, have fewer exogenous, fixed, and known components; often, nearly all aspects of their investment rules are modeled with uncertainty \citep[e.g.,][]{shrinking}. The inelastic demand is not due to a lack of alpha—these statistical arbitrageur models actually generate relatively high alpha, with much of the elastic response stemming from the pure-alpha component. However, even with perceived alpha, significant uncertainties about hedging strategies and future price movements make trading challenging, resulting in a relatively inelastic demand response. In an elastic-APT world, a price drop and increase in alpha would prompt aggressive trading as if it were a near-arbitrage opportunity. In reality, however, these situations involve substantial risk and uncertainty, leading arbitrageurs to respond more cautiously. Consequently, these models provide an upper bound on the extent to which statistical arbitrageurs can counter price movements, even before accounting for common market frictions \citep[e.g.,][]{limitsarb}.

Counterfactual experiments reveal that statistical arbitrageurs have little effect on the elasticity of aggregate stock-level demand and fail to arbitrage away alpha, consistent with observed alpha, inelastic demand, and the lack of hyperelastic investors \citep{koijen2023investors}. While model-based statistical arbitrageurs may differ from their real-world counterparts, this paper highlights the significant gap between classic models predicting highly elastic demand and portfolio choice models that produce inelastic demand.

The paper is laid out as follows. Section \ref{sec:data} describes the data, data normalization, and notation. Section \ref{sec:elas_models} gives a calibration of demand elasticity in classic models. Section \ref{sec:quant_elas} describes the statistical arbitrage investment strategies, their out-of-sample performance, and their elasticities. Section \ref{sec:variations} presents various elasticity results, including pure-alpha strategy elasticities and systematic elasticities. Section \ref{sec:model} describes a simple demand system asset pricing model that can accommodate these statistical arbitrageur models, and shows that aggregate demand for individual assets remains inelastic when these statistical arbitrageurs manage capital counterfactually. In Section \ref{sec:conclusion}, I conclude. 

\section{Data and Notation} \label{sec:data}

In this section, I describe the data, data normalization, and notation. Below I describe the returns, return predictors, and holdings data. 


I use quarterly institutional long-only holdings data from the SEC 13F filings dataset. See \cite{ky} (KY hereafter) for more details about these data. The sample period is from 1984 to 2020. The 13F holdings often do not account for all shares outstanding, so following KY, I create a "household" investor that holds the residual shares.

I use the monthly equity returns and stock predictors data from \cite{weber}, which contains 62 predictors (i.e., asset characteristics). A thorough description of the variables can be found in \cite{weber}. These data were extended from 2014 to the end of 2020 by \cite{babayara}. The sample starts in July 1967. The 13F filing data are merged with the predictors data. These data are merged with the CRSP monthly returns data, which contain changes in shares outstanding, returns, and dividends. Finally, I merge these data with the standard market portfolio excess return and risk-free rate data from Kenneth French's website. The risk-free rate is the standard one-month Treasury bill rate. 

\subsection{Notation and Data Preparation}

Let $P_{i,t}$ denote the market equity of stock $i$ at time $t$, while $p_{i,t}$ denotes the share price of the asset. Therefore, $P_{i,t} = S_{i,t} p_{i,t}$, where $S_{i,t}$ denotes the number of shares outstanding. Let $N_t$ be the number of stocks in a given period, though the subscript is sometimes dropped for simplicity. 

I take the raw predictor $k$ for stock $i$ at time $t$, denoted $x_{i,k,t}$, and normalize it as shown in Figure \ref{fig:demand}. For example, $x_{i,k,t}$ could be the book-to-market or profitability ratio in \cite{weber}. As \cite{shrinking} and \cite{kelly} discuss, it is important to normalize the predictor into a well-behaved predictor, $z_{i,k,t}$, for statistical arbitrageurs. 

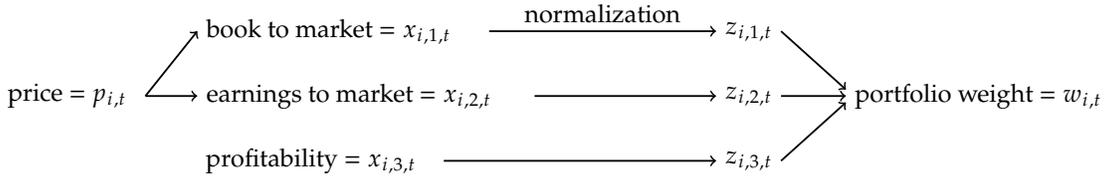
\begin{figure}[!t] \centering
    \resizebox{0.9\textwidth}{!}{\begin{tikzpicture}
\node[text centered] at (0,0) {price $= p_{i,t}$};
\draw[->,thick] (1.2, 0) -- (2, 1) node [right] {book to market $= x_{i,1,t}$};
\draw[->,thick] (1.2, 0) -- (2, 0) node [right] {earnings to market $= x_{i,2,t}$};
\node[align=left] at (2,-1) [right] {profitability $= x_{i,3,t}$};
\draw[->,thick] (6.5, 1) -- (10, 1) node [right] {$z_{i,1,t}$};
\node[text centered, above] at (8.25, 1) {normalization};
\draw[->,thick] (7.2, 0) -- (10, 0) node [right] {$z_{i,2,t}$};
\draw[->,thick] (5.8, -1) -- (10, -1) node [right] {$z_{i,3,t}$};
\draw[->,thick] (11, 1) -- (12, 0.1);
\draw[->,thick] (11, 0) -- (12, 0) node [right] {portfolio weight $= w_{i,t}$};
\draw[->,thick] (11, -1) -- (12, -0.1);
\end{tikzpicture}}
    \vspace{4mm}
    \caption{\textbf{Demand Mapping.} Example of demand for statistical arbitrageurs as a function of prices and predictors. Stock prices ($p_{i,t}$) use three example predictors: (1) book-to-market ratio, (2) earnings-to-market ratio, and (3) profitability ratio. Profitability (per \cite{famafrench15}) is not a function of prices. Predictors map to normalized asset predictors ($z_{i,k,t}$), with portfolio weights based on these. }
    \label{fig:demand}
\end{figure}




Transforming the raw predictors, $x_{i,k,t}$, into the normalized predictors, $\udot{z}_{i,k,t}$, allows me to form predictor-weighted portfolios. The dot underneath $z_{i,k,t}$ denotes this specific normalization discussed below. As \cite{shrinking} and \cite{kelly} discuss, using portfolios weighted with these cross-sectional percentiles produces clean long-short portfolios with a long tradition in asset pricing, while using portfolios based directly on raw predictors, $x_{i,k,t}$, is full of noise. 



Let $x_{k,t}$ be the $N$-dimensional vector of the raw predictor $k$ known at time $t$ that is filled with the raw predictor $x_{i,k,t}$ values. Let $\udot{z}_{k,t}$ and $\udot{z}_{i,k,t}$ be the corresponding normalized predictors. The standard normalization, used for example in \cite{shrinking} and \cite{kelly}, is to transform the raw predictors to be:
\begin{equation} \label{eq:z_transform}
    \udot{z}_{i,k,t} = \frac{\text{rank} (x_{k,t})_i-1}{N-1} - 0.5,
\end{equation}
where $\text{rank}(\cdot)$ is a function that reads in a vector and outputs a vector of values $1$ through $N$ corresponding to the relative rank of the input values ($N$ is assigned the largest value, $1$ the smallest), and $\text{rank}(\cdot)_i$ is the $i^{th}$ element of this output vector. This transforms predictors to be in the range of $[-0.5, 0.5]$. For example, if an asset has a book-to-market ratio in the twentieth percentile in the cross-section of assets, then the resulting normalized value would be $-0.3$ ($= 0.2 - 0.5$).

This is a discontinuous function of the predictor $x_{i,k,t}$. If $x_{i,k,t}$ is the book-to-market ratio, for example, then any demand function that is a function of $x_{i,k,t}$ is then a discontinuous function of the price. This complicates elasticity estimations. In other words, every step between prices and portfolio weights in Figure \ref{fig:demand} should be continuous, including having $\udot{z}_{i,k,t}$ be a continuous function of $x_{i,k,t}$. This allows the calculation of $\partial \udot{z}_{i,k,t} / \partial \log (p_{i,t})$ for each predictor (see Appendix \ref{subsec:kernel} for the closed-form formula for these derivatives). Thus, I create a continuous version of this normalization, defined as
\begin{equation} \label{eq:z_weights}
    \udot{z}_{i,k,t} = \K(\text{ArcSinh} (x_{k,t}))_i - 0.5,
\end{equation}
where $\text{ArcSinh} (\cdot)$ is the standard inverse hyperbolic sine function and $\K(\cdot)$ is the cumulative distribution function (cdf) of a standard kernel density function. Similar to above, $\K(\cdot)_i$ is the $i^{th}$ element of $\K(\cdot)$. I use the standard well-known normal cdf kernel function, described in detail in Appendix \ref{subsec:kernel}. 

This transformation above yields very similar results to its discontinuous version, but this function is of course continuous. The $\text{ArcSinh} (\cdot)$ function is used because it is similar to the log function in that it shrinks extreme values towards zero, but $\text{ArcSinh} (\cdot)$ is defined at zero and negative input values.\footnote{In fact, the approximation $\text{ArcSinh} (x) \approx \log(x) + \log(2)$ holds for large values of $x$.} Since many predictors can have negative values, this is necessary. The final kernel minus one half function is necessary to transform the value into the $[-0.5, 0.5]$ interval. 

\subsection{Forming Linear Predictor-Based Factor Portfolios}

Some models use linear predictor-weighted portfolios, defined here. Following \cite{shrinking}, I rescale these predictors in every period to avoid leverage in the predictor-weighted portfolios fluctuating across time with the number of assets available. Note that the sum of the absolute values of the predictors in any given period is about $N / 4$ (i.e., $\sum_{i=1}^N | \udot{z}_{i,k,t} | \approx N / 4$). For any $x$, I define $\breve x \equiv (4 x) / N$ and $\invbreve x \equiv (N x) / 4$. Thus, I can write $\sum_{i=1}^N | \breve{\udot{z}}_{i,k,t} | \approx 1$. This allows me to define predictor-weighted portfolio returns. Let $F_{k,t+1}^c$ be the zero-cost portfolio return associated with predictor $k$:\footnote{Note that these are zero-cost portfolios even if the weights were all positive because $r_{i,t+1}$ are excess returns, not just returns. However, since the weights, before being scaled down, fall evenly between $-0.5$ and $0.5$ (i.e., weights sum to zero), these are classic long-short predictor weighted portfolios (see \cite{shrinking}).} 
\begin{equation} \label{eq:port_returns}
F_{k,t+1}^c = \sum_{i=1}^N \breve{\udot{z}}_{i,k,t} r_{i,t+1}.
\end{equation}

Factor models, such as \cite{ff3}, require a market portfolio return as well. Let $A_t^j$ be the AUM of institution $j$, and $A_t \equiv \sum_j A_t^j$.\footnote{Note that by this definition, $A_t = \sum_{i} P_{i,t}$. As in KY and discussed below, a residual institution is added so that institutions collectively hold all assets in the data. } I let the first predictor, $k = 1$, be a market portfolio return weight: $\udot{z}_{i,1,t} \equiv \invbreve P_{i,t} / A_t$. Note that this is defined in terms of $\invbreve P_{i,t}$ and not $P_{i,t}$, so that $\sum_{i=1}^N | \breve{\udot{z}}_{i,1,t} | \approx 1$, which matches the other portfolios.

\subsection{Two Types of Predictors}

Following KY, I separate predictors into two categories: (1) those that are a function of price and (2) those where price is not needed or used to calculate the predictor. I follow KY by assuming this latter type of predictors are exogenous to prices, such as historical beta, book values, asset values, investment, and profitability. 
Fifteen of the 62 predictors are a function of price. In Appendix \ref{subsec:endogenous}, I describe the classification of the predictors into these two types in more detail. 

Some predictors from the \cite{weber} data use historical prices, instead of contemporaneous prices, to create price ratio variables. For example, the book-to-market ratio is computed with stale market capitalization values in order to better match the period when the accounting book values were released. I follow KY by instead using the most recent prices to calculate these standard valuation ratios. Using historical prices instead of current prices, of course, would make the statistical arbitrageurs in this paper even less price elastic. However, this seems to violate the spirit of standard valuation ratios. Basing investment decisions on valuation ratios should make investors sensitive to prices, which is why I use contemporaneous prices. As is standard with the CRSP data, the monthly price is the closing price on the last trading day of the month. 

The most difficult predictors to sort into these two categories are the maximum daily return over the previous month and momentum. The maximum daily return is sometimes a function of today's price if the maximum daily return was the most recent day. However, if this is classified into the category of being a function of price, then this is a discontinuous function. Since this is a historical daily flow and not a stock, this classification has little effect on the measured elasticity. Similarly, short-term reversals are classified as being a function of price, but momentum is not. Momentum is defined as the return over the past year, excluding the most recent month. This means it is not a function of price, while short-term reversals are a function of price, given that it is the return over the previous month. If these variables are dropped from the dataset, the measured elasticity results are similar. I keep these variables in the data because it is simple, does not significantly change the elasticity results, and keeps the set of predictors the same as \cite{weber}. 

\subsection{Price Sensitivity of Individual Factor Investment}

Table \ref{tab:gradz} shows the 15 endogenous predictors, and importantly the median $\partial \udot{z}_{i,k,t} / \partial \log (p_{i,t})$ derivative values for each predictor $k$ across assets and months. Some predictors with the price in the numerator, like Tobin's q, tend to have a positive derivative. Other predictors, like the book-to-market ratio, have a price term in the denominator; thus, when the price increases, the predictor tends to decline (negative derivative). The 47 exogenous predictors, of course, have zero derivatives. 


\begin{table}[!t] \centering
    \resizebox{1.\textwidth}{!}{\begin{tabular}{@{\extracolsep{4pt}}lcc@{\hskip 7.5mm}lcc@{\hskip 7.5mm}lc}
\toprule
 & $\frac{\partial \udot{z}_{i,k,t}}{\partial \log (p_{i,t})}$
 & & &  $\frac{\partial \udot{z}_{i,k,t}}{\partial \log (p_{i,t})}$
 & & & $\frac{\partial \udot{z}_{i,k,t}}{\partial \log (p_{i,t})}$ \\
\\[-1.8ex]
\cline{2-2} \cline{5-5} \cline{8-8}\\[-1.8ex]
Covariates & Gradient & & Covariates & Gradient & & Covariates & Gradient \\
\hline \\[-1.8ex]
a2me &  $-0.273$  & & e2p &  $-0.248$  & & o2p &  $-0.125$ \\
 & ($-0.422$,  $-0.082$) & &  & ($-0.534$,  $0.056$) & &  & ($-0.275$,  $-0.000$)\\
\\[-1.8ex]
beme &  $-0.374$  & & ldp &  $<10^{-10}$  & & q &  $0.321$ \\
 & ($-0.551$,  $-0.083$) & &  & ($-0.296$,  $-0.000$) & &  & ($0.076$,  $0.833$)\\
\\[-1.8ex]
beme\_adj &  $-0.343$  & & lme &  $0.147$  & & rel\_to\_high\_price &  $1.28$ \\
 & ($-0.691$,  $-0.064$) & &  & ($0.057$,  $0.184$) & &  & ($0.287$,  $3.070$)\\
\\[-1.8ex]
cum\_return\_1\_0 &  $2.93$  & & lme\_adj &  $0.037$  & & roc &  $0.161$ \\
 & ($0.504$,  $5.114$) & &  & ($0.003$,  $0.192$) & &  & ($0.019$,  $0.662$)\\
\\[-1.8ex]
debt2p &  $-0.174$  & & nop &  $-0.061$  & & s2p &  $-0.274$ \\
 & ($-0.258$,  $-0.000$) & &  & ($-0.248$,  $0.072$) & &  & ($-0.389$,  $-0.062$)\\
\hline \\[-1.8ex]
 Obs. &  1,869,963 & & & 1,869,963 & & & 1,869,963 \\
 Months & 703 & & & 703 & & & 703 \\
\bottomrule
\end{tabular}}
    \vspace{4mm}
    \caption{\textbf{Predictor Price Sensitivity.} Median (across assets and months) of $\partial \udot{z}_{i,k,t} / \partial \log (p_{i,t})$, showing how predictors change with a 1\% increase in asset prices. 10$^{th}$ and 90$^{th}$ percentiles are in parentheses (not confidence intervals). Variable descriptions from \cite{weber}; briefly: a2me (asset to market equity), beme (book-to-market), beme\_adj (book-to-market minus industry mean), cum\_return\_1\_0 (short-term reversals), debt2p (debt to price), e2p (earnings to price), ldp (dividend to price), lme (size), lme\_adj (size minus industry size), nop (net payout to price), o2p (payout to price), q (Tobin’s q), rel\_to\_high\_price (52-week high ratio), roc (rents over cash), s2p (sales to price).}
    \label{tab:gradz}
\end{table}

\section{Theoretical Elasticity of Statistical Arbitrageurs} \label{sec:elas_models}

To provide context for the empirical results below, I provide a calibration of the theoretical elasticity of standard statistical arbitrage investors in this section.

Let $a_t$ be the assets under management (AUM) if the statistical arbitrageur is a fund, or the wealth if the arbitrageur is an individual investor. Let $p_{i,t}$ denote the share price of asset $i$, and $w_{i,t}$ denote the arbitrageur's portfolio weight for the asset. Then $s_{i,t} = a_t w_{i,t} / p_{i,t}$ denotes the number of shares that the arbitrageur demands. The elasticity is the percentage change in quantity demanded, $s_{i,t}$, due to a 1\% price change, ceteris paribus. An elasticity of 3.5 implies that a 1\% price increase would lead to a 3.5\% decrease in shares demanded. The elasticity is defined as:
\begin{equation} \label{eq:elasticity}
    \eta_{i,t} \equiv - \frac{\partial \log (s_{i,t})}{\partial \log (p_{i,t})} = 1 - \underbrace{\frac{\partial \log(w_{i,t})}{\partial \log (p_{i,t})}}_{\text{weight effects}}
    - \underbrace{\frac{\partial \log(a_t)}{\partial \log (p_{i,t})}}_{\text{wealth effects}}.
\end{equation}
It turns out that wealth effect tends to be approximately $w_{i,t-1}$, which tends to be quite small for the diversified models in this paper and can well-approximated with zero.\footnote{Appendix \ref{app:wealth} derives these wealth effects and shows that these wealth effects are small empirically.} With zero wealth effects, we can write:
\begin{equation} \label{eq:simple_elasticity}
     \eta_{i,t} = 1 - \frac{1}{w_{i,t}} \left( \frac{\partial w_{i,t}}{\partial \log (p_{i,t})} \right).
\end{equation}

As KY point out, value-weighted index funds are designed to avoid rebalancing for price fluctuations, and are thus perfectly inelastic. This can be seen using equation (\ref{eq:elasticity}), since a 1\% price rise leads to a 1\% increase in the asset's portfolio weights. Thus, $\partial \log (w_i) / \partial \log(p_i) = 1$, and $\eta_i = 0$ ($=1-1$).

\subsection{Elasticity of Classic Models: A Simple Calibration Exercise} \label{subsec:calibration}

Consider an investor with CARA utility that chooses a vector of portfolio weights $w_t$ at time $t$ of the $N$ risky assets. The investor chooses $w_t$ to maximize utility:
\begin{equation} \label{eq:pass_through_estimate}
    \E_t \left[ -\exp \left( -\gamma a_t (w_t' r_{t+1} + R_{f,t}) \right) \right],
\end{equation}
where $\gamma$ is the CARA risk-aversion coefficient. The classic demand, with multivariate normal returns, is then:
\begin{equation} \label{eq:CARA}
    w_t = \frac{1}{\gamma a_t} \Sigma_t^{-1} \mu_t,
\end{equation}
where $\Sigma_t = \Var_t (r_{t+1})$ is the $N \times N$ covariance matrix, $\mu_t = \E_t [r_{t+1}]$ is the vector of expected excess returns. 

Consider the following regression of the excess return of asset $i$ on the excess returns of the $N-1$ other assets, denoted by $r_{-i,t+1}$:
\begin{equation} \label{eq:regression_stevens}
r_{i,t+1} = \beta^0_{i,t} + \beta_{i,t}' r_{-i,t+1} + \epsilon_{i,t+1},
\end{equation}
where $\beta^0_{i,t}$ is the scalar intercept, $\beta_{i,t}$ is the $(N-1)$-dimensional vector of slope coefficients, and $\epsilon_{i,t+1}$ is the zero-mean residual with conditional variance $\sigma_{i,t,\epsilon}^2$. Taking the conditional expectation of both sides yields:
\begin{equation*}
\mu_{i,t} = \beta^0_{i,t} + \beta_{i,t}' \mu_{-i,t}.
\end{equation*}

As \cite{stevens1998inverse} showed, we can rewrite the demand from equation (\ref{eq:CARA}) for asset $i$ as:
\begin{equation} \label{eq:cara_stevens}
    w_{i,t} = \frac{1}{\gamma a_t} \left( \frac{\beta^0_{i,t}}{\sigma_{i,t,\epsilon}^2} \right)
    = \frac{1}{\gamma a_t} \left( \frac{\mu_{i,t} - \beta_{i,t}' \mu_{-i,t}}{\sigma_{i,t,\epsilon}^2} \right).
\end{equation}

I start by assuming an exogenous covariance matrix with price changes affecting only $\mu_{i,t}$. This assumption will be relaxed later. Under this assumption, we have:
\begin{equation} \label{eq:eta_mu}
    \eta_{i,t} = 1 + \frac{1}{w_{i,t}} \frac{\partial w_{i,t}}{\partial \mu_{i,t}} \left( -\frac{\partial \mu_{i,t}}{\partial \log (p_{i,t})} \right)
    = 1 + \frac{1}{w_{i,t}} \left( \frac{1}{\gamma a_t \sigma_{i,t,\epsilon}^2} \right) \left( -\frac{\partial \mu_{i,t}}{\partial \log (p_{i,t})} \right),
\end{equation}
where the first equation uses the chain rule and the second uses equation (\ref{eq:cara_stevens}) to calculate $\partial w_{i,t} / \partial \mu_{i,t}$. I use this key equation to calibrate the elasticity of a statistical arbitrageur, and discuss the key inputs below. 

First, I discuss the last term, $-\partial \mu_{i,t} / \partial \log (p_{i,t})$. This measures how a 1\% change in price affects the expected return. To calibrate this, I run a panel regression of excess returns on an intercept and predictor variables $\udot{z}_{i,k,t}$. The resulting derivative is then given by:
\begin{equation*}
    -\frac{\partial \mu_{i,t} }{ \partial \log (p_{i,t})}
    = -\sum_k b_k \frac{\partial \udot{z}_{i,k,t}}{\partial \log (p_{i,t})}, 
\end{equation*}
where $b_k$ are the slope coefficients of the regression. The standard error is calculated with the delta method from the coefficient standard errors, which are double clustered at the stock and month level. The estimate of the average derivative is $0.048$, with a standard error of $0.008$. I simply use 0.048 in the calibration. This means that a price drop of 1\% is predicted to increase expected returns by 4.8 basis points (bps) the next month. This is consistent with \cite{whyinelastic}, who discuss this derivative and various estimates of it in detail. 

Second, we need an estimate for residual variance, $\sigma_{i,t,\epsilon}^2$. To do this, I fit a simple model for the covariance matrix, $\Sigma_t = \beta_t \sigma_{M}^2 \beta_t' + \sigma_I^2 I$, where $\beta_t$ is the vector of ex ante market betas in the \cite{weber} data, $\sigma_{M}^2$ is the variance of the market, $\sigma_I^2$ is the average residual variance, and $I$ is the identity matrix. I can use the \cite{stevens1998inverse} formula to solve for the residual volatility term $\sigma_{i,t,\epsilon}$. This yields an average residual volatility of 17.2\% per month. Note that if a factor model with even more factors is used for the covariance matrix, this further decreases the residual volatility and increases the elasticity. Thus, this single factor covariance structure produces conservative elasticity estimates.

Finally, I simply use market weights for $w_{i,t}$. It should be noted that it does not matter what value of $\gamma a_t$ is chosen, because whatever $\gamma a_t$ is chosen is completely offset by the portfolio weights $w_{i,t}$.\footnote{By equation (\ref{eq:CARA}), the sum of portfolio weights is $\iota' w_t = (\iota' \Sigma_t^{-1} \mu_t) / (\gamma a_t)$, where $\iota$ is a vector of ones. If we use market portfolio weights, then $\mu_t$ must be proportional to $\Sigma_t w_t$. I set the average $\mu_t$ term to average excess return.} For example, if $\gamma a_t$ is doubled, then risky asset portfolio weights $w_{i,t}$ are halved, and the resulting elasticity is unchanged. Without loss of generality, I use the $\gamma a_t$ value that is consistent with risky asset weights summing to one, with a median $\gamma a_t$ value of 6.08.\footnote{The average is 6.79.} We can plug these values into equation (\ref{eq:eta_mu}) to calculate:
\begin{equation*}
    \eta_{i,t} \approx 1 + \left( \frac{1}{6.08 (0.172)^2} \right) \frac{1}{w_{i,t}} \left( 0.048 \right) \approx 1 + \frac{0.27}{w_{i,t}}
\end{equation*}

This calibration shows that investor demand is relatively elastic for most assets. The average number of stocks per period in the main sample, which corresponds to the out-of-sample period of the 360 months from February 1990 to January 2020 for the results shown below, is 3,481. A stock with an average portfolio weight of $1 / $3,481 would have an elasticity of about 900. I show in Figure \ref{fig:calibration_and_weights} a binned scatter plot of elasticities in blue as a function of portfolio weights in Panel A, and a histogram of these portfolio weights in Panel B. The elasticities range from 10 for the largest stocks to larger than 1,000,000 for the smallest stocks. 

\begin{figure}[!t]
    \centering
    \begin{minipage}{0.49\textwidth}
        \centering
        \textbf{Panel A: Calibration Elasticity}
        \includegraphics[width=\linewidth,trim={0 0 0 1cm},clip]{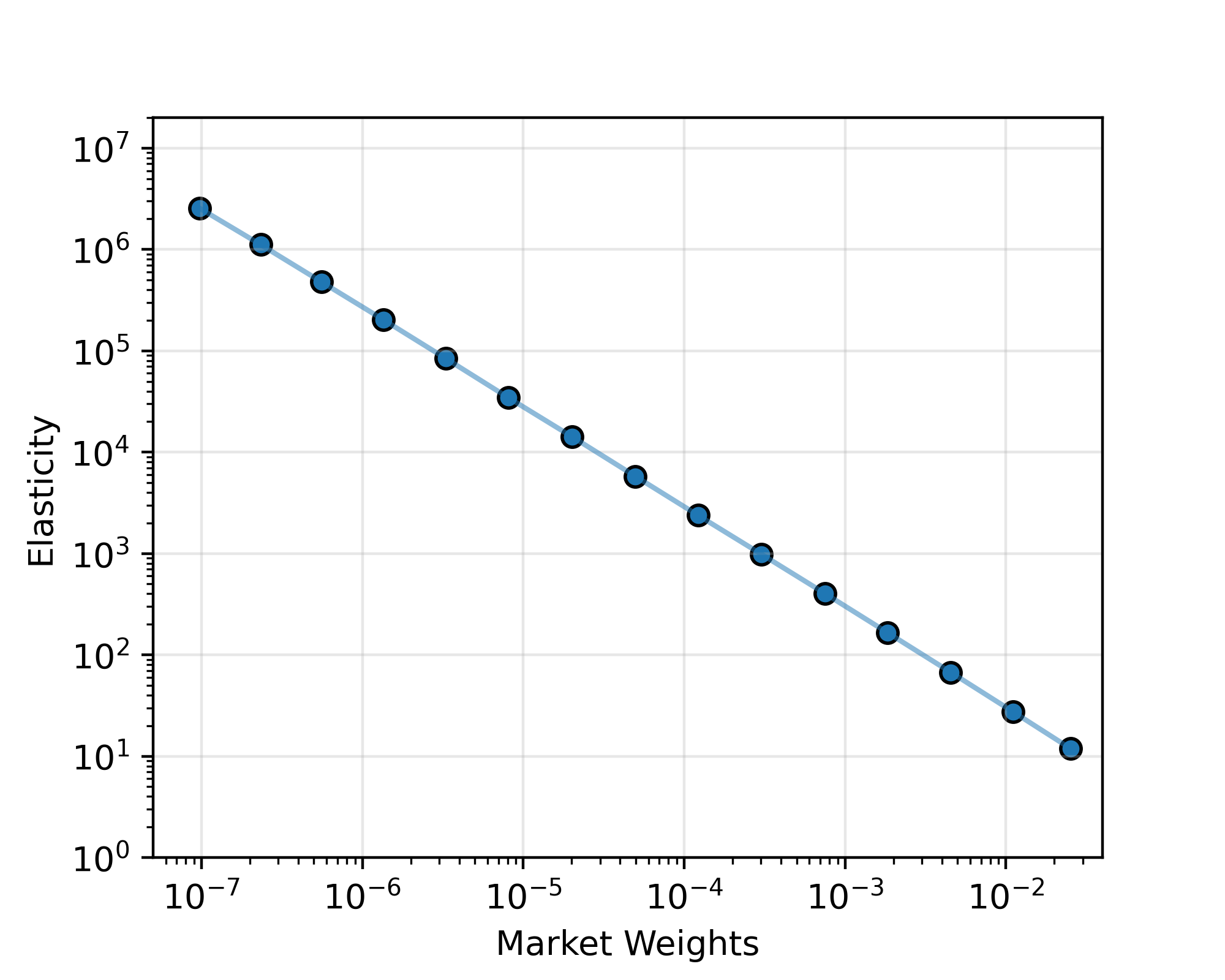}
    \end{minipage}\hfill
    \begin{minipage}{0.49\textwidth}
        \centering
        \textbf{Panel B: Histogram of Market Weights}
        \includegraphics[width=\linewidth,trim={0 0 0 1cm},clip]{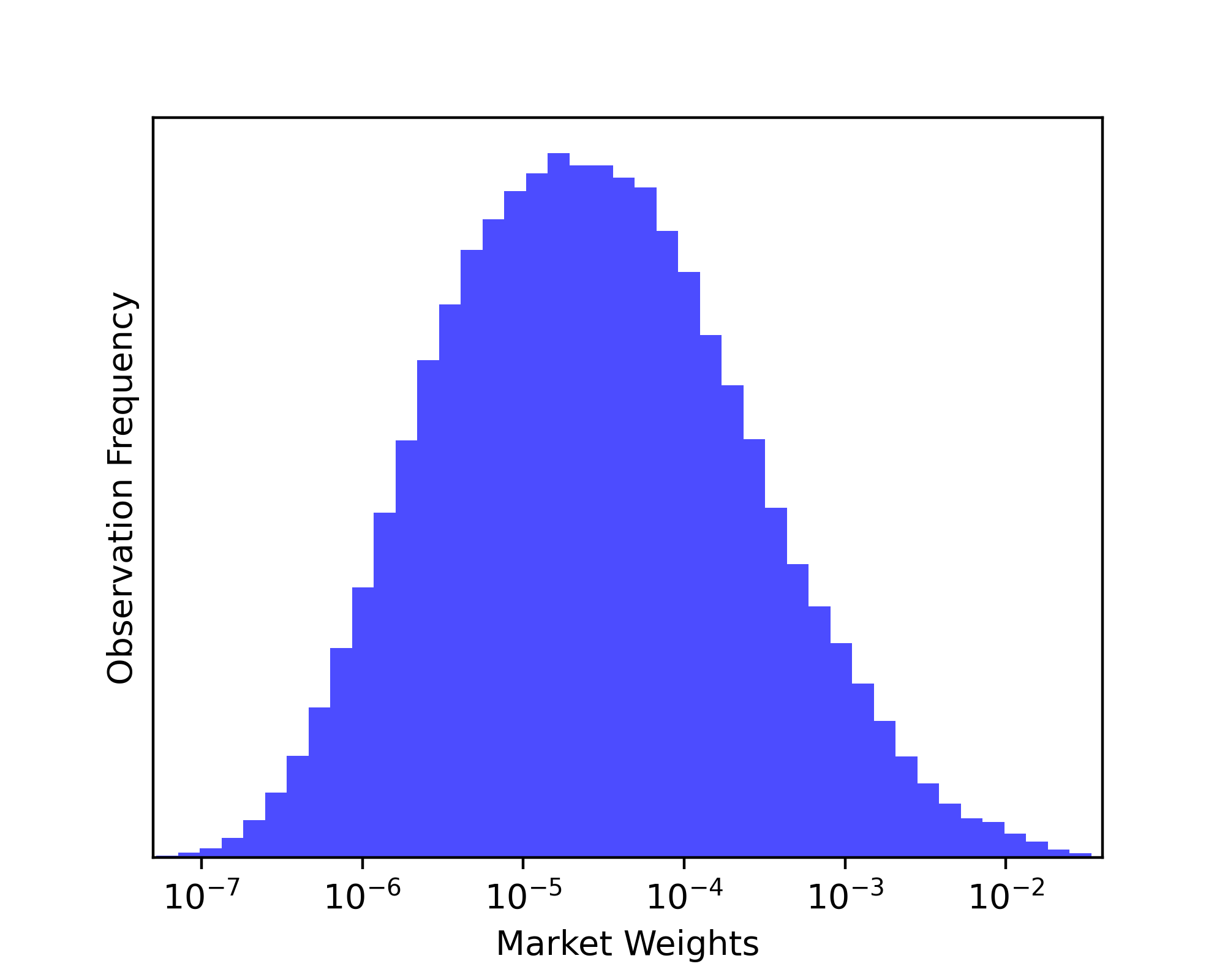}
    \end{minipage}
    \vspace{4mm}
    \caption{\textbf{Elasticity and Market Weights.} Panel A: Binned scatter plot of calibrated elasticities vs. market portfolio weights (both axes in log-scale). Panel B: Histogram of market portfolio weights (log-scale $x$-axis). Based on 1,253,175 stock-month observations from February 1990 to January 2020.}
\label{fig:calibration_and_weights}
\end{figure}


Due to the presence of \( w_{i,t} \) in the denominator of the elasticity formula (see Equation (\ref{eq:eta_mu})), assets with \( w_{i,t} \) values close to zero exhibit extremely high elasticity (approaching infinity as \( w_{i,t} \) approaches zero). This is logical since elasticity is defined in terms of percentage changes. An investor with a small initial position in an asset who significantly increases their holdings following a price drop will appear highly elastic. This explains why there is such a wide range of elasticities. A demand function with stable sensitivity to dollar changes (i.e., $\sigma_{i,t,\epsilon}^2$ and $\partial \mu_{i,t} / \partial \log(p_{i,t})$ do not vary much across assets) and varying position sizes will exhibit a range of elasticities. Although it may seem unusual for elasticity to vary so much based on the inverse of portfolio weights, this variation is a standard feature in both utility-based models and statistical arbitrageur models, as can be seen in Equation (\ref{eq:eta_mu}). 


This presents a challenge of finding a sensible way of averaging such extreme elasticities. One natural way is to value-weight the averages, with weights proportional to the market capitalization of the asset. Nearly all elasticities presented in this paper are value-weighted, unless explicitly stated. Another natural way to aggregate the elasticities is a portfolio-weighted approach, with weights proportional to the dollars invested across assets within an investor's portfolio. This is natural because the elasticity of the aggregate market for an asset is close to a value-weighted average elasticity across investors for that asset, with weights proportional to the dollars invested across investors.\footnote{Proposition \ref{prop:aggregation} states this simple result, which includes a correction for short-sellers.} Note that in the case of the current calibration we consider here, portfolio-weighting and value-weighting are equivalent since the portfolio weights of the investors are the same as the value weights of the market. I also consider various winsorization approaches, in order to look at the middle bulk of the distribution rather than an average that is disproportionately affected by large values due to the $1 / w_{i,t}$ term where weights are close to zero. In order to pair winsorization with weighting, I multiply by the weights that sum to one, apply the winsorization to the re-weighted sample, and sum the resulting terms, which yields a cross-sectional average each period. Then I average across periods.\footnote{Appendix \ref{subsec:winsorization} discusses this winsorization procedure.}

Table \ref{tab:calibration_table} shows the results of this elasticity calibration under different weighting and winsorization schemes, and demand is relatively elastic no matter how the data is cut. The results do vary substantially across methods. The first row reports equal-weighted elasticities, first without winsorization, second with winsorization at the 1$^{st}$ and 99$^{th}$ percentile, and third with winsorization at the $5^{th}$ and $95^{th}$ percentile. In every case, the elasticity is in the tens of thousands. The next row shows value-weighted elasticities, which are all around 900, similar to the number given above. Standard errors are double clustered by stock and period for these first two rows. Finally, I take the elasticity of the largest stock each period, and average the elasticities across time. This average elasticity of 10.8 is reported in the third row. In every case, the elasticities are more elastic than empirical estimates. However, equal-weighted averages are greatly affected by small portfolio weights which give massive elasticities, and value-weighted averages eliminate this effect here, but still exhibit very elastic values. 

\begin{table}[!t] \centering
    \resizebox{0.5\textwidth}{!}{\begin{tabular}{@{\extracolsep{5pt}}lccc}
\\[-1.8ex]\hline
\hline \\[-1.8ex]
& \multicolumn{3}{c}{\textit{Average Elasticity}} \
\cr \cline{2-4}
\\[-1.8ex] & (1) & (2) & (3) \\
\\[-1.8ex] Winsorization: & None & $(1^{st}, 99^{th})$ & $(5^{th}, 95^{th})$ \\
\hline \\[-1.4ex]

 Equal Weighted & 61,157.8$^{***}$ & 56,004.5$^{***}$ & 44,679.4$^{***}$ \\
& (1,966.1) & (1,740.7) & (1,287.8) \\

\\[-1.8ex]

 Value Weighted & 929.4$^{***}$ & 929.2$^{***}$ & 928.9$^{***}$ \\
& (19.1) & (19.1) & (19.1) \\

\\[-1.8ex]

Largest Stock & 10.8$^{***}$ \\
& (0.3) \\

\hline
\hline \\[-1.8ex]
\textit{Note:} & \multicolumn{3}{r}{$^{*}$p$<$0.1; $^{**}$p$<$0.05; $^{***}$p$<$0.01} \\
\end{tabular}}
    \vspace{4mm}
    \caption{\textbf{Demand Elasticity Calibration.} Calibrated elasticities under different weighting and winsorization schemes. Averages are reweighted, winsorized, then averaged cross-sectionally and over time. Rows: equal-weighted, value-weighted (market cap shares), and largest stock in each cross-section. Columns: no winsorization, 1$^{st}$-99$^{th}$ percentile winsorization, and 5$^{th}$-95$^{th}$ percentile winsorization. Based on 1,253,175 stock-month observations from February 1990 to January 2020. Standard errors (in parentheses) are double-clustered by month and stock.}
    \label{tab:calibration_table}
\end{table}

The results are robust to two additional considerations. First, Epstein-Zin demand, which nests CRRA demand, exhibits similarly elastic demand.\footnote{Appendix \ref{subsec:epstein_zin} shows that multi-asset Epstein-Zin demand elasticity is similar to mean-variance demand elasticity, with a consumption hedging term causing only slight changes in the calibration.} Second, relaxing the assumption of an exogenous covariance matrix still delivers similar and very elastic demand.\footnote{Appendix \ref{app:covariance} shows that demand is similar and very elastic allowing price effects through the covariance matrix.}


\subsection{Why is this so high relative to statistical arbitrageur models?} \label{subsec:why_inelastic}

An elasticity of 900 implies that a 1\% drop in prices leads to a 900\% increase in shares demanded. This extreme sensitivity to prices is crucial in many asset pricing models. Investors with highly elastic demand trade aggressively against price deviations, attempting to correct perceived mispricings. However, this aggressive trading is fraught with difficulty and risk, making such high elasticities rare in practice.

Price signals are relatively weak predictors of returns, both cross-sectionally and over time. For example, investing based on the \cite{ff3} model involves forming a conservative portfolio, long a diversified basket of value stocks and short a diversified basket of growth stocks, rather than making large bets on low-priced stocks. This conservative approach is due to the significant noise in both cross-sectional and time-series data. Rolling alpha estimates of value portfolios vary dramatically depending on the time span used for estimation.\footnote{This is well-known, but is also shown in Figure \ref{fig:value_alpha} in the Appendix.} Because price signals are noisy, factor model investment seeks to diversify away risk, resulting in inelastic demand. Thus, factor model investment is conservatively inelastic by design.

\section{Elasticity of Quantitative Investment} \label{sec:quant_elas}

I first briefly give an overview of the quantitative investment models, and then present the empirical elasticity results of these models. 

\subsection{Description of Portfolio Choice Models} \label{subsec:description_of_models}

To understand why portfolio choice demand may be relatively inelastic, it is important to first understand the structure of demand from these models. Figure \ref{fig:demand} shows how these demand functions work. Prices $p_{i,t}$ map into some of the raw predictors $x_{i,k,t}$, which map into normalized predictors $\udot{z}_{i,k,t}$, which map into portfolio weights $w_{i,t}$. As explained below, there is a large variety of statistical portfolio choice models, each constituting a unique function that maps the $\udot{z}_{i,k,t}$ terms into portfolio weights $w_{i,t}$. 

Stack these predictors $z_{i,k,t}$ into an $N \times K$ matrix $Z_t$ where there are $K$ predictors for each asset. These matrices, for each period $t$, serve as the critical data inputs for the models I describe. These models are described in Appendix \ref{app:stat_arbs} and of course in their original papers, but I describe the basics here. I refer to these models by the initialisms summarized in Table \ref{tab:models} in the appendix. 


Consider a model where the covariance matrix and vector of expected returns are a linear function of predictors:
\begin{equation} \label{eq:nn_basic}
    \mu_t = Z_t \Theta_\mu, \;\; \Gamma_t = Z_t \Theta_{\Gamma}, \;\;
    \Sigma_t = \Gamma_t \Gamma_t' + \Theta_{\zeta} I,
\end{equation}
where $\Theta_{\mu}$ and $\Theta_{\Gamma}$ are vectors of parameters that control the mean and covariance matrix respectively, and $\Theta_{\zeta} > 0$ is a positive scalar that controls idiosyncratic risk. This functional form actually nests the functional form used in the calibration above, and the NN (neural network) model uses a generalization of this form. 

KY show\footnote{This is shown as Proposition 1 in KY.} that this functional form delivers the BSV \citep{brandt} model:
\begin{equation} \label{eq:bsv_linear}
    w_t = \Sigma_t^{-1} \mu_t = Z_t b,
\end{equation}
for some column vector of parameters $b$. This model features individual asset weights that are a linear function of predictors. Instead of estimating the $\Theta_{\mu}$, $\Theta_{\Gamma}$, and $\Theta_{\zeta}$ parameters, we can fit $b$ directly. \cite{shrinking} discusses how estimating $b$ can be done with the two-step procedure of first forming the $K$ predictor-weighted portfolios $Z_t' r_{t+1}$, and then choosing $b$ to maximize the Sharpe ratio. In other words, one can think of $Z_t$ as both the predictors that the statistical arbitrageur trades on as in equation (\ref{eq:bsv_linear}) and also the portfolio weights when forming the portfolios $Z_t' r_{t+1}$. \cite{shrinking} and \cite{forest} discuss "shrinkage" optimization procedures, which help alleviate overfit problems when choosing $b$, and \cite{demiguel} discuss an equal-weighted strategy across a range of portfolios, which corresponds to imposing very strong shrinkage when estimating $b$. While I focus on cross-sectionally normalized predictors, which correspond to the linear predictor-weighted portfolios of \cite{shrinking} and \cite{kelly} among others, I also consider proxies of more traditional value-weighted predictors corresponding to value-weighted portfolios in Appendix \ref{subsec:value_weight}. In summary, the BPZ$_L$, BSV, and DGU models have portfolio weights $w_t = Z_t b$, where $b$ is fit by a variety of different methods that seek to obtain the highest possible Sharpe ratio out-of-sample. For the sake of consistent notation across models, I write this as $w_t = f(Z_t) b$ for the trivial identity function $f(Z_t) = Z_t$. 

The BSV model provides a foundation for describing various statistical arbitrageur models. The BPZ$_F$ model, as discussed by \cite{forest}, generates portfolio weights from predictors using decision trees, creating a matrix \(f(Z_t)\) that collapses into a single portfolio \(w_t = f(Z_t) b\). The RF model simplifies this by focusing on means and shrinking the covariance matrix. The FF3, FF6, and HXZ models form factor portfolios from a small subset of predictors (e.g., market, size and value for FF3 \citep{ff3}), which can be consolidated into a single portfolio using similar logic. Other models, such as CRW, GKX, and KPS, combine predictors into optimal Sharpe ratio portfolios through a matrix \(\Phi\) and further collapse with \(b\) into weights \(w_t = Z_t \Phi b\). Statistical arbitrageurs here aim to maximize Sharpe ratios and price the cross-section of returns, integrating both risk-based and pure-alpha strategies. Models like CRW and KPS also produce pure-alpha portfolios. I evaluate pure-alpha portfolios separate from the factor portfolios.

In summary, each model has portfolio weights $w_t = f(Z_t) b$. This involves first fitting potential parameters associated with $f$, and then fitting the parameters $b$. The models are described in detail in Appendix \ref{app:stat_arbs}. I describe both the portfolio collapse methods used to create $b$, and as well as the portfolio weight functions $f(Z_t)$.

The scaling of the $f(Z_t) b$ weights is arbitrary in these models. In other words, the models are designed to maximize the Sharpe ratio, but not to necessarily pick the level of risk and return. Thus, to create comparable returns results, the amount of leverage and risk must be chosen.\footnote{The amount of chosen leverage does not change the elasticities of the statistical arbitrageur, because elasticity is measured in percentage terms. } I simply follow \cite{shrinking}, who scale portfolio weights so that the portfolio volatility matches the market volatility. 

Let $\nabla_{{z}_{i,k,t}} (\cdot)$ denote the derivative with respect to ${z}_{i,k,t}$. Then the elasticity, defined only for assets with positive weights following KY, can be written as:
\begin{equation} \label{eq:statistical arbitrageur_elasticity}
    {\eta}_{i,t} = 1 - \frac{1}{{w}_{i,t}} \left( \frac{\partial f(Z_t)_i b}{\partial \log (p_{i,t})} \right)
    = 1 - \frac{1}{{w}_{i,t}} \left( \sum_{k=1}^K \nabla_{{z}_{i,k,t}} (f(Z_t)_i b) \frac{\partial {z}_{i,k,t}}{\partial \log (p_{i,t})} \right),
\end{equation}
where $f(Z_t)_i$ denotes the row vector of $f(Z_t)$ corresponding to asset $i$. 

For the predictors that are not a function of prices, which I refer to as exogenous predictors, these derivatives are zero (i.e., $\partial {z}_{i,k,t} / \partial \log(p_{i,t}) = 0$). The other predictors affect the demand elasticity, because the portfolio weights change as prices change. In other words, every asset's elasticity is a function of how much portfolio weights change as prices change (e.g., a valuation ratio like book-to-market changing as prices change) and how much the statistical arbitrageur invests according to that predictor (e.g., how much they invest in value stocks). 

In Appendix \ref{app:stat_arb_derivs}, I show the closed form solution for the elasticity of each model. For the two random forest based models, $f(Z_t)$ is discontinuous, and I describe in the Appendix how the derivatives are numerically calculated.


\subsection{Returns and Elasticity of Factor Investment} \label{subsec:statistical arbitrageur_returns}

I first describe how the statistical arbitrageur model parameters are chosen. The returns and elasticity results are out-of-sample. Most models have a set of hyperparameters. Table \ref{tab:hypers} in the appendix shows the hyperparameters for the models. While some hyperparameters are chosen with a standard rule-of-thumb, most are selected using a four-fold cross-validation design with the sample before 1990. 

The statistical arbitrageur parameters for each model, including $b$ and those associated with $f(Z_t)$, are fitted every decade (120 months). The entire available sample is used to fit the model, employing an expanding window approach \citep[e.g.,][]{kelly} rather than a rolling window approach. The out-of-sample period is the three decades (360 months) from February 1990 to January 2020, inclusive.\footnote{As described below, the counterfactual experiments start in January 1990. The first fully endogenous return is in February 1990, since both the initial price (January 1990) and final price (February 1990) are endogenous.} All parameters and portfolio weights are fit using strictly ex ante information to avoid any look-ahead bias. All results reported from hereon are results from this out-of-sample period. I follow \cite{shrinking} by scaling the portfolios so that their volatility matches the volatility of the market during the same period.

Table \ref{tab:returns} shows the annualized CAPM alphas, betas, and annualized Sharpe ratios of the thirteen factor models during the out-of-sample period. Table \ref{tab:returns} shows that many of these models have Sharpe ratios well above one out-of-sample. For some of the models, the annualized alphas are around 30\%. Thus, as a whole, these models perform well during this out-of-sample period. While these are large alphas, this is well in line with out-of-sample CAPM alphas and returns from the literature.\footnote{See, for example, Table 4 of \cite{gu}, which makes these results seem mild in comparison. See also \cite{mlpaper}. } 

\begin{table}[!t] \centering
    \resizebox{1.\textwidth}{!}{\begin{tabular}{@{\extracolsep{5pt}}lccccccccccccc}
\\[-1.8ex]\hline
\hline \\[-1.8ex]
\\[-1.8ex] & \multicolumn{1}{c}{BPZ$_F$} & \multicolumn{1}{c}{BPZ$_L$} & \multicolumn{1}{c}{BSV} & \multicolumn{1}{c}{CRW} & \multicolumn{1}{c}{DGU} & \multicolumn{1}{c}{FF3} & \multicolumn{1}{c}{FF6} & \multicolumn{1}{c}{GKX} & \multicolumn{1}{c}{HXZ} & \multicolumn{1}{c}{KNS} & \multicolumn{1}{c}{KPS} & \multicolumn{1}{c}{NN} & \multicolumn{1}{c}{RF}  \\
\\[-1.8ex] & (1) & (2) & (3) & (4) & (5) & (6) & (7) & (8) & (9) & (10) & (11) & (12) & (13) \\
\hline \\[-1.8ex]
 $\alpha$ & 21.526$^{***}$ & 33.489$^{***}$ & 32.940$^{***}$ & 9.615$^{***}$ & 14.094$^{***}$ & 9.044$^{***}$ & 24.950$^{***}$ & 50.879$^{***}$ & 11.338$^{***}$ & 36.955$^{***}$ & 18.289$^{***}$ & 20.638$^{***}$ & 9.267$^{***}$ \\
& (2.520) & (2.592) & (2.628) & (2.598) & (2.576) & (2.379) & (2.601) & (2.620) & (2.325) & (2.625) & (2.593) & (2.373) & (2.281) \\
 $\beta$ & 0.285$^{***}$ & -0.165$^{***}$ & -0.007$^{}$ & -0.153$^{***}$ & 0.199$^{***}$ & 0.425$^{***}$ & 0.143$^{***}$ & -0.077$^{}$ & 0.466$^{***}$ & -0.049$^{}$ & 0.163$^{***}$ & 0.430$^{***}$ & 0.497$^{***}$ \\
& (0.051) & (0.052) & (0.053) & (0.052) & (0.052) & (0.048) & (0.052) & (0.053) & (0.047) & (0.053) & (0.052) & (0.048) & (0.046) \\
\hline \\[-1.8ex]
 Sharpe & 1.681 & 2.267 & 2.318 & 0.591 & 1.107 & 0.881 & 1.841 & 3.543 & 1.066 & 2.577 & 1.383 & 1.701 & 0.937 \\
 Obs. & 360 & 360 & 360 & 360 & 360 & 360 & 360 & 360 & 360 & 360 & 360 & 360 & 360 \\
\hline
\hline \\[-1.8ex]
\textit{Note:} & \multicolumn{13}{r}{$^{*}$p$<$0.1; $^{**}$p$<$0.05; $^{***}$p$<$0.01} \\
\end{tabular}
}
    \vspace{4mm}
    \caption{\textbf{Statistical Arbitrageur Returns.} Monthly CAPM annualized alpha, betas, and Sharpe ratios for the thirteen statistical arbitrageurs from February 1990 to January 2020 (out-of-sample period). Standard errors in parentheses. See Table \ref{tab:models} for model initialisms.}
    \label{tab:returns}
\end{table}

Table \ref{tab:statistical arbitrageur_elasticity} shows the main result of the paper: these statistical arbitrageur models have relatively inelastic demand. This table shows the average demand elasticity under various weighting and winsorization schemes described above.\footnote{The results with equal weighting and winsorization at the 1$^{st}$ and 99$^{th}$ percentiles are shown in Table \ref{tab:statistical arbitrageur_elasticity_full} in the Appendix.} In every case, demand appears much more inelastic than the calibrated counterparts in Table \ref{tab:calibration_table}. Since elasticity is in log terms and thus undefined for short or zero positions, I follow KY by calculating the elasticity of only the assets with strictly positive weights for each model. Some of the elasticities are actually negative, implying upward sloping \cite{stein} style demand like a classic momentum trader, at least for a subset of assets. As discussed above, the unwinsorized results are sensitive to small portfolio weights, ${w}_{i,t}$, in the denominator (see equation (\ref{eq:statistical arbitrageur_elasticity})). This can occur even in the value-weighted results, since there are assets for which ${w}_{i,t}$ is small, but the asset is a large asset by market capitalization. Thus, even the value-weighted elasticities are sensitive to these extreme values. To focus on the elasticities in the bulk of the distribution instead of these extreme tails, in the tables from hereon, all elasticities are value-weighted and winsorized at the 5$^{th}$ and 95$^{th}$ percentiles. Standard errors are shown in parentheses below the estimates, which are double clustered by month and stock. Note that by construction of the cross-sectional predictors which are between -0.5 and 0.5, these elasticities are quite stable across time. Mechanically, it is primarily cross-sectional variation that generates variation in these results. 

\begin{table}[!t] \centering
    \resizebox{1.\textwidth}{!}{\begin{tabular}{@{\extracolsep{5pt}}llccccccccccccc}
\\[-1.8ex]\hline
\hline \\[-1.8ex]
& & \multicolumn{13}{c}{\textit{Average Elasticity}} \
\cr \cline{3-15}
\\[-1.8ex] Elasticity & Wins. & \multicolumn{1}{c}{BPZ$_F$} & \multicolumn{1}{c}{BPZ$_L$} & \multicolumn{1}{c}{BSV} & \multicolumn{1}{c}{CRW} & \multicolumn{1}{c}{DGU} & \multicolumn{1}{c}{FF3} & \multicolumn{1}{c}{FF6} & \multicolumn{1}{c}{GKX} & \multicolumn{1}{c}{HXZ} & \multicolumn{1}{c}{KNS} & \multicolumn{1}{c}{KPS} & \multicolumn{1}{c}{NN} & \multicolumn{1}{c}{RF}  \\
\\[-1.8ex] Weighting & & (1) & (2) & (3) & (4) & (5) & (6) & (7) & (8) & (9) & (10) & (11) & (12) & (13) \\
\\[-1.8ex] \hline \\[-1.4ex]

 Portfolio & None & 0.285$^{***}$ & 6.428$^{***}$ & 2.275$^{***}$ & 2.034$^{***}$ & 3.209$^{***}$ & 1.911$^{***}$ & 1.553$^{***}$ & 7.394$^{***}$ & 0.843$^{***}$ & 3.630$^{***}$ & 2.951$^{***}$ & 2.496$^{***}$ & 2.185$^{***}$ \\
& & (0.094) & (0.190) & (0.093) & (0.057) & (0.437) & (0.022) & (0.017) & (0.552) & (0.012) & (0.049) & (0.504) & (0.595) & (0.037) \\
 & $(5^{th}, 95^{th})$ & -1.233$^{***}$ & 6.252$^{***}$ & 2.354$^{***}$ & 2.019$^{***}$ & 2.753$^{***}$ & 1.969$^{***}$ & 1.561$^{***}$ & 7.923$^{***}$ & 0.866$^{***}$ & 3.600$^{***}$ & 2.545$^{***}$ & 1.893$^{***}$ & 1.864$^{***}$ \\
& & (0.082) & (0.100) & (0.035) & (0.047) & (0.034) & (0.021) & (0.016) & (0.096) & (0.011) & (0.045) & (0.290) & (0.069) & (0.032) \\

\hline \\[-1.8ex]

 Value & None & 12.781$^{***}$ & 36.835$^{***}$ & 7.105$^{***}$ & 17.635$^{**}$ & 25.920$^{**}$ & 3.003$^{***}$ & 0.290$^{}$ & 76.991$^{***}$ & -1.208$^{***}$ & 13.109$^{***}$ & 27.588$^{***}$ & 13.516$^{***}$ & 2.101$^{***}$ \\
& & (1.291) & (3.281) & (0.886) & (8.063) & (11.499) & (0.704) & (0.594) & (25.222) & (0.114) & (1.301) & (4.917) & (2.954) & (0.142) \\
 & $(5^{th}, 95^{th})$ & 0.142$^{**}$ & 7.168$^{***}$ & 1.833$^{***}$ & 1.964$^{***}$ & 2.203$^{***}$ & 0.858$^{***}$ & 0.706$^{***}$ & 9.433$^{***}$ & -0.004$^{}$ & 3.253$^{***}$ & 3.510$^{***}$ & 2.163$^{***}$ & 0.955$^{***}$ \\
& & (0.059) & (0.266) & (0.071) & (0.085) & (0.087) & (0.026) & (0.021) & (0.311) & (0.007) & (0.117) & (0.476) & (0.152) & (0.033) \\

\hline
\hline \\[-1.8ex]

\textit{Note:} & \multicolumn{14}{r}{$^{*}$p$<$0.1; $^{**}$p$<$0.05; $^{***}$p$<$0.01} \\
\end{tabular}

}
    \vspace{4mm}
    \caption{\textbf{Statistical Arbitrageur Price Elasticity.} Average elasticities of the thirteen models. Averages are reweighted, winsorized, then averaged cross-sectionally and over time. Weighting schemes: portfolio weights and value weights (market cap). Results are either not winsorized or winsorized at the 5$^{th}$-95$^{th}$ percentiles. Based on 1,253,175 stock-month observations from February 1990 to January 2020. Standard errors (in parentheses) are double-clustered by month and stock. See Table \ref{tab:models} for model initialisms.}
    \label{tab:statistical arbitrageur_elasticity}
\end{table}

Although mathematically undefined, I can extend the concept of demand elasticity to short positions. I modify equation (\ref{eq:simple_elasticity}) with an absolute value to obtain:
\begin{equation} \label{eq:pos_neg_elasticity}
    \eta_{i,t}^{\pm} = 1 - \frac{1}{|w_{i,t}|} \left( \frac{\partial w_{i,t}}{\partial \log (p_{i,t})} \right).
\end{equation}
This elasticity keeps the same interpretation of the shape of the demand curve, but is measured in percentage changes relative to the absolute value of the position, rather than the value of the position. Table \ref{tab:statistical arbitrageur_elasticity posall} compares the elasticity of only long positions to all long and short positions. In most cases, the results are similar but slightly higher than the long-only elasticities. It turns out this is because most positions are long positions and short positions are more clustered around zero, which means that percentage changes are slightly higher on average when the entire sample is included. In summary, the results are robust to considering short positions. 

\begin{table}[!t] \centering
    \resizebox{1.\textwidth}{!}{\begin{tabular}{@{\extracolsep{5pt}}lccccccccccccc}
\\[-1.8ex]\hline
\hline \\[-1.8ex]
& \multicolumn{13}{c}{\textit{Average Elasticity}} \
\cr \cline{2-14}
\\[-1.8ex] Elasticity & \multicolumn{1}{c}{BPZ$_F$} & \multicolumn{1}{c}{BPZ$_L$} & \multicolumn{1}{c}{BSV} & \multicolumn{1}{c}{CRW} & \multicolumn{1}{c}{DGU} & \multicolumn{1}{c}{FF3} & \multicolumn{1}{c}{FF6} & \multicolumn{1}{c}{GKX} & \multicolumn{1}{c}{HXZ} & \multicolumn{1}{c}{KNS} & \multicolumn{1}{c}{KPS} & \multicolumn{1}{c}{NN} & \multicolumn{1}{c}{RF}  \\
\\[-1.8ex] Sample & (1) & (2) & (3) & (4) & (5) & (6) & (7) & (8) & (9) & (10) & (11) & (12) & (13) \\
\\[-1.8ex] \hline \\[-1.4ex]

Positive & 0.142$^{**}$ & 7.168$^{***}$ & 1.833$^{***}$ & 1.964$^{***}$ & 2.203$^{***}$ & 0.858$^{***}$ & 0.706$^{***}$ & 9.433$^{***}$ & -0.004$^{}$ & 3.253$^{***}$ & 3.510$^{***}$ & 2.163$^{***}$ & 0.955$^{***}$ \\
& (0.059) & (0.266) & (0.071) & (0.085) & (0.087) & (0.026) & (0.021) & (0.311) & (0.007) & (0.117) & (0.476) & (0.152) & (0.033) \\

All & 0.447$^{***}$ & 7.946$^{***}$ & 2.074$^{***}$ & 2.027$^{***}$ & 3.005$^{***}$ & 1.217$^{***}$ & 0.923$^{***}$ & 10.060$^{***}$ & 0.168$^{***}$ & 3.812$^{***}$ & 3.535$^{***}$ & 1.737$^{***}$ & 0.959$^{***}$ \\
& (0.031) & (0.263) & (0.071) & (0.088) & (0.094) & (0.028) & (0.022) & (0.308) & (0.005) & (0.117) & (0.497) & (0.113) & (0.032) \\

\hline
\hline \\[-1.8ex]
\textit{Note:} & \multicolumn{13}{r}{$^{*}$p$<$0.1; $^{**}$p$<$0.05; $^{***}$p$<$0.01} \\
\end{tabular}}
    \vspace{4mm}
    \caption{\textbf{Statistical Arbitrageur Price Elasticity By Sample.} Average price elasticity of the thirteen models for two samples: (1) positive weights and (2) all assets, using equation (\ref{eq:pos_neg_elasticity}). Results are value-weighted and winsorized (5$^{th}$-95$^{th}$ percentiles). Sample: 1,253,175 stock-month observations, February 1990 to January 2020. Standard errors (in parentheses) are double-clustered by month and stock. See Table \ref{tab:models} for model initialisms.
}
    \label{tab:statistical arbitrageur_elasticity posall}
\end{table}

To visually show how much more inelastic these models are compared to the calibrated demand elasticity, I present a binned scatter plot of the elasticities as a function of market portfolio weights in Panel A of Figure \ref{fig:ml_model_elasticity}. The results are winsorized at the 5$^{th}$ and 95$^{th}$ percentiles in each bin, and the short positions are included so that the sample of stocks is exactly the same across models. While for the largest stocks there is some overlap, for the vast majority of the stocks, the calibrated elasticities are much higher than the statistical arbitrageur elasticities. Panel B is similar, but the stocks are first portfolio-weighted. Specifically, $N_t v_{i,t} \eta_{i,t}$ is plotted, where $v_{i,t}$ is the absolute value of the portfolio weight, normalized so that the cross-sectional sum of $v_{i,t}$ is one. Note that $N_t v_{i,t}$ is one on average in each cross-section, and thus the cross-sectional equal-weighted average of $N_t v_{i,t} \eta_{i,t}$ is the $v_{i,t}$-weighted average of $\eta_{i,t}$.\footnote{Mathematically, the weighted average is given by:
\begin{equation*}
    \sum_i v_{i,t} \eta_{i,t} = \frac{1}{N_t} \sum_i N_t v_{i,t} \eta_{i,t}.
\end{equation*}
} The calibrated elasticity is around 900, but statistical arbitrageur model demand elasticities are around 10 or below. 

\begin{figure}[!t]
    \centering
    \begin{minipage}{0.49\textwidth}
        \centering
        \textbf{Panel A: Elasticity Across Models}
        \includegraphics[width=\linewidth,trim={0 0 0 1cm},clip]{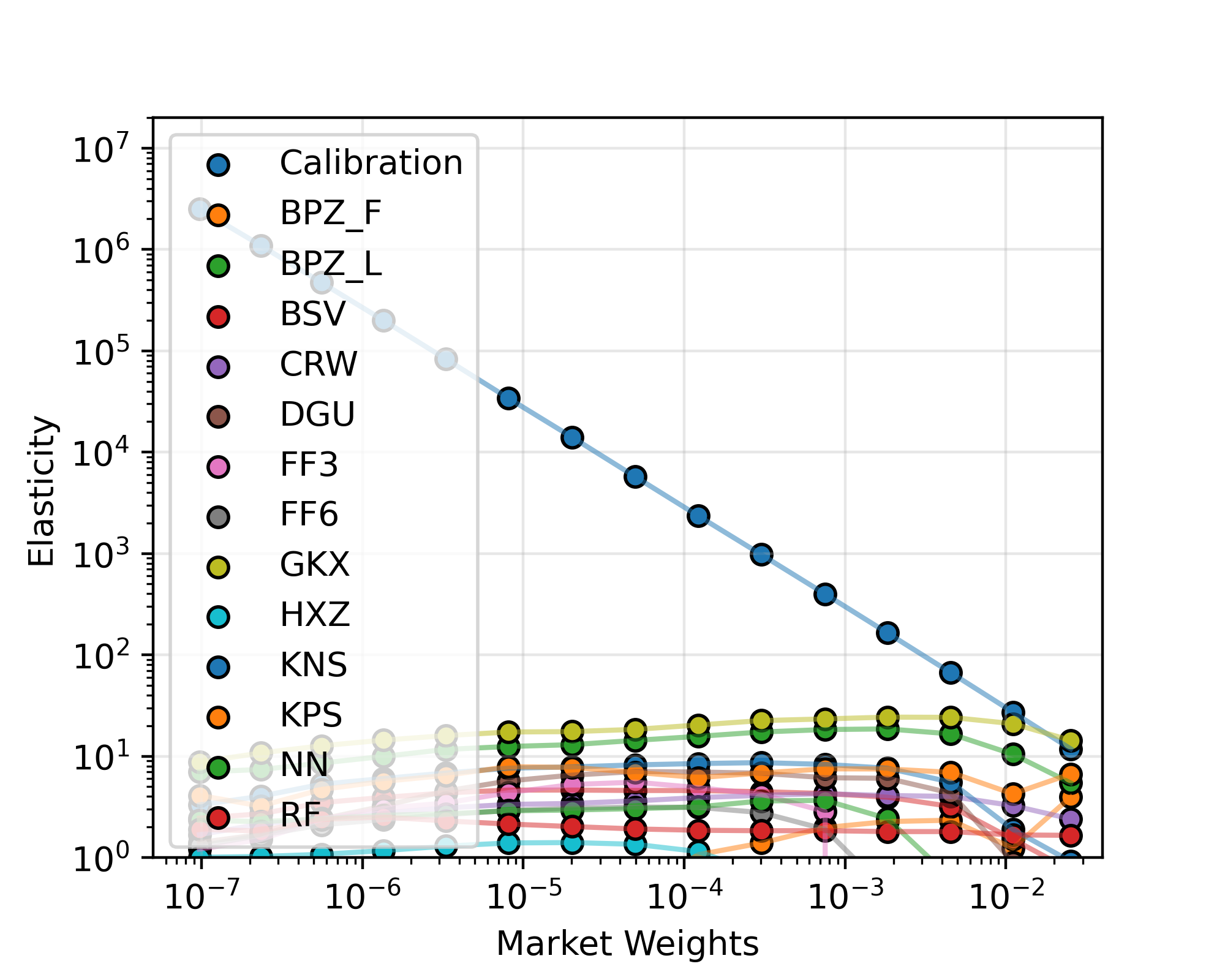}
    \end{minipage}\hfill
    \begin{minipage}{0.49\textwidth}
        \centering
        \textbf{Panel B: Portfolio-Weighted Elasticity}
        \includegraphics[width=\linewidth,trim={0 0 0 1cm},clip]{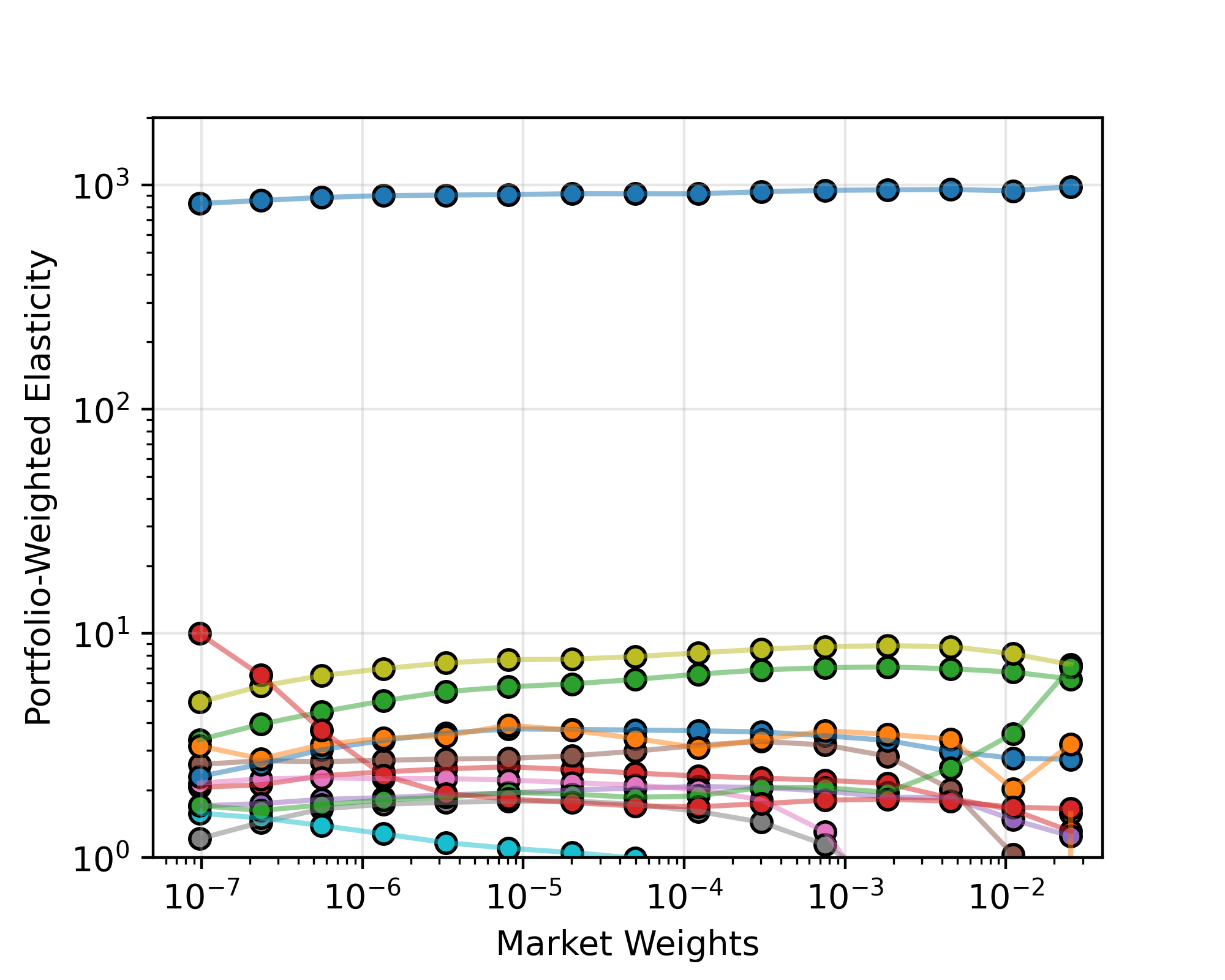}
    \end{minipage}
    \vspace{4mm}
    \caption{\textbf{Statistical Arbitrageur versus Calibrated Elasticity.} Binned scatter plots of elasticities vs. market portfolio weights (log-scale axes), winsorized in each bin (5$^{th}$-95$^{th}$ percentiles). Panel A: elasticity; Panel B: portfolio-weighted elasticity. Sample: 1,253,175 stock-month observations, February 1990 to January 2020. All stocks included using equation (\ref{eq:pos_neg_elasticity}). See Table \ref{tab:models} for model initialisms.
}
    \label{fig:ml_model_elasticity}
\end{figure}


The NN model uniquely separates mean and covariance effects (equation  (\ref{eq:nn_basic})), unlike the BSV and similar models, which do not. This allows us to decompose the elasticity of the NN model. The NN model's elasticity is about 2.2, with 0.8 due to mean effects alone.\footnote{This is shown in Table \ref{tab:cov_decomp} in the Appendix. This aligns with the extension of the calibration shown in Appendix \ref{app:covariance}, where covariance effects increase the elasticity beyond mean effects alone.} While the BSV and similar models don't explicitly separate these effects at the individual asset level, covariance effects still matter in these models \citep{shrinking}.\footnote{In Appendix \ref{app:which_assets}, I explore in more detail which assets have elastic demand across models.}

\section{Variations on Elasticity} \label{sec:variations}

I explore two key variations of the ceteris paribus elasticity of Sharpe-maximizing portfolios. First, I examine the elasticity of statistical arbitrageur strategies using pure-alpha investment strategies that hedge out factors. Second, I analyze systematic elasticity, where I measure the elasticity of stocks in response to shocks in the price of common factors.\footnote{In Appendix \ref{subsec:cost_optimization}, I also show that cost optimization can substantially decrease the elasticity of these statistical arbitrageur models.}

\subsection{Pure-Alpha Strategy Demand Elasticities} \label{sec:hedged_elasticity}

Often, statistical arbitrageurs are thought to trade on only alpha and hedge out systematic risk. What is the elasticity of these kinds of strategies? In this subsection, I address this question. These findings below are significant for two reasons. First, after hedging out risk, we observe that elasticity tends to decrease rather than increase. This indicates that the inelasticity is not simply due to unhedged risk; even with pure-alpha strategies, demand remains relatively inelastic.

Second, the inelastic demand response is not due to the absence of alpha but rather the difficulty of trading against price movements. This is evident because the pure-alpha elasticity remains close to the overall elasticity, meaning that a substantial portion of the elastic response comes from the alpha component. Thus, the inelasticity observed in these models arises from the challenge inherent in acting on perceived alpha, not from a lack of alpha. 

I present a general framework, where I consider any statistical arbitrageur strategy discussed in Section \ref{sec:elas_models}, hedge out some factors, and examine the performance and elasticity of the resulting hedged strategy.\footnote{Appendix \ref{app:pure_alpha} examines some specific pure-alpha strategies and their elasticity, and these special cases are similar. For example, the CRW and KPS strategies produce their own pure-alpha portfolios that I show have inelastic demand.} In order to disentangle the pure-alpha elasticity from the overall elasticity, I need to calculate the portfolio weights of a strategy associated with just alpha, after hedging out factor returns.\footnote{It is well-known that an arbitrary factor model may well contain alpha relative to another factor model along with traditional systematic risk-based returns \citep{hubermankandel, famaprize, interpreting}. } In particular, let $w_t$ be the $N$-dimensional vector of an arbitrary trading strategy. Let $r_{p,t+1} = w_t' r_{t+1}$ denote the corresponding portfolio excess return. Let $F_{t+1}$ be a column vector of excess factor returns (or long-short returns), with a matrix of portfolio weights, $W_t$, which has $N$ rows and the number of columns corresponding to the number of factors. Thus, we can write $F_{t+1} = W_t' r_{t+1}$. 

We can write:
\begin{equation}
    r_{p,t} = \alpha + \beta' F_t + \epsilon_t,
\end{equation}
where $\beta$ is the vector of standard constant factor betas, $\alpha$ is the standard asset pricing alpha, and $\epsilon_t$ is the error term. We seek an $N$-dimensional vector of portfolio weights $w_{\alpha,t}$ such that:
\begin{equation} \label{eq:requirement1}
    r_{\alpha,t+1} \equiv w_{\alpha,t}' r_{t+1} = \alpha + \epsilon_{t+1},
\end{equation}
for all $t$ and 
\begin{equation} \label{eq:requirement2}
    \text{Cov} (r_{\alpha,t+1}, F_{t+1}) = 0.
\end{equation}

In other words, the portfolio corresponding to $w_{\alpha,t}$ is essentially the portfolio associated with $w_t$ except the factors are hedged out. The return on this hedged portfolio, $r_{\alpha,t+1}$, is just the alpha and the idiosyncratic term. The following proposition gives a solution. 

\begin{prop} \label{prop:hedged}
    Let 
    \begin{equation} \label{eq:alpha_weights}
        w_{\alpha,t} = w_t - W_t \beta,
    \end{equation}
    then this satisfies both requirements (\ref{eq:requirement1}) and (\ref{eq:requirement2}). 
    
    It immediately follows that for an individual asset with hedged portfolio weight $w_{\alpha,i,t} > 0$ defined above, we can write:
    \begin{equation} \label{eq:hedge_deriv1}
        \frac{\partial w_{\alpha,i,t}}{\partial \log (p_{i,t})} = \frac{\partial w_{i,t}}{\partial \log (p_{i,t})} - \sum_j \frac{\partial W_{i,j,t}}{\partial \log (p_{i,t})} \beta_{j},
    \end{equation}
    where $W_{i,j,t}$ are the factor weights for asset $i$ for factor $j$, and $\beta_j$ is the beta for this factor. 
    For weights $w_{\alpha,i,t} > 0$, $w_{i,t} > 0$, and $W_{i,j,t} > 0$ for all factors, then:
    \begin{equation} \label{eq:hedge_deriv2}
        \eta_{\alpha,i,t} = 1 + \frac{1}{w_{\alpha,i,t}} \left(w_{i,t} \left( \eta_{i,t} - 1 \right) - \sum_j W_{i,j,t} \left( \eta^F_{i,j,t} - 1 \right) \beta_{j}\right),
    \end{equation}
    where $\eta_{\alpha,i,t}$ is the elasticity of the hedged strategy and $\eta^F_{i,j,t}$ is the elasticity of the $j^{th}$ factor. 
\end{prop}

The proof of this is contained in Appendix \ref{app:alpha_portfolio}. 

Thus, this proposition shows that there is a tight connection between the overall demand elasticity and the pure-alpha demand elasticity. If the $\beta_j$ and $\eta_{i,j,t}^F$ terms are high, then the demand elasticity of the pure-alpha strategy is likely low as long as $w_{\alpha,i,t}$ is not too small. On the other hand, if $\beta_j$ is low, then the pure-alpha strategy elasticity is close to the overall elasticity. 


To implement this empirically and interpret the corresponding results as the pure-alpha elasticity, the $F_t$ factors should correspond to systematic risk-based factors. Instead of taking a stand about the right model of systematic risk, I consider many different factor models and show that the results are robust across a wide range of choices. I simply consider every pairing of the thirteen portfolios we have already considered, where one portfolio is the left-hand side portfolio $w_t$ and the other is the right-hand side factor that is hedged out. This is reasonable since each of the thirteen portfolios corresponds to a factor model. There are $13^2 - 13$ pairings, excluding the pairings of a model with itself. I calculate each beta with a simple regression during the out-of-sample period. With these betas in hand, I calculate hedged portfolio weights using equation (\ref{eq:alpha_weights}). With these weights in hand, the elasticity of these portfolios can be calculated using equation (\ref{eq:hedge_deriv1}). 

The annualized alphas are reported in Table \ref{tab:cross_alphas}, with right-hand-side factors across the columns, and left-hand-side portfolios along the rows. Notice that a substantial fraction of many of the returns show up as pure-alpha when compared to other models. \cite{bbd} show a similar result, where much of the out-of-sample portfolio returns with these models have surprisingly low correlations, and much of the return shows up as alpha compared to other models. 

\begin{table}[!t] \centering
    \resizebox{1.\textwidth}{!}{\begin{tabular}{@{\extracolsep{5pt}}lccccccccccccc}
\\[-1.8ex]\hline
\hline \\[-1.8ex]
& \multicolumn{13}{c}{\textit{Rows Correspond to Left-Hand Side Portfolios, Columns Correspond to Right-Hand Side Portfolios}} \
\cr \cline{2-14}
\\[-1.8ex] & \multicolumn{1}{c}{BPZ$_F$} & \multicolumn{1}{c}{BPZ$_L$} & \multicolumn{1}{c}{BSV} & \multicolumn{1}{c}{CRW} & \multicolumn{1}{c}{DGU} & \multicolumn{1}{c}{FF3} & \multicolumn{1}{c}{FF6} & \multicolumn{1}{c}{GKX} & \multicolumn{1}{c}{HXZ} & \multicolumn{1}{c}{KNS} & \multicolumn{1}{c}{KPS} & \multicolumn{1}{c}{NN} & \multicolumn{1}{c}{RF}  \\
\\[-1.8ex] & (1) & (2) & (3) & (4) & (5) & (6) & (7) & (8) & (9) & (10) & (11) & (12) & (13) \\
\hline \\[-1.8ex]

 BPZ$_F$ &  & 13.156$^{***}$ & 6.744$^{**}$ & 24.970$^{***}$ & 16.994$^{***}$ & 19.321$^{***}$ & 20.292$^{***}$ & 7.736$^{**}$ & 20.047$^{***}$ & 5.046$^{*}$ & 22.391$^{***}$ & 13.564$^{***}$ & 24.085$^{***}$ \\
&  & (2.924) & (2.667) & (2.606) & (2.451) & (2.495) & (2.910) & (3.516) & (2.627) & (2.774) & (2.785) & (2.615) & (2.686) \\
 BPZ$_L$ & 24.237$^{***}$ &  & 9.208$^{***}$ & 31.644$^{***}$ & 29.479$^{***}$ & 33.574$^{***}$ & 24.290$^{***}$ & -0.667$^{}$ & 33.630$^{***}$ & 1.337$^{}$ & 31.901$^{***}$ & 24.771$^{***}$ & 33.487$^{***}$ \\
& (2.719) &  & (2.236) & (2.626) & (2.683) & (2.658) & (2.801) & (2.811) & (2.700) & (1.739) & (2.792) & (2.751) & (2.673) \\
 BSV & 20.494$^{***}$ & 10.446$^{***}$ &  & 33.675$^{***}$ & 28.025$^{***}$ & 29.144$^{***}$ & 27.763$^{***}$ & 3.829$^{}$ & 32.569$^{***}$ & -0.193$^{}$ & 34.418$^{***}$ & 20.249$^{***}$ & 34.138$^{***}$ \\
& (2.463) & (2.221) &  & (2.619) & (2.589) & (2.553) & (2.880) & (3.029) & (2.713) & (1.377) & (2.784) & (2.462) & (2.675) \\
 CRW & 11.606$^{***}$ & 6.423$^{**}$ & 11.486$^{***}$ &  & 13.776$^{***}$ & 13.004$^{***}$ & 9.117$^{***}$ & 1.912$^{}$ & 13.383$^{***}$ & 9.857$^{***}$ & 8.822$^{***}$ & 12.256$^{***}$ & 9.195$^{***}$ \\
& (2.856) & (3.095) & (3.108) &  & (2.557) & (2.486) & (2.936) & (3.681) & (2.560) & (3.231) & (2.792) & (2.852) & (2.681) \\
 DGU & 5.322$^{**}$ & 10.229$^{***}$ & 5.535$^{*}$ & 18.585$^{***}$ &  & 6.982$^{***}$ & 2.199$^{}$ & 13.078$^{***}$ & 2.795$^{**}$ & 5.275$^{*}$ & 11.418$^{***}$ & 4.136$^{}$ & 11.304$^{***}$ \\
& (2.595) & (3.055) & (2.968) & (2.470) &  & (1.916) & (2.514) & (3.707) & (1.413) & (3.099) & (2.725) & (2.536) & (2.535) \\
 FF3 & 3.884$^{}$ & 16.151$^{***}$ & 2.656$^{}$ & 15.597$^{***}$ & 1.531$^{}$ &  & 6.683$^{**}$ & 13.953$^{***}$ & 2.437$^{}$ & 4.976$^{}$ & 9.252$^{***}$ & -2.955$^{}$ & 7.752$^{***}$ \\
& (2.688) & (3.080) & (2.979) & (2.444) & (1.949) &  & (2.864) & (3.710) & (2.026) & (3.164) & (2.754) & (2.219) & (2.510) \\
 FF6 & 22.872$^{***}$ & 16.423$^{***}$ & 19.659$^{***}$ & 26.345$^{***}$ & 17.982$^{***}$ & 23.326$^{***}$ &  & 19.849$^{***}$ & 18.027$^{***}$ & 17.682$^{***}$ & 21.410$^{***}$ & 21.020$^{***}$ & 24.136$^{***}$ \\
& (2.856) & (2.956) & (3.061) & (2.630) & (2.330) & (2.609) &  & (3.683) & (2.294) & (3.147) & (2.711) & (2.825) & (2.657) \\
 GKX & 42.620$^{***}$ & 29.258$^{***}$ & 31.246$^{***}$ & 49.178$^{***}$ & 49.435$^{***}$ & 50.618$^{***}$ & 47.005$^{***}$ &  & 51.832$^{***}$ & 28.642$^{***}$ & 46.215$^{***}$ & 41.856$^{***}$ & 49.196$^{***}$ \\
& (2.731) & (2.348) & (2.547) & (2.609) & (2.719) & (2.675) & (2.914) &  & (2.699) & (2.608) & (2.733) & (2.709) & (2.678) \\
 HXZ & 9.151$^{***}$ & 18.262$^{***}$ & 14.440$^{***}$ & 17.895$^{***}$ & 1.714$^{}$ & 6.812$^{***}$ & 1.173$^{}$ & 20.361$^{***}$ &  & 15.775$^{***}$ & 9.025$^{***}$ & 6.611$^{**}$ & 8.278$^{***}$ \\
& (2.791) & (3.085) & (3.121) & (2.482) & (1.418) & (1.998) & (2.483) & (3.692) &  & (3.233) & (2.654) & (2.704) & (2.303) \\
 KNS & 24.303$^{***}$ & 9.449$^{***}$ & 6.801$^{***}$ & 36.894$^{***}$ & 32.074$^{***}$ & 33.985$^{***}$ & 30.538$^{***}$ & 6.840$^{**}$ & 36.825$^{***}$ &  & 36.918$^{***}$ & 25.122$^{***}$ & 38.132$^{***}$ \\
& (2.473) & (1.668) & (1.329) & (2.628) & (2.609) & (2.619) & (2.858) & (2.994) & (2.713) &  & (2.792) & (2.544) & (2.668) \\
 KPS & 17.858$^{***}$ & 19.199$^{***}$ & 22.188$^{***}$ & 19.803$^{***}$ & 16.180$^{***}$ & 17.548$^{***}$ & 13.361$^{***}$ & 9.258$^{**}$ & 14.912$^{***}$ & 20.290$^{***}$ &  & 13.256$^{***}$ & 12.666$^{***}$ \\
& (2.875) & (3.100) & (3.112) & (2.630) & (2.657) & (2.639) & (2.852) & (3.632) & (2.579) & (3.233) &  & (2.787) & (2.291) \\
 NN & 13.975$^{***}$ & 14.284$^{***}$ & 6.903$^{***}$ & 25.471$^{***}$ & 16.592$^{***}$ & 16.122$^{***}$ & 18.618$^{***}$ & 6.624$^{*}$ & 18.788$^{***}$ & 6.800$^{**}$ & 18.955$^{***}$ &  & 18.445$^{***}$ \\
& (2.608) & (2.952) & (2.659) & (2.596) & (2.389) & (2.055) & (2.871) & (3.480) & (2.539) & (2.847) & (2.694) &  & (2.429) \\
 RF & 13.744$^{***}$ & 16.522$^{***}$ & 16.400$^{***}$ & 13.812$^{***}$ & 8.097$^{***}$ & 8.839$^{***}$ & 9.425$^{***}$ & 9.291$^{**}$ & 5.511$^{**}$ & 17.631$^{***}$ & 3.048$^{}$ & 2.993$^{}$ & \\
& (2.882) & (3.085) & (3.108) & (2.626) & (2.569) & (2.500) & (2.905) & (3.700) & (2.326) & (3.211) & (2.381) & (2.613) & \\

\hline
\hline \\[-1.8ex]
\textit{Note:} & \multicolumn{13}{r}{$^{*}$p$<$0.1; $^{**}$p$<$0.05; $^{***}$p$<$0.01} \\
\end{tabular}

}
    \vspace{4mm}
    \caption{\textbf{Cross Alphas.} Annualized alpha of portfolios (rows) regressed on right-hand-side single-factor models (columns). Portfolios are formed ex ante, and alphas/betas are calculated for 360 observations from February 1990 to January 2020. Standard errors (in parentheses). See Table \ref{tab:models} for model initialisms.}
    \label{tab:cross_alphas}
\end{table}

\begin{table}[!t] \centering
    \resizebox{1.\textwidth}{!}{\begin{tabular}{@{\extracolsep{5pt}}lccccccccccccc}
\\[-1.8ex]\hline
\hline \\[-1.8ex]
& \multicolumn{13}{c}{\textit{Rows Correspond to Left-Hand Side Portfolios, Columns Correspond to Right-Hand Side Portfolios}} \
\cr \cline{2-14}
\\[-1.8ex] & \multicolumn{1}{c}{BPZ$_F$} & \multicolumn{1}{c}{BPZ$_L$} & \multicolumn{1}{c}{BSV} & \multicolumn{1}{c}{CRW} & \multicolumn{1}{c}{DGU} & \multicolumn{1}{c}{FF3} & \multicolumn{1}{c}{FF6} & \multicolumn{1}{c}{GKX} & \multicolumn{1}{c}{HXZ} & \multicolumn{1}{c}{KNS} & \multicolumn{1}{c}{KPS} & \multicolumn{1}{c}{NN} & \multicolumn{1}{c}{RF}  \\
\\[-1.8ex] & (1) & (2) & (3) & (4) & (5) & (6) & (7) & (8) & (9) & (10) & (11) & (12) & (13) \\
\hline \\[-1.8ex]

 BPZ$_F$ &  & -7.196$^{***}$ & -1.548$^{***}$ & 1.748$^{***}$ & -1.904$^{***}$ & -0.754$^{***}$ & -0.268$^{***}$ & -8.626$^{***}$ & 0.942$^{***}$ & -3.675$^{***}$ & -1.331$^{***}$ & -0.269$^{***}$ & 0.988$^{***}$ \\
&  & (0.253) & (0.081) & (0.083) & (0.059) & (0.036) & (0.030) & (0.274) & (0.040) & (0.116) & (0.225) & (0.086) & (0.040) \\
 BPZ$_L$ & 7.529$^{***}$ &  & 6.500$^{***}$ & 6.996$^{***}$ & 7.324$^{***}$ & 6.972$^{***}$ & 9.798$^{***}$ & -1.739$^{***}$ & 6.761$^{***}$ & 7.192$^{***}$ & 7.103$^{***}$ & 7.034$^{***}$ & 7.147$^{***}$ \\
& (0.274) &  & (0.262) & (0.263) & (0.268) & (0.268) & (0.308) & (0.134) & (0.260) & (0.308) & (0.265) & (0.256) & (0.266) \\
 BSV & 2.004$^{***}$ & -1.528$^{***}$ &  & 2.043$^{***}$ & 1.634$^{***}$ & 2.185$^{***}$ & 2.022$^{***}$ & -3.556$^{***}$ & 1.858$^{***}$ & -0.742$^{***}$ & 2.120$^{***}$ & 1.501$^{***}$ & 1.839$^{***}$ \\
& (0.077) & (0.117) &  & (0.077) & (0.067) & (0.080) & (0.076) & (0.141) & (0.072) & (0.050) & (0.087) & (0.073) & (0.071) \\
 CRW & 1.966$^{***}$ & 1.795$^{***}$ & 2.035$^{***}$ &  & 2.087$^{***}$ & 1.793$^{***}$ & 1.933$^{***}$ & 1.207$^{***}$ & 1.645$^{***}$ & 2.009$^{***}$ & 2.027$^{***}$ & 2.099$^{***}$ & 1.971$^{***}$ \\
& (0.085) & (0.081) & (0.086) &  & (0.088) & (0.081) & (0.085) & (0.072) & (0.076) & (0.085) & (0.087) & (0.086) & (0.085) \\
 DGU & 2.351$^{***}$ & 0.658$^{***}$ & 1.184$^{***}$ & 2.612$^{***}$ &  & 4.151$^{***}$ & 4.394$^{***}$ & 1.203$^{***}$ & 6.012$^{***}$ & 0.975$^{***}$ & 0.372$^{**}$ & 2.014$^{***}$ & 2.030$^{***}$ \\
& (0.089) & (0.025) & (0.056) & (0.101) &  & (0.119) & (0.113) & (0.050) & (0.142) & (0.051) & (0.148) & (0.123) & (0.082) \\
 FF3 & 0.906$^{***}$ & 1.571$^{***}$ & 0.150$^{***}$ & 1.444$^{***}$ & -0.307$^{***}$ &  & 0.794$^{***}$ & 1.195$^{***}$ & 1.517$^{***}$ & 0.203$^{***}$ & -0.327$^{***}$ & 0.158$^{***}$ & 0.736$^{***}$ \\
& (0.028) & (0.052) & (0.021) & (0.058) & (0.023) &  & (0.023) & (0.037) & (0.048) & (0.016) & (0.096) & (0.059) & (0.023) \\
 FF6 & 0.734$^{***}$ & -0.806$^{***}$ & 0.440$^{***}$ & 0.754$^{***}$ & 0.215$^{***}$ & 0.672$^{***}$ &  & -0.509$^{***}$ & 1.118$^{***}$ & 0.204$^{***}$ & -0.432$^{***}$ & 0.495$^{***}$ & 0.671$^{***}$ \\
& (0.021) & (0.038) & (0.016) & (0.022) & (0.008) & (0.020) &  & (0.024) & (0.034) & (0.014) & (0.091) & (0.022) & (0.020) \\
 GKX & 9.659$^{***}$ & 8.457$^{***}$ & 10.512$^{***}$ & 9.255$^{***}$ & 9.522$^{***}$ & 9.346$^{***}$ & 10.691$^{***}$ &  & 9.047$^{***}$ & 11.229$^{***}$ & 8.788$^{***}$ & 9.583$^{***}$ & 9.480$^{***}$ \\
& (0.316) & (0.274) & (0.323) & (0.310) & (0.314) & (0.313) & (0.325) &  & (0.314) & (0.339) & (0.296) & (0.309) & (0.313) \\
 HXZ & -0.012$^{}$ & 0.541$^{***}$ & -0.051$^{***}$ & 0.740$^{***}$ & -1.319$^{***}$ & -0.430$^{***}$ & -0.760$^{***}$ & 1.149$^{***}$ &  & 0.059$^{***}$ & -1.894$^{***}$ & -0.399$^{***}$ & -0.235$^{***}$ \\
& (0.009) & (0.017) & (0.009) & (0.037) & (0.047) & (0.021) & (0.032) & (0.035) &  & (0.006) & (0.166) & (0.035) & (0.024) \\
 KNS & 3.510$^{***}$ & -2.228$^{***}$ & 2.602$^{***}$ & 3.364$^{***}$ & 3.064$^{***}$ & 3.526$^{***}$ & 3.758$^{***}$ & -3.814$^{***}$ & 3.222$^{***}$ &  & 3.343$^{***}$ & 2.682$^{***}$ & 3.262$^{***}$ \\
& (0.126) & (0.158) & (0.090) & (0.121) & (0.113) & (0.123) & (0.127) & (0.152) & (0.117) &  & (0.122) & (0.108) & (0.118) \\
 KPS & 3.514$^{***}$ & 3.251$^{***}$ & 4.007$^{***}$ & 3.891$^{***}$ & 3.415$^{***}$ & 3.387$^{***}$ & 4.386$^{***}$ & 0.161$^{}$ & 5.634$^{***}$ & 3.652$^{***}$ &  & 3.322$^{***}$ & 3.112$^{***}$ \\
& (0.478) & (0.488) & (0.446) & (0.462) & (0.497) & (0.561) & (0.523) & (0.559) & (0.544) & (0.468) &  & (0.423) & (0.488) \\
 NN & 2.243$^{***}$ & -0.354$^{**}$ & 0.931$^{***}$ & 2.613$^{***}$ & 1.156$^{***}$ & 1.741$^{***}$ & 1.987$^{***}$ & -3.170$^{***}$ & 2.293$^{***}$ & 0.458$^{***}$ & 0.320$^{***}$ &  & 2.121$^{***}$ \\
& (0.153) & (0.169) & (0.108) & (0.151) & (0.137) & (0.136) & (0.143) & (0.196) & (0.136) & (0.121) & (0.033) &  & (0.154) \\
 RF & 0.957$^{***}$ & 4.405$^{***}$ & 1.680$^{***}$ & 1.814$^{***}$ & -0.527$^{***}$ & 0.322$^{***}$ & 0.752$^{***}$ & -4.596$^{***}$ & 1.561$^{***}$ & 2.502$^{***}$ & -0.207$^{}$ & 0.520$^{***}$ & \\
& (0.034) & (0.205) & (0.064) & (0.079) & (0.069) & (0.051) & (0.037) & (0.289) & (0.043) & (0.099) & (0.371) & (0.057) & \\

\hline
\hline \\[-1.8ex]
\textit{Note:} & \multicolumn{13}{r}{$^{*}$p$<$0.1; $^{**}$p$<$0.05; $^{***}$p$<$0.01} \\
\end{tabular}

}
    \vspace{4mm}
    \caption{\textbf{Elasticity with Hedging.} Value-weighted and winsorized (5$^{th}$-95$^{th}$ percentiles) average elasticity of portfolios (rows) after hedging out various portfolios (columns), using weights from Proposition \ref{prop:hedged}. Sample: 1,253,175 stock-month observations, February 1990 to January 2020. Standard errors (in parentheses) are double-clustered by month and stock. See Table \ref{tab:models} for model initialisms.
}
    \label{tab:cross_elasticity}
\end{table}

Table \ref{tab:cross_elasticity} shows the elasticity of these portfolios, with a similar layout. Like above, the elasticity terms are winsorized in exactly the same way as above, and the standard errors are clustered similarly. The take-away from this table is relatively simple: these trading strategies still have relatively inelastic demand. The mean elasticity across all these pairings is 2.1, similar to, but slightly smaller than, the previous 2.6 average elasticity of the unhedged models. 

\subsection{Systematic Elasticity} \label{subsec:systematic_elasticity}

In the previous section, we considered price changes ceteris paribus, which is much like an idiosyncratic price change. It is natural to also estimate the systematic elasticity of these models, that is, the elasticity of an individual stock when the price of a systematic factor is shocked. Consider some factor $j$ with an excess return $F_{j,t}$ and a positive price $p_{j,t}^F$. Consider a simple factor relationship with an error term $\nu_{i,j,t}$:
\begin{equation} \label{eq:factor_relationship}
    r_{i,t} = \beta_{i,j,t-1}^F F_{j,t} + \nu_{i,j,t}.
\end{equation}
I could also include controls to hold fixed an alpha (intercept) term or hold fixed other factors, but for simplicity, I just consider a simple factor relationship without holding these terms fixed. The systematic elasticity is defined as $\eta_{i,t}^{j,sys} = - \partial \log (s_{i,t}) / \partial \log (p_{j,t}^F)$. In appendix \ref{app:systematic_elasticity} I derive a formula for the systematic elasticity which is used for the empirical results below, but I also derive the following approximation:
\begin{equation} \label{eq:sys_elas_approx}
    \eta_{i,t}^{j,sys} \equiv - \frac{\partial \log (s_{i,t})}{\partial \log (p_{j,t}^F)}
    \approx \underbrace{\beta_{i,j,t-1}^F
    \eta_{i,t}}_{\text{Direct}} 
    \underbrace{- \left( \sum_{n \neq i} \frac{\partial \log (w_{i,t})}{\partial \log (p_{n,t})} \beta_{n,j,t-1}^F \right)}_{\text{Cross}}
    \underbrace{- \left( \sum_n w_{n,t-1} \beta_{n,j,t-1}^F \right)}_{\text{Wealth}}.
\end{equation}
This breaks the systematic elasticity into the three components labeled above. The direct effect is approximately the ceteris paribus elasticity, scaled by the beta on the factor. The cross-term takes into account the fact that portfolio weights are a function of the entire distribution of prices. Finally, since these are systematic shocks, wealth effects are not necessarily negligible.

As long as we have the factors and betas in equation (\ref{eq:systematic_math}), we can calculate this systematic derivative. To obtain factors, I calculate the first three principal component (PC) portfolios of the predictor-weighted portfolios shown in equation (\ref{eq:port_returns}), since the first few PCs explain much of the variation across portfolios \citep[e.g.,][]{giglio2021asset}. The subsequent weights on individual stocks are often negative, but I want to consider long-only factors with unlevered prices. Therefore, for each PC portfolio, I construct two subsequent portfolios: the long end with weights scaled to sum to one, and the short end with weights similarly scaled. Thus, we have six portfolios, where $2/-$ denotes the short end of the second PC. Note that the PC portfolios are not created to have a positive factor premia, so the long and short end portfolios have a similar interpretation as one side of a systematic PC portfolio. I estimate $\beta_{j,t}^F$ by modeling it as a linear function of all normalized predictors $\udot{z}_{i,k,t}$, following the approach of \cite{kelly} and \cite{pastor2003liquidity}. To estimate this beta, I run a panel regression of stock returns on the interactions of the factor return with the normalized predictors $\udot{z}_{i,k,t}$.\footnote{See Appendix~\ref{app:covariance} for a similar approach with fewer characteristics.}

Table \ref{tab:sys_estimates} shows the value-weighted and winsorized systematic elasticities from all six systematic portfolios. These elasticity values tend to be between -1 and 1 for most stocks on average, closer to zero than the ceteris paribus elasticities. These systematic elasticities are smaller because the cross terms tend to negate the direct effects and because the portfolio tend to be somewhat hedged against the systematic shocks, meaning the betas are relatively clustered around zero. The wealth effects tend to be small because these betas are clustered around zero.\footnote{See Appendix \ref{app:systematic_elasticity} for an empirical decomposition into the three components from equation (\ref{eq:sys_elas_approx}).}


\begin{table}[!t] \centering
    \resizebox{1.\textwidth}{!}{\begin{tabular}{@{\extracolsep{5pt}}lccccccccccccc}
\\[-1.8ex]\hline
\hline \\[-1.8ex]
& \multicolumn{13}{c}{\textit{Average Systematic Elasticity}} \
\cr \cline{2-14}
\\[-1.8ex] PCA & \multicolumn{1}{c}{BPZ$_F$} & \multicolumn{1}{c}{BPZ$_L$} & \multicolumn{1}{c}{BSV} & \multicolumn{1}{c}{CRW} & \multicolumn{1}{c}{DGU} & \multicolumn{1}{c}{FF3} & \multicolumn{1}{c}{FF6} & \multicolumn{1}{c}{GKX} & \multicolumn{1}{c}{HXZ} & \multicolumn{1}{c}{KNS} & \multicolumn{1}{c}{KPS} & \multicolumn{1}{c}{NN} & \multicolumn{1}{c}{RF}  \\
\\[-1.8ex] & (1) & (2) & (3) & (4) & (5) & (6) & (7) & (8) & (9) & (10) & (11) & (12) & (13) \\
\\[-1.8ex] \hline \\[-1.4ex]

1/+& 0.199$^{***}$ & -0.991$^{***}$ & -0.005$^{}$ & 0.326$^{***}$ & -0.224$^{***}$ & -0.170$^{***}$ & -0.028$^{***}$ & 0.326$^{***}$ & -0.034$^{***}$ & -0.254$^{***}$ & -0.056$^{**}$ & -0.248$^{***}$ & 0.027$^{***}$ \\
 & (0.022) & (0.062) & (0.014) & (0.010) & (0.017) & (0.009) & (0.007) & (0.019) & (0.004) & (0.021) & (0.028) & (0.026) & (0.010) \\
 2/+ & 0.414$^{***}$ & -0.806$^{***}$ & 0.169$^{***}$ & 0.479$^{***}$ & -0.063$^{***}$ & 0.042$^{***}$ & 0.135$^{***}$ & 0.453$^{***}$ & 0.129$^{***}$ & -0.088$^{***}$ & -0.005$^{}$ & -0.192$^{***}$ & 0.145$^{***}$ \\
 & (0.027) & (0.074) & (0.017) & (0.014) & (0.017) & (0.010) & (0.010) & (0.029) & (0.007) & (0.024) & (0.046) & (0.029) & (0.013) \\
 3/+ & 0.473$^{***}$ & -0.646$^{***}$ & 0.202$^{***}$ & 0.590$^{***}$ & 0.001$^{}$ & 0.025$^{**}$ & 0.175$^{***}$ & 0.576$^{***}$ & 0.158$^{***}$ & -0.014$^{}$ & 0.063$^{}$ & -0.181$^{***}$ & 0.188$^{***}$ \\
 & (0.030) & (0.076) & (0.019) & (0.017) & (0.017) & (0.010) & (0.011) & (0.034) & (0.008) & (0.026) & (0.061) & (0.029) & (0.016) \\
 1/-& 0.608$^{***}$ & -0.637$^{***}$ & 0.251$^{***}$ & 0.680$^{***}$ & 0.055$^{**}$ & 0.099$^{***}$ & 0.227$^{***}$ & 0.618$^{***}$ & 0.242$^{***}$ & 0.010$^{}$ & 0.104$^{}$ & -0.203$^{***}$ & 0.265$^{***}$ \\
 & (0.037) & (0.093) & (0.024) & (0.020) & (0.022) & (0.012) & (0.012) & (0.042) & (0.012) & (0.033) & (0.084) & (0.033) & (0.020) \\
 2/-& 0.289$^{***}$ & -1.207$^{***}$ & -0.017$^{}$ & 0.493$^{***}$ & -0.252$^{***}$ & -0.238$^{***}$ & -0.031$^{***}$ & 0.455$^{***}$ & -0.021$^{***}$ & -0.321$^{***}$ & -0.046$^{}$ & -0.323$^{***}$ & 0.063$^{***}$ \\
 & (0.030) & (0.081) & (0.020) & (0.014) & (0.021) & (0.012) & (0.009) & (0.027) & (0.005) & (0.028) & (0.046) & (0.033) & (0.014) \\
 3/-& 0.227$^{***}$ & -1.230$^{***}$ & -0.003$^{}$ & 0.381$^{***}$ & -0.266$^{***}$ & -0.164$^{***}$ & -0.025$^{***}$ & 0.354$^{***}$ & -0.025$^{***}$ & -0.317$^{***}$ & -0.080$^{**}$ & -0.309$^{***}$ & 0.032$^{***}$ \\
 & (0.027) & (0.078) & (0.017) & (0.011) & (0.020) & (0.010) & (0.008) & (0.022) & (0.004) & (0.026) & (0.035) & (0.033) & (0.012) \\

\hline
\hline \\[-1.8ex]

\textit{Note:} & \multicolumn{13}{r}{$^{*}$p$<$0.1; $^{**}$p$<$0.05; $^{***}$p$<$0.01} \\
\end{tabular}

}
    \vspace{4mm}
    \caption{\textbf{Systematic Elasticity Estimates.} Value-weighted and winsorized (5$^{th}$-95$^{th}$ percentiles) average systematic elasticity for six PCA portfolios: $1/+$, $2/+$, $3/+$, $1/-$, $2/-$, $3/-$. Sample: February 1990 to January 2020. Standard errors (in parentheses) are double-clustered by month and stock.}
    \label{tab:sys_estimates}
\end{table}

\section{Equilibrium Model} \label{sec:model}

The purpose of this section is to demonstrate that when hypothetical statistical arbitrageurs (HSAs)---investors who follow statistical arbitrage strategies---are inserted into an equilibrium model counterfactually, demand remains inelastic. This aligns with the rest of the paper, which shows that HSA demand is inelastic. I first present the equilibrium model accommodating HSAs and compare it with the KY demand system model. Finally, I show that allocating capital to HSAs counterfactually keeps individual stock demand inelastic. Note that I use the term HSA to avoid confusion with actual statistical arbitrageurs in the market.


\subsection{Simple Demand System Model that can Accommodate HSAs} \label{subsec:demand_for_hsas}

As discussed above, the BSV and similar models have the following functional form:
\begin{equation} \label{eq:second_stage1}
w_{i,t}^j = \hat \beta_{0,t}^j + \sum_{k=1}^K \hat \beta_{k,t}^j z_{i,k,t} + \epsilon_{i,t}^j,
\end{equation} 
where the $\hat \beta_{k,t}^j$ terms are the loadings on predictors and $z_{i,k,t}$ are the predictor variables. The KY demand function is actually exponential-linear in predictors, and thus cannot accommodate these HSAs well. Therefore, I adapt the KY model somewhat to accommodate these HSAs. 

I divide the asset predictors into three sets: (1) the set containing just the single market weight predictor $z_{i,1,t} = P_{i,t} / A_t$ which I assume is the first predictor ($k = 1$), (2) the set, $\mathcal{D}$, containing the endogenous predictors (i.e., predictors that are functions of prices) other than the market weights, and (3) the set, $\mathcal{X}$, containing the exogenous predictors. In order to estimate the coefficients on endogenous price terms with transparent linear models for predictors in the endogenous set (i.e., $k \in \mathcal{D}$), I consider log-linearized predictors:
\begin{equation} \label{eq:demand_function_predictors}
    \udot{z}_{i,k,t} \approx z_{i,k,t} \equiv a_{i,1,k,t} + a_{2,k,t} \log (P_{i,t}),
\end{equation}
where $a_{i,k,t}$ is an exogenous component of the predictor and $a_{k,t}$ is constant across assets in a given period. If a predictor is exogenous, then I just define $z_{i,k,t} \equiv \udot{z}_{i,k,t}$. I also consider rescaling the endogenous predictors by $B_{i,t}^c / C_{t}$, where $B_{i,t}$ is book equity, $c$ is some positive scalar, and $C_{t} \equiv \sum_i B_{i,t}^c$.\footnote{The term $C_t$ does not matter, since changing $C_t$ just results in a change of $\hat \beta_{k,t}^j$ coefficients that offsets any change.} This modifies the BSV-style demand in equation (\ref{eq:second_stage1}) to be:
\begin{equation} \label{eq:second_stage}
    w_{i,t}^j = \hat \beta_{0,t}^j + \hat \beta^j_{1,t} \frac{P_{i,t}}{A_{t}}
    + \sum_{k \in \mathcal{D}} \hat \beta_{k,t}^j \frac{B_{i,t}^c}{C_t} \left( a_{i,1,k,t} + a_{2,k,t} \log (P_{i,t}) \right) + \sum_{k \in \mathcal{X}} \hat \beta_{k,t}^j z_{i,k,t} + \epsilon_{i,t}^j.
\end{equation}
The rescaling factor $B_{i,t}^c / C_{t}$ is discussed more below, but without it, the BSV-style demand function produces aggregate demand that has very elastic demand for small stocks and very inelastic demand for large stocks. 

This means I can write the demand function simply as:
\begin{equation} \label{eq:indiv_demand}
    w_{i,t}^j = \underbrace{\beta^j_{i,0,t}}_{\substack{\text{intercept}\\\text{term}}} + \underbrace{\beta^j_{1,t} \frac{P_{i,t}}{A_{t}}}_{\substack{\text{level}\\\text{term}}} + \underbrace{\beta_{i,2,t}^j \log (P_{i,t})}_{\substack{\text{log}\\\text{term}}},
\end{equation}
where
\begin{equation} \label{eq:demand_connection}
    \beta_{i,0,t}^j = \hat \beta_{0,t}^j + \sum_{k \in \mathcal{D}} \hat \beta_{k,t}^j 
 \frac{B_{i,t}^c}{C_{t}} a_{i,1,k,t} 
    + \sum_{k \in \mathcal{X}} \hat \beta_{k,t}^j z_{i,k,t}
    + \epsilon_{i,t}^j, \;\;\;
    \beta_{1,t}^j = \hat \beta_{1,t}^j, \;\;\;
    \text{and } \;\; \beta_{i,2,t}^j = \sum_{k \in \mathcal{D}} \hat \beta_{k,t}^j \frac{B_{i,t}^c}{C_{t}} a_{2,k,t}.
\end{equation}



The elasticity of institution $j$ for any asset $i$ with a positive portfolio weight, using equation (\ref{eq:elasticity}), can be calculated as:
\begin{equation} \label{eq:elasticity_inst}
    \eta^{j}_{i,t} = \underbrace{1 
    -  \frac{\beta_{1,t}^j P_{i,t}}{w_{i,t}^j A_t}}_{\substack{\text{level}\\\text{term}}}
    \underbrace{- \frac{\beta_{i,2,t}^j}{w_{i,t}^j}}_{\substack{\text{log}\\\text{term}}}.
\end{equation}
If $\beta{_{i,0,t}^j} = \beta{_{2,t}^j} = 0$ and $ \beta{_{1,t}^j} = 1$, then the elasticity is zero. Just as there is a level term and log term in equation (\ref{eq:indiv_demand}), this equation has the corresponding elasticity effects of these two terms. Note that the inclusion of the log term allows potentially higher elasticities than assuming that $\beta{_{i,2,t}^j} = 0$.

This reveals the importance of having both a level and log price term in the demand function, as well as the interpretation of $\beta_{1,t}^j$ and $\beta_{i,2,t}^j$. If $\beta_{i,0,t}^j = \beta_{i,2,t}^j = 0$ and $\beta{_{1,t}^j} = 1$, then fund $j$ is a market-weighted index fund. The parameter $\beta{_{1,t}^j}$ captures institution $j$'s proclivity to market-weight index \citep[e.g., see][]{benchmarking}. The parameter $\beta{_{i,2,t}^j}$ captures the investor's sensitivity to prices, aside from its proclivity to market-weight index. As an investor's sensitivity to the range of valuation ratio variables increases, $\beta{_{i,2,t}^j}$ becomes more negative and the elasticity rises. 

KY impose a constraint on their demand function estimation in order to guarantee positive elasticities (downward sloping demand), which guarantees an equilibrium. I have similar constraints. In particular, I constrain $\beta{_{1,t}^j} < 1$ and $\beta{_{2,t}^j} < 0$ in the estimation discussed below. This ensures that aggregate demand is downward sloping, which, when combined with a similar restriction on statistical arbitrageur demand, is sufficient to ensure that a unique closed-form positive equilibrium price exists for all assets. 

\subsection{Comparison to Isoelastic Demand} \label{subsec:more_elastic}

The KY exponential-linear demand function can be written as:\footnote{KY's exponential-linear demand function is essentially a classic isoelastic demand function with two exceptions. First, it allows portfolio weights to be zero for some assets. Second, it disallows leverage and forces weights to sum to unity, slightly modifying the demand by changing $\partial \log (w_{i,t}) / \partial \log (p_{i,t})$ by a factor of $1 - w_{i,t}$ (tends to be close to 1). For simplicity, I refer to KY's demand as isoelastic demand.}
\begin{equation} \label{eq:ky_demand}
    w_{i,t}^j = \exp (\hat \varrho_{i,1,t}^j + \varrho_{2,t}^j \log(P_{i,t})) 
    = \underbrace{0}_{\substack{\text{intercept}\\\text{term}}} + \underbrace{\varrho_{i,1,t}^j P_{i,t}^{\varrho_{2,t}^j}}_{\substack{\text{level}\\\text{term}}} + \underbrace{0 \cdot \log(P_{i,t})}_{\substack{\text{log}\\\text{term}}},
\end{equation}
where $\hat \varrho{_{i,1,t}^j}$ captures their exogenous predictors including their latent demand term,\footnote{See equation (10) of KY to see how $\hat \varrho{_{i,1,t}^j}$ is parameterized. In particular, it is a linear function of covariates plus the log of their latent demand term. } $\hat \varrho{_{2,t}^j}$ denotes their exponential-linear price coefficient, and $\varrho{_{i,1,t}^j} = \exp (\hat \varrho{_{i,1,t}^j})$. This is classic isoelastic demand. Comparing this to equation (\ref{eq:indiv_demand}), it is clear that the demand function has a similar price level term, $\beta{_{1,t}^j} P_{i,t} / A_t$, as the KY demand function. However, the demand function here includes the intercept term, $\beta{_{i,0,t}^j}$, and log price term, $\beta{_{2,t}^j} \log (P_{i,t})$, which allows the demand function to potentially capture higher elasticities due to this increased flexibility. In summary, isoelastic demand omits the intercept and log term present in a BSV-style model, potentially leading to different results. I refer to this BSV-style demand function model as the level-log demand model (LLDM). 


\subsection{Statistical Arbitrageur Demand}

We can approximate demand for statistical arbitrageurs using the same functional form as in equation (\ref{eq:indiv_demand}), but replacing $\beta^j$ terms with $\udot{\beta}$ terms to denote statistical arbitrageur demand. I consider a simple first order approximation:
\begin{equation} \label{eq:learner_coefficients}
    \underbrace{\udot{\beta}_{i,1,t} = A_t \nabla_{\udot{z}_{i,1,t}} (f(Z_t)_i b) \frac{\partial \udot{z}_{i,1,t}}{\partial P_{i,t}}}_{\substack{\text{level term} \\ \text{coefficient}}} \;\; \text{ and } \;\;
    \underbrace{\udot{\beta}_{i,2,t} = \sum_{k \in \mathcal{D}} \nabla_{\udot{z}_{i,k,t}} (f(Z_t)_i b) \frac{\partial \udot{z}_{i,k,t}}{\partial \log (P_{i,t})}}_{\substack{\text{log term} \\ \text{coefficient}}},
\end{equation}
where similar to incumbent demand, in the experiments the values are constrained such that $\udot{\beta}_{i,1,t} < 1$ and $\udot{\beta}_{i,2,t} < 0$. After these slope terms are calculated, the intercept in the demand function is pinned down by the level of portfolio weights output by the statistical arbitrageur function. This gives us the portfolio weights of statistical arbitrageurs:
\begin{equation} \label{eq:arb_demand_approx}
    \udot{w}_{i,t} = \udot{\beta}_{i,0,t} + \udot{\beta}_{i,1,t} \frac{P_{i,t}}{A_t} + \udot{\beta}_{i,2,t} \log (P_{i,t}).
\end{equation}

\subsection{Equilibrium} \label{subsec:equilibrium_solution}

Following KY, the demand function for fund $j$, shown in equation (\ref{eq:indiv_demand}), is specified only for the assets in the fund's investment universe, which is the investment mandate or the set of assets a fund considers. Let $\one{_{i,t}^j}$ equal one if asset $i$ is in the investment universe of fund $j$ at time $t$, and zero otherwise. The dollar holdings of the asset for this institution are $A{_t^j} w{_{i,t}^j}$. HSAs do not have constrained investment universes, and so $\one{_{i,t}^j} = 1$ for a HSA.

It follows that aggregate demand for the asset, in terms of portfolio weights, is:
\begin{align} \label{eq:agg_demand}
    w_{i,t} &\equiv \frac{1}{A_t} \sum_j \one_{i,t}^j A_t^j w_{i,t}^j \nonumber\\
    &= \left( \frac{1}{A_t} \sum_j \one_{i,t}^j A_t^j \beta^j_{i,0,t} \right) + \left( \frac{1}{A_{t}} \sum_j \one_{i,t}^j A_t^j \beta^j_{1,t} \right) \frac{P_{i,t}}{A_t} + \left( \frac{1}{A_t} \sum_j \one_{i,t}^j A_t^j \beta_{i,2,t}^j \right) \log (P_{i,t}) \nonumber\\
    & = \beta_{i,0,t} + \beta_{i,1,t} \frac{P_{i,t}}{A_t} + \beta_{i,2,t} \log (P_{i,t}),
\end{align}
where the $\beta_{i,0,t}$, $\beta_{i,1,t}$, and $\beta_{i,2,t}$ coefficients are defined as the terms in parentheses. Thus, all that is needed to fully describe the demand function for the aggregate market is these three parameters for each stock in each period. This equilibrium model has a unique positive price for each stock, with a simple closed form solution.\footnote{Appendix \ref{app:equil_model_statements} shows the formula for the unique equilibrium price, with a concomitant proof.}

The elasticity of the aggregate demand for any asset $i$ can be written as:
\begin{equation} \label{eq:agg_elasticity}
    \eta_{i,t} = \underbrace{1 
    -\beta_{i,1,t}}_{\substack{\text{level}\\\text{term}}}
    \underbrace{- \frac{\beta_{i,2,t}}{P_{i,t} / A_t}}_{\substack{\text{log}\\\text{term}}}.
\end{equation}
This aggregate elasticity is guaranteed to be positive as long as $\beta_{1,t}^j < 1$ and $\beta_{i,2,t}^j < 0$.\footnote{To calculate the aggregate elasticity, note that in equilibrium, $w_{i,t} A_t = P_{i,t}$. A positive elasticity is guaranteed since $\beta_{1,t}^j < 1$ and $\beta_{i,2,t}^j < 0$ ensures that $\beta_{i,1,t} < 1$ and $\beta_{i,2,t} < 0$. }


\subsection{Estimation Details}

The first step of estimating demand is to calculate the demand function predictors from equation (\ref{eq:demand_function_predictors}), which is described in Appendix \ref{subsec:demand_function_predictors}. In short, $a_{2,k,t}$ is calculated from the known derivative of $\udot{z}_{i,k,t}$ with respect to log prices. The exogenous intercept, $a_{i,1,k,t}$, is then calculated with a simple regression of the cross-section every period. 

The demand estimation procedure closely mirrors the KY procedure but with notable differences. The first significant change is that because of the adoption of a straightforward and flexible linear-in-predictors functional form for portfolio weights, I use a simple two-stage least squares estimation procedure. While I use the KY instrument for prices, they use a nonlinear GMM estimation procedure because of their nonlinear demand function. Second, I use a broader set of predictors from the \cite{weber} data. There are some minor differences as well. Like KY, I estimate demand with a separate regression for every institution in every period. Similar to KY, funds are grouped together based on observables if they have too few strictly positive holdings and the aggregate demand across the funds is estimated. KY groups firms together if there are less than 1,000 observations, but I ensure at least 500 observations. This is because my estimation procedure does not have the same convergence issues. More details and discussion about demand estimation is given in Appendix \ref{subsec:demand_estimation_details}.

Note from equation (\ref{eq:agg_demand}) that differences in price elasticity for any asset are driven by the differences in what funds have that asset in their investment universe. KY note "that institutions hold a small set of stocks and that the set of stocks that they have held in the recent past (e.g., over the past 3 years) hardly changes over time." The calculation of the investment universe and instrument relies on this fact. Thus, it is relatively mild to impute the investment universe, price level coefficient, and log price coefficients of funds for the two months following the quarterly 13F holdings data.\footnote{The price level coefficient is $\beta_{1,t}^j$ while the log price coefficient is $\beta_{i,2,t}^j$. The aggregate intercept term, $\beta_{i,0,t}$, is pinned down by the equilibrium condition, $A_t w_{i,t} = P_{i,t}$, combined with equation (\ref{eq:agg_demand}). Thus, only the investment universe, the price level coefficients, and the log price coefficients need to be imputed.} I impute these values, which allows analysis of portfolio returns at the standard monthly frequency with the counterfactual experiments below. This is very different from the problematic assumption that institutions hold the same portfolio weights for the following two months, which I do {not} assume. 

\subsection{Demand Estimation Results} \label{subsec:demand_estimation_results}


Panel A of Figure \ref{fig:demand_system_estimates} shows a binned scatter plot of aggregate demand elasticity estimated with the LLDM for individual assets as a function of market portfolio weights. The blue curve shows results with $c = 0$ in the scaling term ($B_{i,t}^c / C_t$), which is the same as just not having the scaling term. The graph shows a demand elasticity of almost 1,000 for the smallest stocks and an elasticity below 1 for the largest stocks. Equation (\ref{eq:agg_elasticity}) shows that this sharply declining demand elasticity is just a mechanical result from the LLDM from the log term. Panels B and C shows the level and log terms respectively, and indeed the sharp decline is mechanically driven by this BSV-style model. This echoes the discussion above about assets with the smallest portfolio weights having the highest elasticities, but in a demand system model the market clears, so the market portfolio weights are proportional to market equity. 

This is a relatively unappealing property of the LLDM. \cite{aggressive} shows a non-mechanical decreasing elasticity in stock size, but not to this degree. To reiterate, the goal in this section is not to present a demand system model that dominates the KY model, but rather, to present an extension of KY that accommodates the HSA demand functions. One way to do this with more balanced cross-sectional elasticities would be to scale $\beta_{i,2,t}$ down for smaller stocks with an exogenous term, which is the point of the $B_{i,t}^c / C_t$ factor. Figure \ref{fig:demand_system_estimates} shows results for scaling factors with $c = $ 1, 1.25, 1.5, 1.75, and 2. It turns out that $c = 1.25$ produces the most flat log term (see Panel C), and also produces the highest median $R^2$ value calculated across institutions and quarters. The $R^2$ values are shown in the first column of Table \ref{tab:r2_demand_system}. I use the LLDM model with $c = 1.25$ for the counterfactual experiments, which has an average elasticity of about 0.81, which is relatively stable across assets. 

For comparison, I estimate a KY-style model by regressing the log of market equity on their set of covariates with their instrument.\footnote{I use normalized (${z}_{i,k,t}$ from equation (\ref{eq:demand_function_predictors})) versions of their covariates/predictors: book-to-market, book equity, beta, profitability, investment, and dividends. While not shown, I also estimated the KY model using their GMM estimation which has a similar elasticity across assets, but the $R^2$ results are higher for the OLS regression version.} The elasticities are shown in Panel A, and since the entire elasticity is driven by only level effects, this is shown in Panel B as well. I also estimate a simple LLDM with no log terms or covariates other than the market, and Panel B shows that this actually produces a relatively similar elasticity to the KY model. When a LLDM is fit without a log term, but an intercept term and the exogenous KY covariates are also used as controls, the elasticity actually increases by about 0.2 as shown in Panel B. The other LLDM estimates also have higher level terms than the KY model, except for the $c = 0$ LLDM. 

\begin{figure}[!t]
    \centering
    \begin{minipage}{0.49\textwidth}
        \centering
        \textbf{Panel A: Elasticity Across Models}
        \includegraphics[width=\linewidth,trim={0 0 0 1cm},clip]{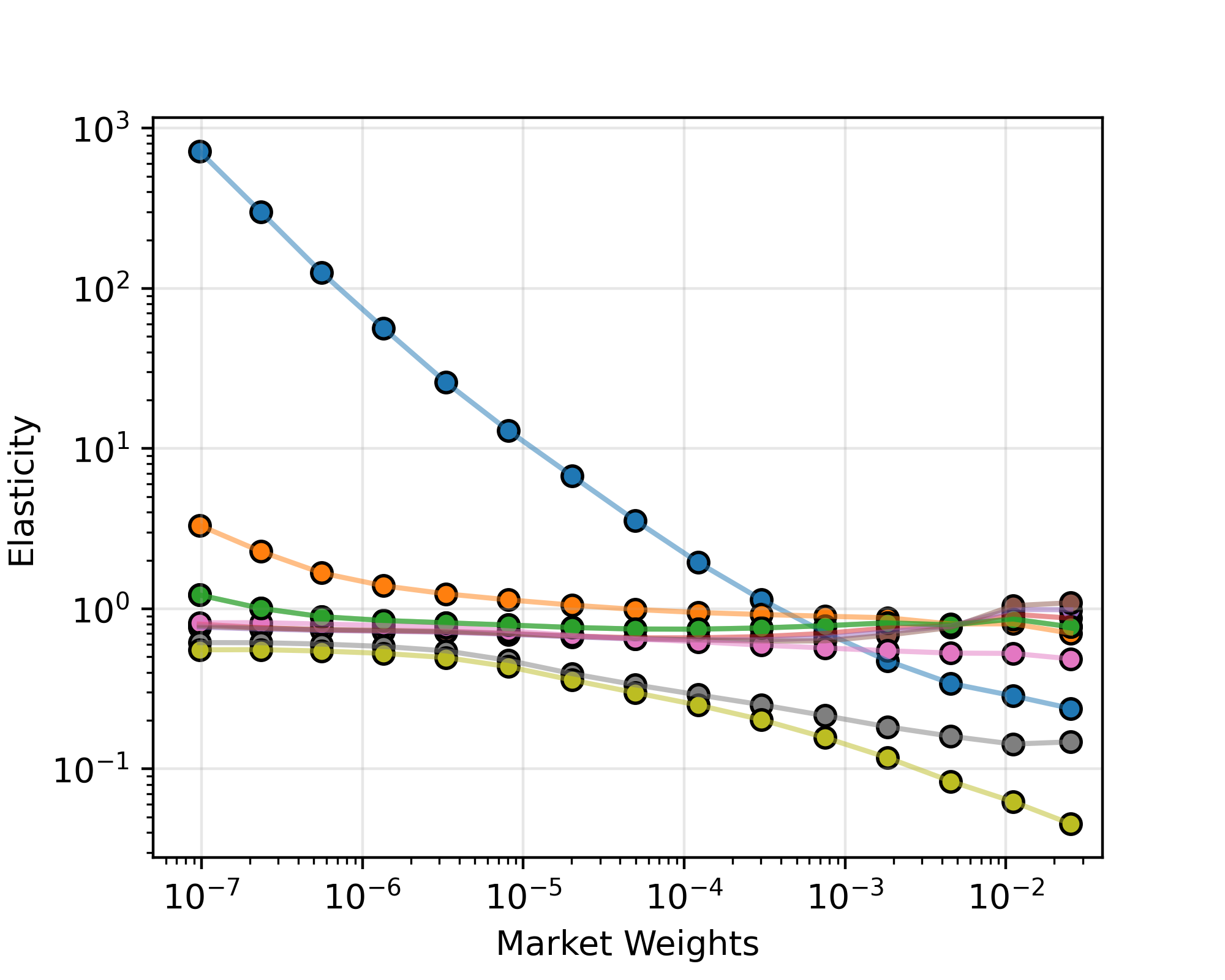}
    \end{minipage}\hfill
    \begin{minipage}{0.49\textwidth}
        \centering
        \textbf{Panel B: Level Term Elasticity Component}
        \includegraphics[width=\linewidth,trim={0 0 0 1cm},clip]{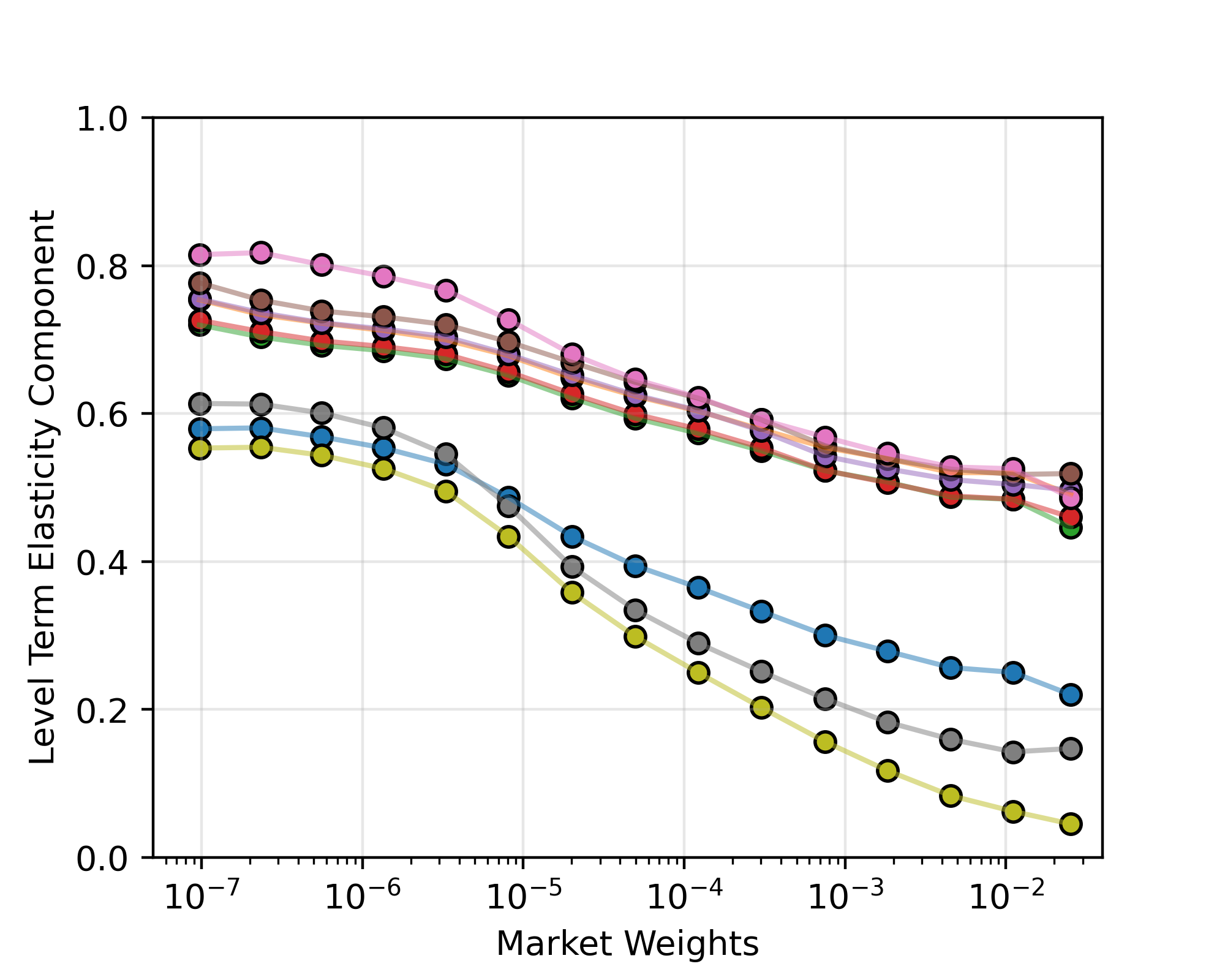}
    \end{minipage}
    \begin{minipage}{0.49\textwidth}
        \centering
        \vspace{4mm}
        \textbf{Panel C: Log Term Elasticity Component}
        \includegraphics[width=\linewidth,trim={0 0 0 1cm},clip]{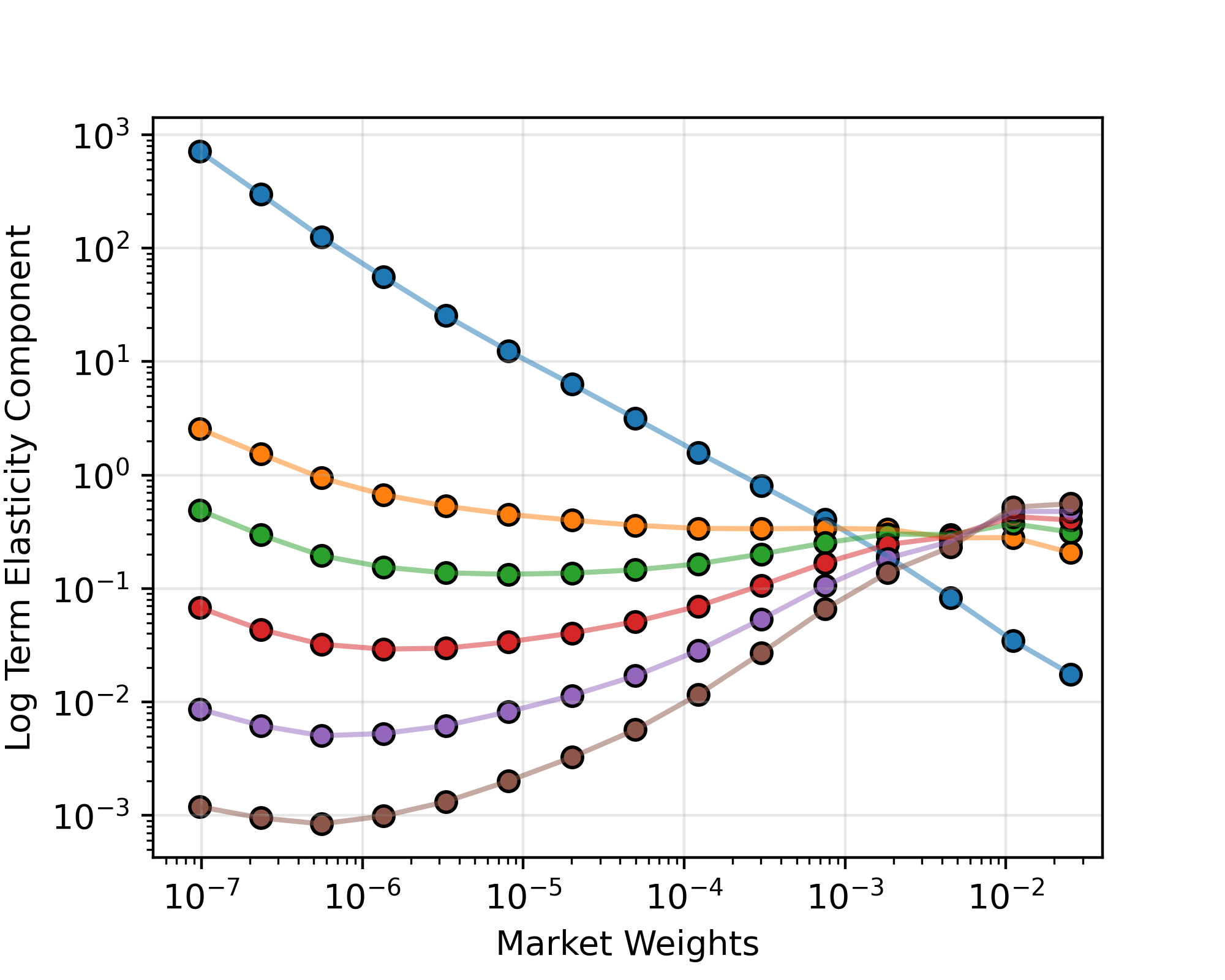}
    \end{minipage}\hfill
    \begin{minipage}{0.49\textwidth}
        \centering
        \vspace{4mm}
        \includegraphics[width=\linewidth]{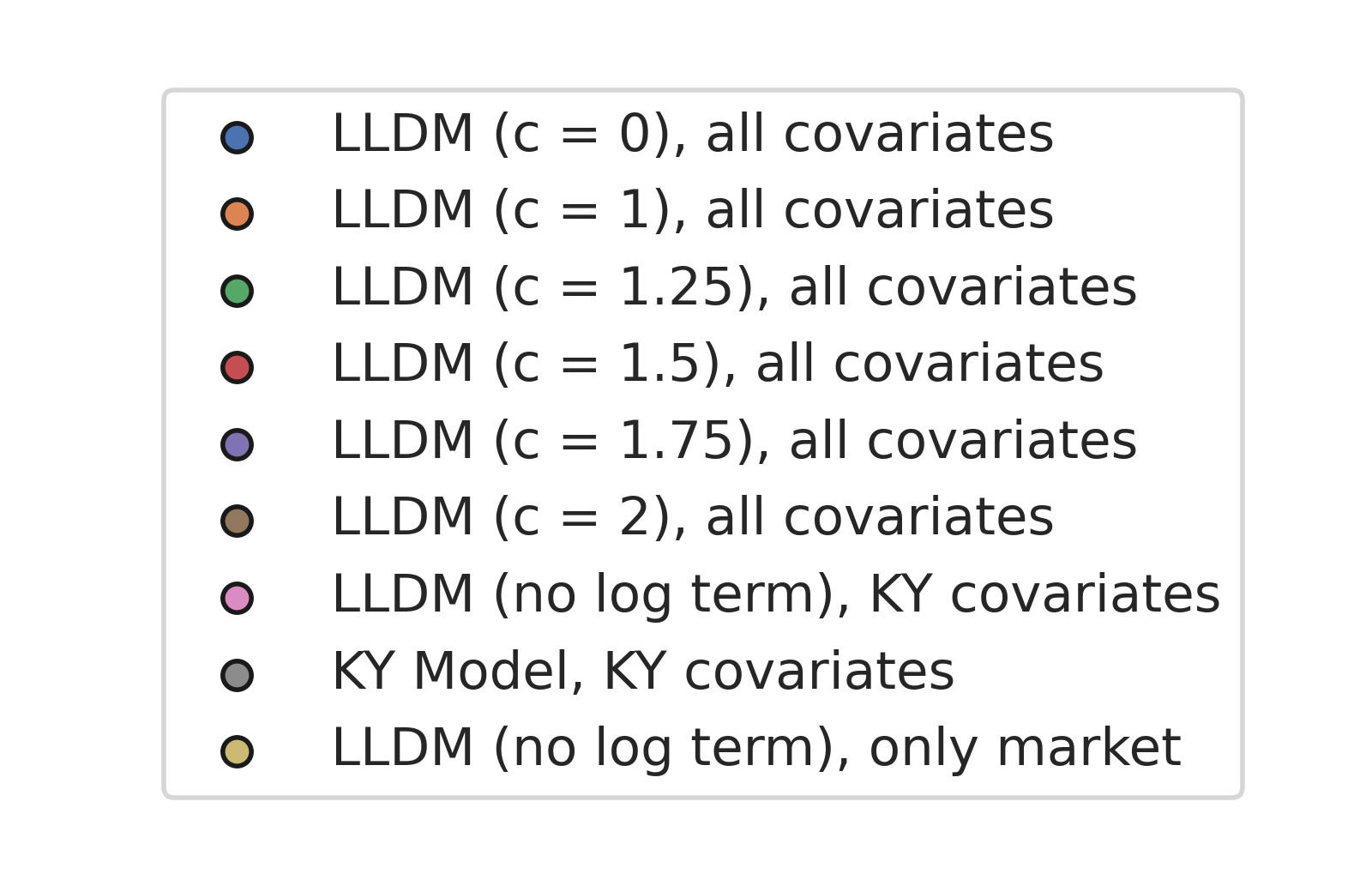}
    \end{minipage}
    \vspace{4mm}
    \caption{\textbf{Demand System Estimates.} Binned scatter plots. Panel A: overall elasticity of various models vs. market portfolio weights (log-scale). Panels B and C: level and log term components from equation (\ref{eq:agg_elasticity}). Models include LLDM ($c = 0, 1, 1.25, 1.5, 1.75, 2$), KY-style, LLDM with only KY covariates (no log terms), and LLDM with only a level term (no log terms). Winsorized in each bin (5$^{th}$-95$^{th}$ percentiles). Sample: 1,253,175 stock-month observations, February 1990 to January 2020. See Table \ref{tab:models} for model initialisms.}
    \label{fig:demand_system_estimates}
\end{figure}

The LLDM model fits the holdings data relatively well but is difficult to compare directly to the KY model. The median $R^2$, across institutions and time, of the KY holdings data in terms of log portfolio weights is $0.28$. However, since the LLDM allows short positions by design, comparing performance in terms of log portfolio weights is challenging---taking the logarithm of negative or small predicted portfolio weights leads to undefined or extremely negative values. Table~\ref{tab:r2_demand_system} presents median $R^2$ values calculated using portfolio weights rather than log portfolio weights. The LLDM with $c = 1.25$ achieves an $R^2$ of $0.47$, which is higher than the KY model's $R^2$ of $0.07$.

The higher $R^2$ values of the LLDM are primarily due to its better fit of the holdings data for the largest stocks. To assess how $R^2$ varies across the size distribution, I partition stocks each period into five size quintiles and calculate the median $R^2$ across institutions and time within each quintile, as shown in Table~\ref{tab:r2_demand_system}. Specifically, for each asset $i$, time $t$, and institution $j$, I compute 
\begin{equation*}
    (R^2)_{i,t}^j = 1 - \frac{(w_{i,t}^j - \hat{w}_{i,t}^j)^2}{ \sigma^2(w_t^j)},
\end{equation*}
where $\hat{w}_{i,t}^j$ is the fitted weight and $\sigma^2(w_t^j)$ is the variance of institution $j$'s portfolio weights at time $t$. Averaging $(R^2)_{i,t}^j$ across assets gives the standard regression $R^2$. I then calculate the average $(R^2)_{i,t}^j$ within each quintile and take the median across institutions and time within that quintile. For the bottom four quintiles, the $R^2$ for every model exceeds $0.85$ and is relatively similar. However, for the largest quintile of stocks, the $R^2$ varies significantly: the KY model has a negative $R^2$ of $-0.3$, while the LLDMs with the full set of covariates achieve $R^2$ values around $0.3$. This indicates that the LLDM's higher $R^2$ values are driven by better fitting the holdings data for the largest stocks, which is particularly challenging for a log-weight model like KY. Note that the average $R^2$ across quintiles in Table~\ref{tab:r2_demand_system} does not equal the overall $R^2$ because each institution is not equally represented in each quintile and these are median $R^2$ values. 

\begin{table}[!t] \centering
    \resizebox{0.7\textwidth}{!}{\begin{tabular}{@{\extracolsep{5pt}}llcccccc}
\\[-1.8ex]\hline
\hline \\[-1.8ex]
& & \multicolumn{6}{c}{\textit{$R^2$}} \
\cr \cline{3-8}
\\[-1.8ex] & & \multicolumn{5}{c}{Quintile} & \multicolumn{1}{c}{Overall} \\
\cline{3-7} \cline{8-8}
\\[-1.8ex] Model & Covariates & 1 & 2 & 3 & 4 & 5 & \\
\\[-2.2ex] \hline \\[-1.4ex]

KY Model & KY & 1.000 & 0.997 & 0.976 & 0.888 & -0.303 & 0.070 \\
LLDM (no log term) & market & 0.923 & 0.920 & 0.907 & 0.852 & -0.385 & 0.029 \\
LLDM (no log term) & KY & 0.975 & 0.969 & 0.954 & 0.888 & -0.023 & 0.263 \\
LLDM ($c = 0$) & all & 0.922 & 0.918 & 0.908 & 0.863 & 0.224 & 0.417 \\
LLDM ($c = 1$) & all & 0.934 & 0.928 & 0.917 & 0.870 & 0.297 & 0.465 \\
LLDM ($c = 1.25$) & all & 0.934 & 0.928 & 0.917 & 0.870 & 0.298 & 0.467 \\
LLDM ($c = 1.5$) & all & 0.932 & 0.927 & 0.916 & 0.870 & 0.295 & 0.466 \\
LLDM ($c = 1.75$) & all & 0.931 & 0.925 & 0.915 & 0.869 & 0.291 & 0.464 \\
LLDM ($c = 2$) & all & 0.929 & 0.924 & 0.915 & 0.868 & 0.288 & 0.462 \\

\hline
\hline \\[-1.8ex]

\end{tabular}}
    \vspace{4mm}
    \caption{\textbf{Goodness of Fit Measures.} Median $R^2$ values for models in Figure \ref{fig:demand_system_estimates}, reported by size quintiles and overall (last column). Quintile $R^2$ values are calculated within each size quintile (as an average $(R^2)_{i,t}^j$ within the quintile, as described in the text), then the median is reported across institutions and quarters. The final column shows the median $R^2$ across all institutions and quarters.}
    \label{tab:r2_demand_system}
\end{table}

\subsection{Counterfactual Experiments} \label{subsec:counter_experiments}

The counterfactual experiments are as follows. For some $\theta$ between zero and one, a total of $\theta A_t$ dollars is given to an atomistic mass of one type of HSA, and the rest of the investors whose demand function is estimated above are given $(1 - \theta) A_t$ dollars.\footnote{The fraction $\theta$ can be endogenized based on performance, and the results are similar. See Appendix \ref{subsec:recursive_exper} for more details.} I follow KY by assuming shares outstanding, dividends, and the other non-price-related predictor variables are exogenous as given in the actual data, while prices are endogenized. The HSA investors recursively refit the parameters in their models each period using the time series of endogenous returns. Specifically, the HSA demand function submitted in period $t$ has model parameters fit up to and including $t-1$ returns. 

The key result of the counterfactual experiments is that HSA demand barely changes the aggregate elasticity for stocks on average, if at all. I use various $\theta$ values up to 0.25 (HSAs control 25\% of the stock market), and calculate the how much the value-weighted elasticity shifts due to HSAs in the market. All elasticity changes are between a drop of 0.2 and a rise of 0.2, although generally the change in elasticity is much closer to zero.\footnote{Instead of showing a table with many near-zero values in the text, the table showing this result, Table \ref{tab:mkt_elasticity_simple}, is relegated to the Appendix.} The FF3 HSA actually decreases the elasticity of the market slightly, while the GKX model increases the elasticity of the market slightly.

An intuitive consequence of the inelastic demand of these statistical arbitrageurs is that they are unable to arbitrage away the alpha, as shown in Appendix \ref{subsec:arb_away_alpha}. By examining alpha across various anomalies, I show that alpha largely persists in the counterfactual equilibrium.


\subsection{Validity of Counterfactual Experiments} \label{subsec:validity}

A key concern in counterfactual experiments is a Lucas critique, which questions whether the estimated demand functions in the observed equilibrium remain valid under counterfactual scenarios. This issue becomes more pronounced as the counterfactual equilibrium diverges from the observed equilibrium. However, if the counterfactual equilibrium is close to the observed equilibrium, this concern is largely alleviated. 

One potential change among non-HSA investors could be a reduction in elasticity: if HSAs were to eliminate alpha, non-HSA investors might adopt passive strategies, decreasing market elasticity. However, as discussed above and shown in Appendix \ref{subsec:arb_away_alpha}, alpha largely remains in the counterfactual equilibrium, making such a shift towards more inelastic investing less likely.

Are the results robust if HSA-controlled capital evolves based on performance? In Appendix \ref{subsec:recursive_exper}, I find that the endogenous amount of capital HSAs control in equilibrium is quite small. This means that, if anything, the above results overstate the impact of HSAs on the elasticity of demand. These experiments also show that the counterfactual experiments remain close to the factual equilibrium, mitigating concerns raised by the Lucas critique.

This approach aligns well with the existing literature. \cite{aggressive} examine investors who strategically reduce their own demand elasticity when the market becomes more elastic. Their model addresses this Lucas critique and finds that for an increase in aggregate demand elasticity of 1, an investor would reduce their demand elasticity by 3. In this study, since the shift in aggregate demand elasticity is minimal, applying these estimates results in negligible changes in investor behavior. Additionally, \cite{koijen2023investors} conduct counterfactual experiments involving significant shifts, such as transitions to passive and green investments. In contrast, this study focuses on the inclusion of HSAs, which represents a much smaller change. This inherently reduces concerns that non-HSA investors would significantly alter their demand in response. Thus, both the small aggregate elasticity shift and the minor nature of the counterfactual change support the validity of the counterfactual experiments.


While I discuss ex-post alphas above, what matters for non-HSA investors' pursuit of alpha and elasticity is their ex-ante beliefs about alpha. There are two key reasons non-HSA investors would believe that ex-ante alpha will not be eliminated. First, a rational expectations argument: since counterfactual experiments show alpha remains ex-post, especially with endogenously updated capital controlled by HSAs, it is reasonable to believe ex-ante that alpha will change only slightly. Second, \cite{jensen2023there} show that most classic alpha-generating strategies have largely persisted over time, suggesting that non-HSA investors would reasonably expect alpha to continue. While I cannot rule out the belief that ex-ante alpha will disappear, the most reasonable expectation is that alpha would largely persist even with HSA investors.

We could also consider non-HSA investors becoming more elastic, by exploiting opportunities created by HSA trading. I double the aggregate demand elasticity of non-HSA investors in counterfactual experiments, detailed in Appendix \ref{subsec:double_elas}, and find that even with significant elasticity increases, the HSA investor effects on equilibrium are relatively small. Since the LLDM used in the counterfactual experiments has an average elasticity of 0.8, a doubling puts the elasticity at around 1.6, which is the upper range of the \cite{GK} elasticities for individual stocks across the reviewed literature. In both cases, because the counterfactual equilibrium does not stray from the factual, the concerns raised by a Lucas critique are largely mitigated.

One final potential concern about the counterfactual experiments is that the LLDM, as estimated, may be unable to elastically respond to high-frequency price variation. Equation (\ref{eq:eta_mu}) shows that demand is higher when prices pass through more into near-term returns (i.e., higher $-\partial \mu_{i,t} / \partial \log (p_{i,t})$). The KY instrument uses low-frequency variation, but HSA investors trade on both high-frequency (e.g., short-term reversals) and low-frequency price variation. Thus, the KY instrument may produce demand that is too inelastic for this setting, by missing high-frequency price variation. This is actually another reason why the double elasticity counterfactual experiments discussed above were performed. This raises non-HSA investor elasticity to the top end of the \cite{GK} range, and in the counterfactual experiments, this of course includes price changes of various frequencies. The results are similar with more elastic non-HSA investors. 

Given the complexity of validating counterfactual experiments, is there an alternative method to simply demonstrate that HSA investors should have minimal impact on the aggregate price elasticity of the market? The following proposition aims to address this question. Let $D_{i,t}^j = w_{i,t}^j A_t^j$ be the total dollar amount invested by investor $j$ in asset $i$ at time $t$. Define $D_{i,t} \equiv \sum_j | D_{i,t}^j|$. Define aggregate demand, in terms of total shares demanded, as $ S_{i,t} = \sum_j s_{i,t}^j$. From these definitions, I derive the following proposition:

\begin{prop} \label{prop:aggregation}
    Assume that for all investors with zero positions, i.e., $w_{i,t}^j = 0$, we have:
    \begin{equation} \label{eq:zero_response_for_zero}
        \frac{\partial w_{i,t}^j}{\partial \log (p_{i,t})} = 0.
    \end{equation}
    Let $J_i$ be the set of investors with non-zero positions for asset $i$. Then we have the following result:
    \begin{equation} \label{eq:aggregation_result}
    \eta_{i,t}^{agg} \equiv -\frac{\partial \log (S_{i,t})}{\partial \log (p_{i,t})}
    = 1 + \frac{D_{i,t}}{P_{i,t}} \left( \bar \eta_{i,t} - 1 \right), \; \text{ where } \; \bar \eta_{i,t} = \frac{1}{D_{i,t}} \sum_{j \in J_i} \left|D_{i,t}^j\right| \eta_{i,t}^{j,\pm}.
    \end{equation}
\end{prop}

The proof is given in Appendix \ref{app:aggregation}. Condition (\ref{eq:zero_response_for_zero}) assumes investors with zero positions are not price responsive. I assume this because elasticity is not defined for zero positions, and while this assumption can be relaxed, it complicates the formula slightly. Note that $D_{i,t} = P_{i,t}$ if and only if there are no short sellers. 

This proposition shows that the elasticity of the aggregate market is close to the position-size-weighted elasticity of the investors. Since HSA investors are inelastic and non-HSA investors are inelastic, proposition \ref{prop:aggregation} indicates that the market should be inelastic with HSA investors. It should be noted that all quantities in equation (\ref{eq:aggregation_result}) are endogenous, but this provides some transparent evidence that HSA investors should have little effect on elasticity.

\section{Conclusion} \label{sec:conclusion}

A classic feature of many asset pricing models is the no-arbitrage assumption, which relies on arbitrageurs, or statistical arbitrageurs in some cases, to create market demand that is highly elastic and absorbs non-fundamental flows \citep{huberman1982simple}. This stands in contradiction to elasticity estimates, such as those in KY and surveyed more broadly by \cite{GK}. This large quantitative difference raises an interesting question: what is the elasticity of a friction-free statistical arbitrageur from the literature, and how would these statistical arbitrageurs affect market elasticity? This paper shows that the demand of these statistical arbitrageurs is inelastic compared to classic models. Furthermore, in an equilibrium model, these statistical arbitrageurs affect the market demand elasticity for individual stocks very little if counterfactually given capital to invest. 

\nocite{CRW}
\nocite{carhart}
\newpage
\clearpage
\singlespacing
\bibliography{references.bib}

\doublespacing

\section*{Appendix}

\setcounter{table}{0}
\renewcommand{\thetable}{A.\arabic{table}}
\setcounter{figure}{0}
\renewcommand{\thefigure}{A.\arabic{figure}}

\singlespacing
\begin{table}[H] \centering
    \resizebox{.8\textwidth}{!}{\begin{tabular}{@{\extracolsep{5pt}}lll}
    \\[-1.8ex]\hline
\hline \\[-2.4ex]
Initialism & Factor Method $f(Z_t)$ & MVE Collapse Method $b$ \\
\hline \\[-2.4ex]
    BPZ$_F$ & \cite{forest} forest & \cite{forest} \\
    BPZ$_L$ & linear characteristic-weighted portfolios & \cite{forest} \\
    BSV & linear characteristic-weighted portfolios & \cite{brandt} \\
    CRW & linear \cite{CRW} portfolios & \cite{shrinking} \\
    DGU & linear characteristic-weighted portfolios & \cite{demiguel} \\
    FF3 & \cite{ff3} & \cite{shrinking} \\
    FF6 & \cite{famafrench15} with \cite{carhart} momentum  & \cite{shrinking} \\
    GKX & \cite{gu} & \cite{forest}  \\
    HXZ & \cite{zhang} & \cite{shrinking}  \\
    KNS & linear characteristic-weighted portfolios & \cite{shrinking} \\
    KPS & \cite{kelly} & \cite{forest} \\
    NN & neural network & $b = 1$ \\
    RF & random forest & $b = 1$ 
    \\\\[-2.5ex]
    \hline \hline \\[-1.8ex]
\end{tabular}}
    \vspace{4mm}
    \caption{\textbf{Statistical Arbitrageur Demand Models} Initialisms and sources for the thirteen statistical arbitrageurs studied. Departures from the original descriptions are noted with justifications. Each model includes: (1) a factor model, $f(Z_t)$, and (2) a vector of factor weights, $b$, collapsing factors into a single MVE portfolio.}
    \label{tab:models}
\end{table}
\doublespacing

\clearpage
\numberwithin{equation}{section}
\setcounter{table}{0}
\renewcommand{\thetable}{IA.\arabic{table}}
\setcounter{figure}{0}
\renewcommand{\thefigure}{IA.\arabic{figure}}
\setcounter{footnote}{0} 

\appendix
\onehalfspacing

\renewcommand{\thesection}{IA.\Alph{section}}
\renewcommand{\thesubsection}{\thesection.\arabic{subsection}}
\renewcommand{\thesubsubsection}{\thesubsection.\arabic{subsubsection}}

\numberwithin{equation}{section}
\renewcommand{\theequation}{IA.\Alph{section}.\arabic{equation}}

\addcontentsline{toc}{section}{Appendix}
\part{Internet Appendix}

\parttoc 
\doublespacing


\newpage

\section{Main Proofs and Additional Propositions}

\subsection{Proof of Proposition 1} \label{app:alpha_portfolio}

\begin{proof}
    I first calculate:
    \begin{equation} \label{eq:r_alpha}
        r_{\alpha,t+1} \equiv w_{\alpha,t}' r_{t+1}
        = (w_t - W_t \beta)' r_{t+1} 
        = w_t' r_{t+1} - \beta' W_t' r_{t+1}
        = r_{p,t+1} - \beta' F_{t+1} 
    \end{equation}
    Using this, I can calculate further:
    \begin{equation*}
    r_{\alpha,t+1}
    = r_{p,t+1} - \beta' F_{t+1}
        = (\alpha + \beta' F_{t+1} + \epsilon_{t+1}) - \beta' F_{t+1}
        = \alpha + \epsilon_{t+1}
    \end{equation*}
    This proves that (\ref{eq:requirement1}) is satisfied. 

    Note that by definition, I have:
    \begin{equation*}
        \beta = \text{Var}^{-1} (F_{t+1}) \text{Cov} (r_{p,t+1}, F_{t+1})
    \end{equation*}
    where $\text{Var} (F_{t+1})$ is the covariance matrix of factors, and $\text{Cov} (r_{p,t+1}, F_{t+1})$ is a column vector of covariance terms. Thus, I can write:
    \begin{equation} \label{eq:cov_beta}
        \text{Cov} (r_{p,t+1}, F_{t+1}) = \text{Var} (F_{t+1}) \beta
    \end{equation}
    
    I can also calculate, using (\ref{eq:r_alpha}) and (\ref{eq:cov_beta}):
    \begin{equation*} 
        \text{Cov} (r_{\alpha,t+1}, F_{t+1}) 
        = \text{Cov} (r_{p,t+1} - \beta' F_{t+1}, F_{t+1})
        = \text{Cov} (r_{p,t+1}, F_{t+1}) - \text{Cov} ( F_{t+1}, F_{t+1}) \beta
    \end{equation*}
    \begin{equation*}
        = \text{Var} (F_{t+1}) \beta - \text{Var} (F_{t+1}) \beta = 0
    \end{equation*}
    This proves that (\ref{eq:requirement2}) is satisfied. Note that I can write $\text{Cov} (\beta' F_{t+1}, F_{t+1}) = \text{Cov} ( F_{t+1}, F_{t+1}) \beta$ rather than $\beta' \text{Cov} ( F_{t+1}, F_{t+1})$ so that I have a column vector instead of a row vector in order to be conformable with our (assumed by convention) column vector $\text{Cov} (r_{p,t+1}, F_{t+1})$ in the equation. 

    Now note that I can write the elasticity of this pure-alpha strategy given the definition, as well as the $j^{th}$ factor, as
    \begin{equation*}
        \eta_{\alpha,i,t} = 1 - \frac{1}{w_{\alpha,i,t}} \frac{\partial w_{\alpha,i,t}}{\partial \log (p_{i,t})}
    \end{equation*}
    \begin{equation*}
        \eta^F_{i,j,t} = 1 - \frac{1}{W_{i,j,t}} \frac{\partial W_{i,j,t}}{\partial \log (p_{i,t})}
    \end{equation*}
    Using these two equations, and assuming positive weights, I can write:
    \begin{equation*}
        \frac{\partial w_{\alpha,i,t}}{\partial \log (p_{i,t})} = \frac{\partial w_{i,t}}{\partial \log (p_{i,t})} - \sum_j \frac{\partial W_{i,j,t}}{\partial \log (p_{i,t})} \beta_{j}
    \end{equation*}
    \begin{equation*}
        \implies \eta_{\alpha,i,t} = 1 + \frac{1}{w_{\alpha,i,t}} \left( \sum_j \frac{\partial W_{i,j,t}}{\partial \log (p_{i,t})} \beta_{j} - \frac{\partial w_{i,t}}{\partial \log (p_{i,t})} \right),
    \end{equation*}
    \begin{equation*}
        \implies \eta_{\alpha,i,t} = 1 + \frac{1}{w_{\alpha,i,t}} \left( \sum_j \frac{\partial W_{i,j,t}}{\partial \log (p_{i,t})} \beta_{j} + w_{i,t} \left( \eta_{i,t} - 1 \right) \right)
    \end{equation*}
    \begin{equation*}
        \implies \eta_{\alpha,i,t} = 1 + \frac{1}{w_{\alpha,i,t}} \left(w_{i,t} \left( \eta_{i,t} - 1 \right) - \sum_j W_{i,j,t} \left( \eta^F_{i,j,t} - 1 \right) \beta_{j}\right),
    \end{equation*}
\end{proof}

\subsection{Proof of Proposition \ref{prop:aggregation}} \label{app:aggregation}

\begin{proof}
This proof is simply a series of direct calculations. 
    \begin{equation*}
    S_{i,t} = \sum_j s_{i,t}^j = \sum_j \frac{A_t^j w_{i,t}^j}{p_{i,t}}.
\end{equation*}
\begin{equation*}
    \implies \log (S_{i,t}) = -\log(p_{i,t}) + \log \left( \sum_j A_t^j w_{i,t}^j \right).
\end{equation*}
\begin{equation*}
    \implies -\frac{\partial \log (S_{i,t})}{\partial \log (p_{i,t})}
    = 1 - \frac{1}{P_{i,t}} \left( \sum_j A_t^j \frac{\partial w_{i,t}^j}{\partial \log (p_{i,t})} \right)
    = 1 - \frac{1}{P_{i,t}} \left( \sum_{j \in J_i} A_t^j \frac{\partial w_{i,t}^j}{\partial \log (p_{i,t})} \right),
\end{equation*}
where above, I use the fact that in equilibrium, $P_{i,t} = \sum_{j} D_{i,t}^j = \sum A_t^j w_{i,t}^j$ as well as the assumption in equation (\ref{eq:zero_response_for_zero}). We can then calculate:
\begin{equation*}
    -\frac{\partial \log (S_{i,t})}{\partial \log (p_{i,t})}
    = 1 - \frac{1}{P_{i,t}} \left( \sum_{j \in J_i} A_t^j \frac{\left| w_{i,t}^j \right|}{\left| w_{i,t}^j \right|} \frac{\partial w_{i,t}^j}{\partial \log (p_{i,t})} \right).
\end{equation*}
\begin{equation*}
    \implies -\frac{\partial \log (S_{i,t})}{\partial \log (p_{i,t})}
    = 1 - \frac{1}{P_{i,t}} \left( \sum_{j \in J_i} \left|D_{i,t}^j\right| \left( \frac{1}{\left| w_{i,t}^j \right|} \frac{\partial w_{i,t}^j}{\partial \log (p_{i,t})} \right) \right).
\end{equation*}
\begin{equation*}
    \implies -\frac{\partial \log (S_{i,t})}{\partial \log (p_{i,t})}
    = 1 + \frac{1}{P_{i,t}} \left( \sum_{j \in J_i} \left|D_{i,t}^j\right| \left( -\frac{1}{\left| w_{i,t}^j \right|} \frac{\partial w_{i,t}^j}{\partial \log (p_{i,t})} \right) \right).
\end{equation*}
\begin{equation*}
    \implies -\frac{\partial \log (S_{i,t})}{\partial \log (p_{i,t})}
    = 1 + \frac{1}{P_{i,t}} \left( \sum_{j \in J_i} \left|D_{i,t}^j\right| \left( 1 -\frac{1}{\left| w_{i,t}^j \right|} \frac{\partial w_{i,t}^j}{\partial \log (p_{i,t})} -1\right) \right).
\end{equation*}
\begin{equation*}
    \implies -\frac{\partial \log (S_{i,t})}{\partial \log (p_{i,t})}
    = 1 + \frac{1}{P_{i,t}} \left( \sum_{j \in J_i} \left|D_{i,t}^j\right| \left( 1 -\frac{1}{\left| w_{i,t}^j \right|} \frac{\partial w_{i,t}^j}{\partial \log (p_{i,t})} \right)  - D_{i,t} \right).
\end{equation*}
\begin{equation*}
    \implies -\frac{\partial \log (S_{i,t})}{\partial \log (p_{i,t})}
    = \left(1 - \frac{D_{i,t}}{P_{i,t}} \right) + \frac{1}{P_{i,t}} \left( \sum_{j \in J_i} \left|D_{i,t}^j\right| \eta_{i,t}^{j,\pm} \right).
\end{equation*}
\begin{equation*}
    \implies -\frac{\partial \log (S_{i,t})}{\partial \log (p_{i,t})}
    = \left(1 - \frac{D_{i,t}}{P_{i,t}} \right) + \frac{D_{i,t}}{P_{i,t}} \left( \frac{1}{D_{i,t}} \sum_{j \in J_i} \left|D_{i,t}^j\right| \eta_{i,t}^{j,\pm} \right).
\end{equation*}
\begin{equation*}
    \implies \eta_{i,t}^{agg} = -\frac{\partial \log (S_{i,t})}{\partial \log (p_{i,t})}
    = 1 - \frac{D_{i,t}}{P_{i,t}} + \frac{D_{i,t}}{P_{i,t}} \bar \eta_{i,t}.
\end{equation*}
\end{proof}

Note that in equilibrium, $P_{i,t} = D_{i,t}$ if and only if there are no investors with short positions. Otherwise, $D_{i,t} > P_{i,t}$. Also note that $\bar \eta_{i,t}$ is just an (absolute value) dollar weighted elasticity. If there are no short sellers, then $\eta_{i,t}^{agg} = \bar \eta_{i,t}$, but the short-sellers can cause the aggregate elasticity to deviate somewhat from $\bar \eta_{i,t}$. Also, the effect of short sellers does not unambiguously increase the aggregate market elasticity. The presence of inelastic short sellers can actually decrease the aggregate elasticity. 

Lastly, it should be noted that this proposition can be extended to relax the assumption given in equation (\ref{eq:zero_response_for_zero}). However, elasticity is not defined for zero quantities, and relaxing this assumption forces us to deal with these zero-quantity investors in a way that makes the math more complicated with little, if any, more economic insight. 

\subsection{Statement of Equilibrium Model Propositions} \label{app:equil_model_statements}

In this section, I use the fraction $\theta$ as in Section \ref{subsec:counter_experiments}. While I could just fold HSA investors in with other investors, as in equation (\ref{eq:agg_demand}), I explicitly write out the demand for both HSA and non-HSA investors separately here. 

In the base case, aggregate demand, in dollars, with HSAs is given by:
\begin{equation} \label{eq:combined_equilibrium}
    (1 - \theta_t) A_t w_{i,t} + \theta_t A_t \udot{w}_{i,t} 
    = \zeta_{i,0,t} + \zeta_{i,1,t} P_{i,t} + \zeta_{i,2,t} \log (P_{i,t})
\end{equation}
where 
\begin{equation*}
    \zeta_{i,0,t} = (1 - \theta_t) A_t \beta_{i,0,t} + \theta_t A_t \udot{\beta}_{i,0,t}
\end{equation*}
\begin{equation} \label{eq:zeta_demand}
    \zeta_{i,1,t} = (1 - \theta_t) \beta_{i,1,t} + \theta_t \udot{\beta}_{i,1,t}, \;
    \text{and } \;
    \zeta_{i,2,t} = (1 - \theta_t) A_t \beta_{i,2,t} + \theta_t A_t \udot{\beta}_{i,2,t}.
\end{equation}
This is demand written in dollar terms. Dividing by $p_{i,t}$ generates demand in units of shares. The total supply is $S_{i,t}$ shares, which, following KY, is assumed to be exogenous. Thus, demand is characterized by the following:
\begin{equation} \label{eq:share_equilibrium}
    \frac{\zeta_{i,0,t} + \zeta_{i,1,t} P_{i,t} + \zeta_{i,2,t} \log (P_{i,t})}{p_{i,t}} = S_{i,t} \; \iff \;
    \zeta_{i,0,t} + \zeta_{i,1,t} P_{i,t} + \zeta_{i,2,t} \log (P_{i,t}) = P_{i,t},
\end{equation}
where the left-hand side equation is written in share units, and the right-hand side equation is written in dollar units. The equilibrium solution is given below. 

\begin{prop} \label{prop:equilibrium}
This equilibrium described by equation (\ref{eq:share_equilibrium}) has a unique positive price for each asset with a closed-form solution given by:\footnote{Note that many of the equations are written as a function of market equity while others are written as a function of share prices. It is obviously trivial to go from one to the other. }
\begin{equation} \label{eq:solution}
    P_{i,t} = \frac{\zeta_{i,2,t}}{\zeta_{i,1,t}-1} W \left( \frac{\zeta_{i,1,t}-1}{\zeta_{i,2,t}} \exp \left( - \frac{\zeta_{i,0,t}}{\zeta_{i,2,t}} \right) \right)
\end{equation}
where $W(\cdot)$ is the Lambert W function.
\end{prop}

The proof is given in Appendix \ref{subsec:equilibrium}. It should be noted that the equilibrium price has a closed-form solution, as opposed to KY where equilibrium prices are more difficult to calculate. I then state the final proposition, which describes the elasticity of the market for a stock. 

\begin{prop} \label{prop:market_elasticity}
If equilibrium is given by equation (\ref{eq:share_equilibrium}), with the left-hand side representing demand, then the elasticity of the market at equilibrium prices is given by:
\begin{equation} \label{eq:equil_elasticity}
    \eta_{i,t} = \underbrace{1-\zeta_{i,1,t}}_{\substack{\text{level}\\\text{term}}}  \underbrace{-\frac{\zeta_{i,2,t}}{P_{i,t}}}_{\substack{\text{log}\\\text{term}}}.
\end{equation}
On the other hand, if market demand (not individual investor demand) in terms of portfolio weights was given by the isoelastic demand curve $w_{i,t} = \varrho_{i,0,t} P_{i,t}^{\varrho_{i,1,t}}$ and equilibrium was given by $A_t w_{i,t} = P_{i,t}$, then the elasticity of the market at equilibrium prices is given by:
\begin{equation}
    \eta_{i,t} = \underbrace{1-\varrho_{i,1,t}}_{\substack{\text{level}\\\text{term}}}.
\end{equation}
\end{prop}
The proof is given in Appendix \ref{app:proof_market_elas}. This proposition shows that the elasticity of the demand function has a more general and flexible form than the elasticity of an isoelastic demand function.\footnote{To nest isoelastic demand (rather than just nesting the elasticity), we could slightly generalize equation (\ref{eq:indiv_demand}) or equation (\ref{eq:arb_demand_approx}) as:
\begin{equation*}
    w_{i,t}^j = \beta^j_{i,0,t} + \beta^j_{i,1,t} \frac{P_{i,t}^{\beta^j_{i,3,t}}}{A_{t}} + \beta_{i,2,t}^j \log (P_{i,t}),
\end{equation*}
where the $\beta^j_{i,3,t}$ exponent allows this functional form to clearly nest isoelastic demand. Assuming $\beta^j_{i,3,t} = 1$ still generates a flexible model that can fit the statistical arbitrageur demand well. In fact, these mild restrictions naturally arise from the set of common predictors we consider and demand with a functional forms consistent with equations (\ref{eq:CARA}) and (\ref{eq:bsv_linear}). However, if $\beta^j_{i,3,t} = 2$ for example, then the investor chooses portfolio weights based on the square of market equity, which is a relatively odd predictor to consider. This additional parameter can also generate multiple equilibrium when solving for prices and breaks the aggregation properties of this demand function, thus creating undue complexity without a concomitant economic motivation. 
}

Proposition \ref{prop:market_elasticity} concisely shows that if statistical arbitrageur demand is inelastic, and incumbent demand is inelastic, then the market will likely have inelastic demand. The terms $\zeta_{i,1,t}$ and $\zeta_{i,2,t}$ are simply weighted averages of the investors' demand function parameters.

\subsection{Proof of Proposition \ref{prop:equilibrium}} \label{subsec:equilibrium}

This gives the equilibrium solution for the model. 

\begin{proof}
Incumbent demand for asset $i$, written as the number of shares is:
\begin{equation} \label{eq:incumbent_demand}
    \frac{(1 - \theta_t) A_t w_{i,t}}{p_{i,t}}.
\end{equation}
Recall that the portfolio weight of asset $i$ is $w_{i,t}$, and the assets under management for all incumbents is $(1 - \theta_t) A_t$. Thus, the dollar amount that incumbents demand to invest in the asset is $(1 - \theta_t) A_t w_{i,t}$. Thus, dividing by the share price delivers demand in terms of the number of shares as in (\ref{eq:incumbent_demand}). 

Similarly, demand for statistical arbitrageurs in aggregate for asset $i$, written in terms of shares, is:
\begin{equation} \label{eq:statistical arbitrageur_demand}
    \frac{\theta_t A_t \udot{w}_{i,t}}{p_{i,t}}.
\end{equation}

Thus, total market demand is the sum of (\ref{eq:incumbent_demand}) and (\ref{eq:statistical arbitrageur_demand}). Supply is $S_{i,t}$, which is the number of shares outstanding. Thus, equilibrium is defined by the equation:
\begin{equation}
    \frac{(1 - \theta_t) A_t w_{i,t}}{p_{i,t}}
    + \frac{\theta_t A_t \udot{w}_{i,t}}{p_{i,t}} = S_{i,t}.
\end{equation}
Multiplying both sides by $p_{i,t}$ delivers the main equilibrium equation---equation (\ref{eq:combined_equilibrium}). I can write this as in equation (\ref{eq:share_equilibrium}):
\begin{equation}
    \zeta_{i,0,t} + \zeta_{i,1,t} P_{i,t} + \zeta_{i,2,t} \log (P_{i,t}) = P_{i,t}.
\end{equation}
This can be rewritten as
\begin{equation}
    P_{i,t} \frac{\zeta_{i,1,t}-1}{\zeta_{i,2,t}} + \log (P_{i,t}) = - \frac{\zeta_{i,0,t}}{\zeta_{i,2,t}}.
\end{equation}
Taking the exponential of both sides yields:
\begin{equation}
    P_{i,t} \exp \left( P_{i,t} \frac{\zeta_{i,1,t}-1}{\zeta_{i,2,t}} \right)
    = \exp \left( - \frac{\zeta_{i,0,t}}{\zeta_{i,2,t}} \right).
\end{equation}

Multiplying both sides by $(\zeta_{i,1,t}-1) /\zeta_{i,2,t}$:
\begin{equation}
    P_{i,t} \frac{\zeta_{i,1,t}-1}{\zeta_{i,2,t}} \exp \left( P_{i,t} \frac{\zeta_{i,1,t}-1}{\zeta_{i,2,t}} \right)
    = \frac{\zeta_{i,1,t}-1}{\zeta_{i,2,t}} \exp \left( - \frac{\zeta_{i,0,t}}{\zeta_{i,2,t}} \right).
\end{equation}
Notice that the left-hand side has the form of $w e^w$, where $w = P_{i,t} \frac{\zeta_{i,1,t}-1}{\zeta_{i,2,t}}$. Since $\zeta_{i,1,t} < 1$ and $\zeta_{i,2,t} < 0$, $\frac{\zeta_{i,1,t}-1}{\zeta_{i,2,t}}$ is positive. Thus, I can equivalently write:
\begin{equation}
    P_{i,t} \frac{\zeta_{i,1,t}-1}{\zeta_{i,2,t}} = W \left( \frac{\zeta_{i,1,t}-1}{\zeta_{i,2,t}} \exp \left( - \frac{\zeta_{i,0,t}}{\zeta_{i,2,t}} \right) \right). 
\end{equation}
where $W(\cdot)$ is the Lambert W function. Thus, I can write:
\begin{equation} 
    P_{i,t} = \frac{\zeta_{i,2,t}}{\zeta_{i,1,t}-1} W \left( \frac{\zeta_{i,1,t}-1}{\zeta_{i,2,t}} \exp \left( - \frac{\zeta_{i,0,t}}{\zeta_{i,2,t}} \right) \right),
\end{equation}
which is equation (\ref{eq:solution}). Recall that positive values of the Lambert W function are unique; thus this is the unique equilibrium. Also, it is easy to see that the price is guaranteed to be positive. This is close to the KY equilibrium, except that because their demand function is a bit more complicated, they have a unique positive price for each asset without a closed-form solution like I have here. 
\end{proof}

\subsection{Proof of Proposition \ref{prop:market_elasticity}} \label{app:proof_market_elas}

\begin{proof}
    Demand is given by:
    \begin{equation}
        Q_{i,t} = \frac{\zeta_{i,0,t} + \zeta_{i,1,t} P_{i,t} + \zeta_{i,2,t} \log (P_{i,t})}{p_{i,t}}
    \end{equation}
    where this demand is written in terms of shares. Thus, I can write, for positive shares demanded:
    \begin{equation}
        \log (Q_{i,t}) = \log (\zeta_{i,0,t} + \zeta_{i,1,t} P_{i,t} + \zeta_{i,2,t} \log (P_{i,t})) - \log(p_{i,t}).
    \end{equation}
    Thus I can calculate the elasticity:
    \begin{equation}
        \eta_{i,t} \equiv - \frac{\log (Q_{i,t})}{\log (P_{i,t})} = 1 - \frac{\zeta_{i,1,t} P_{i,t} + \zeta_{i,2,t}}{\zeta_{i,0,t} + \zeta_{i,1,t} P_{i,t} + \zeta_{i,2,t} \log (P_{i,t})}.
    \end{equation}
    In equilibrium, the denominator term is just $P_{i,t}$, so I have:
    \begin{equation}
        \eta_{i,t} = 1 - \frac{\zeta_{i,1,t} P_{i,t} + \zeta_{i,2,t}}{P_{i,t}}
        = 1 - \zeta_{i,1,t} - \frac{\zeta_{i,2,t}}{P_{i,t}}.
    \end{equation}

    Now consider the case where $Q_{i,t} = (A_t / p_{i,t}) \varrho_{i,0,t} P_{i,t}^{\varrho_{i,1,t}}$. Then 
    \begin{equation}
        \log (Q_{i,t}) = -\log(p_{i,t}) + \log(A_t) + \log (\varrho_{i,0,t}) + \varrho_{i,1,t} \log(P_{i,t}). 
    \end{equation}
    Then I can write:
    \begin{equation}
    \eta_{i,t} = 1 - \varrho_{i,1,t}.
\end{equation}
\end{proof}

\section{Statistical Arbitrage Models}  \label{app:stat_arbs}

\subsection{General Framework for Statistical Arbitrage Models}

In this section, I describe the general framework for the asset pricing models. Each model is described below, and of course in their original papers, but I describe the basics here. I refer to these models by the initialisms summarized in Table \ref{tab:models}. 

I assume a sample of dates $\tau = 1, ..., t-1$, which are used to form portfolio weights at time $t$. Let $Z_t$ be the $N \times K$ matrix filled with $\udot{z}_{i,k,t}$ values. Every model has portfolio weights that equal $f(Z_t) b$, where $f$ is a function that reads in predictors and outputs an $N \times M$ matrix of $M$ factor portfolio weights and $b$ is an $M \times 1$ vector that collapses the factor weights down to a single (attempted) MVE portfolio. Let $f(Z_t)_i$ denote this $1 \times M$ row of factor weights specific to asset $i$. Then portfolio weights for asset $i$ can be written as:
\begin{equation} \label{eq:fz_weights}
    \udot{w}_{i,t} = f(Z_t)_i b.
\end{equation}
Various models use different $f(\cdot)$ functions and methods of estimating $b$. Some factor models do not specify $b$. For example, the \cite{ff3} three-factor model does not specify the fractions of capital (portfolio weights) that should be invested in the market, value, and size portfolios in order to achieve mean-variance efficiency. In these cases, I use standard methods from the literature, such as \cite{forest} or \cite{shrinking} to collapse these factors down to a single set of portfolio weights. 

Each model has three steps:
\begin{enumerate}
    \item First, I form the factor model weights, $f(Z_t)$. 
    \item Second, I form factor excess returns from weights: $F_{\tau+1} = f(Z_{\tau})' r_{\tau+1}$. Then compute the mean vector and covariance matrix using these returns:
    \begin{align} \label{moments}
    \bar F_{t-1} = \frac{1}{t-1} \sum_{\tau=1}^{t-1} F_t \;\;\text{ and }\;\;
    \bar \Omega_{t-1} = \frac{1}{t-1} \sum_{\tau=1}^{t-1} (F_\tau - \bar F_{t-1}) (F_\tau - \bar F_{t-1})'.
\end{align}
    \item Third, I estimate $b$ as a function of $\bar F_{t-1}$ and $\bar \Omega_{t-1}$ using one of the methods discussed below. I refer to this as the "collapse" step, since it collapses the factors down to a single portfolio and demand function. Portfolio weights are set such that $\udot{w}_{i,t} = f(Z_t) b$. 
\end{enumerate}

\subsection{Portfolio Collapse Methods that Generate $b$}

I describe the \cite{brandt} (BSV), \cite{shrinking} (KNS), \cite{forest} (BPZ), and then the \cite{demiguel} (DGU) methods of generating $b$ from $\bar F_{t-1}$ and $\bar \Omega_{t-1}$.

\subsubsection{\cite{brandt} - BSV} \label{subsec:bsv}

The main objective of \cite{brandt} is to optimize portfolios of many assets, and the authors assume that optimal weights are linear in a set of predictors. My implementation of the BSV model similarly assumes that  $f(\cdot)$ is linear in predictors---i.e., $f(Z_t) = \breve Z_t$.   

While \cite{brandt} discuss a variety of objective functions to estimate $b$, I focus on their model that maximizes the empirical Sharpe ratio. The maximal empirical Sharpe ratio estimate of $b$ of these authors can be written as:
\begin{equation}\label{eq:BSVof}
b = \underset{\tilde b}{\text{argmin}} \; (\bar F_{t-1} - \bar \Omega_{t-1} \tilde b)' \bar \Omega_{t-1}^{-1} (\bar F_{t-1} - \bar \Omega_{t-1} \tilde b) \;\;\text{ which implies }\;\;
b = \bar \Omega_{t-1}^{-1} \bar F_{t-1}.
\end{equation}

\subsubsection{\cite{shrinking} - KNS}\label{KNSsection}

The KNS method first performs a standard principal components analysis (PCA) dimension reduction on the factors generated by the linear function $f(Z_t) = \breve Z_t$. The number of principal components is a hyperparameter of the model. This produces a smaller set of factors $F_{t-1}$, which are then reduced to a single portfolio with the vector $b$. 

The \cite{shrinking} estimator of $b$ is given by
\begin{equation} \label{eq:kns}
b = \underset{\tilde b}{\text{argmin}} (\bar F_{t-1} - \bar \Omega_{t-1} \tilde b)' \bar \Omega_{t-1}^{-1} (\bar F_{t-1} - \bar \Omega_{t-1} \tilde b) + \lambda_1 || \tilde b ||_1 + \lambda_2 || \tilde b ||_2,
\end{equation}
where $||\cdot||_1$ and $||\cdot||_2$ are the $L^1$ and $L^2$ norms respectively, and $\lambda_1$ and $\lambda_2$ are additional penalty hyperparameters that shrink the $b$ values towards zero. This has similar intuition as a standard elastic net. If $\lambda_1 = \lambda_2 = 0$, this collapses down to a BSV portfolio. 

\subsubsection{\cite{forest} - BPZ$_L$} \label{subsec:bpz}

\cite{forest} have two methodological contributions in their paper I discuss: first their random forest techniques and second their method of obtaining $b$. I discuss the latter here, which is given by:
\begin{equation}\label{BPZLest}
b = \underset{\tilde b}{\text{argmin}} ((\bar F_{t-1} + \lambda_0 \iota_M) - \bar \Omega_{t-1} \tilde b)' \bar \Omega_{t-1}^{-1} ((\bar F_{t-1} + \lambda_0 \iota_M) - \bar \Omega_{t-1} \tilde b) + \lambda_1 || \tilde b ||_1 + \lambda_2 || \tilde b ||_2,
\end{equation}
where $\iota_M$ is an $M$ dimensional column vector of ones. This extra term is used to shrink the mean of portfolio returns, potentially eliminating estimation error. 

As further discussed below in this Appendix, $b$ in Equation (\ref{BPZLest}) can either be estimated with a classic Lasso regression with an $L^1$ penalty parameter $\lambda_1$, or can be calculated with a LARS regression where the number of nonzero terms in $\hat b\tau$ is explicitly chosen. Choosing the number of nonzero weights effectively chooses the $\lambda_1$ parameter. I follow \cite{forest} by using LARS regression with the number of nonzero values of $b$ as a hyperparameter. This nests both the KNS and BSV methods. 

\subsubsection{\cite{demiguel} - DGU} \label{subsec:DGU_model}

\cite{demiguel} choose $b$ by setting:
\begin{equation}
    b = \underset{\substack{\lambda_2 / \lambda_0 = M \\ \lambda_0, \lambda_2 \to \infty}}{\lim} 
    \underset{b}{\text{argmin}} ((\bar F_{t-1} + \lambda_0 \text{sign} (\bar F_{t-1})) - \bar \Omega_{t-1} b)' \bar \Omega_{t-1}^{-1} ((\bar F_{t-1} + \lambda_0 \text{sign} (\bar F_{t-1})) - \bar \Omega_{t-1} b) + \lambda_2 || b ||_2,
\end{equation}
which implies that
\begin{equation}
    b = \frac{1}{M} \text{sign} (\bar F_{t-1}).
\end{equation}
In other words, this gives equal weight to each portfolio. It is an extreme form of shrinking, where the shrinking parameters go to infinity in the limit. This is shown in their paper to yield a high Sharpe ratio, in many cases better than other proposed statistical arbitrageurs. Thus, I use this as a contender statistical arbitrageur. 

\subsection{Factor Weight Function $f(Z_t)$}

In this subsection, I summarize the various factor weight functions $f(Z_t)$ other than the statistical arbitrageurs that use the linear predictor weight functions---which were discussed above.

\subsubsection{Classic Asset Pricing Factor Models: FF3, FF6, and HXZ}\label{fz1}

The FF3, FF6, and HXZ are classic asset pricing factor models. Each of these models use only a subset of the 62 predictors from the data, which I describe here. The FF3 uses only the market ($k = 1$ predictor), size (lme), and the book-to-market ratio (beme). The FF6 uses these same predictors, in addition to investment (investment), profitability (prof), and momentum (cum\_return\_12\_2). Finally, the HXZ model uses the market ($k = 1$), investment (investment), return on equity (roe), and size (lme). Thus, for these models, $f(Z_t)$ is the subset of $\breve Z_t$ of just these predictors. 

These models use different weights on these portfolios than the linear predictor portfolios. These bins create demand functions that are very discontinuous, while continuous demand functions are needed for this analysis. More importantly, using different weighting schemes decreases the comparability of these models and the models that use all predictors. Thus, the linear predictor-weighted portfolios should be thought of as approximations to these models' different portfolio weighting schemes. 

\subsubsection{\cite{kelly} - KPS} \label{subsec:kps}

\cite{kelly} consider the model:
\begin{equation*}
    r_{\tau+1} = \beta_\tau F_{\tau+1} + \epsilon_{\tau+1} \;\; \text{ where } \;\; \beta_\tau = Z_\tau \Gamma_\beta , 
     \;\;
    F_{\tau+1} = \left( \Gamma_\beta' Z_\tau' Z_\tau \Gamma_\beta \right)^{-1} \Gamma_\beta' Z_\tau' r_{\tau+1},
\end{equation*}
and $\Gamma_{\beta}$ is $K \times M$ dimensional. They choose $\Gamma_{\beta}$ to minimize:
\begin{equation*}
    \sum_{\tau=1}^{t-1} (r_{\tau} - \beta_{\tau-1} F_{\tau})' (r_{\tau} - \beta_{\tau-1} F_{\tau}).
\end{equation*}
Thus, in this model, 
\begin{equation}\label{kellyweights}
    f(Z_t) = Z_t \Gamma_\beta \left( \Gamma_\beta' Z_t' Z_t \Gamma_\beta \right)^{-1}.
\end{equation}
I estimate $b$ for the \cite{kelly} model using the \cite{shrinking} method described above. The number of factors is a hyperparameter. 

\cite{kelly} also describe in their paper how to estimate an additional vector of parameters associated with model alpha, which is written as $\Gamma_{\alpha}$. As described by the authors, the portfolio weights associated with this alpha portfolio, or an "arbitrage" portfolio as they describe it, are simply:
\begin{equation}\label{kellyweights_alpha}
f(Z_t) = Z_t (Z_t' Z_t)^{-1} \Gamma_{\alpha}. 
\end{equation}

Thus, there are two potential sets of portfolios associated with this model: the beta portfolio in equation (\ref{kellyweights}) and the alpha portfolio in equation (\ref{kellyweights_alpha}). In most of the paper, I simply use the beta portfolio, as the authors argue that this prices the cross-section and should therefore be the maximum Sharpe ratio portfolio, as opposed to the alpha portfolio that the authors argue has relatively low and statistically insignificant returns in their original paper. 

\subsubsection{\cite{CRW} - CRW} \label{app:crw}

I implement a simple linear version of \cite{CRW}. This is similar to \cite{kelly}, where I have:
\begin{equation}
r_{t+1} = Z_t \Gamma_{\beta} f_{t+1} + \epsilon_{t+1},
\end{equation}
where $\Gamma_{\beta}$ is a matrix of parameters. This is solved in a two-step procedure the authors refer to as "regressed PCA".

First, $\Gamma_{\beta}$ is calculated by performing standard PCA (with standard demeaning) on $Y_{t+1} = (Z_t' Z_t)^{-1} Z_t' r_{t+1}$ in the time series across a range of $t$ values. Note that using the standard PCA normalization, this means that $\Gamma_{\beta}' \Gamma_{\beta} = I$, where $I$ is the identity matrix. Second, $f_{t+1}$ is calculated as 
\begin{equation*}
f_{t+1} = (\Gamma_{\beta}' \Gamma_{\beta})^{-1} \Gamma_{\beta}' (Z_t' Z_t)^{-1} Z_t' r_{t+1}
= \Gamma_{\beta}' (Z_t' Z_t)^{-1} Z_t' r_{t+1}.
\end{equation*}
Thus, factor portfolio weights are then formed using the simple formula:
\begin{equation} \label{eq:crw_beta}
f(Z_t) = Z_t (Z_t' Z_t)^{-1} \Gamma_{\beta}. 
\end{equation}

The authors also discuss how to estimate a vector of alpha intercept terms, which I can denote as $\Gamma_{\alpha}$. With these estimates in hand, the authors give the formula for alpha portfolio weights:
\begin{equation} \label{eq:crw_alpha}
f(Z_t) = Z_t (Z_t' Z_t)^{-1} \Gamma_{\alpha}.
\end{equation}

As in the KPS model above, there are two sets of portfolios associated with this model: the beta portfolios from equation (\ref{eq:crw_beta}) and the alpha portfolio associated with equation (\ref{eq:crw_alpha}). Like with the KPS model above, I keep these sets of portfolios separate, but I do consider both in the paper. 

\subsubsection{KKN} \label{app:kkn}

This method \citep{kkn} is unique compared to the other statistical arbitrageur models. I will highlight the main three differences:
\begin{enumerate}
    \item This method allows only a time-invariant $Z$, rather than a time-varying $Z_t$. This forces users of this method to employ very short estimation sample periods. 
    \item This method does not form factor portfolios meant to price the cross-section. This method only provides a portfolio that is meant to be a pure-alpha arbitrage portfolio. 
    \item This method must be trained on a balanced sample. This forces us to use short rolling sample periods to train the model parameters. 
\end{enumerate}

I let $R = [r_1 \; r_2 \; ... \; r_T]$ be the $N \times T$ matrix. Let $\bar R$ be the $N$-dimensional column vector of average returns. Let $\tilde R = R - \bar R \iota_T'$ be a matrix of demeaned returns, where $\iota_T$ is a column vector of $T$ ones. 

I can take the following steps:
\begin{enumerate}
    \item Let $\hat R = Z (Z' Z)^{-1} Z' \tilde R$. Let $G_{\beta} (Z)$ be the $N \times M$ matrix of the first $M$ eigenvectors of $\hat R \hat R' / N$. 
    \item Regress $\bar R$ on $G_{\beta} (Z)$, with regression coefficients $\bar F$. 
    \item Regress $\bar R - G_{\beta} (Z) \bar F$ on $Z$, with regression coefficients $\theta$. 
\end{enumerate}

Arbitrage portfolio weights are then:
\begin{equation*}
    w_{\alpha,t} = Z_t \theta.
\end{equation*}

\subsubsection{\cite{gu} - GKX}

\cite{gu} propose a latent factor model in which the portfolio weights are
\begin{equation}
    f(Z_t) = Z_t (Z_t' Z_t)^{-1} \Gamma_A,
\end{equation}
where $\Gamma_A$ is a $K \times M$ matrix of parameters. In the language of neural networks, $\Gamma_A$ is a matrix of weights. \cite{gu} also allows the betas of these factors to be have a neural network structure as a function of $Z_t$, $\beta (Z_t)$, which is $N \times M$ dimensional. Model parameters are essentially estimated to achieve a low sum of squared errors:\footnote{There are also penalties in the objective function and other details typical of neural network backpropagation. See \cite{gu} for more details}
\begin{equation}\label{gkxweights}
    \sum_{t=0}^{T-1} (r_{t+1} - \beta (Z_t) F_{t+1})' (r_{t+1} - \beta (Z_t) F_{t+1}).
\end{equation}
I estimate $b$ for the \cite{gu} model using the \cite{forest} method described above.

Similar to the model of \cite{kelly}, this model requires the number of factors $M$, as well as $\lambda_0$ and $\lambda_2$ as hyperparameters. The standard hyperparameters associated with neural networks, including the number of layers and nodes, are hyperparameters in this model. Table \ref{tab:hypers} displays various amounts of hidden layers and nodes that I use during cross-validation. For example, $(25, 5)$ signifies two hidden layers, with the first hidden layer containing 25 nodes and the second containing 5 nodes. The value $()$ signifies zero hidden layers. As described in \cite{gu}, I also use an $L^1$ penalty for the weights in the neural network, which adds another hyperparameter. 

\subsubsection{\cite{forest} - BPZ$_F$} 

In their effort to create a set of test assets that spans the mean-variance efficient frontier, \cite{forest} describe how they create a forest of decision trees to map a set of $K$ predictors to the portfolio weights, $f(Z_t)$, of $M$ test assets such that the mean-variance combination of the portfolios has the highest Sharpe ratio possible.  For a given $K$ and tree depth, $d$, the authors advocate creating trees based on all possible combinations of conditional sorts, where at each level of a given tree from 1 to $d$, stocks are mechanically sorted into two portfolios based on the median value of some predictor across stocks. Portfolios at every intermediate and final node of every tree are equally weighted. The authors then prune the trees by choosing weights to maximize the Sharpe ratio using their approach described in Equation (\ref{BPZLest}). The total number of portfolios formed is given by $K^d 2^d$, which for large values of $K$ becomes untenable. 

I employ an alternative method based on \cite{forest}. I stack
\begin{equation} \label{eq:xy}
    X_\tau = \begin{bmatrix}
    Z_0 \\ Z_1 \\ \vdots \\ Z_{\tau-1}
    \end{bmatrix}, \;\;\;
    Y_\tau = \begin{bmatrix}
    r_1 \\ r_2 \\ \vdots \\ r_{\tau}
    \end{bmatrix},
\end{equation}
and fit a random forest of regression trees with feature matrix $X$ to predict the target vector $Y$. The number of trees in the forest, the maximum depth of each tree, and the number of features to consider at each tree are all hyperparameters of this model. 

Each intermediate node and end node of each tree represents a portfolio. In other words, each column of $f(Z_t)$ represents an equally weighted portfolio from a node in a tree of the fitted random forest.\footnote{For example, assume that each tree is fitted to a depth of 4. Then there are $2^4 = 16$ end nodes, and $1 + 2^1 + 2^2 + 2^3 = 15$ intermediate nodes. Therefore, there are $15 + 16 = 31$ total portfolios in this tree. If there are 100 trees in the forest, then there are $31 * 100 = 3,100$ total portfolios. Since the base node portfolios are always the same, there are at least $100 - 1 = 99$ duplicate portfolios. Thus, if there are no other duplicate portfolios, then there are $3,001$ unique portfolios in this forest. Thus, a model with these hyperparameters assumes that $M = 3,001$. Thus, $F_{t+1} = f(Z_t)' r_{t+1}$ is a potentially large dimensional vector of portfolio returns.}

In estimating $b$ as described using the BPZ method described above, \cite{forest} recommend rescaling these forest portfolios to take into account estimation noise given the varying number of assets in each portfolio. I do this as they recommend. 

The hyperparameters required for this model are the number of trees in the forest, the maximum depth of the trees, the maximum number of features considered at each tree, as well as the three BPZ hyperparameters required to calculate $b$. 

\subsubsection{Neural Network - NN}

This is a neural network model original to this paper. Let $g: \R^{N} \times \R^{K} \mapsto \R^{N} \times \R^{J+1}$ denote a standard neural network with $J+1 \geq 2$ output nodes. Let $\mu_t$ denote the first column of $g(Z_t)$, and $\Gamma_t$ denote the remaining $J$ columns of $g(Z_t)$. Thus, $\Gamma_t$ is an $N \times J$ matrix, while $\mu_t$ is $N \times 1$. Thus, I can write
\begin{equation} \label{eq:nn_split}
    \begin{bmatrix}
    \underbrace{\mu_t}_{N \times 1} & \underbrace{\Gamma_t}_{N \times J}
    \end{bmatrix}
     = \underbrace{g(Z_t)}_{N \times (J+1)}
\end{equation}

The model assumes a KY style covariance matrix:
\begin{equation} \label{eq:nn_sigma}
    \Sigma_t = \Gamma_t \Gamma_t' + \zeta I_N
\end{equation}
where $\zeta$ is a scalar parameter and $I_N$ is an $N \times N$ identity matrix. Finally, the model assumes
\begin{equation}
    r_{t+1} \sim N (\mu_t, \Sigma_t)
\end{equation}
and the neural network parameters associated with $g$, as well as the scalar $\zeta$, are estimated with a typical neural network maximum a posterior (MAP) estimation procedure with a multivariate normal log-likelihood function. More details about this SDF, including equations to efficiently compute the log-likelihood, are found below in this Appendix. 

For this SDF, $f$ is defined such that
\begin{equation}
    f(Z_t) = \Sigma_t^{-1} \mu_t
\end{equation}
where $\Sigma_t$ and $\mu_t$ are functions of $Z_t$ defined in equations (\ref{eq:nn_split}) and (\ref{eq:nn_sigma}). Note that $f(Z_t)$ is actually already an $N \times 1$ matrix. So the collapse method is trivially set to $b = 1$. 

The hyperparameters required for this algorithm are the number of layers and nodes in each layer, as well as a $L^2$ penalty regularization parameter commonly used in neural networks. 

\subsubsection{Random Forest - RF}

I also consider a random forest model where the means of stocks are predicted with a random forest. The covariance matrix is just assumed to be proportional to the identity matrix, which potentially reduces estimation error in the covariance matrix following the logic of \cite{demiguel}. 

For this statistical arbitrageur, consider a fitted regression random forest that used features $X$ to fit $Y$ as defined in equation (\ref{eq:xy}) above. Let $f(Z_t)$ be the random forest excess return predictions for each of the rows of $Z_t$. Thus, $f(Z_t)$ is an $N \times 1$ dimensional vector of excess return predictions. I take these to be estimates of the expected excess returns. Since the covariance matrix is assumed to be proportional to the diagonal matrix, then $f(Z_t)$ are the MVE weights insofar as $f(Z_t)$ are the expected excess returns. Like the above model, this model delivers a single factor, meaning that trivially $b = 1$. 

The hyperparameters required for this model are the number of trees in the forest, the maximum depth of the trees, and the maximum number of features considered at each tree.

\subsection{PCA Reduction - \cite{shrinking}}

This subsection describes in detail how PCA can be used to reduce the dimension of $f$ before $b$ is computed, which is used for the KNS method. 

Let $f^1: \R^{N} \times \R^{K} \mapsto \R^{N} \times \R^{M^1}$ denote any given $f$ from Section \ref{sec:elas_models}. Note that I substitute the notation of $f$ and $M$ to $f^1$ and $M^1$ respectively. Define:
\begin{equation}
    F_{t+1}^1 \equiv f^1(Z_t)' r_{t+1},
\end{equation}
\begin{equation}
    \bar F^1_T = \frac{1}{T} \sum_{t=1}^T F_t^1, \text{ and }
\end{equation}
\begin{equation}
    \bar \Omega^1 = \frac{1}{T} \sum_{t=1}^T (F_T - \bar \Omega^1) (F_T - \bar \Omega^1)'.
\end{equation}

Let $M \leq M^1$. Define $Q$ to be the $M^1 \times M$ matrix whose $m^{th}$ column contains the eigenvector corresponding to the $m^{th}$ largest eigenvalue of $\bar \Omega^1$. 

Then define
\begin{equation}
    f(Z_t) = f^1 (Z_t) Q
\end{equation}

Note that $f$ now maps from $\R^N \times \R^K$ to $\R^N \times \R^M$. The matrix $Q$ acts as a dimension reduction method. This imposes the prior on the data that the eigenvectors associated with the smallest $M^1 - M$ eigenvalues are not important for delivering risk prices in the SDF. 

Now that $f$ is in hand, $b$ can be calculated according to \cite{brandt}, \cite{shrinking}, or \cite{forest} as described above. 

\subsection{Efficient Computation of KNS - \cite{shrinking}}

I follow \cite{forest} in the efficient calculation of $b$, which can be applied to the KNS version of $b$ trivially. 

For simplicity, I drop the $T$ subscripts on $\bar F \equiv \bar F_T$ and $\bar \Omega = \bar \Omega_T$. 

I can write
\begin{equation}
    \bar \Omega = Q D Q',
\end{equation}
where $Q$ is an $M \times M$ matrix where the $j^{th}$ column is the eigenvector associated with the $j^{th}$ largest eigenvalue of $\bar \Omega$. $D$ is the corresponding diagonal matrix of eigenvalues. Define $m = \min \{ T, M, J \}$ where $J$ is the number of strictly positive eigenvalues of $\bar \Omega$. Let $\tilde Q$ denote the $M \times m$ matrix of the first $m$ columns of $Q$. Let $\tilde D$ denote the $m \times m$ diagonal matrix of the first $m$ terms along the diagonal of $D$. Then define
\begin{equation}
    \tilde \Omega^{1/2} =  \tilde Q \tilde D^{1/2} \tilde Q',\;\;\;
    \tilde \Omega^{-1/2} =  \tilde Q \tilde D^{-1/2} \tilde Q', \; \text{ and } \;
    \tilde \omega = \tilde \Omega^{-1/2} \bar \Omega.
\end{equation}

Let 
\begin{equation}
    X = \begin{bmatrix}
    \tilde \Omega^{1/2} \\
    \sqrt{\lambda_2} I_{M} \\
    \end{bmatrix}, \; \text{ and } \;
    y = \begin{bmatrix}
    \tilde \omega \\
    0_M \\
    \end{bmatrix}.
\end{equation}

Then the problem is equivalent to a lasso/LARS regression that minimizes:
\begin{equation}
    (y - X b)' (y - X b) + \lambda_1 || b ||_1.
\end{equation}

\subsection{Efficient Computation of BPZ - \cite{forest}}

Efficient computation of $b$ for the BPZ method is identical to the above subsection, except that $\tilde \omega$ should be replaced with 
\begin{equation}
    \tilde \omega = \tilde \Omega^{-1/2} (\bar \Omega + \lambda_0 \iota).
\end{equation}

\subsection{Efficient Computation of the Neural Network (NN)}

In order to calculate the maximum likelihood function, $\Sigma_t^{-1}$ and the determinant of $\Sigma_t$, denoted as $| \Sigma_t|$ need to be calculated efficiently, even with large $N$. This section shows formulas that can be used to do this. 

By the Woodbury Matrix Identity:
\begin{equation}
    \Sigma_t^{-1} = \frac{1}{\zeta} \left( I_N - \Gamma_t (\zeta I_J + \Gamma_t' \Gamma_t)^{-1} \Gamma_t' \right).
\end{equation}
Note that if $J = 1$, then $\zeta I_J + \Gamma_t' \Gamma_t$ is a scalar. If $J$ is much smaller than $N$, then $(\zeta I_J + \Gamma_t' \Gamma_t)^{-1}$ is much easier to numerically calculate than $\Sigma_t^{-1}$.

By Sylvester's matrix identity:
\begin{equation}
    |\Sigma_t| = \zeta^N \left| I_J + \frac{1}{\zeta} \Gamma_t' \Gamma_t \right|.
\end{equation}
Note that if $J = 1$, $I_J + \frac{1}{\zeta} \Gamma_t' \Gamma_t$ is a scalar. If $J$ is much smaller than $N$, then the determinant of $I_J + \frac{1}{\zeta} \Gamma_t' \Gamma_t$ is much easier to numerically calculate than the determinant of $\Sigma_t$. 

Using these formulas, it is trivial to calculate the multivariate normal log-likelihood function. Denote this as $l(\theta)$, where $\theta$ is a vector of parameters from the neural network $g$ and the scalar $\zeta$. These parameters are chosen by setting $\theta^*$ to
\begin{equation}
    \theta^* = \text{argmax}_{\theta} \; l(\theta) - \lambda || \theta ||_2,
\end{equation}
where $\lambda$ is a hyperparameter. This is a standard maximum a posteriori (MAP) estimator commonly used with neural networks and in other areas of statistics and econometrics.

\section{Extensions} \label{app:extensions}

\subsection{Wealth Effects} \label{app:wealth}

We first consider the wealth effects labeled above. If there are no inflows or outflows for the fund, then $a_t$ is given by:
\begin{equation} \label{eq:wealth}
    a_t = a_{t-1} \left( w_{t-1}' (r_{t} + R_{f,t} \iota) + (1 - \iota' w_{t-1}) R_{f,t} \right),
\end{equation} 
where $r_t$ is the $N$-dimensional vector of returns in excess of the risk-free rate, $R_{f,t}$ is the gross risk-free rate, and $\iota$ is a vector of ones. In this section, we want to consider price changes ceteris paribus, which gives a simple term for the wealth effect:
\begin{equation} \label{eq:wealth_effects_equa}
    -\frac{\partial \log(a_t)}{\partial \log (p_{i,t})}
    = -\left( \frac{a_{t-1}}{a_t} \right) \left( \frac{p_{i,t}}{p_{i,t-1}} \right) w_{i,t-1},
\end{equation}
where the first two terms are close to one as long as there are no large fluctuations in AUM or the price of the stock, and the last term, $w_{i,t-1}$, is small in a diversified portfolio.\footnote{This can be calculated by noting that the excess return of asset $i$ is
\begin{equation*}
    r_{i,t} = \frac{p_{i,t} + d_{i,t}}{p_{i,t-1}} - R_{f,t}
    = \frac{\exp(\log(p_{i,t})) + d_{i,t}}{p_{i,t-1}} - R_{f,t},
\end{equation*}
where $d_{i,t}$ is the dividend.} I show below that the wealth effects are quite small and well-approximated with zero. The wealth effects are assumed to be zero in KY as well. 

Table \ref{tab:idio_wealth} shows the wealth effects using equation (\ref{eq:wealth_effects_equa}), and these effects are indeed quite small, usually decreasing the elasticity by just 0.001. I use only long positions, and as can be seen from equation (\ref{eq:wealth_effects_equa}), if $w_{i,t}$ is the same sign as $w_{i,t-1}$, then wealth effects are bounded above by zero. 

\begin{table}[!t] \centering
    \resizebox{1.\textwidth}{!}{\begin{tabular}{@{\extracolsep{5pt}}llccccccccccccc}
\\[-1.8ex]\hline
\hline \\[-1.8ex]
& & \multicolumn{13}{c}{\textit{Elasticity Impact from Wealth Effects}} \
\cr \cline{3-15}
\\[-1.8ex] Elasticity & Wins. & \multicolumn{1}{c}{BPZ$_F$} & \multicolumn{1}{c}{BPZ$_L$} & \multicolumn{1}{c}{BSV} & \multicolumn{1}{c}{CRW} & \multicolumn{1}{c}{DGU} & \multicolumn{1}{c}{FF3} & \multicolumn{1}{c}{FF6} & \multicolumn{1}{c}{GKX} & \multicolumn{1}{c}{HXZ} & \multicolumn{1}{c}{KNS} & \multicolumn{1}{c}{KPS} & \multicolumn{1}{c}{NN} & \multicolumn{1}{c}{RF}  \\
\\[-1.8ex] Weighting & & (1) & (2) & (3) & (4) & (5) & (6) & (7) & (8) & (9) & (10) & (11) & (12) & (13) \\
\\[-1.8ex] \hline \\[-1.4ex]

 Equal & None &   -0.0002 & -0.0006 & -0.0016 & -0.0007 & -0.0005 & -0.0007 & -0.0009 & -0.0009 & -0.0009 & -0.0009 & 0.0000 & -0.0011 & -0.0001 \\
 & $(1^{st}, 99^{th})$ &   -0.0001 & -0.0005 & -0.0016 & -0.0007 & -0.0005 & -0.0006 & -0.0009 & -0.0009 & -0.0008 & -0.0009 & 0.0000 & -0.0011 & -0.0001 \\
 & $(5^{th}, 95^{th})$ &   -0.0001 & -0.0005 & -0.0016 & -0.0007 & -0.0005 & -0.0006 & -0.0008 & -0.0009 & -0.0008 & -0.0009 & 0.0000 & -0.0011 & -0.0001 \\

\hline \\[-1.8ex]

Portfolio & None &   -0.0008 & -0.0008 & -0.0024 & -0.0013 & -0.0007 & -0.0016 & -0.0017 & -0.0014 & -0.0019 & -0.0013 & 0.0000 & -0.0017 & -0.0001 \\
 & $(1^{st}, 99^{th})$ &   -0.0005 & -0.0008 & -0.0023 & -0.0012 & -0.0007 & -0.001 & -0.0013 & -0.0013 & -0.0013 & -0.0012 & 0.0000 & -0.0016 & -0.0001 \\
 & $(5^{th}, 95^{th})$ &   -0.0001 & -0.0007 & -0.0021 & -0.0011 & -0.0006 & -0.0007 & -0.0011 & -0.0012 & -0.001 & -0.0011 & 0.0000 & -0.0015 & -0.0001 \\

\hline \\[-1.8ex]

Value & None &   -0.0003 & -0.001 & -0.0026 & -0.0007 & -0.0008 & -0.0036 & -0.0036 & -0.0011 & -0.0042 & -0.0022 & 0.0000 & -0.0022 & -0.0001 \\
 & $(1^{st}, 99^{th})$ &   -0.0001 & -0.0005 & -0.0016 & -0.0004 & -0.0003 & -0.0016 & -0.0017 & -0.0006 & -0.0019 & -0.0012 & 0.0000 & -0.0012 & -0.0001 \\
 & $(5^{th}, 95^{th})$ &   0.0000 & -0.0002 & -0.0007 & -0.0003 & -0.0001 & -0.0003 & -0.0004 & -0.0003 & -0.0004 & -0.0004 & 0.0000 & -0.0005 & -0.0001 \\

\hline
\hline \\[-1.8ex]

\end{tabular}

}
    \vspace{4mm}
    \caption{\textbf{Wealth Effects on Elasticity.} This table shows the valued-weighted and winsorized average wealth component of elasticity of the thirteen models (under three winsorization schemes similar to Table \ref{tab:statistical arbitrageur_elasticity}) during the out-of-sample period from February 1990 to January 2020, inclusive. Below the average values in parentheses, the standard errors, double clustered by month and stock, are shown. Table \ref{tab:models} displays the initialisms of these models. There are 1,253,175 month stock observations these values are calculated from.}
    \label{tab:idio_wealth}
\end{table}

\subsection{Consumption Hedging with Epstein-Zin Demand} \label{subsec:epstein_zin}

Mean-variance demand exhibits constant absolute risk aversion, which has often been criticized in the literature, but I just briefly note here that Epstein-Zin demand exhibits similarly elastic behavior. To see this, consider the case of Epstein-Zin multivariate demand for the $N$ risky assets from \cite{campbell:chan:viceira}. Let $y_{t}$ be the $N$-dimensional vector of log returns minus the log risk-free return. They show, after log-linearization, portfolio weights are given by the approximation:
\begin{equation} \label{eq:ez_weights}
    w_t = \frac{1}{\tilde \gamma} \tilde \Sigma_t^{-1} \left[ \tilde \mu_{t} - \frac{\theta}{\psi} \sigma_{c-w,t} \right],
\end{equation}
where $\tilde \mu_t = \E_t [y_{t+1}] + \frac{1}{2} \sigma_t^2$, $\tilde \Sigma_t$ is the $N \times N$ conditional covariance matrix of $y_{t+1}$, $\sigma_t^2$ is the $N$-dimensional vector containing the diagonal elements of $\tilde \Sigma_t$, $\sigma_{c-w,t}$ is the $N$-dimensional vector of the conditional covariance of the log consumption to wealth ratio and $y_{t+1}$, $\tilde \gamma > 0$ is the relative risk aversion coefficient, $\psi > 0$ is the elasticity of intertemporal substitution, and $\theta \equiv (1 - \tilde \gamma) / (1 - \psi^{-1})$.\footnote{See equation (20) of \cite{campbell:chan:viceira}. Note that in their equation, there are some additional terms because they also consider $y_t$ to be log return of the asset minus a benchmark with potential covariance terms. I consider just the risk-free rate, which eliminates some of these extra terms.} Recall that Epstein-Zin demand nests CRRA demand. The main difference lies in the consumption hedging term $\sigma_{c-w,t}$. 

We can use the \cite{stevens1998inverse} formula for the Epstein-Zin modification of equation (\ref{eq:cara_stevens}):
\begin{equation*}
    w_{i,t}
    \approx \frac{1}{\tilde \gamma} \left( \frac{\mu_{i,t} - \frac{\theta}{\psi} \sigma_{i,c-w,t} - \beta_{i,t}' \left( \mu_{-i,t} - \frac{\theta}{\psi} \sigma_{-i,c-w,t} \right)}{\sigma_{i,t,\epsilon}^2} \right),
\end{equation*}
where the $\sigma_{i,c-w,t}$ and $\sigma_{-i,c-w,t}$ notation is analogous to the $\mu_{i,t}$ and $\mu_{-i,t}$ notation but for the consumption hedging vector $\sigma_{c-w,t}$. Note here that I use the $\mu_{i,t}$, $\beta_{i,t}$, and $\sigma_{i,t,\epsilon}^2$ terms from equation (\ref{eq:cara_stevens}) instead of log-return counterparts, which matters little quantitatively. To be clear, the beta term, $\beta_{i,t}$, is still a regression beta of the asset $i$ return on the $N-1$ other assets without including consumption hedging terms.\footnote{This can be seen from the \cite{stevens1998inverse} block matrix formula.}

Then equation (\ref{eq:eta_mu}) is extended to become:
\begin{equation}
    \eta_{i,t} 
    = 1 + \frac{1}{w_{i,t}} \left( \frac{1}{\tilde \gamma \sigma_{i,t,\epsilon}^2} \right) \left( -\frac{\partial \mu_{i,t}}{\partial \log (p_{i,t})} 
    + \frac{\theta}{\psi} \frac{\partial \sigma_{i,c-w,t}}{\partial \log (p_{i,t})}\right).
\end{equation}
If $\partial \sigma_{i,c-w,t} / \partial \log (p_{i,t})$ is small, then this is the same as equation (\ref{eq:eta_mu}) except with $\tilde \gamma$ instead of $\gamma a_t$. In this case, a change in $\tilde \gamma$ does not affect the elasticity, just as described in the text for $\gamma a_t$. Thus, the main modification is the consumption hedging term. 

It turns out that the derivative, $\partial \sigma_{i,c-w,t} / \partial \log (p_{i,t})$, is indeed small empirically, when measured via the method described below. In order to measure the conditional covariance of asset returns and consumption, I need both unexpected returns (returns minus the conditional expected return) and unexpected consumption to wealth log ratios. For unexpected consumption to wealth log ratios, I use cay from \cite{lettau2001consumption}.\footnote{I downloaded this data from Amit Goyal's website, since the cay data is updated there through 2023.} This data is quarterly, so I need to use quarterly returns data to match. For unexpected returns, I simply use the residuals of a panel regression of quarterly returns on the set of lagged normalized predictors. Let $r_{i,t+1}^{unex}$ denote the unexpected return for asset $i$ and let $cay_{t+1}$ denote cay. Then we have $\sigma_{i,c-w,t} = E_t [r_{i,t+1}^{unex} cay_{t+1}]$. To model a conditional expectation, I simply run a regression of $r_{i,t+1}^{unex} cay_{t+1}$ on all normalized predictors at time $t$, $\udot{z}_{i,k,t}$. Denote the regression coefficients from this regression as $b_k^c$. Then the derivative is calculated as:
\begin{equation*}
    \frac{\partial \sigma_{i,c-w,t} }{ \partial \log (p_{i,t})}
    = \sum_k b_k^c \frac{\partial \udot{z}_{i,k,t}}{\partial \log (p_{i,t})}.
\end{equation*}
Note that this is the same way the derivative, $\partial \mu_{i,t} / \partial \log (p_{i,t})$, is measured in the paper. The standard error is calculated with the delta method from the coefficient standard errors, which are double clustered at the stock and month level. The estimate is 0.0012 with a standard error of 0.0018. To scale this down to a montly level, we can divide by 3 to get 0.0004. This is quite small compared to the measured derivative, $-\partial \mu_{i,t} / \partial \log (p_{i,t})$, which is about 0.048 as described in the main text. For large elasticities where the 1 in equation (\ref{eq:eta_mu}) matters little, this modifies elasticities by less than 1\%. Even for smaller elasticities this is relatively small. 

In summary, Epstein-Zin demand delivers similarly elastic demand to mean-variance demand, where the consumption hedging term is relatively small quantitatively. This is sensible economically as well, since the consumption to wealth ratio is an aggregate quantity, and a change of 1\% in the price of an individual stock is unlikely to change the asset's overall ability to hedge consumption significantly. 

\subsection{Pure-alpha Strategy Elasticity Special Cases} \label{app:pure_alpha}

I first calculate the elasticity of the pure-alpha strategies from the CRW, KKN \citep{kkn}, and KPS models. These are separate portfolios from the factor-based strategies shown in Tables \ref{tab:returns} and \ref{tab:statistical arbitrageur_elasticity}. The KKN model is not used in the main paper, and is thus missing from Table \ref{tab:models}, because it is a method that only produces a pure-alpha strategy without a more general Sharpe optimizing strategy. The KKN model uniquely requires short training sample periods, as it is not designed for long samples. To handle this, the parameters are estimated using a rolling window approach with the past 12 months to fit the parameters, and these parameters are refit every month. The hyperparameters likewise cannot use a four-fold cross-validation design as the other models because these sample lengths are too long. To choose hyperparameters, I fit the parameters based on every year in the training sample, and calculate performance over the subsequent year. Hyperparameters are chosen to maximize the Sharpe ratio based on this in-sample performance. This is a simple cross-validation design to deal with short training sample periods. More details about this model are given in Appendix \ref{app:kkn}. 

The out-of-sample alphas and average elasticities are shown for these three hedged statistical arbitrageur models in Table \ref{tab:hedged_alpha_ports}. The hedged CRW model performs poorly, while the other two models have positive and statistically significant CAPM alphas. The hedged KPS model has a lower elasticity than its corresponding factor strategy (about 1.7 versus 3.5). The KKN model produces slightly upward-sloping demand on average, with an elasticity of about -0.2. 

\begin{table}[!t] \centering
    \resizebox{0.4\textwidth}{!}{\begin{tabular}{@{\extracolsep{5pt}}lccc}
\\[-1.8ex]\hline
\hline \\[-1.8ex]
\\[-1.8ex] & \multicolumn{1}{c}{CRW} & \multicolumn{1}{c}{KKN} & \multicolumn{1}{c}{KPS}  \\
\\[-1.8ex] & (1) & (2) & (3) \\
\hline \\[-1.8ex]
 $\alpha$ & -4.907$^{*}$ & 22.310$^{***}$ & 27.740$^{***}$ \\
& (2.562) & (2.621) & (2.625) \\

\hline
\hline \\[-1.8ex]

 Elasticity & 2.817$^{***}$ & -0.208$^{}$ & 1.730$^{***}$ \\
& (0.106) & (0.151) & (0.096) \\

\hline
\hline \\[-1.8ex]
\textit{Note:} & \multicolumn{3}{r}{$^{*}$p$<$0.1; $^{**}$p$<$0.05; $^{***}$p$<$0.01} \\
\end{tabular}}
    \vspace{4mm}
    \caption{\textbf{Hedged Statistical Arbitrageur CAPM Alpha and Price Elasticity.} This table shows annualized CAPM alphas and the average (across assets and months) price elasticity of the three pure-alpha or hedged portfolios during the out-of-sample period from February 1990 to January 2020, inclusive. Standard errors are shown in parentheses below the estimates. The standard errors for the elasticity estimates are double clustered by month and stock. Table \ref{tab:models} displays the initialisms of these models. There are 1,253,175 month stock observations these values are calculated from. }
    \label{tab:hedged_alpha_ports}
\end{table}


\subsection{Covariance Matrix Effects} \label{app:covariance}

I also consider relaxing the assumption of an exogenous covariance matrix. In other words, I consider an elasticity where a price change also affects not just $\mu_{i,t}$ but also $\beta_{i,t}$ and $\sigma_{i,t,\epsilon}^2$ in equation (\ref{eq:cara_stevens}). To do this, I use the same covariance matrix as above, except the market betas, $\beta_{i,t}^M$, are modified to be a function of prices. Specifically, I assume an asset's beta has the following form:
\begin{equation} \label{eq:beta_in_characteristics}
    \beta_{i,t}^M = \Gamma_0 + \Gamma_1 (\text{cum\_return\_1\_0}_{i,t}) + \Gamma_2 (\text{beta}_{i,t}), 
\end{equation}
where $\Gamma_0$, $\Gamma_1$, and $\Gamma_2$ are parameters, and $\text{cum\_return\_1\_0}_{i,t}$ and $\text{beta}_{i,t}$ are the lagged return and the ex-ante market beta raw predictors, $x_{i,k,t}$. \cite{kelly} and \cite{pastor2003liquidity} also model betas that are linear in predictors. I assume returns follow a standard factor model with the market portfolio return:
\begin{equation} \label{eq:market_model}
    r_{i,t+1} = \beta_{i,t}^M r_{t+1}^M + \epsilon_{i,t+1}^M,
\end{equation}
where $r_{t+1}^M$ is the market portfolio return and $\epsilon_{i,t+1}^M$ is the error term. Combining equations (\ref{eq:beta_in_characteristics}) and (\ref{eq:market_model}) yields the following:
\begin{equation} \label{eq:combined_interaction_model}
    r_{i,t+1} = \Gamma_0 r_{t+1}^M + \Gamma_1 (\text{cum\_return\_1\_0}_{i,t}) r_{t+1}^M + \Gamma_2 (\text{beta}_{i,t}) r_{t+1}^M + \epsilon_{i,t+1}^M,
\end{equation}
which implies that a panel regression of stock returns at time $t+1$ on the market excess return at time $t+1$ and the market excess return interacted with $\text{cum\_return\_1\_0}_{i,t}$ and $\text{beta}_{i,t}$ is the natural way to estimate $\Gamma_0$, $\Gamma_1$, and $\Gamma_2$. I estimate this regression. 

\begin{table}[!t] \centering
    \resizebox{0.5\textwidth}{!}{\begin{tabular}{@{\extracolsep{5pt}}lccc}
\\[-1.8ex]\hline
\hline \\[-1.8ex]
& \multicolumn{3}{c}{\textit{Dependent variable: Excess Return}} \
\cr \cline{2-4}
\\[-1.8ex] & Intercept $\times$ Mkt & cum\_return\_1\_0 $\times$ Mkt & beta $\times$ Mkt \\
\\[-1.8ex] & (1) & (2) & (3) \\
\hline \\[-1.8ex]
 Estimates: & 0.445$^{***}$ & -0.802$^{***}$ & 0.713$^{***}$ \\
& (0.062) & (0.222) & (0.065) \\
\hline
\hline \\[-1.8ex]
\textit{Note:} & \multicolumn{3}{r}{$^{*}$p$<$0.1; $^{**}$p$<$0.05; $^{***}$p$<$0.01} \\
\end{tabular}}
    \vspace{4mm}
    \caption{\textbf{Covariance Effects Regression.} This table shows the three parameter estimates of a panel regression of monthly excess stocks returns regressed on the market excess return, as well as the market excess return interacted with the previous month stock return (cum\_return\_1\_0) and the ex ante market beta. These estimates come from the period from February 1990 to January 2020, inclusive. Below the average values in parentheses, the standard errors, double clustered by month and stock, are shown. There are 1,253,175 month stock observations in this regression. }
    \label{tab:cal_regression}
\end{table}

Table \ref{tab:cal_regression} shows the estimates of the three regression parameters. A drop in prices by 1\% increases a stock's market beta by about 0.008 (e.g., from 1 to 1.008). This increased correlation with the market also increases $\beta_{i,t}$ and decreases $\sigma_{i,t,\epsilon}$, where the latter effect is large enough to increase the elasticity of the investor.\footnote{The formula for the derivatives necessarily to calculate these covariance effects are given in Appendix \ref{subsec:nn_model_derivs}.} Figure \ref{fig:calibration_and_weights2} shows this higher calibrated elasticity that allows both covariance and expected return effects. 

\begin{figure}[!t]
    \centering
    \includegraphics[width=0.6\linewidth,trim={0 0 0 1cm},clip]{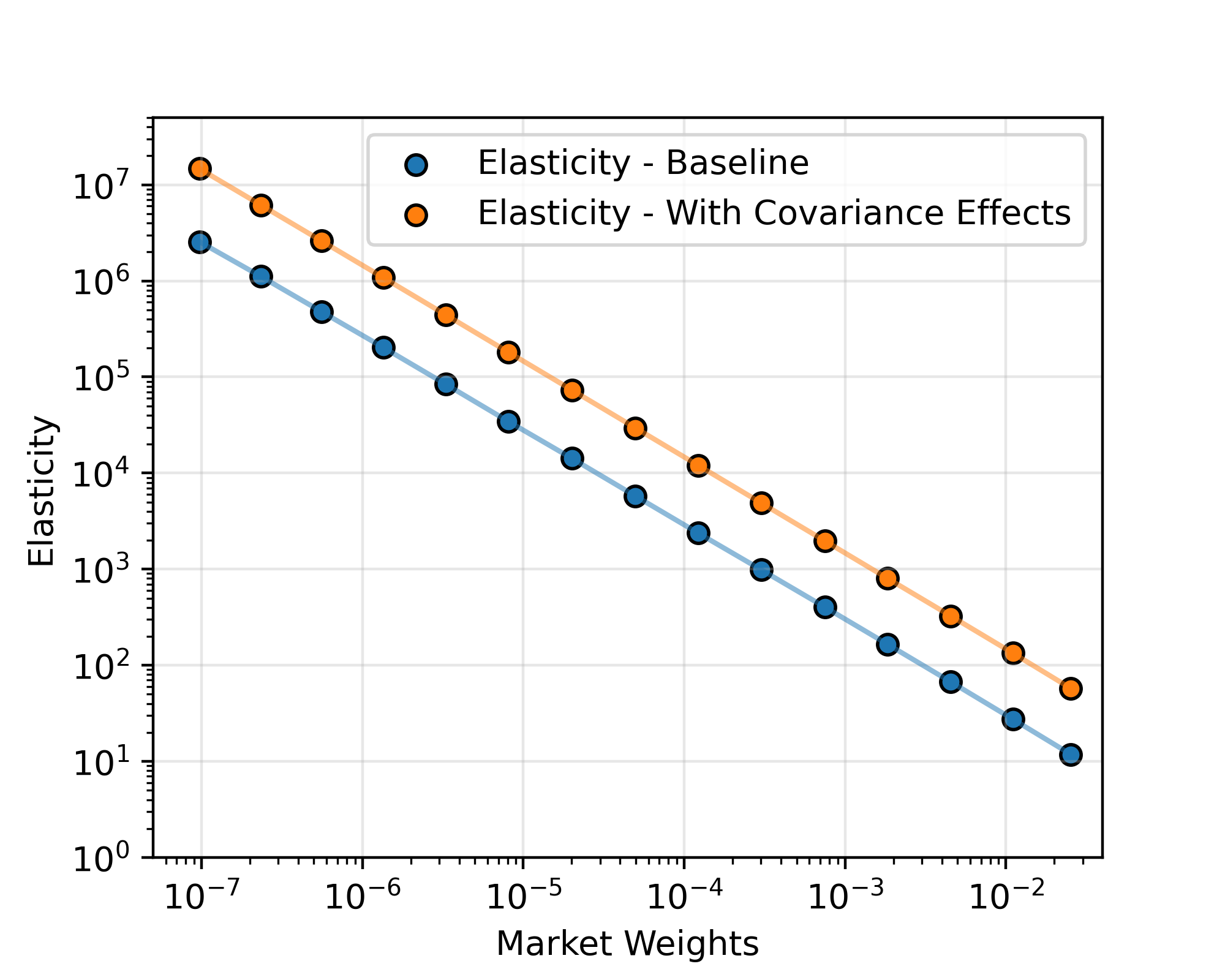}
    \vspace{4mm}
    \caption{\textbf{Elasticity and Market Weights.} This is similar to Panel A of Figure \ref{fig:calibration_and_weights}, except a curve with the elasticity with covariance matrix effects is added. The blue points show the calibrated elasticities with only mean effects, while the orange include covariance effects. }
\label{fig:calibration_and_weights2}
\end{figure}

\subsection{Value Weight Portfolios} \label{subsec:value_weight}

The FF3, FF6, and HXZ, in their original papers, consider value-weighted portfolios, where we instead consider cross-sectionally normalized predictors above. In this section, we consider using a proxy for value-weighted predictors for the BPZ$_L$, BSV, FF3, FF6, HXZ, and KNS portfolios since these models use predictor-weighted portfolios. In other words, using value-weighted predictors corresponds to using value-weighted portfolios with these models. 

A very simple way to create value-weighted predictors would be to use predictors of the following form:
\begin{equation}
    \tilde z_{i,k,t}^{\infty} \equiv \frac{\invbreve P_{i,t} \text{sign} (\udot{z}_{i,k,t})}{\sum_j | P_{j,t} \text{sign} ( \udot{z}_{j,k,t}) |},
\end{equation}
which means that if $x_{i,k,t}$ is above (below) median, then the portfolio weight is proportional to market equity, meaning that it is value-weighted, but on the long (short) side of the long-short portfolio. The long versus short sides can be flipped by multiplying by negative one; the process where $b$ is chosen to collapse the predictors into a single portfolio chooses both the sign and magnitude of weights to place on the predictors. 

If these predictors are used, then it is easy to see that for a positive value of the predictor and for an asset that makes up a relatively small share of the market:
\begin{equation*}
\frac{\partial \log (\tilde z_{i,k,t}^{\infty})}{\partial \log (p_{i,t})} \approx 1,
\end{equation*}
and thus the elasticity of these models, using equation (\ref{eq:elasticity}), with these predictors is mechanically zero ($= 1 - 1$). This does not mean that these statistical arbitrageurs would not react to price changes, but simply that the reaction to price changes occurs at these discontinuous jumps between the long and short end of the portfolios. It is important to understand how the elasticity of these models changes with value-weighted portfolios, but avoiding this discontinuity problem requires a proxy for value-weighted portfolios that smooths the discontinuity between the long and short ends of the portfolio. To do this, I define:
\begin{equation} \label{eq:value_weight_analogues}
    h_{\Xi} (x) \equiv \frac{\tan^{-1} (\Xi x)}{\tan^{-1} (\Xi / 2)} \;\; \text{  and  } \;\;
    \tilde z_{i,k,t}^{\Xi} = \frac{\invbreve P_{i,t} h_{\Xi} (\udot{z}_{i,k,t})}{\sum_j | P_{j,t} h_{\Xi} (\udot{z}_{j,k,t}) |}.
\end{equation}

\begin{table}[!t] \centering
    \resizebox{0.7\textwidth}{!}{\begin{tabular}{@{\extracolsep{5pt}}lcccccc}
\\[-1.8ex]\hline
\hline \\[-1.8ex]
\\[-1.8ex] $\Xi$ & \multicolumn{1}{c}{BPZ$_L$} & \multicolumn{1}{c}{BSV} & \multicolumn{1}{c}{FF3} & \multicolumn{1}{c}{FF6} & \multicolumn{1}{c}{HXZ} & \multicolumn{1}{c}{KNS}  \\
\\[-1.8ex] & (1) & (2) & (3) & (4) & (5) & (6) \\
\hline \\[-1.8ex]
 1 & 11.658$^{***}$ & 7.172$^{***}$ & -1.201$^{}$ & 7.172$^{***}$ & 3.008$^{}$ & 9.897$^{***}$ \\
& (2.423) & (2.502) & (0.796) & (2.337) & (2.184) & (2.521) \\
\\[-1.8ex]

 10 & 8.627$^{***}$ & 7.744$^{***}$ & -0.815$^{}$ & 6.526$^{***}$ & 2.387$^{}$ & 10.092$^{***}$ \\
& (2.115) & (2.528) & (0.674) & (2.192) & (1.994) & (2.597) \\
\\[-1.8ex]

 100 & 9.101$^{***}$ & 6.527$^{**}$ & -0.241$^{}$ & 6.478$^{***}$ & 2.056$^{}$ & 9.079$^{***}$ \\
& (1.983) & (2.580) & (0.282) & (2.108) & (1.777) & (2.625) \\
\\[-1.8ex]

 $10^9$ & 8.743$^{***}$ & 6.993$^{***}$ & -0.164$^{}$ & 6.254$^{***}$ & 2.001$^{}$ & 9.298$^{***}$ \\
& (2.104) & (2.569) & (0.197) & (2.094) & (1.710) & (2.616) \\
\\[-1.8ex]

\hline
\hline \\[-1.8ex]
\textit{Note:} & \multicolumn{6}{r}{$^{*}$p$<$0.1; $^{**}$p$<$0.05; $^{***}$p$<$0.01} \\
\end{tabular}}
    \vspace{4mm}
    \caption{\textbf{Alphas with Value Weighted Portfolios.} This table shows the monthly CAPM annualized alphas of the six statistical arbitrageurs, with different analogues of value-weighted portfolios, during the out-of-sample period: February 1990 to January 2020, inclusive. Portfolio weights are calculated using equation (\ref{eq:value_weight_analogues}), with the different values of $\Xi$ shown along the rows. As $\Xi$ increases, these analogues of value-weighted portfolios approach true value-weighted long-short portfolios. Standard errors are shown in parentheses below the estimates. Table \ref{tab:models} describes the initialisms of these models. Note that these six models are chosen simply because these are the models where the portfolio weights are not dictated by the model, and thus we can consider value-weighted portfolios. }
    \label{tab:value_weight_alpha}
\end{table}

Note that as $\Xi$ goes to infinity, $h_{\Xi} (x)$ converges point-wise to $\text{sign} (x)$. This smooths the sign function, creating a reasonable analogue for value-weighted predictors and portfolio weights. Appendix \ref{subsec:value_weights} provides some further discussion as well as the formula for the derivative of this predictor with respect to the log of prices, showing that it allows $\udot{z}_{j,k,t}$ values that are a function of prices to smoothly affect the elasticity. 

Table \ref{tab:value_weight_alpha} presents the out-of-sample CAPM alphas of these portfolios, with $\Xi$ values of 1, 10, 100, and $10^9$ (one billion). The BPZ$_L$, BSV, FF6, and KNS models generally performed well during this period, while the FF3 and HXZ performed more poorly. Table \ref{tab:value_weight_elas} presents the elasticity of these statistical arbitrageurs. Note that the elasticities do not monotonically decline towards zero as $\Xi$ increases, which is simply because these models are trained using these various $\Xi$ values, which generates different loadings within $b$, and thus different portfolio elasticities. The right-most column display the elasticities, and there are two important points to discuss. First, the average elasticity of 1.7 across models even with $\Xi = 1$ is smaller than the average 2.6 elasticity of these six models without value weights. Second, the average elasticity does decline as $\Xi$ increases and unsurprisingly, the elasticity with $\Xi = 10^9$ is quite close to zero. This does not mean that using value-weighted predictors generates demand that is completely non-reactive to price changes, but rather that these reactions occur at discontinuous jumps between portfolios that cannot be captured by derivatives that are inherently local. 

\begin{table}[!t] \centering
    \resizebox{0.8\textwidth}{!}{\begin{tabular}{@{\extracolsep{5pt}}lccccccc}
\\[-1.8ex]\hline
\hline \\[-1.8ex]
& \multicolumn{6}{c}{\textit{Elasticity}} \
\cr \cline{2-7} \cline{8-8}
\\[-1.8ex] $\Xi$ & \multicolumn{1}{c}{BPZ$_L$} & \multicolumn{1}{c}{BSV} & \multicolumn{1}{c}{FF3} & \multicolumn{1}{c}{FF6} & \multicolumn{1}{c}{HXZ} & \multicolumn{1}{c}{KNS} & Average \\
\\[-1.8ex] & (1) & (2) & (3) & (4) & (5) & (6) & \\
\hline \\[-1.8ex]

1 & 7.517$^{***}$ & 0.452$^{***}$ & 0.474$^{***}$ & 0.082$^{***}$ & -0.0004$^{***}$ & 1.494$^{***}$ & 1.670 \\
& (0.285) & (0.024) & (0.035) & (0.004) & (0.000) & (0.054) \\
\\[-1.8ex]

10 & 2.097$^{***}$ & 0.818$^{***}$ & 0.459$^{***}$ & 0.063$^{***}$ & -0.001$^{***}$ & 1.721$^{***}$ & 0.859 \\
& (0.106) & (0.028) & (0.040) & (0.003) & (0.000) & (0.060) \\
\\[-1.8ex]

100 & 1.663$^{***}$ & 0.648$^{***}$ & 0.026$^{***}$ & 0.021$^{***}$ & -0.001$^{***}$ & 1.039$^{***}$ & 0.566 \\
& (0.059) & (0.019) & (0.002) & (0.001) & (0.000) & (0.032) \\
\\[-1.8ex]

$10^9$ & 0.00061$^{***}$ & -0.00006$^{***}$ & -0.00002$^{***}$ & 0.00022$^{***}$ & 0.00025$^{***}$ & 0.00009$^{***}$ & 0.00018 \\
& (0.00003) & (0.00000) & (0.00000) & (0.00001) & (0.00001) & (0.00001) \\

\hline
\hline \\[-1.8ex]
\textit{Note:} & \multicolumn{7}{r}{$^{*}$p$<$0.1; $^{**}$p$<$0.05; $^{***}$p$<$0.01} \\
\end{tabular}}
    \vspace{4mm}
    \caption{\textbf{Elasticity with Value Weighted Portfolios.} This table shows the value-weighted and winsorized (5$^{th}$ and 95$^{th}$ percentiles) average (across assets and months) price elasticity of the six statistical arbitrageurs, with different analogues of value-weighted portfolios, during the out-of-sample period: February 1990 to January 2020, inclusive. Portfolio weights are calculated using equation (\ref{eq:value_weight_analogues}), with the different values of $\Xi$ shown along the rows. As $\Xi$ increases, these analogues of value-weighted portfolios approach true value-weighted long-short portfolios. Below the average values in parentheses, the standard errors, double clustered by month and stock, are shown. Table \ref{tab:models} describes the initialisms of these models. Note that these six models are chosen simply because these are the models where the portfolio weights are not dictated by the model, and thus we can consider value-weighted portfolios. There are 1,253,175 month stock observations these values are calculated from.}
    \label{tab:value_weight_elas}
\end{table}

\subsection{Trading Cost Optimization} \label{subsec:cost_optimization}

In practice, a statistical arbitrageur would be sensitive to trading costs. In an extreme case, a statistical arbitrageur would take an initial investment position and then never trade against price movements, generating completely inelastic demand. In this subsection, I consider a statistical arbitrageur that trades relatively conservatively due to large perceived trading costs, and show that this does indeed substantially reduce the elasticity of statistical arbitrageurs. 

I follow a slightly modified \cite{tradingcosts} trading costs optimizer. Let $w_t^*$ be the portfolio weight after optimizing for trading costs from their model, which has the form:
\begin{equation} \label{eq:trading_cost_weights}
    w_t^* = \left( 1 - s_{\text{aim}} \right) w_{t \leftarrow t-1} + (s_{\text{aim}}) \text{aim}_t
\end{equation}
where $w_{t \leftarrow t-1}$ is the $N$ dimension vector of weights from period $t-1$ that are changed only passively due to the price changes experienced at time $t$, $s_{\text{aim}}$ is a scalar which dictates the weighting, and 
\begin{equation} \label{eq:aim_weights}
    \text{aim}_t = \left( \sum_{\tau=0}^{\mathcal{T} - 1} \left(1 - \rho_{\text{aim}} \right) \left( \rho_{\text{aim}} \right)^{\tau} \E_t [w_{t+\tau}] \right)
    + \left( \rho_{\text{aim}} \right)^{\mathcal{T}} \E_t [w_{t + \mathcal{T}}].
\end{equation}
The intuition is straightforward. The cost-optimized statistical arbitrageur weights are a linear combination of the completely passive weights and the "aim" portfolio weights, where the aim portfolio is a linear combination of unoptimized weights now as well as future weights. As \cite{tradingcosts} discuss, a trading costs model should take into account future weights, and by doing this it can reduce trading costs now. As discussed in more detail in Appendix \ref{app:trading_costs}, I consider a finite $\mathcal{T}$ value, while \cite{tradingcosts} can consider the limit as $\mathcal{T}$ goes to infinity because of their simple trading strategy. The two key parameters are $s_{\text{aim}}$ and $\rho_{\text{aim}}$. As $s_{\text{aim}}$ approaches zero, the elasticity approaches zero and the statistical arbitrageur becomes increasingly passive. As $\rho_{\text{aim}}$ approaches one, the statistical arbitrageur discounts future expected portfolio weights less, implying less willingness to trade against short-term price movements. Thus, low values of $s_{\text{aim}}$ and high values of $\rho_{\text{aim}}$ generate more inelastic demand. As described in more detail in Appendix \ref{app:trading_costs}, I choose conservative parameters chosen from \cite{tradingcosts}, which yields $s_{\text{aim}} = 0.0303$ and $\rho_{\text{aim}} = 0.9681$. I emphasize that these are "conservative" values taken from \cite{tradingcosts} in order to obtain a lower bound on statistical arbitrageur elasticities. I use $\mathcal{T} = 24$, or two years.\footnote{The results are similar for larger $\mathcal{T}$ values.}

How does a statistical arbitrageur form expectations $\E_t [w_{t+\tau}]$? All of these conditional expectations are subjective expectations from the perspective of the statistical arbitrageur. The statistical arbitrageur needs some expected evolution of asset predictors, $Z_t$, to do this. To model this, I just fit a single AR(1) model for each predictor using the same data the statistical arbitrageur parameters uses to fit its parameters, in order to create true ex-ante predictions. I use a standard AR(1) predictions for $\tau$ periods out to create some matrix of predictors, denoted as $Z_{t+\tau | t}$. Then future portfolios are formed as $\E_t [w_{t+\tau}] \equiv f(Z_{t+\tau | t}) b$. Note that a more general VAR structure for predictors delivers similar results. Appendix \ref{app:trading_costs} also provides the formulas for the derivatives used to calculate the elasticities. 

Table \ref{tab:cost_optimizer} displays the cost-optimized portfolio results. All portfolios have positive and statistically significant out-of-sample alphas, except BPZ$_F$ and RF, as shown in the first row. The second row shows the average elasticities, and these conservative parameters values indeed generate inelastic demand. The average elasticity across models is 0.03. Thus, a market with large transaction costs and statistical arbitrageurs that take this into account indeed produces quite inelastic demand. 

\begin{table}[!t] \centering
    \resizebox{1.\textwidth}{!}{\begin{tabular}{@{\extracolsep{5pt}}lccccccccccccc}
\\[-1.8ex]\hline
\hline \\[-1.8ex]
\\[-1.8ex] & \multicolumn{1}{c}{BPZ$_F$} & \multicolumn{1}{c}{BPZ$_L$} & \multicolumn{1}{c}{BSV} & \multicolumn{1}{c}{CRW} & \multicolumn{1}{c}{DGU} & \multicolumn{1}{c}{FF3} & \multicolumn{1}{c}{FF6} & \multicolumn{1}{c}{GKX} & \multicolumn{1}{c}{HXZ} & \multicolumn{1}{c}{KNS} & \multicolumn{1}{c}{KPS} & \multicolumn{1}{c}{NN} & \multicolumn{1}{c}{RF}  \\
\\[-1.8ex] & (1) & (2) & (3) & (4) & (5) & (6) & (7) & (8) & (9) & (10) & (11) & (12) & (13) \\
\hline \\[-1.8ex]
 $\alpha$ & -0.918$^{}$ & 6.835$^{***}$ & 13.088$^{***}$ & 9.430$^{***}$ & 5.414$^{**}$ & 8.889$^{***}$ & 12.960$^{***}$ & -4.430$^{*}$ & 6.164$^{***}$ & 11.329$^{***}$ & 6.468$^{**}$ & 6.722$^{***}$ & 2.124$^{}$ \\
& (2.525) & (2.478) & (2.553) & (2.626) & (2.626) & (2.397) & (2.619) & (2.548) & (2.351) & (2.567) & (2.607) & (2.580) & (1.877) \\
\hline
\hline \\[-1.8ex]

Elasticity & -0.036$^{***}$ & 0.056$^{***}$ & 0.038$^{***}$ & 0.102$^{***}$ & 0.043$^{***}$ & 0.029$^{***}$ & 0.026$^{***}$ & 0.014$^{***}$ & -0.005$^{***}$ & 0.055$^{***}$ & 0.006$^{***}$ & 0.022$^{***}$ & 0.017$^{***}$ \\
& (0.003) & (0.003) & (0.002) & (0.008) & (0.002) & (0.001) & (0.001) & (0.001) & (0.000) & (0.002) & (0.002) & (0.001) & (0.001) \\

\hline
\hline \\[-1.8ex]
\textit{Note:} & \multicolumn{13}{r}{$^{*}$p$<$0.1; $^{**}$p$<$0.05; $^{***}$p$<$0.01} \\
\end{tabular}}
    \vspace{4mm}
    \caption{\textbf{Elasticity with Cost Optimization.} This table shows the monthly CAPM annualized alphas, as well as the value-weighted and winsorized (5$^{th}$ and 95$^{th}$ percentiles) average (across assets and months) price elasticity of the thirteen cost-optimized models during the out-of-sample period from February 1990 to January 2020, inclusive. Below the estimates in parentheses, the standard errors, are shown. Standard errors are double clustered by month and stock for the average elasticity estimates. Table \ref{tab:models} displays the initialisms of these models. There are 1,253,175 month stock observations these values are calculated from. Cost optimized portfolio weights are calculated as shown in equations (\ref{eq:trading_cost_weights}) and (\ref{eq:aim_weights}).}
    \label{tab:cost_optimizer}
\end{table}

\subsection{Systematic Elasticity Derivations and Results} \label{app:systematic_elasticity}

The systematic elasticity is defined as:
\begin{equation} \label{eq:systematic_elasticity}
    \eta_{i,t}^{j,sys} \equiv - \frac{\partial \log (s_{i,t})}{\partial \log (p_{j,t}^F)}
    = \frac{\partial \log(p_{i,t})}{\partial \log (p_{j,t}^F)}
    - \frac{\partial \log (w_{i,t})}{\partial \log (p_{j,t}^F)} - \frac{\partial \log (a_t)}{\partial \log (p_{j,t}^F)}.
\end{equation}

We can use the chain rule and our factor relationship in equation (\ref{eq:factor_relationship}) to calculate each component individually:\footnote{All derivatives are straightforward, except perhaps \(\partial \log (p_{n,t}) / \partial r_{n,t} = p_{n,t-1} / p_{n,t}\). Since we vary \(p_{n,t}\) while holding \(p_{n,t-1}\) and the dividend \(d_{n,t}\) fixed, \(p_{n,t}\) is an invertible function of \(r_{n,t}\). Thus, we can write:
\begin{equation*}
    \frac{\partial \log (p_{n,t}) }{ \partial r_{n,t}}
    = \left( \frac{\partial r_{n,t}}{\partial \log (p_{n,t})} \right)^{-1}
    = \left( \frac{\partial}{\partial \log (p_{n,t})} \left( \frac{\exp (\log (p_{n,t})) + d_{n,t}}{p_{n,t-1}} - R_{f,t} \right) \right)^{-1}
    = \left( \frac{p_{n,t}}{p_{n,t-1}} \right)^{-1}
    = \frac{p_{n,t-1}}{p_{n,t}}.
\end{equation*}}
{\small \begin{equation}
    \frac{\partial \log(p_{i,t})}{\partial \log (p_{j,t}^F)}
    = \frac{\partial \log(p_{i,t})}{\partial r_{i,t}}
    \frac{\partial r_{i,t}}{\partial F_{j,t}}
     \frac{\partial F_{j,t}}{\partial \log (p_{j,t}^F)}
     = \frac{p_{i,t-1}}{p_{i,t}} \beta_{i,j,t-1}^F
    \frac{p_{j,t}^F}{p_{j,t-1}^F}.
\end{equation}
\begin{equation}
    - \frac{\partial \log (w_{i,t})}{\partial \log (p_{j,t}^F)}
    = - \left( \sum_n \frac{\partial \log (w_{i,t})}{\partial \log (p_{n,t})} \frac{\partial \log (p_{n,t})}{\partial r_{n,t}} \frac{\partial r_{n,t}}{\partial F_{j,t}} \frac{\partial F_{j,t}}{\partial \log (p_{j,t}^F)} \right)
    = - \left( \sum_n \frac{\partial \log (w_{i,t})}{\partial \log (p_{n,t})} \frac{p_{n,t-1}}{p_{n,t}} \beta_{n,j,t-1}^F \frac{p_{j,t}^F}{p_{j,t-1}^F} \right).
\end{equation}
\begin{equation}
    - \frac{\partial \log (a_t)}{\partial \log (p_{j,t}^F)}
    = - \left( \sum_n \frac{\partial \log (a_t)}{\partial r_{n,t}} \frac{\partial r_{n,t}}{\partial F_{j,t}} \frac{\partial F_{j,t}}{\partial \log (p_{j,t}^F)} \right)
    = - \left( \sum_n \frac{a_{t-1}}{a_t} w_{n,t-1} \beta_{n,j,t-1}^F \frac{p_{j,t}^F}{p_{j,t-1}^F} \right).
\end{equation}}
Putting this together, we can write this out as three components:
{\small \begin{align} \label{eq:systematic_math}
    \eta_{i,t}^{j,sys} &= \underbrace{\frac{p_{i,t-1}}{p_{i,t}} \beta_{i,j,t-1}^F
    \frac{p_{j,t}^F}{p_{j,t-1}^F} \left( 1 - \frac{\partial \log (w_{i,t})}{\partial \log (p_{i,t})} \right)}_{\text{Direct}} \nonumber \\
    & \underbrace{- \left( \sum_{n \neq i} \frac{\partial \log (w_{i,t})}{\partial \log (p_{n,t})} \frac{p_{n,t-1}}{p_{n,t}} \beta_{n,j,t-1}^F \frac{p_{j,t}^F}{p_{j,t-1}^F} \right)}_{\text{Cross}}
    \underbrace{- \left( \sum_n \frac{a_{t-1}}{a_t} w_{n,t-1} \beta_{n,j,t-1}^F \frac{p_{j,t}^F}{p_{j,t-1}^F} \right)}_{\text{Wealth}}.
\end{align}}
Since $p_{i,t-1} / p_{i,t} \approx 1$ and $p_{j,t}^F / p_{j,t-1}^F \approx 1$, the direct effect is approximately $\beta_{i,j,t-1}^F \eta_{i,t}$. It is straightforward to see how equation (\ref{eq:systematic_math}) delivers the approximation in equation (\ref{eq:sys_elas_approx}). As mentioned in the text, the wealth effects can conceptually matter here, although empirically they tend to be small even with these systematic shocks.\footnote{With CARA demand as in equation (\ref{eq:cara_stevens}), total dollar holdings for asset $i$ are given by:
\begin{equation*}
    D_{i,t} \equiv a_t w_{i,t} = a_t \frac{1}{\gamma a_t} \left( \frac{\mu_{i,t} - \beta_{i,t}' \mu_{-i,t}}{\sigma_{i,t,\epsilon}^2} \right) = \frac{1}{\gamma} \left( \frac{\mu_{i,t} - \beta_{i,t}' \mu_{-i,t}}{\sigma_{i,t,\epsilon}^2} \right).
\end{equation*}
Thus, CARA demand exhibits the well-known feature that wealth effects, even for systematic shocks, do not affect demand. If we motivate demand with Epstein-Zin or CRRA demand as in Appendix \ref{subsec:epstein_zin}, then the $a_t$ term does not drop out when calculating $D_{i,t}$, and these wealth effects remain. In the empirical exercise, I assume $w_{i,t}$ is not a function of $a_t$, wealth effects remain in $D_{i,t}$, and thus wealth effects are allowed to be nonzero.}

Tables \ref{tab:sys_estimates5}, \ref{tab:sys_estimates1}, and \ref{tab:sys_estimates0} show the value-weighted three components in equation (\ref{eq:systematic_math}) and overall systematic elasticity for each of the six portfolios for various winsorization schemes. Note that for Tables \ref{tab:sys_estimates5} and \ref{tab:sys_estimates1}, the systematic elasticity does not necessarily equal the sum of the three components since the sum of winsorized terms does not necessarily equal the winsorized sum of terms. As discussed in the text, these tables show that the cross components tend to negate the direct components and the wealth components are small. Also, the direct components tend to be smaller than the ceteris paribus elasticity because betas tend to be less than one. 

\begin{table}[!t] \centering
    \resizebox{1.\textwidth}{!}{\begin{tabular}{@{\extracolsep{5pt}}llccccccccccccc}
\\[-1.8ex]\hline
\hline \\[-1.8ex]
& & \multicolumn{13}{c}{\textit{Average Systematic Elasticity}} \
\cr \cline{3-15}
\\[-1.8ex] Estimate & PCA & \multicolumn{1}{c}{BPZ$_F$} & \multicolumn{1}{c}{BPZ$_L$} & \multicolumn{1}{c}{BSV} & \multicolumn{1}{c}{CRW} & \multicolumn{1}{c}{DGU} & \multicolumn{1}{c}{FF3} & \multicolumn{1}{c}{FF6} & \multicolumn{1}{c}{GKX} & \multicolumn{1}{c}{HXZ} & \multicolumn{1}{c}{KNS} & \multicolumn{1}{c}{KPS} & \multicolumn{1}{c}{NN} & \multicolumn{1}{c}{RF}  \\
\\[-1.8ex] & & (1) & (2) & (3) & (4) & (5) & (6) & (7) & (8) & (9) & (10) & (11) & (12) & (13) \\
\\[-1.8ex] \hline \\[-1.4ex]

Elasticity & 1/+& 0.199$^{***}$ & -0.991$^{***}$ & -0.005$^{}$ & 0.326$^{***}$ & -0.224$^{***}$ & -0.170$^{***}$ & -0.028$^{***}$ & 0.326$^{***}$ & -0.034$^{***}$ & -0.254$^{***}$ & -0.056$^{**}$ & -0.248$^{***}$ & 0.027$^{***}$ \\
 & & (0.022) & (0.062) & (0.014) & (0.010) & (0.017) & (0.009) & (0.007) & (0.019) & (0.004) & (0.021) & (0.028) & (0.026) & (0.010) \\
 & 2/+ & 0.414$^{***}$ & -0.806$^{***}$ & 0.169$^{***}$ & 0.479$^{***}$ & -0.063$^{***}$ & 0.042$^{***}$ & 0.135$^{***}$ & 0.453$^{***}$ & 0.129$^{***}$ & -0.088$^{***}$ & -0.005$^{}$ & -0.192$^{***}$ & 0.145$^{***}$ \\
 & & (0.027) & (0.074) & (0.017) & (0.014) & (0.017) & (0.010) & (0.010) & (0.029) & (0.007) & (0.024) & (0.046) & (0.029) & (0.013) \\
 & 3/+ & 0.473$^{***}$ & -0.646$^{***}$ & 0.202$^{***}$ & 0.590$^{***}$ & 0.001$^{}$ & 0.025$^{**}$ & 0.175$^{***}$ & 0.576$^{***}$ & 0.158$^{***}$ & -0.014$^{}$ & 0.063$^{}$ & -0.181$^{***}$ & 0.188$^{***}$ \\
 & & (0.030) & (0.076) & (0.019) & (0.017) & (0.017) & (0.010) & (0.011) & (0.034) & (0.008) & (0.026) & (0.061) & (0.029) & (0.016) \\
 & 1/-& 0.608$^{***}$ & -0.637$^{***}$ & 0.251$^{***}$ & 0.680$^{***}$ & 0.055$^{**}$ & 0.099$^{***}$ & 0.227$^{***}$ & 0.618$^{***}$ & 0.242$^{***}$ & 0.010$^{}$ & 0.104$^{}$ & -0.203$^{***}$ & 0.265$^{***}$ \\
 & & (0.037) & (0.093) & (0.024) & (0.020) & (0.022) & (0.012) & (0.012) & (0.042) & (0.012) & (0.033) & (0.084) & (0.033) & (0.020) \\
 & 2/-& 0.289$^{***}$ & -1.207$^{***}$ & -0.017$^{}$ & 0.493$^{***}$ & -0.252$^{***}$ & -0.238$^{***}$ & -0.031$^{***}$ & 0.455$^{***}$ & -0.021$^{***}$ & -0.321$^{***}$ & -0.046$^{}$ & -0.323$^{***}$ & 0.063$^{***}$ \\
 & & (0.030) & (0.081) & (0.020) & (0.014) & (0.021) & (0.012) & (0.009) & (0.027) & (0.005) & (0.028) & (0.046) & (0.033) & (0.014) \\
 & 3/-& 0.227$^{***}$ & -1.230$^{***}$ & -0.003$^{}$ & 0.381$^{***}$ & -0.266$^{***}$ & -0.164$^{***}$ & -0.025$^{***}$ & 0.354$^{***}$ & -0.025$^{***}$ & -0.317$^{***}$ & -0.080$^{**}$ & -0.309$^{***}$ & 0.032$^{***}$ \\
 & & (0.027) & (0.078) & (0.017) & (0.011) & (0.020) & (0.010) & (0.008) & (0.022) & (0.004) & (0.026) & (0.035) & (0.033) & (0.012) \\
\hline \\[-1.8ex] Direct & 1/+& -0.220$^{***}$ & 2.586$^{***}$ & 0.728$^{***}$ & 0.772$^{***}$ & 0.882$^{***}$ & 0.354$^{***}$ & 0.270$^{***}$ & 3.536$^{***}$ & 0.033$^{***}$ & 1.217$^{***}$ & 1.797$^{***}$ & 0.880$^{***}$ & 0.394$^{***}$ \\
 & & (0.052) & (0.081) & (0.026) & (0.031) & (0.031) & (0.010) & (0.007) & (0.104) & (0.002) & (0.039) & (0.195) & (0.060) & (0.012) \\
 & 2/+ & -0.359$^{***}$ & 4.921$^{***}$ & 1.348$^{***}$ & 1.441$^{***}$ & 1.594$^{***}$ & 0.627$^{***}$ & 0.495$^{***}$ & 6.681$^{***}$ & 0.040$^{***}$ & 2.286$^{***}$ & 3.191$^{***}$ & 1.685$^{***}$ & 0.738$^{***}$ \\
 & & (0.083) & (0.158) & (0.048) & (0.058) & (0.057) & (0.018) & (0.013) & (0.201) & (0.004) & (0.075) & (0.359) & (0.117) & (0.023) \\
 & 3/+ & -0.446$^{***}$ & 6.083$^{***}$ & 1.646$^{***}$ & 1.741$^{***}$ & 1.948$^{***}$ & 0.774$^{***}$ & 0.601$^{***}$ & 8.232$^{***}$ & 0.035$^{***}$ & 2.797$^{***}$ & 3.845$^{***}$ & 1.990$^{***}$ & 0.882$^{***}$ \\
 & & (0.100) & (0.205) & (0.060) & (0.072) & (0.072) & (0.022) & (0.016) & (0.255) & (0.006) & (0.094) & (0.447) & (0.138) & (0.028) \\
 & 1/-& -0.466$^{***}$ & 7.936$^{***}$ & 2.127$^{***}$ & 2.282$^{***}$ & 2.523$^{***}$ & 1.003$^{***}$ & 0.792$^{***}$ & 10.814$^{***}$ & 0.043$^{***}$ & 3.617$^{***}$ & 4.911$^{***}$ & 2.659$^{***}$ & 1.148$^{***}$ \\
 & & (0.120) & (0.264) & (0.077) & (0.094) & (0.093) & (0.028) & (0.022) & (0.334) & (0.007) & (0.120) & (0.577) & (0.186) & (0.037) \\
 & 2/-& -0.318$^{***}$ & 4.335$^{***}$ & 1.185$^{***}$ & 1.255$^{***}$ & 1.439$^{***}$ & 0.576$^{***}$ & 0.441$^{***}$ & 5.907$^{***}$ & 0.041$^{***}$ & 1.996$^{***}$ & 2.845$^{***}$ & 1.429$^{***}$ & 0.633$^{***}$ \\
 & & (0.074) & (0.142) & (0.042) & (0.051) & (0.052) & (0.016) & (0.012) & (0.178) & (0.004) & (0.065) & (0.323) & (0.099) & (0.020) \\
 & 3/-& -0.241$^{***}$ & 3.120$^{***}$ & 0.881$^{***}$ & 0.938$^{***}$ & 1.067$^{***}$ & 0.428$^{***}$ & 0.329$^{***}$ & 4.301$^{***}$ & 0.042$^{***}$ & 1.478$^{***}$ & 2.197$^{***}$ & 1.098$^{***}$ & 0.480$^{***}$ \\
 & & (0.061) & (0.095) & (0.030) & (0.037) & (0.037) & (0.012) & (0.009) & (0.125) & (0.003) & (0.047) & (0.235) & (0.075) & (0.014) \\
\hline \\[-1.8ex] Cross & 1/+& 0.348$^{***}$ & -3.974$^{***}$ & -0.886$^{***}$ & -0.469$^{***}$ & -1.059$^{***}$ & -0.396$^{***}$ & -0.237$^{***}$ & -3.111$^{***}$ & 0.251$^{***}$ & -1.659$^{***}$ & -1.762$^{***}$ & -1.082$^{***}$ & -0.197$^{***}$ \\
 & & (0.063) & (0.151) & (0.040) & (0.026) & (0.043) & (0.013) & (0.008) & (0.092) & (0.014) & (0.063) & (0.200) & (0.086) & (0.011) \\
 & 2/+ & 0.710$^{***}$ & -6.070$^{***}$ & -1.302$^{***}$ & -0.889$^{***}$ & -1.573$^{***}$ & -0.531$^{***}$ & -0.315$^{***}$ & -5.997$^{***}$ & 0.471$^{***}$ & -2.482$^{***}$ & -3.030$^{***}$ & -1.648$^{***}$ & -0.320$^{***}$ \\
 & & (0.099) & (0.230) & (0.058) & (0.049) & (0.062) & (0.019) & (0.011) & (0.182) & (0.026) & (0.094) & (0.366) & (0.130) & (0.018) \\
 & 3/+ & 0.856$^{***}$ & -6.938$^{***}$ & -1.465$^{***}$ & -1.098$^{***}$ & -1.762$^{***}$ & -0.596$^{***}$ & -0.338$^{***}$ & -7.356$^{***}$ & 0.570$^{***}$ & -2.792$^{***}$ & -3.628$^{***}$ & -1.851$^{***}$ & -0.365$^{***}$ \\
 & & (0.118) & (0.264) & (0.067) & (0.060) & (0.070) & (0.021) & (0.012) & (0.228) & (0.032) & (0.107) & (0.463) & (0.145) & (0.020) \\
 & 1/-& 1.039$^{***}$ & -8.746$^{***}$ & -1.826$^{***}$ & -1.453$^{***}$ & -2.243$^{***}$ & -0.747$^{***}$ & -0.420$^{***}$ & -9.774$^{***}$ & 0.701$^{***}$ & -3.513$^{***}$ & -4.593$^{***}$ & -2.298$^{***}$ & -0.466$^{***}$ \\
 & & (0.145) & (0.333) & (0.084) & (0.079) & (0.089) & (0.026) & (0.015) & (0.303) & (0.039) & (0.135) & (0.603) & (0.180) & (0.026) \\
 & 2/-& 0.509$^{***}$ & -5.916$^{***}$ & -1.290$^{***}$ & -0.783$^{***}$ & -1.561$^{***}$ & -0.578$^{***}$ & -0.335$^{***}$ & -5.189$^{***}$ & 0.391$^{***}$ & -2.433$^{***}$ & -2.770$^{***}$ & -1.580$^{***}$ & -0.301$^{***}$ \\
 & & (0.088) & (0.224) & (0.058) & (0.043) & (0.062) & (0.019) & (0.011) & (0.157) & (0.022) & (0.093) & (0.337) & (0.124) & (0.017) \\
 & 3/-& 0.379$^{***}$ & -4.853$^{***}$ & -1.081$^{***}$ & -0.573$^{***}$ & -1.303$^{***}$ & -0.483$^{***}$ & -0.291$^{***}$ & -3.828$^{***}$ & 0.299$^{***}$ & -2.032$^{***}$ & -2.161$^{***}$ & -1.320$^{***}$ & -0.246$^{***}$ \\
 & & (0.076) & (0.184) & (0.048) & (0.032) & (0.052) & (0.016) & (0.009) & (0.112) & (0.017) & (0.077) & (0.241) & (0.105) & (0.014) \\
\hline \\[-1.8ex] Wealth & 1/+& -0.019$^{}$ & 0.088$^{***}$ & 0.097$^{***}$ & 0.061$^{***}$ & -0.067$^{***}$ & -0.142$^{***}$ & -0.069$^{***}$ & 0.039$^{***}$ & -0.241$^{***}$ & 0.085$^{***}$ & -0.001$^{***}$ & -0.072$^{***}$ & -0.138$^{***}$ \\
 & & (0.014) & (0.006) & (0.009) & (0.002) & (0.003) & (0.007) & (0.006) & (0.003) & (0.011) & (0.007) & (0.000) & (0.010) & (0.008) \\
 & 2/+ & -0.042$^{**}$ & 0.062$^{***}$ & 0.096$^{***}$ & 0.015$^{***}$ & -0.050$^{***}$ & -0.125$^{***}$ & -0.074$^{***}$ & 0.046$^{***}$ & -0.255$^{***}$ & 0.069$^{***}$ & -0.001$^{***}$ & -0.190$^{***}$ & -0.203$^{***}$ \\
 & & (0.017) & (0.007) & (0.009) & (0.002) & (0.003) & (0.007) & (0.007) & (0.004) & (0.012) & (0.008) & (0.000) & (0.013) & (0.012) \\
 & 3/+ & -0.056$^{***}$ & 0.060$^{***}$ & 0.036$^{***}$ & 0.058$^{***}$ & -0.073$^{***}$ & -0.239$^{***}$ & -0.129$^{***}$ & 0.058$^{***}$ & -0.297$^{***}$ & 0.023$^{***}$ & -0.002$^{***}$ & -0.237$^{***}$ & -0.237$^{***}$ \\
 & & (0.018) & (0.006) & (0.008) & (0.002) & (0.004) & (0.012) & (0.009) & (0.003) & (0.013) & (0.008) & (0.000) & (0.014) & (0.014) \\
 & 1/-& -0.072$^{***}$ & 0.023$^{***}$ & -0.015$^{**}$ & 0.009$^{***}$ & -0.078$^{***}$ & -0.257$^{***}$ & -0.188$^{***}$ & 0.049$^{***}$ & -0.312$^{***}$ & -0.026$^{***}$ & -0.002$^{***}$ & -0.402$^{***}$ & -0.293$^{***}$ \\
 & & (0.020) & (0.006) & (0.007) & (0.002) & (0.004) & (0.012) & (0.010) & (0.003) & (0.014) & (0.007) & (0.000) & (0.017) & (0.017) \\
 & 2/-& -0.037$^{**}$ & 0.087$^{***}$ & 0.050$^{***}$ & 0.096$^{***}$ & -0.104$^{***}$ & -0.254$^{***}$ & -0.152$^{***}$ & 0.033$^{***}$ & -0.340$^{***}$ & 0.047$^{***}$ & -0.002$^{***}$ & -0.159$^{***}$ & -0.206$^{***}$ \\
 & & (0.018) & (0.007) & (0.009) & (0.003) & (0.005) & (0.012) & (0.009) & (0.003) & (0.015) & (0.008) & (0.000) & (0.012) & (0.012) \\
 & 3/-& -0.023$^{}$ & 0.097$^{***}$ & 0.123$^{***}$ & 0.058$^{***}$ & -0.071$^{***}$ & -0.137$^{***}$ & -0.071$^{***}$ & 0.038$^{***}$ & -0.275$^{***}$ & 0.102$^{***}$ & -0.001$^{***}$ & -0.110$^{***}$ & -0.166$^{***}$ \\
 & & (0.016) & (0.007) & (0.010) & (0.002) & (0.004) & (0.007) & (0.007) & (0.003) & (0.012) & (0.009) & (0.000) & (0.011) & (0.010) \\

\hline
\hline \\[-1.8ex]

\textit{Note:} & \multicolumn{14}{r}{$^{*}$p$<$0.1; $^{**}$p$<$0.05; $^{***}$p$<$0.01} \\
\end{tabular}

}
    \vspace{4mm}
    \caption{\textbf{Systematic Elasticity Estimates Full Results.} This is similar to Table \ref{tab:sys_estimates}, except that the winsorized (5$^{th}$ and 95$^{th}$ percentiles) average of the three terms of the systematic elasticity from equation (\ref{eq:systematic_math}) with the same labels are also shown.}
    \label{tab:sys_estimates5}
\end{table}

\begin{table}[!t] \centering
    \resizebox{1.\textwidth}{!}{\begin{tabular}{@{\extracolsep{5pt}}llccccccccccccc}
\\[-1.8ex]\hline
\hline \\[-1.8ex]
& & \multicolumn{13}{c}{\textit{Average Systematic Elasticity}} \
\cr \cline{3-15}
\\[-1.8ex] Estimate & PCA & \multicolumn{1}{c}{BPZ$_F$} & \multicolumn{1}{c}{BPZ$_L$} & \multicolumn{1}{c}{BSV} & \multicolumn{1}{c}{CRW} & \multicolumn{1}{c}{DGU} & \multicolumn{1}{c}{FF3} & \multicolumn{1}{c}{FF6} & \multicolumn{1}{c}{GKX} & \multicolumn{1}{c}{HXZ} & \multicolumn{1}{c}{KNS} & \multicolumn{1}{c}{KPS} & \multicolumn{1}{c}{NN} & \multicolumn{1}{c}{RF}  \\
\\[-1.8ex] & & (1) & (2) & (3) & (4) & (5) & (6) & (7) & (8) & (9) & (10) & (11) & (12) & (13) \\
\\[-1.8ex] \hline \\[-1.4ex]

Elasticity & 1/+& -0.339$^{***}$ & -2.415$^{***}$ & -0.205$^{***}$ & 0.522$^{***}$ & -0.781$^{***}$ & -0.240$^{***}$ & -0.025$^{*}$ & 0.580$^{***}$ & -0.064$^{***}$ & -0.654$^{***}$ & -0.135$^{**}$ & -0.592$^{***}$ & -0.014$^{}$ \\
 & & (0.090) & (0.160) & (0.034) & (0.026) & (0.061) & (0.015) & (0.015) & (0.045) & (0.008) & (0.050) & (0.061) & (0.057) & (0.017) \\
 & 2/+ & 0.086$^{}$ & -2.304$^{***}$ & 0.052$^{}$ & 0.761$^{***}$ & -0.467$^{***}$ & 0.085$^{***}$ & 0.238$^{***}$ & 0.789$^{***}$ & 0.216$^{***}$ & -0.415$^{***}$ & -0.108$^{}$ & -0.458$^{***}$ & 0.160$^{***}$ \\
 & & (0.088) & (0.197) & (0.041) & (0.037) & (0.057) & (0.028) & (0.028) & (0.065) & (0.016) & (0.059) & (0.099) & (0.060) & (0.022) \\
 & 3/+ & 0.212$^{**}$ & -1.988$^{***}$ & 0.121$^{***}$ & 0.926$^{***}$ & -0.271$^{***}$ & 0.053$^{**}$ & 0.300$^{***}$ & 1.015$^{***}$ & 0.265$^{***}$ & -0.270$^{***}$ & -0.093$^{}$ & -0.454$^{***}$ & 0.235$^{***}$ \\
 & & (0.096) & (0.188) & (0.042) & (0.044) & (0.050) & (0.024) & (0.030) & (0.079) & (0.019) & (0.057) & (0.132) & (0.063) & (0.025) \\
 & 1/-& 0.421$^{***}$ & -2.374$^{***}$ & 0.097$^{*}$ & 1.029$^{***}$ & -0.240$^{***}$ & 0.169$^{***}$ & 0.365$^{***}$ & 0.988$^{***}$ & 0.403$^{***}$ & -0.368$^{***}$ & -0.316$^{*}$ & -0.479$^{***}$ & 0.318$^{***}$ \\
 & & (0.107) & (0.245) & (0.050) & (0.049) & (0.057) & (0.032) & (0.034) & (0.090) & (0.028) & (0.070) & (0.190) & (0.073) & (0.033) \\
 & 2/-& -0.468$^{***}$ & -3.113$^{***}$ & -0.313$^{***}$ & 0.769$^{***}$ & -0.931$^{***}$ & -0.333$^{***}$ & -0.034$^{*}$ & 0.788$^{***}$ & -0.047$^{***}$ & -0.864$^{***}$ & -0.283$^{***}$ & -0.788$^{***}$ & 0.021$^{}$ \\
 & & (0.123) & (0.215) & (0.045) & (0.037) & (0.074) & (0.019) & (0.018) & (0.063) & (0.010) & (0.066) & (0.100) & (0.071) & (0.024) \\
 & 3/-& -0.491$^{***}$ & -3.051$^{***}$ & -0.263$^{***}$ & 0.605$^{***}$ & -0.967$^{***}$ & -0.239$^{***}$ & -0.019$^{}$ & 0.629$^{***}$ & -0.050$^{***}$ & -0.835$^{***}$ & -0.217$^{***}$ & -0.718$^{***}$ & -0.027$^{}$ \\
 & & (0.109) & (0.205) & (0.042) & (0.030) & (0.076) & (0.019) & (0.019) & (0.052) & (0.009) & (0.063) & (0.077) & (0.069) & (0.021) \\
\hline \\[-1.8ex] Direct & 1/+& 1.109$^{***}$ & 4.388$^{***}$ & 1.170$^{***}$ & 1.336$^{***}$ & 1.475$^{***}$ & 0.451$^{***}$ & 0.317$^{***}$ & 6.741$^{***}$ & -0.064$^{***}$ & 1.900$^{***}$ & 3.874$^{***}$ & 1.553$^{***}$ & 0.601$^{***}$ \\
 & & (0.132) & (0.162) & (0.050) & (0.064) & (0.057) & (0.015) & (0.010) & (0.251) & (0.007) & (0.070) & (0.414) & (0.119) & (0.024) \\
 & 2/+ & 1.927$^{***}$ & 8.364$^{***}$ & 2.146$^{***}$ & 2.494$^{***}$ & 2.691$^{***}$ & 0.788$^{***}$ & 0.573$^{***}$ & 12.665$^{***}$ & -0.151$^{***}$ & 3.537$^{***}$ & 6.928$^{***}$ & 2.980$^{***}$ & 1.119$^{***}$ \\
 & & (0.187) & (0.312) & (0.092) & (0.121) & (0.106) & (0.027) & (0.018) & (0.476) & (0.014) & (0.130) & (0.764) & (0.233) & (0.045) \\
 & 3/+ & 2.276$^{***}$ & 10.362$^{***}$ & 2.610$^{***}$ & 3.040$^{***}$ & 3.300$^{***}$ & 0.963$^{***}$ & 0.690$^{***}$ & 15.657$^{***}$ & -0.199$^{***}$ & 4.309$^{***}$ & 8.318$^{***}$ & 3.503$^{***}$ & 1.345$^{***}$ \\
 & & (0.240) & (0.402) & (0.114) & (0.150) & (0.134) & (0.033) & (0.022) & (0.608) & (0.018) & (0.164) & (0.945) & (0.276) & (0.055) \\
 & 1/-& 3.008$^{***}$ & 13.324$^{***}$ & 3.309$^{***}$ & 3.882$^{***}$ & 4.293$^{***}$ & 1.252$^{***}$ & 0.908$^{***}$ & 20.035$^{***}$ & -0.256$^{***}$ & 5.532$^{***}$ & 10.317$^{***}$ & 4.642$^{***}$ & 1.703$^{***}$ \\
 & & (0.306) & (0.502) & (0.141) & (0.188) & (0.173) & (0.043) & (0.028) & (0.748) & (0.023) & (0.205) & (1.183) & (0.368) & (0.069) \\
 & 2/-& 1.737$^{***}$ & 7.221$^{***}$ & 1.864$^{***}$ & 2.151$^{***}$ & 2.413$^{***}$ & 0.727$^{***}$ & 0.514$^{***}$ & 11.032$^{***}$ & -0.117$^{***}$ & 3.068$^{***}$ & 5.989$^{***}$ & 2.489$^{***}$ & 0.947$^{***}$ \\
 & & (0.173) & (0.269) & (0.079) & (0.104) & (0.095) & (0.024) & (0.016) & (0.410) & (0.012) & (0.113) & (0.668) & (0.193) & (0.037) \\
 & 3/-& 1.338$^{***}$ & 5.274$^{***}$ & 1.410$^{***}$ & 1.609$^{***}$ & 1.781$^{***}$ & 0.550$^{***}$ & 0.388$^{***}$ & 8.095$^{***}$ & -0.074$^{***}$ & 2.294$^{***}$ & 4.682$^{***}$ & 1.932$^{***}$ & 0.726$^{***}$ \\
 & & (0.150) & (0.189) & (0.059) & (0.075) & (0.068) & (0.018) & (0.012) & (0.294) & (0.009) & (0.083) & (0.495) & (0.148) & (0.028) \\
\hline \\[-1.8ex] Cross & 1/+& -1.797$^{***}$ & -7.204$^{***}$ & -1.572$^{***}$ & -0.910$^{***}$ & -2.088$^{***}$ & -0.423$^{***}$ & -0.202$^{***}$ & -6.161$^{***}$ & 0.471$^{***}$ & -2.758$^{***}$ & -3.981$^{***}$ & -2.025$^{***}$ & -0.385$^{***}$ \\
 & & (0.214) & (0.324) & (0.081) & (0.056) & (0.098) & (0.032) & (0.023) & (0.227) & (0.031) & (0.117) & (0.425) & (0.169) & (0.022) \\
 & 2/+ & -2.210$^{***}$ & -11.045$^{***}$ & -2.303$^{***}$ & -1.725$^{***}$ & -3.079$^{***}$ & -0.455$^{***}$ & -0.166$^{***}$ & -11.801$^{***}$ & 0.882$^{***}$ & -4.134$^{***}$ & -6.962$^{***}$ & -3.080$^{***}$ & -0.613$^{***}$ \\
 & & (0.261) & (0.495) & (0.119) & (0.104) & (0.142) & (0.054) & (0.042) & (0.438) & (0.058) & (0.174) & (0.790) & (0.257) & (0.035) \\
 & 3/+ & -2.362$^{***}$ & -12.630$^{***}$ & -2.577$^{***}$ & -2.142$^{***}$ & -3.437$^{***}$ & -0.443$^{***}$ & -0.120$^{**}$ & -14.525$^{***}$ & 1.064$^{***}$ & -4.652$^{***}$ & -8.414$^{***}$ & -3.490$^{***}$ & -0.694$^{***}$ \\
 & & (0.300) & (0.567) & (0.135) & (0.127) & (0.157) & (0.065) & (0.051) & (0.556) & (0.070) & (0.197) & (1.001) & (0.289) & (0.040) \\
 & 1/-& -2.864$^{***}$ & -15.980$^{***}$ & -3.226$^{***}$ & -2.800$^{***}$ & -4.403$^{***}$ & -0.565$^{***}$ & -0.157$^{**}$ & -18.885$^{***}$ & 1.310$^{***}$ & -5.866$^{***}$ & -10.739$^{***}$ & -4.334$^{***}$ & -0.880$^{***}$ \\
 & & (0.377) & (0.720) & (0.170) & (0.164) & (0.201) & (0.080) & (0.062) & (0.706) & (0.086) & (0.249) & (1.305) & (0.360) & (0.052) \\
 & 2/-& -2.610$^{***}$ & -10.759$^{***}$ & -2.285$^{***}$ & -1.517$^{***}$ & -3.079$^{***}$ & -0.573$^{***}$ & -0.253$^{***}$ & -10.174$^{***}$ & 0.731$^{***}$ & -4.052$^{***}$ & -6.260$^{***}$ & -2.975$^{***}$ & -0.577$^{***}$ \\
 & & (0.275) & (0.484) & (0.119) & (0.091) & (0.143) & (0.048) & (0.036) & (0.378) & (0.048) & (0.172) & (0.711) & (0.247) & (0.033) \\
 & 3/-& -2.256$^{***}$ & -8.813$^{***}$ & -1.924$^{***}$ & -1.096$^{***}$ & -2.576$^{***}$ & -0.529$^{***}$ & -0.258$^{***}$ & -7.472$^{***}$ & 0.562$^{***}$ & -3.384$^{***}$ & -4.854$^{***}$ & -2.464$^{***}$ & -0.477$^{***}$ \\
 & & (0.253) & (0.396) & (0.099) & (0.067) & (0.121) & (0.038) & (0.028) & (0.269) & (0.037) & (0.144) & (0.514) & (0.207) & (0.027) \\
\hline \\[-1.8ex] Wealth & 1/+& -0.037$^{}$ & 0.155$^{***}$ & 0.159$^{***}$ & 0.092$^{***}$ & -0.203$^{***}$ & -0.239$^{***}$ & -0.115$^{***}$ & 0.065$^{***}$ & -0.395$^{***}$ & 0.149$^{***}$ & -0.002$^{***}$ & -0.136$^{***}$ & -0.215$^{***}$ \\
 & & (0.023) & (0.013) & (0.016) & (0.004) & (0.020) & (0.017) & (0.012) & (0.005) & (0.026) & (0.015) & (0.000) & (0.017) & (0.014) \\
 & 2/+ & -0.076$^{***}$ & 0.113$^{***}$ & 0.168$^{***}$ & 0.023$^{***}$ & -0.161$^{***}$ & -0.209$^{***}$ & -0.124$^{***}$ & 0.076$^{***}$ & -0.418$^{***}$ & 0.130$^{***}$ & -0.002$^{***}$ & -0.287$^{***}$ & -0.316$^{***}$ \\
 & & (0.028) & (0.013) & (0.018) & (0.003) & (0.018) & (0.015) & (0.014) & (0.007) & (0.028) & (0.017) & (0.000) & (0.023) & (0.021) \\
 & 3/+ & -0.099$^{***}$ & 0.112$^{***}$ & 0.064$^{***}$ & 0.087$^{***}$ & -0.213$^{***}$ & -0.401$^{***}$ & -0.211$^{***}$ & 0.098$^{***}$ & -0.487$^{***}$ & 0.050$^{***}$ & -0.004$^{***}$ & -0.370$^{***}$ & -0.369$^{***}$ \\
 & & (0.031) & (0.013) & (0.015) & (0.005) & (0.021) & (0.028) & (0.018) & (0.007) & (0.032) & (0.014) & (0.000) & (0.027) & (0.025) \\
 & 1/-& -0.124$^{***}$ & 0.048$^{***}$ & -0.020$^{}$ & 0.014$^{***}$ & -0.220$^{***}$ & -0.430$^{***}$ & -0.308$^{***}$ & 0.082$^{***}$ & -0.511$^{***}$ & -0.034$^{**}$ & -0.004$^{***}$ & -0.597$^{***}$ & -0.456$^{***}$ \\
 & & (0.033) & (0.011) & (0.013) & (0.003) & (0.021) & (0.030) & (0.023) & (0.006) & (0.034) & (0.014) & (0.000) & (0.032) & (0.031) \\
 & 2/-& -0.068$^{**}$ & 0.155$^{***}$ & 0.075$^{***}$ & 0.145$^{***}$ & -0.300$^{***}$ & -0.425$^{***}$ & -0.251$^{***}$ & 0.055$^{***}$ & -0.555$^{***}$ & 0.083$^{***}$ & -0.004$^{***}$ & -0.278$^{***}$ & -0.321$^{***}$ \\
 & & (0.031) & (0.014) & (0.016) & (0.007) & (0.028) & (0.029) & (0.020) & (0.006) & (0.037) & (0.015) & (0.000) & (0.024) & (0.022) \\
 & 3/-& -0.045$^{*}$ & 0.172$^{***}$ & 0.203$^{***}$ & 0.088$^{***}$ & -0.221$^{***}$ & -0.230$^{***}$ & -0.120$^{***}$ & 0.063$^{***}$ & -0.450$^{***}$ & 0.180$^{***}$ & -0.002$^{***}$ & -0.191$^{***}$ & -0.258$^{***}$ \\
 & & (0.027) & (0.015) & (0.020) & (0.004) & (0.023) & (0.017) & (0.014) & (0.006) & (0.030) & (0.018) & (0.000) & (0.019) & (0.017) \\

\hline
\hline \\[-1.8ex]

\textit{Note:} & \multicolumn{14}{r}{$^{*}$p$<$0.1; $^{**}$p$<$0.05; $^{***}$p$<$0.01} \\
\end{tabular}

}
    \vspace{4mm}
    \caption{\textbf{Systematic Elasticity Estimates - Alternative Winsorization.} This is similar to Table \ref{tab:sys_estimates5} except results are winsorized at the 1$^{st}$ and 99$^{th}$ percentiles.}
    \label{tab:sys_estimates1}
\end{table}

\begin{table}[!t] \centering
    \resizebox{1.\textwidth}{!}{\begin{tabular}{@{\extracolsep{5pt}}llccccccccccccc}
\\[-1.8ex]\hline
\hline \\[-1.8ex]
& & \multicolumn{13}{c}{\textit{Average Systematic Elasticity}} \
\cr \cline{3-15}
\\[-1.8ex] Estimate & PCA & \multicolumn{1}{c}{BPZ$_F$} & \multicolumn{1}{c}{BPZ$_L$} & \multicolumn{1}{c}{BSV} & \multicolumn{1}{c}{CRW} & \multicolumn{1}{c}{DGU} & \multicolumn{1}{c}{FF3} & \multicolumn{1}{c}{FF6} & \multicolumn{1}{c}{GKX} & \multicolumn{1}{c}{HXZ} & \multicolumn{1}{c}{KNS} & \multicolumn{1}{c}{KPS} & \multicolumn{1}{c}{NN} & \multicolumn{1}{c}{RF}  \\
\\[-1.8ex] & & (1) & (2) & (3) & (4) & (5) & (6) & (7) & (8) & (9) & (10) & (11) & (12) & (13) \\
\\[-1.8ex] \hline \\[-1.4ex]

Elasticity & 1/+& -3.709$^{***}$ & -5.333$^{}$ & -1.666$^{***}$ & 47.709$^{}$ & -13.196$^{}$ & -0.816$^{***}$ & -0.615$^{***}$ & -2.083$^{}$ & -0.085$^{***}$ & -2.736$^{***}$ & -2.039$^{}$ & -2.377$^{***}$ & -0.087$^{***}$ \\
 & & (0.551) & (3.673) & (0.333) & (46.869) & (10.039) & (0.168) & (0.230) & (2.394) & (0.025) & (0.502) & (1.294) & (0.587) & (0.033) \\
 & 2/+ & -3.550$^{***}$ & -1.842$^{}$ & -1.537$^{***}$ & 66.129$^{}$ & -17.235$^{}$ & -0.411$^{**}$ & -0.425$^{}$ & -3.212$^{}$ & 0.157$^{***}$ & -2.309$^{***}$ & -3.458$^{**}$ & -1.359$^{*}$ & 0.157$^{***}$ \\
 & & (0.722) & (7.379) & (0.381) & (65.106) & (14.181) & (0.164) & (0.310) & (4.204) & (0.040) & (0.672) & (1.723) & (0.735) & (0.046) \\
 & 3/+ & -2.993$^{***}$ & 0.095$^{}$ & -1.267$^{***}$ & 56.606$^{}$ & -14.497$^{}$ & -0.400$^{**}$ & -0.238$^{}$ & -3.277$^{}$ & 0.202$^{***}$ & -1.791$^{***}$ & -3.674$^{**}$ & -1.407$^{*}$ & 0.250$^{***}$ \\
 & & (0.716) & (8.980) & (0.390) & (55.251) & (12.381) & (0.171) & (0.270) & (4.517) & (0.044) & (0.478) & (1.616) & (0.788) & (0.050) \\
 & 1/-& -4.252$^{***}$ & 0.368$^{}$ & -1.524$^{***}$ & 69.942$^{}$ & -18.880$^{}$ & -0.257$^{*}$ & -0.303$^{}$ & -3.586$^{}$ & 0.281$^{***}$ & -2.125$^{***}$ & -5.764$^{***}$ & -1.080$^{}$ & 0.290$^{***}$ \\
 & & (1.188) & (11.621) & (0.463) & (68.543) & (16.197) & (0.147) & (0.355) & (5.088) & (0.055) & (0.601) & (1.987) & (0.875) & (0.075) \\
 & 2/-& -5.227$^{***}$ & -6.455$^{}$ & -2.133$^{***}$ & 52.248$^{}$ & -16.369$^{}$ & -1.078$^{***}$ & -0.774$^{***}$ & -2.839$^{}$ & -0.097$^{***}$ & -3.413$^{***}$ & -3.330$^{**}$ & -3.042$^{***}$ & -0.087$^{*}$ \\
 & & (0.817) & (5.798) & (0.432) & (51.177) & (12.580) & (0.224) & (0.295) & (3.266) & (0.033) & (0.559) & (1.571) & (0.721) & (0.048) \\
 & 3/-& -4.984$^{***}$ & -6.652$^{}$ & -2.106$^{***}$ & 57.693$^{}$ & -17.810$^{}$ & -0.924$^{***}$ & -0.780$^{**}$ & -2.936$^{}$ & -0.084$^{***}$ & -3.507$^{***}$ & -2.820$^{*}$ & -2.617$^{***}$ & -0.115$^{***}$ \\
 & & (0.728) & (4.607) & (0.408) & (56.867) & (13.728) & (0.189) & (0.304) & (3.217) & (0.029) & (0.709) & (1.587) & (0.689) & (0.043) \\
\hline \\[-1.8ex] Direct & 1/+& 5.093$^{***}$ & 15.881$^{***}$ & 2.805$^{***}$ & 9.540$^{*}$ & 7.578$^{***}$ & 1.432$^{***}$ & 0.226$^{}$ & 20.736$^{***}$ & -0.423$^{***}$ & 5.163$^{***}$ & 11.715$^{***}$ & 6.370$^{***}$ & 0.867$^{***}$ \\
 & & (0.474) & (3.842) & (0.350) & (5.645) & (2.387) & (0.281) & (0.174) & (4.070) & (0.043) & (0.531) & (1.958) & (1.291) & (0.058) \\
 & 2/+ & 9.027$^{***}$ & 30.947$^{***}$ & 5.104$^{***}$ & 17.357$^{*}$ & 13.691$^{***}$ & 2.546$^{***}$ & 0.417$^{}$ & 45.745$^{***}$ & -0.851$^{***}$ & 9.697$^{***}$ & 21.422$^{***}$ & 12.208$^{***}$ & 1.599$^{***}$ \\
 & & (0.783) & (7.854) & (0.605) & (10.266) & (4.340) & (0.539) & (0.304) & (13.862) & (0.083) & (1.030) & (3.614) & (2.476) & (0.104) \\
 & 3/+ & 11.037$^{***}$ & 37.626$^{***}$ & 6.283$^{***}$ & 17.858$^{*}$ & 18.762$^{***}$ & 2.998$^{***}$ & 0.420$^{}$ & 55.514$^{***}$ & -1.039$^{***}$ & 11.711$^{***}$ & 25.419$^{***}$ & 13.946$^{***}$ & 1.869$^{***}$ \\
 & & (0.955) & (9.571) & (0.771) & (9.236) & (6.668) & (0.595) & (0.412) & (15.880) & (0.099) & (1.195) & (4.246) & (2.938) & (0.118) \\
 & 1/-& 13.454$^{***}$ & 48.155$^{***}$ & 8.101$^{***}$ & 22.925$^{*}$ & 23.552$^{***}$ & 3.950$^{***}$ & 0.638$^{}$ & 75.598$^{***}$ & -1.345$^{***}$ & 15.109$^{***}$ & 31.574$^{***}$ & 18.044$^{***}$ & 2.307$^{***}$ \\
 & & (1.086) & (12.302) & (0.971) & (12.165) & (8.012) & (0.828) & (0.502) & (24.689) & (0.129) & (1.547) & (5.307) & (3.790) & (0.140) \\
 & 2/-& 7.889$^{***}$ & 25.367$^{***}$ & 4.541$^{***}$ & 12.875$^{*}$ & 13.521$^{***}$ & 2.240$^{***}$ & 0.339$^{}$ & 36.632$^{***}$ & -0.691$^{***}$ & 8.280$^{***}$ & 17.853$^{***}$ & 9.854$^{***}$ & 1.313$^{***}$ \\
 & & (0.680) & (6.066) & (0.564) & (6.787) & (4.709) & (0.429) & (0.302) & (9.021) & (0.069) & (0.821) & (3.013) & (2.110) & (0.083) \\
 & 3/-& 6.010$^{***}$ & 19.392$^{***}$ & 3.375$^{***}$ & 12.111$^{}$ & 8.664$^{***}$ & 1.788$^{***}$ & 0.321$^{*}$ & 26.071$^{***}$ & -0.510$^{***}$ & 6.234$^{***}$ & 14.159$^{***}$ & 8.075$^{***}$ & 1.050$^{***}$ \\
 & & (0.542) & (4.813) & (0.413) & (7.446) & (2.528) & (0.374) & (0.195) & (6.154) & (0.053) & (0.655) & (2.382) & (1.641) & (0.071) \\
\hline \\[-1.8ex] Cross & 1/+& -8.738$^{***}$ & -21.401$^{***}$ & -4.658$^{***}$ & 38.051$^{}$ & -20.508$^{*}$ & -1.968$^{***}$ & -0.708$^{**}$ & -22.901$^{***}$ & 0.798$^{***}$ & -8.075$^{***}$ & -13.751$^{***}$ & -8.586$^{***}$ & -0.693$^{***}$ \\
 & & (0.878) & (2.050) & (0.567) & (41.304) & (12.135) & (0.425) & (0.290) & (6.382) & (0.059) & (0.864) & (2.240) & (1.611) & (0.059) \\
 & 2/+ & -12.461$^{***}$ & -32.928$^{***}$ & -6.844$^{***}$ & 48.743$^{}$ & -30.716$^{*}$ & -2.712$^{***}$ & -0.699$^{}$ & -49.052$^{***}$ & 1.496$^{***}$ & -12.164$^{***}$ & -24.877$^{***}$ & -13.235$^{***}$ & -1.057$^{***}$ \\
 & & (1.266) & (3.128) & (0.834) & (54.976) & (18.232) & (0.646) & (0.465) & (17.961) & (0.109) & (1.316) & (4.074) & (2.508) & (0.086) \\
 & 3/+ & -13.884$^{***}$ & -37.670$^{***}$ & -7.629$^{***}$ & 38.636$^{}$ & -32.983$^{*}$ & -2.928$^{***}$ & -0.413$^{}$ & -58.914$^{***}$ & 1.807$^{***}$ & -13.565$^{***}$ & -29.089$^{***}$ & -14.923$^{***}$ & -1.171$^{***}$ \\
 & & (1.390) & (3.547) & (0.954) & (46.221) & (18.693) & (0.737) & (0.548) & (20.236) & (0.130) & (1.399) & (4.592) & (2.813) & (0.092) \\
 & 1/-& -17.527$^{***}$ & -47.850$^{***}$ & -9.607$^{***}$ & 46.999$^{}$ & -42.149$^{*}$ & -3.704$^{***}$ & -0.585$^{}$ & -79.287$^{***}$ & 2.220$^{***}$ & -17.199$^{***}$ & -37.333$^{***}$ & -18.440$^{***}$ & -1.464$^{***}$ \\
 & & (1.886) & (4.502) & (1.205) & (56.596) & (23.855) & (0.928) & (0.696) & (29.559) & (0.160) & (1.811) & (5.971) & (3.500) & (0.114) \\
 & 2/-& -13.008$^{***}$ & -32.009$^{***}$ & -6.760$^{***}$ & 39.187$^{}$ & -29.500$^{*}$ & -2.820$^{***}$ & -0.821$^{*}$ & -39.541$^{***}$ & 1.240$^{***}$ & -11.791$^{***}$ & -21.177$^{***}$ & -12.569$^{***}$ & -1.009$^{***}$ \\
 & & (1.296) & (3.034) & (0.834) & (44.529) & (16.944) & (0.630) & (0.442) & (12.182) & (0.091) & (1.216) & (3.332) & (2.360) & (0.086) \\
 & 3/-& -10.917$^{***}$ & -26.250$^{***}$ & -5.720$^{***}$ & 45.470$^{}$ & -26.184$^{*}$ & -2.442$^{***}$ & -0.961$^{***}$ & -29.087$^{***}$ & 0.951$^{***}$ & -9.955$^{***}$ & -16.976$^{***}$ & -10.468$^{***}$ & -0.851$^{***}$ \\
 & & (1.102) & (2.534) & (0.688) & (49.512) & (15.916) & (0.522) & (0.364) & (9.297) & (0.071) & (1.095) & (2.791) & (1.955) & (0.074) \\
\hline \\[-1.8ex] Wealth & 1/+& -0.063$^{*}$ & 0.187$^{***}$ & 0.187$^{***}$ & 0.118$^{***}$ & -0.265$^{***}$ & -0.280$^{***}$ & -0.134$^{***}$ & 0.082$^{***}$ & -0.461$^{***}$ & 0.177$^{***}$ & -0.003$^{***}$ & -0.162$^{***}$ & -0.261$^{***}$ \\
 & & (0.038) & (0.019) & (0.022) & (0.009) & (0.033) & (0.025) & (0.016) & (0.008) & (0.038) & (0.021) & (0.000) & (0.022) & (0.021) \\
 & 2/+ & -0.116$^{**}$ & 0.139$^{***}$ & 0.204$^{***}$ & 0.029$^{***}$ & -0.210$^{***}$ & -0.245$^{***}$ & -0.143$^{***}$ & 0.095$^{***}$ & -0.487$^{***}$ & 0.158$^{***}$ & -0.003$^{***}$ & -0.332$^{***}$ & -0.384$^{***}$ \\
 & & (0.048) & (0.019) & (0.025) & (0.004) & (0.029) & (0.022) & (0.018) & (0.010) & (0.041) & (0.023) & (0.000) & (0.030) & (0.032) \\
 & 3/+ & -0.147$^{***}$ & 0.139$^{***}$ & 0.079$^{***}$ & 0.112$^{***}$ & -0.276$^{***}$ & -0.470$^{***}$ & -0.245$^{***}$ & 0.123$^{***}$ & -0.567$^{***}$ & 0.064$^{***}$ & -0.005$^{***}$ & -0.430$^{***}$ & -0.448$^{***}$ \\
 & & (0.053) & (0.018) & (0.019) & (0.009) & (0.035) & (0.041) & (0.024) & (0.011) & (0.047) & (0.018) & (0.001) & (0.037) & (0.037) \\
 & 1/-& -0.180$^{***}$ & 0.063$^{***}$ & -0.019$^{}$ & 0.017$^{***}$ & -0.283$^{***}$ & -0.503$^{***}$ & -0.356$^{***}$ & 0.103$^{***}$ & -0.595$^{***}$ & -0.034$^{**}$ & -0.005$^{***}$ & -0.684$^{***}$ & -0.553$^{***}$ \\
 & & (0.057) & (0.015) & (0.016) & (0.004) & (0.034) & (0.043) & (0.032) & (0.010) & (0.049) & (0.017) & (0.000) & (0.046) & (0.046) \\
 & 2/-& -0.108$^{**}$ & 0.188$^{***}$ & 0.086$^{***}$ & 0.185$^{***}$ & -0.389$^{***}$ & -0.497$^{***}$ & -0.291$^{***}$ & 0.069$^{***}$ & -0.646$^{***}$ & 0.098$^{***}$ & -0.005$^{***}$ & -0.327$^{***}$ & -0.390$^{***}$ \\
 & & (0.051) & (0.021) & (0.020) & (0.014) & (0.047) & (0.043) & (0.028) & (0.008) & (0.053) & (0.019) & (0.001) & (0.033) & (0.032) \\
 & 3/-& -0.077$^{*}$ & 0.207$^{***}$ & 0.239$^{***}$ & 0.112$^{***}$ & -0.290$^{***}$ & -0.269$^{***}$ & -0.140$^{***}$ & 0.079$^{***}$ & -0.525$^{***}$ & 0.214$^{***}$ & -0.003$^{***}$ & -0.223$^{***}$ & -0.314$^{***}$ \\
 & & (0.045) & (0.022) & (0.027) & (0.008) & (0.037) & (0.024) & (0.017) & (0.009) & (0.044) & (0.025) & (0.000) & (0.025) & (0.026) \\

\hline
\hline \\[-1.8ex]

\textit{Note:} & \multicolumn{14}{r}{$^{*}$p$<$0.1; $^{**}$p$<$0.05; $^{***}$p$<$0.01} \\
\end{tabular}

}
    \vspace{4mm}
    \caption{\textbf{Systematic Elasticity Estimates - No Winsorization.} This is similar to Table \ref{tab:sys_estimates5} except results are not winsorized.}
    \label{tab:sys_estimates0}
\end{table}

\clearpage

\subsection{Which Assets have Elastic Demand} \label{app:which_assets}

In this section, I discuss which assets have elastic demand from the various statistical arbitrageur models. Equation (\ref{eq:statistical arbitrageur_elasticity}) shows the elasticity of each statistical arbitrageur comes from the different predictor variables. The key finding from this analysis here is that the demand elasticity varies widely across different models. Some statistical arbitrageur models, such as FF3 and BSV, exhibit elastic demand for value stocks for example, whereas others, like GKX, show particularly high elasticity for stocks with short-term reversals. Generally, most statistical arbitrageur models derive their elastic demand from a combination of both high-frequency price variations (such as short-term reversals) and low-frequency price variations (like value stocks). However, there is significant heterogeneity in how different models achieve this elasticity.

I perform two main analyses. The first examines directly the partial effects, while the second examines how the portfolios correlate with predictor-based portfolio returns. This second analysis is able to examine which stocks have elastic demand overall unconditionally, rather than holding fixed many other partial effects. 

We can directly measure which predictors give more elastic demand across models, and I show the value-weighted winsorized components of this equation for each predictor $k$ in the heatmap in Figure \ref{fig:which_assets_direct}. I show only the price-related predictors. The darker the blue in the square, the larger the elasticity component is in equation (\ref{eq:statistical arbitrageur_elasticity}). The squares with the green dots means that those predictors represent components that positively contribute to the elasticity of the model, while the red dots indicate predictors that negatively contribute to the elasticity of the asset. Since many statistical arbitrageur portfolios have positive market betas, they act like market indexers to some small degree, which tends to decrease the overall elasticity as discussed above. The short-term-reversals predictor (cum\_return\_1\_0) is an important elasticity contributor for many models, but it negatively contributes to the elasticity for the BPZ$_{\text{F}}$ and the KPS model. Note that the classic factor models, like FF3, have zero loadings (no green or red dot) on many predictors by construction. Overall, the models are quite different, creating a rich diversity of sources of elasticity. Which assets have elastic demand? It turns out that it depends quite a bit on the model. 

It should be noted that the BPZ$_{\text{F}}$ and RF models have many zero dots as well. These models are discontinuous functions of prices, if a large price movement is required to reach a discontinuous breakpoint, then the numerical derivatives used to calculate the sensitivity to price picks up zero sensitivity to prices through a given predictor. To deal with this, I consider numerical derivatives with reasonable log-price changes, but these still often pick up zero effects. This is quite analogous to investing in the \cite{ff3} value portfolios with bins sorted on discontinuous breakpoints. The elasticity of this investment is zero, but the strategies are still somewhat sensitive to prices across the portfolio bins. See Appendix \ref{subsec:value_weights} for more discussion on investing with bin-style investment strategies. 

\begin{figure}[!t] \centering
    \resizebox{0.9\textwidth}{!}{\includegraphics[trim={0 0 0 0.3cm},clip]{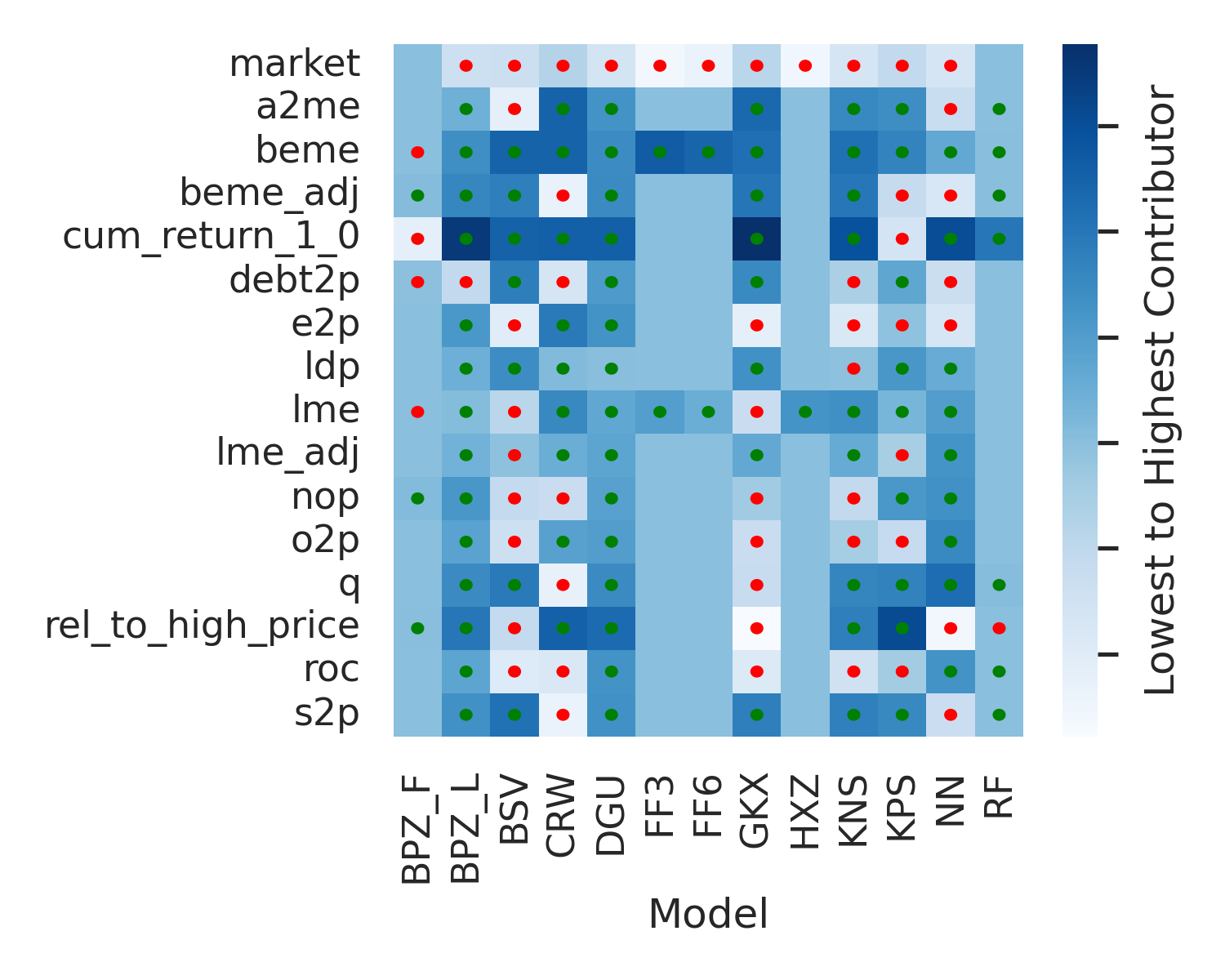}}
    \vspace{4mm}
    \caption{\textbf{Which Assets have Elastic Demand? Partial Effects.} This heatmap shows, for each statistical arbitrageur model, the value-weighted average contributions to elasticity in equation (\ref{eq:statistical arbitrageur_elasticity}). The statistical arbitrageur models are labeled along the $x$-axis, while the predictor variables that are functions of prices are shown along the $y$-axis. If the term is positive (contributes positively to elasticity), then there is a green dot on the square for the given predictor and model. If the term is negative, then a red dot is put on the square. If the term is zero, then no dot is put on the square.}
    \label{fig:which_assets_direct}
\end{figure}

The above analysis using equation (\ref{eq:statistical arbitrageur_elasticity}) depends on partial effects, i.e., holding fixed all predictors and determining the sensitivity to only a single predictor. One may also care about which assets have elastic demand, not in this sense of holding all else equal, but in an overall sense. In order to understand which assets have more elasticity, we can just correlate the statistical arbitrageur portfolios to classic long-short predictor based portfolios. Specifically, for each statistical arbitrageur model, I estimate the correlation between the price-related predictor-weighted factor portfolios described in equation (\ref{eq:port_returns}) and statistical arbitrageur portfolio returns. 

Figure \ref{fig:which_assets} shows a heatmap of these correlations. Unsurprisingly, this shows a different picture where the price relative to the past year's highest price (rel\_to\_high\_price) is correlated with many statistical arbitrageur portfolios. While some predictors negatively contribute to the elasticity for some models in Figure \ref{fig:which_assets_direct}, in some cases the sign flips when looking at the unconditional correlation of returns. Again, this heatmap also shows quite a bit of variation across models, indicating that models are related differently to different price-related variables. 

\begin{figure}[!t] \centering
    \resizebox{0.9\textwidth}{!}{\includegraphics[trim={0 0 0 0.3cm},clip]{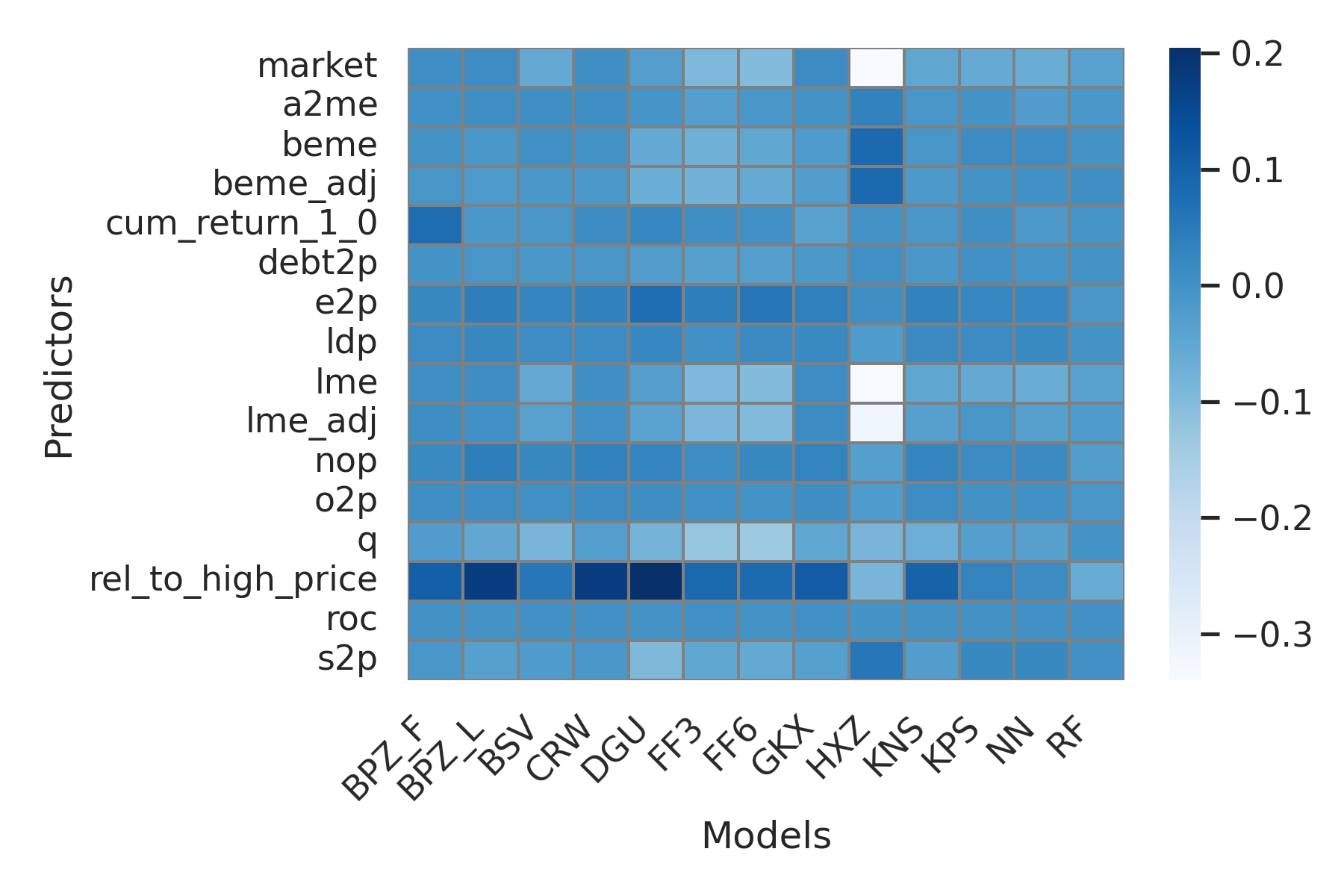}}
    \vspace{4mm}
    \caption{\textbf{Which Assets have Elastic Demand? Overall Correlations.} This heatmap shows, for each statistical arbitrageur model, the correlation between predictor-weighted portfolios based on price-related predictors (see equation (\ref{eq:port_returns})) and statistical arbitrageur returns. The statistical arbitrageur models are labeled along the $x$-axis, while the predictor variables that are functions of prices are shown along the $y$-axis.}
    \label{fig:which_assets}
\end{figure}

\subsection{Similarity Between Statistical Arbitrageurs Models (HSAs) and Systematic Hedge Funds} \label{app:statarb_similarity}

In this section, I assess how similar the HSAs are compared to systematic hedge funds, a subset of which are perhaps the most likely candidates for actual statistical arbitrageurs in the market. I follow the KY methodology to correct investors types in the 13F data, and define a systematic hedge fund as an "investment advisor" that holds at least 500 stocks. There are many hedge funds that hold only 10 or 20 stocks, instead of investing in systematic strategies, and these are obviously not acting like the HSAs I consider. With this filter, I capture some well-known hedge funds that use quantitative models to pursue systematic equities trading strategies, including AQR, Renaissance, Tower Research, Citadel, Two Sigma, S.A.C Capital, D.E. Shaw, Roll and Ross Asset Management, Paradigm Asset Management, Quantitative Systematic Strategies, Bridgewater Associates, Cubist Systematic Strategies, Martingale Asset Management, and Quantitative Systematic Strategies. 

\begin{figure}[!t] \centering
    \resizebox{0.9\textwidth}{!}{\includegraphics[trim={0 0 0 0.3cm},clip]{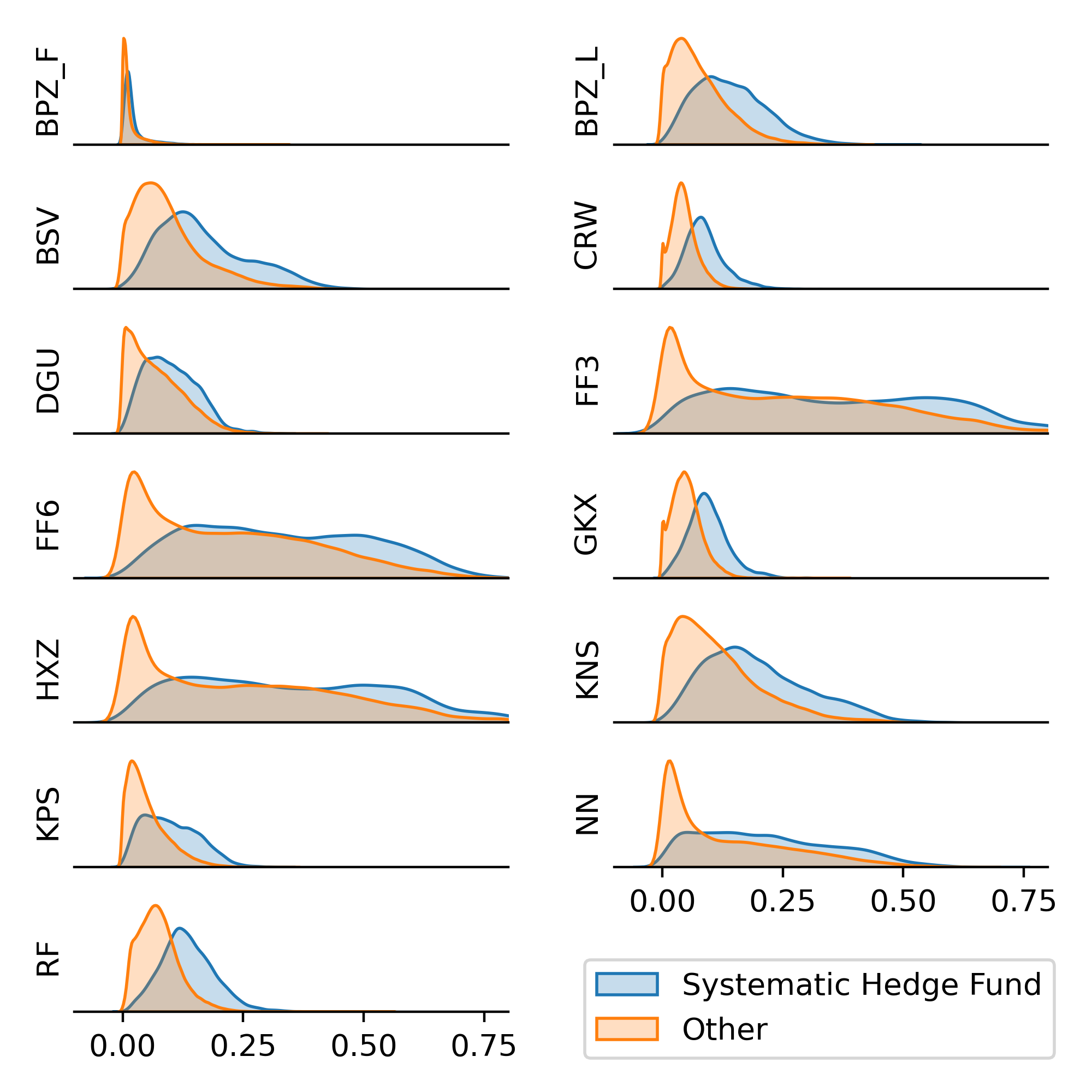}}
    \vspace{4mm}
    \caption{\textbf{Compare Statistical Arbitrage Portfolios with Holdings} This figure show kernel density plots of the cosine similarity between statistical arbitrageur long-only portfolio weights and 13F investor portfolio weights, for systematic equity hedge funds and all other 13F investors. The distributions are calculated across quarters and investors in the 13F data. I define a systematic equity hedge fund as an "investment advisor" that holds at least 500 stocks.
    }
    \label{fig:compare_arbs_with_holdings}
\end{figure}

Figure \ref{fig:compare_arbs_with_holdings} shows the distribution, across time and institutions, of the cosine similarity between the vectors of weights of a given HSA compared to the systematic hedge fund. The 13F data reports only long positions, and so for comparison, I set the short positions of the HSA to zero and then normalize HSA weights to sum to one just as the 13F investor weights. The cosine similarity between HSA and all other 13F investors is also shown. 

While the cosine similarity between portfolio weights is higher on average for the systematic hedge funds, the underlying populations of course still have quite a large degree of overlap. The systematic strategies are actually surprisingly uncorrelated to each other, as can be seen from the large fraction of model returns that are alphas relative to each other (see Table \ref{tab:cross_alphas}). Also, some investors identified as systematic hedge funds are likely not behaving as statistical arbitrageurs while other 13F investors, like some mutual fund investors, may actually be pursuing statistical arbitrage strategies. 

\subsection{Do Statistical Arbitrageurs (HSAs) Actually Arbitrage Away the Alpha?} \label{subsec:arb_away_alpha}

Do HSAs arbitrage away the alpha in the counterfactual experiments? In this subsection, I present the asset pricing implications of the equilibrium model. I first present these results using the counterfactual experiments discussed in the main text of the paper. Then I consider a Lucas critique about the aggregate market potentially having a different demand function in response to counterfactual HSA investors. 

\subsubsection{Alpha with HSA Investors}

Financial economists have allocated quite a bit of attention to producing these HSA models, as can be seen by looking at the list of models in Table \ref{tab:models}. The high out-of-sample alphas and Sharpe ratios in the literature and also shown in Tables \ref{tab:returns} and \ref{tab:cross_alphas} may lead one to believe that statistical arbitrageurs can easily identify alpha and arbitrage it away. However, in this section, I show that these statistical arbitrageurs actually struggle to arbitrage away alpha. When alpha is reduced, the majority of alpha remains and the remaining alpha is always statistically significant. Why do these statistical arbitrageurs struggle to arbitrage away the alpha? This is for two reasons, discussed below. 

First, statistical arbitrageur models fail to identify and separate alpha and systematic risk in the cross-section. Table \ref{tab:cross_alphas} shows that these models actually tend to disagree about what is alpha and what is systematic risk. If all models were only minor variations on each other, then they would effectively price the portfolio corresponding to other models, delivering statistically insignificant alphas. However, this is not the case. Given that these models are not able to price their test assets, these models cannot price the cross-section. They fail to separately identify alpha and systematic risk. \cite{bbd} show the same result, and even show that these models fail to do this conditionally as well. 

Second, statistical arbitrageur models are relatively inelastic, limiting their ability to sufficiently trade against mispricing. This has been discussed at length above, where I showed that these models produce relatively inelastic demand and that when incorporated into the market with an equilibrium model, the elasticity of the market for stocks changes little. While the high out-of-sample alphas may appear promising, the reality of inelastic demand pierces through this smokescreen, providing clarity.

In summary, given both the level and slope of the statistical arbitrageur demand function, it is not surprising that statistical arbitrageurs cannot arbitrage away all alpha, and in some cases it may even increase. This is what I show in this subsection, and it illustrates why we may care about the elasticity of statistical arbitrageurs and the elasticity of the market with these statistical arbitrageurs. 

There are three full decades (120-month periods) in the out-of-sample period: February 1990 - January 2000, February 2000 - January 2010, and February 2010 - January 2020. It is perhaps not surprising that alpha is not arbitraged away immediately when the statistical arbitrageurs enter the market, so the analysis here is focused on the last decade, after the statistical arbitrageurs have had two full decades of participation in the market. In particular, in each counterfactual experiment, we have 62 predictor-based portfolios, $F_{k,t+1}^c$, calculated from equation (\ref{eq:port_returns}). For the CAPM, \cite{ff3}, \cite{famafrench15} model with momentum, and the \cite{zhang} models, I use the 62 portfolios not in the given model as test assets. This means there are 62 test assets for the CAPM model (with 1 right-hand side factor), 60 test assets for the \cite{ff3} model (with 3 right-hand side factors), 57 test assets for the \cite{famafrench15} model with momentum (with 6 right-hand side factors), and 59 test assets for the \cite{zhang} model (with 4 right-hand side factors). I then calculate the GRS \citep{grs} test for this decade across test assets for a given model, both for the every counterfactual experiment and for the portfolio returns observed in the data (corresponding to $\theta_t = 0$). To summarize how the GRS test results change relative to the case without statistical arbitrageurs, which I refer to as the baseline case, I take the log ratio of GRS test statistics. These results are shown in Table \ref{tab:grs_baseline}. In the baseline data, the GRS test statistics are statistically different from zero alpha for all models. In order for the alphas in the counterfactual experiments to fall low enough to fail to reject the null hypothesis of zero alpha, the log GRS ratio needs to fall to about -1.9 at the 5\% significance level. There are no cases where this log GRS ratio falls even half this far. This is true across all four factor models. In a minority of cases, the log ratio is actually positive, implying that mispricing relative to the respective factor model actually increased. This not surprising given the inelastic demand of statistical arbitrageurs. Note that Table \ref{tab:grs_2010} shows similar results during the February 2000 - January 2010 period. 

In summary, statistical arbitrageurs from the literature, when actually inserted into an equilibrium model, are ultimately unable to arbitrage away the alpha. Given that these models are inelastic and struggle to separate alpha and systematic risk, this is not surprising. Statistical arbitrageurs, at least those from the literature, fall short of statistical arbitrageurs used in classic theoretical models. 

\subsection{Double Elasticity Experiments} \label{subsec:double_elas} 

In this section of the Appendix, I describe the counterfactual experiments where the non-HSA investors have an elasticity that is doubled. The counterfactual equilibrium is calculated just as described above. In order the double the elasticity of the LLDM investors, I have to choose whether to increase the log component of the elasticity, the level component, or both in equation (\ref{eq:agg_elasticity}). For simplicity, I just scale the coefficients in order to increase both components by the same amount equally. 

Table \ref{tab:mkt_elasticity_double} shows the elasticity with HSA investors minus the elasticity of the market with only non-HSA investors. This is similar to Table \ref{tab:mkt_elasticity_simple} and has a similar takeaway: HSA investors have little impact on the elasticity of the market for individual assets. Note that due to the higher elasticity of non-HSA investors, HSA investors actually tend to \textit{decrease} the elasticity of the market. This certainly stands in contrast to the typical idea of statistical arbitrageurs providing an elastic source of demand, making the market more elastic. 

\begin{table}[!t] \centering
    \resizebox{1\textwidth}{!}{\begin{tabular}{@{\extracolsep{5pt}}lccccccccccccc}
\\[-1.8ex]\hline
\hline \\[-1.8ex]
 & \multicolumn{13}{c}{\textit{Change in Aggregate 
 Counterfactual Elasticity for Individual Stocks with Different Statistical Arbitrageurs}} \
\cr \cline{2-14}
\\[-1.8ex] AF $\theta_t$ & \multicolumn{1}{c}{BPZ$_F$} & \multicolumn{1}{c}{BPZ$_L$} & \multicolumn{1}{c}{BSV} & \multicolumn{1}{c}{CRW} & \multicolumn{1}{c}{DGU} & \multicolumn{1}{c}{FF3} & \multicolumn{1}{c}{FF6} & \multicolumn{1}{c}{GKX} & \multicolumn{1}{c}{HXZ} & \multicolumn{1}{c}{KNS} & \multicolumn{1}{c}{KPS} & \multicolumn{1}{c}{NN} & \multicolumn{1}{c}{RF}  \\
\\[-1.8ex] & (1) & (2) & (3) & (4) & (5) & (6) & (7) & (8) & (9) & (10) & (11) & (12) & (13) \\
\hline \\[-1.8ex]

 0.0 & 0.000 & 0.000 & 0.000 & 0.000 & 0.000 & 0.000 & 0.000 & 0.000 & 0.000 & 0.000 & 0.000 & 0.000 & 0.000 \\
 0.0001 &   0.000 & 0.001 & 0.001 & 0.000 & 0.000 & 0.000 & 0.000 & 0.000 & 0.000 & 0.001 & 0.000 & 0.000 & 0.000 \\
 0.0005 &   0.000 & 0.004 & 0.006 & 0.000 & 0.002 & 0.001 & 0.001 & 0.000 & 0.000 & 0.006 & 0.000 & 0.000 & 0.000 \\
 0.001 &   0.000 & 0.005 & 0.005 & 0.000 & 0.003 & 0.002 & 0.002 & 0.000 & 0.000 & 0.011 & 0.000 & 0.000 & 0.000 \\
 0.005 &   -0.001 & 0.001 & 0.014 & -0.001 & 0.009 & 0.001 & -0.003 & 0.000 & -0.003 & -0.003 & -0.001 & -0.001 & -0.001 \\
 0.01 &   -0.003 & 0.002 & 0.027 & -0.003 & 0.011 & -0.003 & -0.006 & 0.000 & -0.007 & -0.006 & -0.002 & -0.003 & -0.003 \\
 0.05 &   -0.013 & 0.003 & 0.019 & -0.013 & 0.007 & -0.036 & -0.03 & -0.003 & -0.033 & -0.029 & -0.013 & -0.013 & -0.013 \\
 0.1 &   -0.027 & 0.016 & 0.001 & -0.026 & -0.002 & -0.073 & -0.059 & -0.014 & -0.069 & -0.057 & -0.026 & -0.026 & -0.027 \\
 0.25 &   -0.067 & 0.018 & -0.047 & -0.067 & -0.035 & -0.174 & -0.148 & -0.033 & -0.17 & -0.138 & -0.066 & -0.067 & -0.067 \\

\hline
\hline \\[-1.8ex]
\end{tabular}
}
    \vspace{4mm}
    \caption{\textbf{Elasticity of the Market with Statistical Arbitrageurs with Double Elasticity Experiments.} This table is similar to table \ref{tab:mkt_elasticity_simple}, except for two differences. First, this table shows results from the counterfactual experiments where the elasticity for the aggregate 13F LLDM investors is doubled for each stock, as described in the text. Second, only the winsorized results are shown (winsorized at the 5$^{th}$ and 95$^{th}$ percentiles).}
    \label{tab:mkt_elasticity_double}
\end{table}

Note that Table \ref{tab:grs_elasmult2} shows that alpha remains in these double elasticity counterfactual experiments.

\subsection{Recursive Experiments} \label{subsec:recursive_exper}

In subsection \ref{subsec:recursive_learner_fraction} below, I describe how the arbitrageur fraction is endogenized to have both wealth effects and fund flow effects. In subsection \ref{subsec:recursive_experiments} I describe the results of these recursive experiments. 

\subsubsection{Recursive Arbitrageur Fraction} \label{subsec:recursive_learner_fraction}

In this subsection, I consider recursive experiments where the arbitrageur fraction $\theta_t$ is updated through time and the arbitrageurs re-estimate their models. 

I follow \cite{berk:green}, who have a simple fund flow relationship similar to the one used here:
\begin{equation} \label{eq:fund_flow}
    \theta_t = \Bigg( \underbrace{\udot{\phi} \frac{1}{11} \sum_{\tau=t-12}^{t-1} \udot{r}_{\tau}}_{\substack{\text{fund} \\ \text{flow term}}} + \underbrace{\udot{r}_{t-1}}_{\substack{\text{wealth} \\ \text{effect}}} \Bigg) \theta_{t-1}
\end{equation}
where $\udot{\phi}$ is an exogenous fund flow parameter based on the previous year's return excluding the previous month, and $\udot{r}_{\tau}$ are the statistical arbitrageur excess returns. The wealth effect captures the simple dynamics that if the statistical arbitrageur funds earn more, they have more money to manage. In other words, $\theta_{t+1} A_{t+1} - \theta_t A_t$ is not a typical fund flow measure, but rather a change in AUM. Thus, this wealth effect should be added as shown, and without it the assumption would be that extra returns are withdrawn by investors. The fund flow term comes from \cite{berk:green}, where flows are a function of the previous year's returns. Like \cite{berk:green}, there is no intercept in this equation. 

\begin{table}[!t] \centering
    \resizebox{0.6\textwidth}{!}{\begin{tabular}{@{\extracolsep{5pt}}lc}
\\[-1.8ex]\hline
\hline \\[-1.8ex]
& \multicolumn{1}{c}{\textit{Dependent variable: CRSP Fund Flows}} \
\cr \cline{2-2}
\\[-2ex]
\hline \\[-1.8ex]
 past year average excess returns & 0.382$^{***}$ \\
 (excluding the previous month) & (0.009) \\
\hline \\[-1.8ex]
 Observations & 5,451,860 \\
 $R^2$ & 0.002 \\
\hline
\hline \\[-1.8ex]
\textit{Note:} & \multicolumn{1}{r}{$^{*}$p$<$0.1; $^{**}$p$<$0.05; $^{***}$p$<$0.01} \\
\end{tabular}
}
    \vspace{4mm}
    \caption{\textbf{Flows.} This table shows the regression: $\text{(fund flows)}^j_t = \udot{\phi} \frac{1}{11} \sum_{\tau=t-12}^{t-1} r_{\tau}^j + e_{t}^j$, where $\text{(fund flows)}^j_t$ is the flows of fund $j$ at time $t$, $r_{t}^j$ is the excess return of fund $j$ at time $t$, $\udot{\phi}$ is the slope coefficient, and $e_t^j$ is the residual. The data used here are the CRSP monthly fund flows data. In other words, this shows the results of a regression of fund flows on average lagged returns.
    }
    \label{tab:flows}
\end{table}

Note that $\theta_{t+1}$ is not a function of $\udot{r}_{t}$, but only earlier period returns. Thus, both the fund flow term and wealth effects are in some sense lagged or delayed. This is because $\udot{r}_{t}$ is a function of prices at time $t$, meaning that $\theta_t$ would be a function of prices at time $t$ as well. This would invalidate the KY assumption that AUM are exogenous to time $t$ prices, as well as make the exercise less tractable. Frictions that slow the deployment of capital immediately, as in  \cite{duffie:afa} for example, are enough to justify this.

I use the CRSP fund flow data to estimate the parameter $\udot{\phi}$, by simply running a regression of monthly flows in the previous average excess return of the fund excluding the previous month. I estimate the regression without an intercept, since there is no intercept in equation (\ref{eq:fund_flow}). Table \ref{tab:flows} shows the results, giving a value of $\udot{\phi} = 0.38$ to plug into the counterfactual experiments. The CRSP monthly fund flows data start January 1960 and end March 2022.

In order to reduce extreme flows in the experiments, I make the maximum possible rise in $\theta_{t} - \theta_t$ to be 0.01 per month. A rise of 1\% of the entire stock market in a single month for a group of asset managers would constitute extremely large flows, and thus this is a relatively conservative restriction. 

\subsubsection{Recursive Experiment Results} \label{subsec:recursive_experiments}

In the recursive experiments, I set the initial arbitrageur fraction (IAF), denoted as $\theta_0$, to a range of values. Like in the text, I focus on the results for the last decade in the sample, since this is after the statistical arbitrageurs have been trading for two entire decades. Table \ref{tab:lf_table} shows the average arbitrageur fraction ($\theta_t$) during this period. It is notable that with endogenous flows, these fractions tend to be relatively low. As shown below, this of course strengthens the results in the text that these statistical arbitrageurs affect the demand elasticity of the market only slightly. 

\begin{table}[!t] \centering
    \resizebox{1\textwidth}{!}{\begin{tabular}{@{\extracolsep{5pt}}lccccccccccccc}

\\[-1.8ex]\hline
\hline \\[-1.8ex]
\\[-1.8ex] 
& \multicolumn{13}{c}{Average Learning Fraction $(\theta_t)$, February 2010 - January 2020} \\ \cline{2-14}
\\[-1.8ex] 
\\[-1.8ex] IAF $\theta_0$ & \multicolumn{1}{c}{BPZ$_F$} & \multicolumn{1}{c}{BPZ$_L$} & \multicolumn{1}{c}{BSV} & \multicolumn{1}{c}{CRW} & \multicolumn{1}{c}{DGU} & \multicolumn{1}{c}{FF3} & \multicolumn{1}{c}{FF6} & \multicolumn{1}{c}{GKX} & \multicolumn{1}{c}{HXZ} & \multicolumn{1}{c}{KNS} & \multicolumn{1}{c}{KPS} & \multicolumn{1}{c}{NN} & \multicolumn{1}{c}{RF}  \\
\\[-1.8ex]
\hline \\[-1.8ex]

0.0001 &   0.0001 & 0.0001 & 0.0001 & 0.0001 & 0.0004 & 0.0003 & 0.0001 & 0.0001 & 0.0017 & 0.0003 & 0.0001 & 0.0001 & 0.0001 \\
\\[-1.8ex]
0.0005 &   0.0001 & 0.0001 & 0.0001 & 0.0001 & 0.0004 & 0.0005 & 0.0001 & 0.0001 & 0.0014 & 0.0002 & 0.0005 & 0.0001 & 0.0004 \\
\\[-1.8ex]
0.001 &   0.0001 & 0.0001 & 0.0001 & 0.0001 & 0.0005 & 0.0004 & 0.0001 & 0.0001 & 0.001 & 0.0002 & 0.0009 & 0.0001 & 0.0008 \\
\\[-1.8ex]
0.005 &   0.0001 & 0.0001 & 0.0001 & 0.0001 & 0.0011 & 0.0007 & 0.0001 & 0.0001 & 0.0008 & 0.0003 & 0.0001 & 0.0001 & 0.0001 \\
\\[-1.8ex]
0.01 &   0.0001 & 0.0001 & 0.0001 & 0.0001 & 0.0023 & 0.0006 & 0.0001 & 0.0001 & 0.0006 & 0.0005 & 0.0001 & 0.0001 & 0.0001 \\
\\[-1.8ex]
0.05 &   0.0001 & 0.0001 & 0.0001 & 0.0001 & 0.0122 & 0.0006 & 0.0001 & 0.0001 & 0.0005 & 0.0005 & 0.0001 & 0.0001 & 0.0001 \\
\\[-1.8ex]
0.1 &   0.0001 & 0.0001 & 0.0001 & 0.0001 & 0.0111 & 0.0006 & 0.0002 & 0.0001 & 0.0005 & 0.0006 & 0.0001 & 0.0001 & 0.0001 \\
\\[-1.8ex]
0.25 &   0.0001 & 0.0001 & 0.0001 & 0.0001 & 0.0001 & 0.0006 & 0.0002 & 0.0001 & 0.0006 & 0.0006 & 0.0001 & 0.0001 & 0.0001 \\

\hline
\hline 
 & \multicolumn{13}{r}{\textit{Note:} $^{*}$p$<$0.1; $^{**}$p$<$0.05; $^{***}$p$<$0.01} \\

\end{tabular}
}
    \vspace{4mm}
    \caption{\textbf{Arbitrageur Fraction of Recursive Experiments.} This shows the arbitrageur fraction ($\theta_t$) that endogenously arises across statistical arbitrageur model (columns) and for different initial arbitrageur fraction (IAF) values in the recursive experiments. This shows the average arbitrageur fraction for the February 2010 - January 2020 decade.}
    \label{tab:lf_table}
\end{table}

Table \ref{tab:mkt_elasticity_recursive} shows the elasticity of the market with these endogenous arbitrageur fraction experiments. The conclusion is clear: statistical arbitrageurs have a minimal impact on the market's elasticity.

\begin{table}[!t] \centering
    \resizebox{1\textwidth}{!}{\begin{tabular}{@{\extracolsep{5pt}}lccccccccccccc}
\\[-1.8ex]\hline
\hline \\[-1.8ex]
 & \multicolumn{13}{c}{\textit{Change in Aggregate 
 Counterfactual Elasticity for Individual Stocks with Different Statistical Arbitrageurs}} \
\cr \cline{2-14}
\\[-1.8ex] IAF $\theta_0$ & \multicolumn{1}{c}{BPZ$_F$} & \multicolumn{1}{c}{BPZ$_L$} & \multicolumn{1}{c}{BSV} & \multicolumn{1}{c}{CRW} & \multicolumn{1}{c}{DGU} & \multicolumn{1}{c}{FF3} & \multicolumn{1}{c}{FF6} & \multicolumn{1}{c}{GKX} & \multicolumn{1}{c}{HXZ} & \multicolumn{1}{c}{KNS} & \multicolumn{1}{c}{KPS} & \multicolumn{1}{c}{NN} & \multicolumn{1}{c}{RF}  \\
\\[-1.8ex] & (1) & (2) & (3) & (4) & (5) & (6) & (7) & (8) & (9) & (10) & (11) & (12) & (13) \\
\hline \\[-1.8ex]

 0.0 &   0.000 & 0.000 & 0.000 & 0.000 & 0.000 & 0.000 & 0.000 & 0.000 & 0.000 & 0.000 & 0.000 & 0.000 & 0.000 \\
 0.0001 &   0.000 & 0.001 & 0.002 & 0.000 & 0.001 & 0.000 & 0.000 & 0.000 & 0.001 & 0.001 & 0.000 & 0.000 & 0.000 \\
 0.0005 &   0.000 & 0.001 & 0.001 & 0.000 & 0.002 & 0.000 & 0.000 & 0.000 & 0.000 & 0.000 & 0.000 & 0.000 & 0.000 \\
 0.001 &   0.000 & 0.001 & 0.001 & 0.000 & 0.002 & 0.000 & 0.000 & 0.000 & 0.000 & 0.000 & 0.000 & 0.000 & 0.000 \\
 0.005 &   0.000 & 0.001 & 0.001 & 0.000 & 0.003 & 0.000 & 0.000 & 0.001 & 0.000 & 0.000 & 0.000 & 0.000 & 0.000 \\
 0.01 &   0.000 & 0.001 & 0.001 & 0.000 & 0.004 & 0.000 & 0.000 & 0.000 & 0.000 & 0.000 & 0.000 & 0.000 & 0.000 \\
 0.05 &   0.000 & 0.003 & 0.003 & 0.000 & 0.006 & 0.000 & 0.000 & 0.006 & 0.000 & 0.002 & 0.000 & 0.000 & 0.000 \\
 0.1 &   0.000 & 0.005 & 0.004 & 0.000 & 0.003 & 0.000 & 0.001 & 0.004 & 0.000 & 0.004 & 0.000 & 0.000 & 0.000 \\
 0.25 &   0.000 & 0.01 & 0.009 & 0.000 & 0.000 & 0.000 & 0.001 & 0.004 & 0.000 & 0.011 & 0.000 & 0.000 & 0.000 \\

\hline
\hline \\[-1.8ex]
\end{tabular}
}
    \vspace{4mm}
    \caption{\textbf{Elasticity of the Market with Statistical Arbitrageurs with Recursive Experiments.} This table is similar to table \ref{tab:mkt_elasticity_simple}, except for three differences. First, this table shows results from the counterfactual experiments where the arbitrage fraction (AF, denoted as $\theta_t$) is recursively updated. Second, this table does not show AF, but rather shows IAF ($\theta_0$), which represents the initial arbitrageur fraction used in the counterfactual experiments. Third, only the winsorized results are shown (winsorized at the 5$^{th}$ and 95$^{th}$ percentiles).}
    \label{tab:mkt_elasticity_recursive}
\end{table}

Note that Table \ref{tab:grs_recursive} shows that alpha remains largely unchanged in these counterfactual experiments. This table shows that, if anything, the main counterfactual experiments of Table \ref{tab:grs_baseline} overstate the amount that alpha changes with the introduction of HSA investors. 



In summary, these recursive experiments strengthen the main take-away from the paper: statistical arbitrageurs exert only a minor influence on the market's elasticity.

\section{Additional Details} \label{app:details}

\subsection{Kernel Density Function} \label{subsec:kernel}

This kernel function $\K(\cdot)$ has an $N$-dimensional vector input, with $N$-dimensional vector output. As written above, $\K(x)_i$ is the $i^{th}$ output of $\K(x)$. It is defined as
\begin{equation}
    \K(x)_i = \frac{1}{N} \sum_{j=1}^N \Phi \left( \frac{x_i - x_j}{h} \right)
\end{equation}
where $\Phi$ is the standard normal cumulative distribution function (cdf) and $h$ is a scalar, called the bandwidth. I use the standard Silverman's rule of thumb bandwidth \citep{silverman}:
\begin{equation}
    h = 0.9 \min \left( \sigma, \frac{\text{IQR}}{1.34} \right) N^{-1/5}
\end{equation}
where $\sigma$ is the typical estimate of the standard deviation of the elements in $x$, and IQR is the interquartile range of $x$. If the interquartile range is zero\footnote{The dividend-price ratio variable, ldp, is so full of zeros that sometimes the interquartile range can be zero.} then that term is ignored and $\sigma$ is just used. 

Importantly, $\K(x)_i$ is guaranteed to be in the interval $[0, 1]$. Thus, $\K(x)_i - 0.5$ is obviously in $[-0.5, 0.5]$.

The derivative of this is needed to calculate elasticities. I can write:
\begin{equation}
    \frac{\partial \K(x)_i}{\partial x_i} = \underbrace{\frac{1}{N h} \sum_{j=1}^N \phi \left( \frac{x_i - x_j}{h} \right)}_{\text{direct effect}}
    \underbrace{
    - \underbrace{\frac{1}{N h} \phi \left( 0 \right)}_{\approx 0} 
    + \frac{\partial \K(x)_i}{\partial h}
    \underbrace{\frac{\partial h}{\partial x_i}}_{\approx 0}}_{\substack{\text{indirect effect of a change in }x_i\\\text{ on the distribution}}}.
\end{equation}
With many assets (large $N$), changing a single observation changes the kernel distribution---the indirect effect labeled above---very little. For example, $\partial h / \partial x_i$ is how much the estimate of a standard deviation or interquartile range changes when a single observation changes (adjusted by a scalar). This effect is very small, in fact this can easily be in the range of computer rounding error or absolutely zero with many assets. Thus, I can safely and accurately use the approximation:
\begin{equation}
    \frac{\partial \K(x)_i}{\partial x_i}
    = \frac{1}{N h} \sum_{j=1}^N \phi \left( \frac{x_i - x_j}{h} \right)
    \equiv \kk(x)_i,
\end{equation}
where $\phi(\cdot)$ is the standard normal probability density function (pdf). Note that $\kk(x)$ is the standard kernel density function defined above, with a similar $N$-dimensional input and $N$-dimensional output. Thus, this $\K(\cdot)$ is just a standard kernel cdf and the derivative is a standard kernel pdf. Since these indirect effects are very small, I can likewise use the approximation:
\begin{equation} \label{eq:cross_terms}
    \frac{\partial \K(x)_i}{\partial x_j} \approx 0 \text{ for } i \neq j
\end{equation}
In practice, approximating these indirect effects with zero makes very little difference, but does make the calculations more simple. Numerically, these effects are very small. 

There are two kinds of endogenous predictors (other than the level term): (1) predictors with a denominator price term and (2) predictors with a numerator price term. The denominator price predictors are of the form:
\begin{equation}
    x_{i,k,t} = \frac{x_{i,k,t}^n}{p_{i,t}} + x_{i,k,t}^a,
\end{equation}
where $x_{i,k,t}^n$ and $x_{i,k,t}^a$ are exogenous components. 

The numerator predictors have the form:
\begin{equation}
    x_{i,k,t} = \frac{p_{i,t}}{x_{i,k,t}^d} + x_{i,k,t}^a,
\end{equation}
where $x_{i,k,t}^d$ and $x_{i,k,t}^a$ are exogenous components. 

Recall that the normalized predictors have the form:
\begin{equation} 
    \udot{z}_{i,k,t} = \K(\text{ArcSinh} (x_{k,t}))_i - 0.5.
\end{equation}
Thus, I can calculate
\begin{equation}
    \frac{\partial \udot{z}_{i,k,t}}{\partial x_{i,k,t}}
    = \frac{\kk (\text{ArcSinh} (x_{k,t}))_i}{\sqrt{x_{i,k,t}^2 + 1}}.
\end{equation}

Furthermore, predictors with a numerator price term have the following derivative with respect to the log price term:
\begin{equation} \label{eq:num_deriv}
    \frac{\partial \udot{z}_{i,k,t}}{\partial \log (p_{i,t})}
    = \frac{\partial \udot{z}_{i,k,t}}{\partial x_{i,k,t}}
    \frac{\partial x_{i,k,t}}{\partial \log (p_{i,t})}
    = \frac{\kk (\text{ArcSinh} (x_{k,t}))_i}{\sqrt{x_{i,k,t}^2 + 1}} \left( \frac{p_{i,t}}{x_{i,k,t}^d} \right).
\end{equation}
Similarly, predictors with a denominator price have the following derivative:
\begin{equation} \label{eq:denom_deriv}
    \frac{\partial \udot{z}_{i,k,t}}{\partial \log (p_{i,t})}
    = - \frac{\kk (\text{ArcSinh} (x_{k,t}))_i}{\sqrt{x_{i,k,t}^2 + 1}} \left( \frac{x_{i,k,t}^n}{p_{i,t}} \right).
\end{equation}
It is the values in (\ref{eq:num_deriv}) and (\ref{eq:denom_deriv}) that are reported in Table \ref{tab:gradz}. 

\subsection{Endogenous Anomaly Predictors} \label{subsec:endogenous}

The 15 endogenous variables---along with the 47 exogenous variable descriptions---are described in detail in \cite{weber}. I describe the construction and components of these variables below using this notation, although I do not describe in detail the construction of every variable because it is described so well in \cite{weber}. Some of the descriptions come directly from \cite{weber}, and I follow them by putting CRSP and Compustat variables in parentheses. I follow their lagged timing convention for when balance sheet variables are used. 

I can consider the price ratio variables in terms of per-share values or across-all-share values. For example, if $x_{i,k,t}$ is the book to price ratio, then $x_{i,k,t}^n$ is the book value divided by shares outstanding and $x_{i,k,t}^a = 0$. Obviously, the per-share stock price divided by the book value divided by the shares outstanding yields the market to book ratio. 

\vspace{5mm}
\noindent{\large\textbf{Denominator Price Predictors:}}
\vspace{5mm}

\noindent\textbf{a2me}

\begin{itemize}
    \item Description: assets to market equity ratio.
    \item $x_{i,k,t}^n$: total assets (AT) divided by shares outstanding (SHROUT). 
    \item $x_{i,k,t}^a = 0$.
\end{itemize}

\noindent\textbf{beme}

\begin{itemize}
    \item Description: book equity to market equity ratio. 
    \item $x_{i,k,t}^n$: ratio of book value of equity to shares outstanding (SHROUT). See \cite{weber} for the construction of book value of equity. 
    \item $x_{i,k,t}^a = 0$.
\end{itemize}

\noindent\textbf{beme\_adj}

\begin{itemize}
    \item Description: book equity to market equity ratio minus the industry mean of this ratio.
    \item $x_{i,k,t}^n$: same as beme above. 
    \item $x_{i,k,t}^a$: negative of the average beme ratio within an asset's (Fama-French 48) industry. 
\end{itemize}

\noindent\textbf{debt2p}

\begin{itemize}
    \item Description: debt to market equity ratio.
    \item $x_{i,k,t}^n$: debt to shares outstanding ratio (SHROUT). See \cite{weber} for the construction of debt in the debt2p variable. 
    \item $x_{i,k,t}^a = 0$.
\end{itemize}

\noindent\textbf{e2p}

\begin{itemize}
    \item Description: earnings to price ratio.
    \item $x_{i,k,t}^n$: earnings (IB) divided by shares outstanding (SHROUT). 
    \item $x_{i,k,t}^a = 0$.
\end{itemize}

\noindent\textbf{ldp}

\begin{itemize}
    \item Description: dividends to price ratio. 
    \item $x_{i,k,t}^n$: Sum (across time and all shares) of dividends paid out over the last 12 months divided by shares outstanding (SHROUT).
    \item $x_{i,k,t}^a = 0$.
\end{itemize}

\noindent\textbf{nop}

\begin{itemize}
    \item Description: net payout ratio.  
    \item $x_{i,k,t}^n$: ratio of common dividends (DVC) plus purchase of common and
preferred stock (PRSTKC) minus the sale of common and preferred stock (SSTK) over
shares outstanding (SHROUT). 
    \item $x_{i,k,t}^a = 0$.
\end{itemize}

\noindent\textbf{o2p}

\begin{itemize}
    \item Description: payout ratio. 
    \item $x_{i,k,t}^n$: ratio of common dividends (DVC) plus purchase of common and
preferred stock (PRSTKC) minus the change in value of the net number of preferred
stocks outstanding (PSTKRV) over shares outstanding (SHROUT). 
    \item $x_{i,k,t}^a = 0$.
\end{itemize}

\noindent\textbf{s2p}

\begin{itemize}
    \item Description: sales to price ratio. 
    \item $x_{i,k,t}^n$: common dividends (DVC) plus purchase of common and
preferred stock (PRSTKC) minus the change in value of the net number of preferred
stocks outstanding (PSTKRV) over shares outstanding (SHROUT). 
    \item $x_{i,k,t}^a = 0$.
\end{itemize}

\vspace{5mm}
\noindent{\large\textbf{Numerator Price Predictors:}}
\vspace{5mm}

\noindent\textbf{cum\_return\_1\_0}

\begin{itemize}
    \item Description: return over the prior month.
    \item $x_{i,k,t}^d$: one month lagged price (PRC), adjusted for stock splits. 
    \item $x_{i,k,t}^a$: ratio of dividends paid out over the previous month to the one month lagged price, adjusted for stock splits. 
\end{itemize}

\noindent\textbf{lme}

\begin{itemize}
    \item Description: size---or market equity---defined as price (PRC) times shares outstanding (SHROUT). 
    \item $x_{i,k,t}^d$: one over shares outstanding (SHROUT). 
    \item $x_{i,k,t}^a = 0$.
\end{itemize}

\noindent\textbf{lme\_adj}

\begin{itemize}
    \item Description: size---same as lme above---minus the average (Fama-French 48) industry lme. 
    \item $x_{i,k,t}^d$: one over shares outstanding (SHROUT).
    \item $x_{i,k,t}^a$: Negative of average lme within an asset's (Fama-French 48) industry. 
\end{itemize}

\noindent\textbf{q}

\begin{itemize}
    \item Description: Tobin's Q---see \cite{weber} for more details. 
    \item $x_{i,k,t}^d$: total assets (AT) divided by shares outstanding (SHROUT). 
    \item $x_{i,k,t}^a$: total assets (AT) minus cash and short-term investments (CEQ), minus deferred taxes (TXDB) scaled by total assets (AT).
\end{itemize}

\noindent\textbf{rel\_to\_high\_price}

\begin{itemize}
    \item Description: the price to 52 week high stock price ratio. 
    \item $x_{i,k,t}^d$: 52 week high stock price.
    \item $x_{i,k,t}^a = 0$.
\end{itemize}

\noindent\textbf{roc}

\begin{itemize}
    \item Description: economic rents over cash ratio. 
    \item $x_{i,k,t}^d$: ratio of Short-Term Investments (CHE) to shares outstanding (SHROUT). 
    \item $x_{i,k,t}^a = 0$: ratio of long-term debt (DLTT) minus total assets to Cash and Short-Term Investments (CHE). 
\end{itemize}

\subsection{Winsorization} \label{subsec:winsorization}

Consider some positive weights, $v_{i,t}$, in a weighted average. Assume, as usual, that these weights sum to one. Then there are two equivalent ways of applying this three-step procedure to elasticity values $\eta_{i,t}$:
\begin{enumerate}[label=\Alph*.]
    \item The first way is the following
    \begin{enumerate}[label=\arabic*.]
        \item Multiply the elasticity by $v_{i,t}$, to get $v_{i,t} \eta_{i,t}$. The sum of $v_{i,t} \eta_{i,t}$ in the cross-section is the weighted average. 
        \item Winsorize the reweighted sample of $v_{i,t} \eta_{i,t}$ terms. 
        \item Sum the winsorized terms to get a weighted and winsorized average. 
    \end{enumerate}
    \item The second equivalent way is the following:
    \begin{enumerate}[label=\arabic*.]
        \item Multiply the elasticity by $N_t v_{i,t}$, to get $N_t v_{i,t} \eta_{i,t}$, where $N_t$ is the number of stocks in period $t$. The average of $N_t v_{i,t} \eta_{i,t}$ in the cross-section is the weighted average. 
        \item Winsorize the reweighted sample of $N_t v_{i,t} \eta_{i,t}$ terms. 
        \item Average the winsorized terms to get a weighted and winsorized average. 
    \end{enumerate}
\end{enumerate}
The fourth and final step just consists of averaging these terms across time, to get a single reweighted and winsorized average. I reweight before winsorization simply because the portfolio-reweighted elasticities tend to already down-weight the extreme elasticities in the tails. See Table \ref{tab:statistical arbitrageur_elasticity} to compare winsorization and weighting schemes across the main results of the paper. 

\subsection{Derivation of Demand Function} \label{subsec:demand_function}

The derivation of the demand function follows trivially from \cite{ky} (KY), but I describe it here in detail for the sake of completeness. Let $\mu_t = \E_t [r_{t+1}]$ be the $N$-dimensional vector of asset expected excess returns, and let $\Sigma_t = \Var_t [r_{t+1}]$ be the $N \times N$ conditional covariance matrix. Let $\mu_t^j$ and $\Sigma_t^j$ be similar, but instead let these terms represent the {subjective} beliefs of institution $j$ for the mean and covariance respectively. Like KY, I assume investors believe:
\begin{equation}
    \mu_t^j = Z_t \Phi_t^j, \;\;\; \Sigma_t^j = \Gamma_t^j (\Gamma_t^j)' + \zeta_{t}^j I, \;\;\; \text{ and } 
    \Gamma_t^j = Z_t \Psi_t^j,
\end{equation}
where $\Phi_t^j$ and $\Psi_t^j$ are $K \times 1$ vectors of parameters, $\zeta_{t}^j$ is a positive scalar, $I$ is an $N \times N$ identity matrix, and here $Z_t$ is an $N \times (K+1)$ matrix of $K$ predictors $z_{i,k,t}$ and the first column of $Z_t$ is a constant (intercept term). This equation is essentially the same as equation (\ref{eq:nn_basic}). See KY for a discussion about this standard single factor structure. KY mention, "[w]e could relax the one-factor assumption and generalize to a multifactor case, but the resulting expressions are less intuitive and less preferable for expositional purposes."  

Classic CARA demand with multivariate normal expected returns has the following functional form:
\begin{equation}
    w_t^j = \frac{1}{\gamma_t^j} \left( \Sigma_t^j \right)^{-1} \mu_t^j,
\end{equation}
where $\gamma_t^j$ is the risk aversion of the institution at time $t$. 

With these assumptions above---using similar logic to KY---I can write:
\begin{equation} \label{eq:demand_matrix}
    w_t^j = Z_t \hat \beta_t^j
\end{equation}
where the $K \times 1$ vector of parameters $\beta_t^j$ has the form:
\begin{equation}
    \beta_t^j = 
    \frac{1}{\gamma_t^j \zeta_t^j} 
    \left( \Phi_t^j - \kappa_{t}^j \Psi_t^j \right)
\end{equation}
where $\kappa_t^j$ is a scalar (common across assets):
\begin{equation}
    \kappa_t^j = \frac{(\Gamma_t^j)' \mu_t^j}{(\Gamma_t^j)' \Gamma_t^j + \zeta_t^j}
\end{equation}

The proof of this---which is just a trivial application of the Woodbury matrix identity---is so similar to the proof of KY's proposition 1 that it is excluded here. 

Importantly, KY shows that this linear demand function is still delivered {when an institution has short selling constraints on some assets}. See KY for more details. 

Equation (\ref{eq:demand_matrix}) can be written, for an individual asset $i$, as:
\begin{align} 
    w_{i,t}^j &= \hat \beta_{0,t}^j + \hat \beta_{1,t}^j \breve{\udot{z}}_{i,1,t} + \sum_{k=2}^K \hat \beta_{k,t}^j  z_{i,k,t} + \epsilon_{i,t}^j \nonumber\\
    &= \hat \beta_{0,t}^j + \hat \beta_{1,t}^j \left( \frac{P_{i,t}}{A_t^j} \right) + \sum_{k=2}^K \hat \beta_{k,t}^j \left( a_{i,1,k,t} + a_{2,k,t} \log (P_{i,t}) \right) + \epsilon_{i,t}^j,
\end{align}
where these $\hat \beta_{k,t}^j$ are the elements of $\hat \beta_t^j$. 

What are the differences between this and KY? All differences are trivial, but I list them here. First, I derive standard mean-variance demand instead of log utility demand. Since an SDF/factor model is equivalent to mean-variance demand, this result is useful (and easily obtainable). Second, KY use log returns, and I use simple excess returns.\footnote{This means their $\mu$ and $\Sigma$ are slightly different to deal with log returns. I just use standard excess returns here.} Third, I ignore short-selling constraints in the math here, but this can be trivially added back in. 

\subsection{Trading Cost Optimization} \label{app:trading_costs}

I follow the \cite{tradingcosts} method for trading costs, where $w_t^*$ is the portfolio weight after optimizing for trading costs. In \cite{tradingcosts}, this vector is:
\begin{equation}
    w_t^* = \left( 1 - \frac{a}{\lambda} \right) w_{t \leftarrow t-1} + \frac{a}{\lambda} \text{aim}_t
\end{equation}
where $w_{t \leftarrow t-1}$ is the $N$ dimension vector of weights from period $t-1$ that are changed only passively due to the returns experienced at time $t$ and
\begin{equation}
    \text{aim}_t = \sum_{\tau=0}^{\infty} \left(1 - \frac{a}{\gamma + a} \right) \left( \frac{a}{\gamma + a} \right)^{\tau} \E_t [w_{t+\tau}],
\end{equation}
the variable $\gamma$ is the mean-variance utility risk aversion coefficient, the variable $\lambda$ controls the size of transaction costs (assumed proportional to the covariance matrix), the variable $a$ is defined as
\begin{equation}
a = \frac{{-(\gamma(1 - \rho) + \lambda\rho) + \sqrt{(\gamma(1 - \rho) + \lambda\rho)^2 + 4\gamma\lambda(1 - \rho)^2}}}{{2(1 - \rho)}},
\end{equation}
and $\rho$ is a discount rate. If, in an extreme case of $a / \lambda = 0$, the portfolio optimizer never actively updates the portfolio weights and the portfolio is entirely passive after initial weights are chosen. If $a / \lambda = 1$, then the optimizer is purely forward-looking and entirely invests in the "aim" portfolio, never taking into account its existing portfolio.

Notice that:
\begin{equation} \label{eq:exact_aim}
    \sum_{\tau=0}^{\infty} \left(1 - \frac{a}{\gamma + a} \right) \left( \frac{a}{\gamma + a} \right)^{\tau} = 1
\end{equation}
Thus the $\text{aim}_t$ portfolio is a weighted average of current and future optimal but not cost optimized portfolios. Thus, this can be approximated as:
\begin{equation}
    \text{aim}_t \approx \left( \sum_{\tau=0}^{\mathcal{T} - 1} \left(1 - \frac{a}{\gamma + a} \right) \left(\frac{a}{a + \gamma}\right)^{\tau} \E_t [w_{t+\tau}] \right)
    + \left(\frac{a}{a + \gamma}\right)^{\mathcal{T}} \E_t [w_{t + \mathcal{T}}].
\end{equation}
Note that the weights similarly sum to one,\footnote{To see the weights sum to one, I can calculate:
\begin{equation*}
    \left( \sum_{\tau=0}^{\mathcal{T} - 1} \left(1 - \frac{a}{\gamma + a} \right) \left(\frac{a}{a + \gamma}\right)^{\tau} \right)
    + \left(\frac{a}{a + \gamma}\right)^{\mathcal{T}} = 
    1 - \left(\frac{a}{a + \gamma }\right)^{\mathcal{T}} + \left(\frac{a}{a + \gamma }\right)^{\mathcal{T}} 
    = 1.
\end{equation*}
} and by assuming standard convergence assumptions, this approximation can be arbitrarily close to the exact form in (\ref{eq:exact_aim}) by choosing $\mathcal{T}$ large enough. Thus, I can write:
\begin{equation}
    s_{\text{aim}} = \frac{a}{\lambda} \; \text{ and } \;
    \rho_{\text{aim}} = \frac{a}{a + \gamma},
\end{equation}
where $s_{\text{aim}}$ and $\rho_{\text{aim}}$ are shown in equations (\ref{eq:trading_cost_weights}) and (\ref{eq:aim_weights}). 

This is a relatively parsimonious cost optimizer, where only the parameters $\lambda$, $\gamma$, and $\rho$ are needed. \cite{tradingcosts} use $\rho = 1 - \exp (-0.02 / 260)$ for daily data (corresponding to a 2\% annualized rate), and I use $\rho = 1 - \exp(-0.02 / 12)$ for monthly data. Following \cite{tradingcosts}, I use a "conservative" transaction cost estimate of $\lambda = 10^{-6}$.\footnote{\cite{tradingcosts} consider commodity futures, and I consider equities markets. However, as discussed below, this provides about a 97\% weight on the passive portfolio. In other words, these parameters provide an investment strategy that trades little, which ultimately shows how much the elasticity can be reduced with a very trade-reluctant cost optimizer.} Lastly, I follow \cite{tradingcosts} by using $\gamma = 10^{-9}$. 

These parameters give a value of $a = 3.032 \times 10^{-8}$. This implies $s_{\text{aim}} = a / \lambda \approx 0.0303$ and $\rho_{\text{aim}} = a / (a + \gamma) \approx 0.9681$. Note that these are the only two parameters that are actually needed for the optimizer: (1) the weight on the aim portfolio ($s_{\text{aim}}$) and (2) the rate of decay on the weight of future expected portfolios ($\rho_{\text{aim}}$).

There are important details about the evolution of $w_{t \leftarrow t-1}$ that need to be described. Let $w_{i, t \leftarrow t-1}$ be the $i^{th}$ element of $w_{t \leftarrow t-1}$ corresponding to asset $i$. It should be the case that for positive weights:
\begin{equation*}
    \frac{\partial \log (w_{i, t \leftarrow t-1})}{\partial \log (p_{i,t})} = 1,
\end{equation*}
which means that when prices rise 1\%, the passive portfolio weights should also rise 1\%. This implies that if the weight on the aim portfolio, $a / \lambda$, is zero, then the elasticity of demand is zero for this investor. This comes directly from the definition of passive weights. The standard way to do this is simply define:
\begin{equation}
    w_{i,t \leftarrow t-1} \equiv c_t w_{i,t-1} \frac{p_{i,t}}{p_{i,t-1}},
\end{equation}
where $c_t$ is a scaling constant. This means simply that these weights float with capital gains in an entirely automatic and passive way. Dividends are re-invested in the entire portfolio (not just in the assets corresponding to the dividend issuance), and delisted stock returns are considered dividends. The $w_{i,t \leftarrow t-1}$ weights are initialized with the aim portfolio weights, and new stocks in the sample are assigned the aim portfolio weights rather than a zero weight. The constant $c_t$ is chosen to target the same volatility as the market, as described in the paper. 

\subsection{Analogue of Value-Weighted Portfolios} \label{subsec:value_weights}

Define 
\begin{equation}
    h_{\Xi} (x) = \frac{\tan^{-1} (\Xi x)}{\tan^{-1} (\Xi / 2)}.
\end{equation}
Let an analogue of value-weighted predictors be:
\begin{equation}
    \tilde z_{i,k,t}^{\Xi} = \frac{\invbreve P_{i,t} h_{\Xi} (\udot{z}_{i,k,t})}{\sum_j | P_{j,t} h_{\Xi} (\udot{z}_{j,k,t}) |} 
    = \left( \frac{N}{4} \right) \frac{P_{i,t} h_{\Xi} (\udot{z}_{i,k,t})}{\sum_j | P_{j,t} h_{\Xi} (\udot{z}_{j,k,t}) |}
    = \left( \frac{N}{4} \right) \frac{P_{i,t} \tan^{-1} (\Xi \udot{z}_{i,k,t})}{\sum_j | P_{j,t} \tan^{-1} (\Xi \udot{z}_{j,k,t}) |}
\end{equation}
Then note that
\begin{equation}
    \lim_{\Xi \to \infty} \tilde z_{i,k,t}^{\Xi} = \left( \frac{N}{4} \right) \frac{P_{i,t} \text{sign} (\udot{z}_{i,k,t})}{\sum_j | P_{j,t} \text{sign} ( \udot{z}_{j,k,t}) |}
\end{equation}

The derivative, using equation (\ref{eq:cross_terms}), can be easily calculated:
\begin{equation}
\frac{\partial \tilde z_{i,k,t}^{\Xi}}{\partial \log (p_{i,t})}
= \underbrace{\left( 1 -
\frac{4 }{N}
\left| \tilde z_{i,k,t}^{\Xi} \right|
\right)}_{\substack{\text{Normalization} \\ \text{Component}}}
\left(
\underbrace{\tilde z_{i,k,t}^{\Xi}}_{\substack{\text{Value-Weight} \\ \text{Component}}}
+ \underbrace{\left( \frac{N}{4} \right)
\frac{\Xi P_{i,t} \left(1 + ( \Xi \udot{z}_{i,k,t})^2 \right)^{-1}}{\sum_j | P_{j,t} \tan^{-1} (\Xi \udot{z}_{j,k,t}) |} \frac{\partial \udot{z}_{i,k,t}}{\partial \log (p_{i,t})}}_{\substack{\text{Price Sensitivity of the Predictor} \\ \text{Component}}} \right)
\end{equation}

There are three components to this derivative labeled above that are discussed below. 

\begin{enumerate}
\item Normalization Component: This component accounts for the price in the denominator. It acts as a slight correction factor to ensure that the derivative remains normalized. As the number of assets increases, this component tends to 1, making it a minor correction factor in portfolios with many assets. 

\item Value-Weight Component: This component represents the weight of the asset in the portfolio, akin to a value-weighted portfolio. It signifies the proportion of the asset in the portfolio, and its presence in the derivative equation indicates that the change in the portfolio weight is roughly proportional to the change in the asset's value (depending on the importance of the other two components).

\item Price Sensitivity of the Predictor Component: This component captures the sensitivity of the derivative to changes in the predictor with respect to the price. The presence of \( \Xi \) indicates that as \( \Xi \) becomes large, this component converges to zero, making the derivative less sensitive to changes in \( \udot{z}_{i,k,t} \). For example, if $\Xi$ is one billion, this is essentially a traditional discrete-bucket long-short portfolio (as opposed to continuous weighting), and small changes in the predictor do not change the discrete bucket to which the asset belongs, which essentially turns off this component in the derivative (i.e., set it equal to zero). 
\end{enumerate}

For portfolios with many assets and a large value of \( \Xi \), the equation simplifies to:
\begin{equation}
\frac{\partial \tilde z_{i,k,t}^{\Xi}}{\partial \log (p_{i,t})} \approx \tilde z_{i,k,t}^{\Xi}
\end{equation}
Furthermore, if \( \tilde z_{i,k,t}^{\Xi} > 0 \), then:
\begin{equation}
\frac{\partial \log (\tilde z_{i,k,t}^{\Xi})}{\partial \log (p_{i,t})} \approx 1
\end{equation}
This approximation shows that with many assets and large $\Xi$, a 1\% change in the price of an asset translates directly into a 1\% change in the portfolio weights (as long as the price change does not change the portfolio to which the asset belongs). In other words, the value-weight component dominates, and small price changes do not require rebalancing. 

\subsection{Demand Function Predictors} \label{subsec:demand_function_predictors}

For the demand function estimation, it is useful to separate $\udot{z}_{i,k,t}$ linearly into two components:
\begin{equation*}
    \udot{z}_{i,k,t} \approx  z_{i,k,t} \equiv a_{i,1,k,t} + a_{2,k,t} \log (P_{i,t})
\end{equation*}
where $a_{i,k,t}$ is an exogenous component of the predictor and $a_{k,t}$ is constant across assets in a given period. A first order approximation in a given period would make the intercept term $a_{i,k,t}$ clearly an endogenous function of the asset's own price. It is useful to use these demand function predictors, $ z_{i,k,t}$, because this allows us to separately instrument for the endogenous component using standard linear regression models. The results of the paper, including the counterfactual experiment results below, can be shown with similar results with these predictors as well. At times these can result with predictors that are outside of -0.5 and 0.5, potentially causing concerns that the models are not using properly normalized predictors and are thus disadvantaged. Thus, I use different predictors for demand estimation than for the statistical arbitrage models. 

For $a_{2,k,t}$, I simply set $a_{2,k,t}$ equal to the median value of $\partial \udot{z}_{i,k,t} / \partial \log (P_{i,t})$ for each month and each predictor $k$. In Appendix \ref{subsec:kernel}, I show the closed-form solution for these derivatives. It should be obvious that
\begin{equation}
    \frac{\partial \udot{z}_{i,k,t}}{\partial \log (P_{i,t})} = \frac{\partial \udot{z}_{i,k,t}}{\partial \log (p_{i,t})}. 
\end{equation}

I simply use a regression to estimate the other two parameters. In particular, in every month for every predictor, I run the following cross-sectional regression:
\begin{equation}
    \udot{z}_{i,k,t} - 
    a_{2,k,t} \log(P_{i,t}) = \underbrace{\hat a_{0,k,t} + \hat a_{1,k,t} \bar z_{i,k,t}}_{a_{i,1,k,t}} + \nu_{i,k,t},
\end{equation}
where $\hat a_{0,k,t}$ and $\hat a_{1,k,t}$ are OLS regression estimates, and $\bar z_{i,k,t}$ is described as follows. I calculate $\bar z_{i,k,t} \equiv \text{ArcSinh} (\bar x_{i,k,t})$ where $\bar x_{i,k,t}$ is the very same as the raw predictor $x_{i,k,t}$, except instead of using the market equity of the asset at time $t$, the median market equity across stocks that month is used. It can be thought of as the predictor with the market equity stripped out with only the exogenous component remaining. Following KY, where aggregate parameters across assets are exogenous,\footnote{See equation (9) of KY, which has aggregate values that are assumed exogenous. This essentially corresponds to a model where the market-wide parameters are endogenous, and cross-sectional differences are the endogenized components of the model. This paper of course similarly investigates cross-sectional pricing as well. In this sense, it is a partial equilibrium model, which is standard in cross-sectional theory and empirical papers.} this essentially strips out the endogenous price component and leaves $\bar z_{i,k,t}$ to be the exogenous predictors component of $\udot{z}_{i,k,t}$. This calculation is only done for the 15 endogenous predictors. 

These regression estimates are then plugged to give the log-linearized predictor $ z_{i,k,t}$. Importantly, this procedure is only done for the 15 predictors that are a function of price. For the other 47 predictors, excluding the market weights predictor ($k = 1$), I simply set $ z_{i,k,t} = \udot{z}_{i,k,t} = a_{i,1,k,t} + a_{2,k,t} \log(P_{i,t})$, where trivially $a_{i,1,k,t} = \udot{z}_{i,k,t}$ and $a_{2,k,t} = 0$. For the market weight predictor, I just define $ z_{i,1,t} \equiv \udot{z}_{i,1,t} = \invbreve P_{i,t} / A_t$. 

Why is $\hat a_{2,k,t}$ not estimated with the regression (i.e., put the on right-hand side of the regression)? This is for two reasons. First, the very point of calculating $\udot{z}_{i,k,t}$ with the continuous function above is so that the marginal impact of a change in prices on $\udot{z}_{i,k,t}$ is known. Thus, I do not need to do this again---the marginal effect is known. Second, this would create an endogenous regressor problem, since prices are endogenous. 

\subsection{Demand Estimation Details} \label{subsec:demand_estimation_details}

As KY discuss, there is some cause for concern that correlated demand shocks across investors will cause price impacts, making $\epsilon{_{i,t}^j}$ endogenous. I follow their approach and use their same instrument for price. In order to calculate this instrument, I follow KY and calculate the investment universe in the same way: all stocks ever held in the previous 11 quarters.\footnote{See KY for more details.} Then the instrument for the price of institution $j$ is:
\begin{equation*}
    \xi_{i,t}^j = \log \left( \sum_{m \neq j} A_t^m \frac{\one_{i,t}^m}{\sum_{n=1}^N \one_{n,t}^m} \right),
\end{equation*}
where $\xi{_{i,t}^j}$ is what the log market equity of asset $i$ would be if all other funds with asset $i$ in their investment universe held equal-weighted portfolios.\footnote{I can also define:
\begin{equation*}
    \xi_{i,t}^j = \log \left( \sum_{m \neq j} A_t^m \frac{\one_{i,t}^m}{1+\sum_{n=1}^N \one_{n,t}^m} \right),
\end{equation*}
as KY do, but I get very similar results. The extra one accounts for the "outside asset" in their setting, they need to include because of their different function form. This simpler functional form does not require this. 
}
Following KY, I exclude households from the calculation of the instrument for the institutions in the data. 

Thus, the first stage regression, which is estimated with standard OLS for every institution in every quarter, is
\begin{equation*}
    \log (P_{i,t}) 
    = \hat b_{0,t}^j + \sum_{k \in \mathcal{D}} \hat b_{k,t}^j \frac{B_{i,t}^c}{C_t} a_{i,1,k,t} + \sum_{k \in \mathcal{X}} \hat b_{k,t}^j z_{i,k,t} + \upsilon_{i,t}^j,
\end{equation*}
where the $\hat b$ terms are regression coefficients and $\upsilon{_{i,t}^j}$ is the error term. The predicted value $\widehat{\log (P_{i,t})}$ is used in the second stage regression. 

KY have an exponential linear demand function estimated using GMM. Given this simpler functional form than KY, I can also estimate the first and second stage regression with OLS, which gives a two-stage least squares (2SLS) estimate of the betas. The second stage regression is given by equation (\ref{eq:second_stage}) above, except that $\log (P_{i,t})$ is replaced by ${\widehat{\log (P_{i,t})}}$ and $P_{i,t}$ is replaced by ${\exp (\widehat{\log (P_{i,t})})}$. Note that the regression can also be estimated such that there are two first stage regressions to instrument $P_{i,t}$ and $\log (P_{i,t})$ with the instruments $\exp(\xi{_{i,t}^j})$ and $\xi{_{i,t}^j}$, but this yields similar results, is slightly more complicated, and deviates a bit more from KY.

Second, like KY, who constrain the coefficients on the log of prices to have well-behaved demand functions and equilibrium prices, I have similar constraints. In particular, I constrain $\beta{_{1,t}^j} < 1$ and $\beta{_{2,t}^j} < 0$. This ensures that $\beta_{i,1,t} < 1$ and $\beta_{i,2,t} < 0$ for all assets, which when combined with a similar restriction on statistical arbitrageur demand is sufficient to ensure that a unique closed-form positive equilibrium price exists in the counterfactual experiments and investors have downward sloping demand. Like KY, this can be estimated with generalized method of moments (GMM). However, since this is just a standard linear regression with constraints, this is the same as just estimating a constrained OLS regression for the second stage regression, which is what I estimate.

Similar to KY, funds are grouped together if they have too few strictly positive holdings and the aggregate demand across the funds is estimated. KY group firms together if there are less than 1,000 observations. My estimation procedure does not have the same convergence issues, so I group funds together if they have less than 500 strictly positive holdings. I group firms together with similar predictors, using the same procedure as KY.

The 13F data only has strictly positive holdings. I estimate the model with only these holdings for each institution, which means that the sample is truncated. In other words, the short and zero asset positions that a fund has are not in the sample. Data truncation is not a problem unless there are sample selection issues between the truncated and non-truncated sample. However, even if there are sample selection issues, the short and zero asset positions are grouped into "household" demand. Like in KY, this household demand captures short positions. However, the model aggregates nicely, and it is fine to estimate a demand function of investors with very different beliefs and preferences as long as demand can be captured with this more flexible functional form.\footnote{Note that there are some assets that were held by a fund in the previous three years, but are no longer held by the fund. These assets in the investment fund could be included as zeros in the institution-level regressions. This would require an assumption that the true asset position is actually zero or negative and that the investment universe is measured correctly for these assets. While the investment universe is not measured perfectly, as KY discuss, it is more problematic for these assets than for the instrument calculation. The instrument calculation should average some of the error out. Including these extra zeros in the regression also requires a censored data approach, such as a Tobit regression. A second stage estimated with a Tobit model or Censored Least Absolute Deviations (CLAD) model from \cite{clad} yield similar results. I present the simpler version here, where these zero asset positions are not added to the regression. }

In conclusion, I highlight the major two ways this demand estimation differs from KY. First, I have a simple and flexible linear-in-predictors functional form for the portfolio weights, rather than exponential-linear. Second, I use a broader set of predictors from the \cite{weber} data.

\subsection{Statistical Arbitrageur Log Price Derivatives} \label{app:stat_arb_derivs}

In order to calculate (\ref{eq:statistical arbitrageur_elasticity}), I have to calculate 
\begin{equation}
    \nabla_{{z}_{i,k,t}} (f(Z_t)_i b)
    \equiv \frac{\partial f(Z_t)_i b}{\partial {z}_{i,k,t}}. 
\end{equation}

In words, it is important to know how statistical arbitrageur-implied demand changes when predictors change. This is because the endogenous predictors change when prices change, and it is critical to know how these values in order to calculate statistical arbitrageur elasticity. 

First I give some general details about these calculations, and then give the formulas for each of the models. 

\subsubsection{General Derivative Calculation Details} \label{subsec:derivative_details}

Let $Y$ be an $N \times J$ matrix with elements $Y_{i,j}$ and let $X$ be an $N \times 1$ vector with elements $X_i$. Assume that each element of $Y_{i,j}$ is a differentiable function of $X_i$. Define the following:
\begin{equation}
    \nabla_{X} Y = \begin{bmatrix}
        \frac{\partial Y_{1,1}}{\partial X_{1}} & 
        \frac{\partial Y_{1,2}}{\partial X_{1}} & 
        \hdots &
        \frac{\partial Y_{1,J}}{\partial X_{1}} \\
        \frac{\partial Y_{2,1}}{\partial X_{2}} & 
        \frac{\partial Y_{2,2}}{\partial X_{2}} & 
        \hdots &
        \frac{\partial Y_{2,J}}{\partial X_{2}} \\
        \vdots & \vdots & \ddots & \vdots \\
        \frac{\partial Y_{N,1}}{\partial X_{N}} & 
        \frac{\partial Y_{N,2}}{\partial X_{N}} & 
        \hdots &
        \frac{\partial Y_{N,J}}{\partial X_{N}} \\
    \end{bmatrix}.
\end{equation}

I will assume throughout that
\begin{equation}
    \frac{\partial Y_{i,j}}{\partial X_l} = 0 \text{ for all } l \neq i.
\end{equation}

Importantly, this is not the standard Jacobian matrix. I give some identities below that are useful for calculating these derivatives with the models. These identities differ from standard Jacobian identities because this is a separate concept (although obviously it is related and borrows similar notation). 

Using this matrix notation for example, I can rewrite the elasticity of statistical arbitrageurs (equation (\ref{eq:statistical arbitrageur_elasticity})) as the $N$-dimensional vector, defined over only assets with positive weights:
\begin{equation}
    \eta_t = \iota - \diag (\udot{w}_t)^{-1} \nabla_{\log (p_t)} \udot{w}_t,
\end{equation}
where $\iota$ is a vector of ones, $\udot{w}_t$ and $p_t$ the vectorized versions of $\udot{w}_{i,t}$ and $p_{i,t}$, and $\diag(\cdot)$ is the diagonal matrix with the input along its diagonal. Recall that $\udot{w}_t = f(Z_t) b$. 

Let $z_{k,t}$ denote the $k^{th}$ column of $Z_t$. Then equation (\ref{eq:statistical arbitrageur_elasticity}) can be written in vector form for assets with positive weights as
\begin{align}
    \eta_t &= \iota - \diag (\udot{w}_t)^{-1} \sum_{k=1}^K \diag( \nabla_{z_{k,t}} (f(Z_t) b)) \nabla_{\log (p_t)} z_{k,t} \\
    &= \iota - \diag (\udot{w}_t)^{-1} \sum_{k=1}^K (\nabla_{z_{k,t}} (f(Z_t) b)) \circ (\nabla_{\log (p_t)} z_{k,t}),
\end{align}
where $\circ$ denotes the standard Hadamard (element-wise) product. This is a convenient way of writing this because Appendix \ref{subsec:endogenous} above already gives the formulas used to calculate the elements of $\nabla_{\log (p_t)} z_{k,t}$, and Appendix \ref{app:stat_arbs} gives the formulas for the calculation of $\udot{w}_t = f(Z_t) b$. Thus, all that is left is the formulas for $\nabla_{z_{k,t}} (f(Z_t) b)$ for the different statistical arbitrageurs. 

Here are some simple identities with this matrix derivative that are useful for the calculations. For all of these identities below, I define $Y$ and $X$ as above. Let $A$, $B$, $C$, $D$, and $E$ be constant matrices (not a function of the elements of $X$). 

\begin{prop}
\begin{equation}
    \nabla_X (Y B) = (\nabla_X Y) B.
\end{equation}
\end{prop}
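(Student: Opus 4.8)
The plan is to prove this entrywise, since both sides are matrices of the same shape: writing $B$ as $J \times L$, both $\nabla_X(YB)$ and $(\nabla_X Y)B$ are $N \times L$. First I would check that $\nabla_X$ is even defined on $YB$: the $i$-th row of $YB$ is $\sum_{j} Y_{i,j} B_{j,\cdot}$, built only from the $i$-th row of $Y$, and by hypothesis each $Y_{i,j}$ depends only on $X_i$ (with $B$ constant), so $\partial (YB)_{i,l}/\partial X_m = 0$ for $m \neq i$, which is exactly the standing assumption required for the $\nabla_X$ operator. Each $(YB)_{i,l}$ is a finite linear combination (with constant coefficients) of the differentiable functions $Y_{i,j}$, hence differentiable in $X_i$.

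Next, fix arbitrary indices $i \in \{1,\dots,N\}$ and $l \in \{1,\dots,L\}$. By the definition of $\nabla_X$, the $(i,l)$ entry of the left-hand side is $\big(\nabla_X(YB)\big)_{i,l} = \partial (YB)_{i,l}/\partial X_i$. Expanding the matrix product, $(YB)_{i,l} = \sum_{j=1}^J Y_{i,j} B_{j,l}$, so this entry equals $\partial/\partial X_i\!\big(\sum_{j=1}^J Y_{i,j} B_{j,l}\big)$. Pushing the derivative through the finite sum by linearity and using that each $B_{j,l}$ is a constant, this becomes $\sum_{j=1}^J (\partial Y_{i,j}/\partial X_i)\, B_{j,l}$. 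By the definition of $\nabla_X Y$, $\partial Y_{i,j}/\partial X_i = (\nabla_X Y)_{i,j}$, so the expression equals $\sum_{j=1}^J (\nabla_X Y)_{i,j} B_{j,l} = \big((\nabla_X Y)B\big)_{i,l}$, the $(i,l)$ entry of the right-hand side. Since $i,l$ were arbitrary, the two matrices coincide.

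I do not expect any real obstacle here; the only thing to be careful about is the nonstandard meaning of $\nabla_X$ (the row index of the matrix being differentiated must match the index of the differentiation variable), and the argument above simply records that this "row-wise" derivative is linear and commutes with right-multiplication by a constant matrix. The content is essentially the chain/product rule for a single variable applied one row at a time.
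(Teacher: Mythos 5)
Your proof is correct: the entrywise expansion $(YB)_{i,l}=\sum_j Y_{i,j}B_{j,l}$, followed by linearity of the derivative and constancy of $B$, is exactly the argument the paper has in mind — it simply declares the result trivial and omits the verification. Your added check that the standing assumption $\partial(YB)_{i,l}/\partial X_m=0$ for $m\neq i$ is inherited from $Y$ is a worthwhile detail the paper leaves implicit.
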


This result is trivial, but used in these calculations below.

\begin{prop} \label{prop:yp}
Let $A$ be $N \times J$ dimensional, and $B$ be $N \times 1$ dimensional. Then:
\begin{equation}
    \nabla_X (A Y' B) = \Diag (A (\nabla_X Y)') \circ B,
\end{equation}
where $\Diag(\cdot)$ denotes the function that takes the diagonal of its matrix input and makes it a column vector (the off diagonal terms are discarded). 
\end{prop}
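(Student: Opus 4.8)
The plan is to verify the identity entrywise, leaning entirely on the standing structural assumption $\partial Y_{i,j}/\partial X_l = 0$ for $l \neq i$ that is in force throughout this appendix. First I would note the dimensions: $A$ is $N \times J$, $Y'$ is $J \times N$, $B$ is $N \times 1$, so $Z \equiv A Y' B$ is an $N \times 1$ vector, and $\nabla_X Z$ is the $N \times 1$ vector whose $i$-th entry is $\partial Z_i / \partial X_i$. Writing $Z$ in coordinates gives $Z_i = \sum_{m=1}^N \sum_{j=1}^J A_{i,j} Y_{m,j} B_m$.

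Next I would differentiate with respect to $X_i$, so that $\partial Z_i/\partial X_i = \sum_m \sum_j A_{i,j} B_m \, \partial Y_{m,j}/\partial X_i$. By the assumption, $\partial Y_{m,j}/\partial X_i = 0$ unless $m=i$, so only the $m=i$ term survives and $\partial Z_i/\partial X_i = B_i \sum_j A_{i,j}\,\partial Y_{i,j}/\partial X_i$.

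Then I would identify this sum with a diagonal entry of $A(\nabla_X Y)'$. Since $(\nabla_X Y)_{i,j} = \partial Y_{i,j}/\partial X_i$ by definition, the $(i,i)$ entry of $A(\nabla_X Y)'$ is $\sum_j A_{i,j}(\nabla_X Y)_{i,j} = \sum_j A_{i,j}\,\partial Y_{i,j}/\partial X_i$, hence $\big(\Diag(A(\nabla_X Y)')\big)_i$ equals this same sum. Multiplying by $B_i$ through the Hadamard product reproduces $\partial Z_i/\partial X_i$ for every $i$, which is exactly the claimed identity $\nabla_X(A Y' B) = \Diag(A(\nabla_X Y)') \circ B$.

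The computation is routine; the only point requiring care is that $\nabla_X$ here is the non-standard row-indexed derivative of this appendix rather than a Jacobian, so one must remember that the only nonzero derivatives are $\partial Y_{i,j}/\partial X_i$ and that the object extracted from the $N \times N$ product $A(\nabla_X Y)'$ is its diagonal (as a column vector), not the full matrix. I do not anticipate any genuine obstacle.
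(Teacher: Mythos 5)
Your proof is correct and is essentially identical to the paper's: both expand $(AY'B)_i$ entrywise, invoke the standing assumption $\partial Y_{m,j}/\partial X_l=0$ for $l\neq m$ to kill all but the $m=i$ term, and recognize the surviving sum as the $(i,i)$ entry of $A(\nabla_X Y)'$. No gaps.
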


\begin{proof}
I can write:
\begin{equation}
    (A Y' B)_{i} = \sum_l \sum_m A_{i,l} Y'_{l,m} B_{m} = \sum_l \sum_m A_{i,l} Y_{m,l} B_{m}.
\end{equation}
Then I can write:
\begin{equation*}
    \frac{\partial (A Y' B)_{i}}{\partial X_i} 
    = \frac{\partial }{\partial X_i} \sum_l \sum_m A_{i,l} Y_{m,l} B_{m}
\end{equation*}
\begin{equation}
    = \sum_l A_{i,l} \frac{\partial Y_{i,l}}{\partial X_i} B_{i} = B_{i} \sum_l A_{i,l} \frac{\partial Y_{i,l}}{\partial X_i}.
\end{equation}
and writing this back as matrices implies that
\begin{equation}
    \nabla_X (A Y' B) = \Diag (A (\nabla_X Y)') \circ B.
\end{equation}
\end{proof}

\begin{prop} \label{prop:y}
Let $A$ be $N \times N$ and $B$ be $J \times 1$. Then
\begin{equation}
    \nabla_X (A Y B) = \Diag(A) \circ ((\nabla_X Y) B).
\end{equation}
\end{prop}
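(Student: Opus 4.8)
The plan is to prove the identity entrywise, mirroring exactly the argument used for Proposition \ref{prop:yp}. First I would note that $A Y B$ is an $N \times 1$ column vector (with $A$ being $N\times N$, $Y$ being $N\times J$, and $B$ being $J\times 1$), so $\nabla_X(AYB)$ is also $N\times 1$ and, by the definition of $\nabla_X$, its $i^{th}$ entry is $\partial (AYB)_i / \partial X_i$. Expanding the matrix products gives $(AYB)_i = \sum_l \sum_m A_{i,l} Y_{l,m} B_m$.

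Next I would differentiate this scalar with respect to $X_i$ and invoke the standing assumption that $\partial Y_{l,m}/\partial X_i = 0$ whenever $l \neq i$. This kills every term of the outer sum except $l = i$, leaving $\partial (AYB)_i/\partial X_i = A_{i,i}\sum_m (\partial Y_{i,m}/\partial X_i) B_m$. I would then identify the two factors: $A_{i,i}$ is the $i^{th}$ entry of $\Diag(A)$, and $\sum_m (\partial Y_{i,m}/\partial X_i) B_m$ is the $i^{th}$ entry of $(\nabla_X Y) B$ by the definition of $\nabla_X Y$ together with ordinary matrix–vector multiplication. Hence the $i^{th}$ entry of $\nabla_X(AYB)$ equals the $i^{th}$ entry of $\Diag(A) \circ ((\nabla_X Y) B)$, and since $i$ is arbitrary the claimed vector identity follows.

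The computation is entirely routine; the only point requiring care is index bookkeeping and applying the zero-derivative hypothesis at the correct place — here it is used to eliminate the off-diagonal contributions of $A$, which is why only $\Diag(A)$ survives rather than all of $A$. I do not anticipate any genuine obstacle: this is the same two-line argument as Proposition \ref{prop:yp}, with the constant matrices arranged differently.
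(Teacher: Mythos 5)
Your proposal is correct and follows essentially the same argument as the paper's proof: expand $(AYB)_i = \sum_l \sum_m A_{i,l} Y_{l,m} B_m$, differentiate with respect to $X_i$, use the hypothesis $\partial Y_{l,m}/\partial X_i = 0$ for $l \neq i$ to keep only the $l = i$ term, and read off $A_{i,i}$ and $\sum_m (\partial Y_{i,m}/\partial X_i) B_m$ as the entries of $\Diag(A)$ and $(\nabla_X Y) B$. No gaps.
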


\begin{proof}
Again write:
\begin{equation}
    (A Y B)_{i} = \sum_l \sum_m A_{i,l} Y_{l,m} B_{m}
\end{equation}
Then I can write:
\begin{equation*}
    \frac{\partial (A Y B)_{i}}{\partial X_i} 
    = \frac{\partial }{\partial X_i} \sum_l \sum_m A_{i,l} Y_{l,m} B_{m}
\end{equation*}
\begin{equation}
    = \sum_m A_{i,i} \frac{\partial Y_{i,m}}{\partial X_i} B_m = A_{i,i} \sum_m \frac{\partial Y_{i,m}}{\partial X_i} B_m.
\end{equation}
Writing this back in matrix form gives:
\begin{equation}
    \nabla_X (A Y B) = \Diag(A) \circ ((\nabla_X Y) B).
\end{equation}
\end{proof}

It is a well-known result that if $g(x)$ is a differentiable matrix-valued function of the scalar $x$, then
\begin{equation}
    \frac{\partial g(x)^{-1}}{\partial x} = - g(x)^{-1} \frac{\partial g(x)}{\partial x} g(x)^{-1}.
\end{equation}
I use this result in the calculations below.

\begin{prop}
Assume the first dimension of $A$ is $N$, $E$ is a column vector, $B + CY'D$ is a square invertible matrix, and the rest of the dimensions of the matrices are such that the multiplication is well-defined. Then:
\begin{equation}
    \nabla_X (A (B + C Y' D)^{-1} E) 
    = - \Diag( A (B + C Y' D)^{-1} C (\nabla_X Y)') \circ
    (D (B + C Y' D)^{-1} E).
\end{equation}
\end{prop}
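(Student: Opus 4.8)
The plan is to differentiate componentwise and exploit the rank-one structure that the standing hypothesis $\partial Y_{i,j}/\partial X_l = 0$ for $l\neq i$ forces on the perturbation of $B + CY'D$. Write $M \equiv M(X) = B + C Y' D$, a square invertible matrix-valued function of $X$, so that the object of interest is the $N\times 1$ vector $AM^{-1}E$ and, by the definition of $\nabla_X$, $[\nabla_X(AM^{-1}E)]_i = \partial (AM^{-1}E)_i/\partial X_i$. The proof is essentially a reduction to Proposition \ref{prop:yp}.

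First I would compute $\partial M/\partial X_i$. Since $M$ depends on $X$ only through $Y$, we have $\partial M/\partial X_i = C\,(\partial Y'/\partial X_i)\,D$, and the zero-cross-partial assumption kills every column of $\partial Y'/\partial X_i$ except the $i$-th: writing $v_i$ for the column vector with entries $(\nabla_X Y)_{i,l} = \partial Y_{i,l}/\partial X_i$ and $e_i$ for the $i$-th standard basis vector, one gets the rank-one expression $\partial Y'/\partial X_i = v_i e_i'$, hence $\partial M/\partial X_i = C v_i e_i' D$. I would then invoke the identity $\partial M^{-1}/\partial X_i = -M^{-1}(\partial M/\partial X_i)M^{-1}$ (already quoted in the excerpt) to obtain $\partial(AM^{-1}E)/\partial X_i = -(AM^{-1}C)\,v_i\,(e_i' D M^{-1} E)$.

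Now I would pass to the $i$-th component. The scalar $e_i' D M^{-1} E = (DM^{-1}E)_i$ factors out, and $(AM^{-1}C v_i)_i = \sum_l (AM^{-1}C)_{i,l}(\nabla_X Y)_{i,l} = \bigl(AM^{-1}C(\nabla_X Y)'\bigr)_{i,i} = \Diag\!\bigl(AM^{-1}C(\nabla_X Y)'\bigr)_i$. Stacking over $i = 1,\dots,N$ then yields $\nabla_X(AM^{-1}E) = -\Diag\!\bigl(A(B+CY'D)^{-1}C(\nabla_X Y)'\bigr)\circ\bigl(D(B+CY'D)^{-1}E\bigr)$, which is the claim. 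Equivalently and more quickly: freeze $M$ at the evaluation point, set $\bar A := AM^{-1}C$ and $\bar E := DM^{-1}E$ as constant matrices, note that the computation above shows $[\nabla_X(AM^{-1}E)]_i = [-\bar A(\partial Y'/\partial X_i)\bar E]_i$, and observe that $-\bar A(\partial Y'/\partial X_i)\bar E$ is exactly the integrand occurring in the proof of Proposition \ref{prop:yp}; applying that proposition with $A\mapsto\bar A$, $B\mapsto\bar E$ (the dimensions match: $\bar A$ is $N\times J$ and $\bar E$ is $N\times 1$) delivers the result.

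The only point requiring care, in either route, is bookkeeping around the nonstandard derivative $\nabla_X$: although $M$ depends on every coordinate of $X$, the $i$-th row of $\nabla_X$ sees only $X_i$, so the relevant perturbation $\partial M/\partial X_i$ is the rank-one matrix $Cv_ie_i'D$ rather than a full-rank object, and it is this rank-one structure that makes the $i$-th component of $A(\partial M^{-1}/\partial X_i)E$ collapse into the advertised $\Diag$-and-Hadamard form. I expect that observation to be the main obstacle; everything after it is routine index manipulation already carried out in the proofs of Propositions \ref{prop:yp} and \ref{prop:y}.
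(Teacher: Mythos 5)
Your proof is correct and follows essentially the same route as the paper's: apply the derivative-of-inverse identity to $M=B+CY'D$ and then reduce the resulting expression to Proposition \ref{prop:yp} (with $A\mapsto AM^{-1}C$ and $B\mapsto DM^{-1}E$). The paper states this reduction in one line without spelling out the rank-one structure $\partial Y'/\partial X_i = v_i e_i'$, but that is exactly the mechanism underlying its appeal to Proposition \ref{prop:yp}, so your argument is the same one made explicit.
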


\begin{proof}
\begin{equation}
    \frac{\partial}{\partial X_i} (A (B + C Y' D)^{-1} E)
    = - A (B + C Y' D)^{-1} \left( \frac{\partial}{\partial X_i} C Y' D \right) (B + C Y' D)^{-1} E.
\end{equation}
Using this combined with Proposition \ref{prop:yp} gives the stated result. 
\end{proof}

\begin{prop}
Assume the first dimension of $A$ is $N$, $E$ is a column vector, $B + CYD$ is a square invertible matrix, and the rest of the dimensions of the matrices are such that the multiplication is well-defined. Then
\begin{equation}
    \nabla_X (A (B + C Y D)^{-1} E) 
    = - \Diag( A (B + C Y D)^{-1} C) \circ
    ((\nabla_X Y) D (B + C Y D)^{-1} E).
\end{equation}
\end{prop}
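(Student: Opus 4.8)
The plan is to mirror the proof of the immediately preceding proposition, replacing the appeal to Proposition~\ref{prop:yp} by an appeal to Proposition~\ref{prop:y}, since now $Y$ enters $B+CYD$ without a transpose. Write $G \equiv B + CYD$, which by hypothesis is square and invertible. The $i$-th component of the left-hand side is $\partial/\partial X_i$ of $(A G^{-1} E)_i$, and the only $X$-dependence is through $G$; applying the scalar matrix-inverse derivative identity stated just above (with $g(\cdot)=G$) and using that $C$ and $D$ are constant gives
\begin{equation}
  \frac{\partial}{\partial X_i}\bigl(A G^{-1} E\bigr)
  = - A G^{-1}\Bigl(\frac{\partial}{\partial X_i}(C Y D)\Bigr) G^{-1} E
  = - A G^{-1} C\Bigl(\frac{\partial Y}{\partial X_i}\Bigr) D\, G^{-1} E .
\end{equation}

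Next I would recognize the right-hand side as an instance of the bilinear form treated in Proposition~\ref{prop:y}: set $\hat A \equiv - A G^{-1} C$ (whose first dimension is $N$) and $\hat E \equiv D\, G^{-1} E$ (a column vector). These two factors do depend on $X$, but that dependence is irrelevant here, because the differentiation has already been carried out; what remains is a purely algebraic identity for the $i$-th entry of $\hat A\,(\partial Y/\partial X_i)\,\hat E$ in terms of the structural assumption $\partial Y_{l,m}/\partial X_l=0$ for $l\neq$ the row index, which is exactly the computation inside the proof of Proposition~\ref{prop:y}. That proof's conclusion then applies to give
\begin{equation}
  \nabla_X\bigl(A G^{-1} E\bigr) = \Diag(\hat A)\circ\bigl((\nabla_X Y)\,\hat E\bigr)
  = -\,\Diag\!\bigl(A G^{-1} C\bigr)\circ\bigl((\nabla_X Y)\, D\, G^{-1} E\bigr),
\end{equation}
which is the claimed formula once $G$ is expanded back to $B+CYD$.

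The only point that needs care — and the one I would spell out explicitly rather than wave at — is the legitimacy of "freezing'' $G^{-1}$ when invoking Proposition~\ref{prop:y}. I would handle it by not treating Proposition~\ref{prop:y} as a black box but instead repeating its two-line index computation directly: $(\hat A\,(\partial Y/\partial X_i)\,\hat E)_i = \sum_{l,m}\hat A_{i,l}\,(\partial Y_{l,m}/\partial X_i)\,\hat E_m$, and then using $\partial Y_{l,m}/\partial X_i=0$ for $l\neq i$ to collapse the $l$-sum to $l=i$, whereupon the expression equals $\hat A_{i,i}\bigl((\nabla_X Y)\hat E\bigr)_i$, which reassembles into the stated matrix form. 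This makes transparent that no differentiation of $\hat A$ or $\hat E$ is ever needed, so there is no hidden obstacle; the proposition follows essentially immediately from the matrix-inverse derivative identity together with the row-wise structure of the operator $\nabla_X$.
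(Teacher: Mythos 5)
Your proposal is correct and follows the paper's own proof exactly: differentiate via the matrix-inverse identity to get $-AG^{-1}C(\partial Y/\partial X_i)DG^{-1}E$ and then collapse the index sum using the row-wise structure of $\nabla_X$, which is precisely the appeal to Proposition~\ref{prop:y}. Your extra care in justifying why $G^{-1}$ can be treated as constant when invoking that proposition is a welcome clarification of a step the paper leaves implicit, but it is not a different argument.
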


\begin{proof}
\begin{equation}
    \frac{\partial}{\partial X_i} (A (B + C Y D)^{-1} E)
    = - A (B + C Y D)^{-1} \left( \frac{\partial}{\partial X_i} C Y D \right) (B + C Y D)^{-1} E.
\end{equation}
Using this combined with Proposition \ref{prop:y} gives the stated result. 
\end{proof}

All the statistical arbitrageur derivative calculations below simply use these five propositions combined with the product rule. 

\subsubsection{Linear statistical arbitrageur Derivatives}

I use the notation and propositions from Appendix \ref{subsec:derivative_details} in this section of the Appendix. 

The linear statistical arbitrageurs---BPZ$_L$, BSV, DGU, FF3, FF6, HXZ, and KNS models---have trivial derivative calculations. For all models (not just these linear models), I can calculate:
\begin{equation}
    \grad w_t = (\grad f(Z_t)) b.
\end{equation}

However, for these models $\grad f(Z_t)$ is also trivial to calculate. The KKN model derivative is similar with this very simple linear structure. 

For example, for the BPZ$_L$, BSV, DGU, and the KNS models, $f(Z_t) = \breve Z_t$. Thus, for these models
\begin{equation}
    \grad f(Z_t) = \grad \breve Z_t = \left( \frac{4}{N} \right) \grad Z_t,
\end{equation}
where $\grad Z_t$ is simply an $N \times K$ matrix of zeros, except the $k^{th}$ column is filled with ones. 

The other linear models---FF3, FF6, and HXZ---are identical except that the derivative $\grad f(Z_t)$ equals a matrix of zeros if predictor $k$ is not included in the model at all. 

\subsubsection{CRW Model Derivatives}

I use the notation and propositions from Appendix \ref{subsec:derivative_details} in this section of the Appendix. 

In \cite{CRW}, the MVE portfolio is:
\begin{equation}
    w_{t} = f(Z_t) b.
\end{equation}
\begin{equation}
    f(Z_t) = Z_t \left( Z_t' Z_t \right)^{-1} \Gamma_\beta 
\end{equation}

I can calculate:
\begin{equation*}
    \grad w_t = (\grad Z_t) \left( Z_t' Z_t \right)^{-1} \Gamma_\beta b
\end{equation*}
\begin{equation*}
    - \Diag (Z_t \left( Z_t' Z_t \right)^{-1} (\grad Z_t)') \circ (Z_t \left( Z_t' Z_t \right)^{-1} \Gamma_\beta b)
\end{equation*}
\begin{equation}
    - \Diag (Z_t \left( Z_t' Z_t \right)^{-1} Z_t') \circ ((\grad Z_t) \left( Z_t' Z_t \right)^{-1} \Gamma_\beta b).
\end{equation}

\subsubsection{KPS Model Derivatives}

I use the notation and propositions from Appendix \ref{subsec:derivative_details} in this section of the Appendix. 

In \cite{kelly}, the MVE portfolio is:
\begin{equation}
    w_{t} = f(Z_t) b.
\end{equation}
\begin{equation}
    f(Z_t) = Z_t \Gamma_\beta \left( \Gamma_\beta' Z_t' Z_t \Gamma_\beta \right)^{-1}.
\end{equation}

I can calculate:
\begin{equation*}
    \grad w_t = (\grad Z_t) \Gamma_\beta \left( \Gamma_\beta' Z_t' Z_t \Gamma_\beta \right)^{-1} b
\end{equation*}
\begin{equation*}
    - \Diag (Z_t \Gamma_\beta \left( \Gamma_\beta' Z_t' Z_t \Gamma_\beta \right)^{-1} \Gamma_{\beta}' (\grad Z_t)') \circ (Z_t \Gamma_{\beta} \left( \Gamma_\beta' Z_t' Z_t \Gamma_\beta \right)^{-1} b)
\end{equation*}
\begin{equation}
    - \Diag (Z_t \Gamma_\beta \left( \Gamma_\beta' Z_t' Z_t \Gamma_\beta \right)^{-1} \Gamma_{\beta}' Z_t') \circ ((\grad Z_t) \Gamma_{\beta} \left( \Gamma_\beta' Z_t' Z_t \Gamma_\beta \right)^{-1} b).
\end{equation}

\subsubsection{GKX Model Derivatives}

I use the notation and propositions from Appendix \ref{subsec:derivative_details} in this section of the Appendix. 

In \cite{gu}, the portfolio weights are
\begin{equation}
    f(Z_t) = Z_t (Z_t' Z_t)^{-1} \Gamma_A + \iota_N \Gamma_B',
\end{equation}
where $\Gamma_A$ is a $K \times M$ matrix and $\Gamma_B$ is a $M$ dimensional column vector of parameters. In the language of neural networks, $\Gamma_A$ is a matrix of weights and $\Gamma_B$ is a vector of biases. 

Then I have:
\begin{equation*}
    \grad w_t = (\grad Z_t) (Z_t' Z_t)^{-1} \Gamma_A b
\end{equation*}
\begin{equation*}
    - \Diag (Z_t (Z_t' Z_t)^{-1} (\grad Z_t)') \circ (Z_t (Z_t' Z_t)^{-1} \Gamma_A b) 
\end{equation*}
\begin{equation}
    - \Diag (Z_t (Z_t' Z_t)^{-1} Z_t') \circ ((\grad Z_t) (Z_t' Z_t)^{-1} \Gamma_A b).
\end{equation}

\subsubsection{NN Model Derivatives} \label{subsec:nn_model_derivs}

I use the notation and propositions from Appendix \ref{subsec:derivative_details} in this section of the Appendix. 

I have:
\begin{equation*}
    w_t = \frac{1}{\zeta} \mu_t - \frac{1}{\zeta} \Gamma_t (\zeta I + \Gamma_t' \Gamma_t)^{-1} \Gamma_t' \mu_t.
\end{equation*}
Then I can calculate:
\begin{equation*}
    \grad w_t = \frac{1}{\zeta} (\grad \mu_t) - \frac{1}{\zeta} (\grad \Gamma_t) (\zeta I + \Gamma_t' \Gamma_t)^{-1} \Gamma_t' \mu_t
\end{equation*}
\begin{equation*}
    + \frac{1}{\zeta} \Diag (\Gamma_t (\zeta I + \Gamma_t' \Gamma_t)^{-1}
    (\grad \Gamma_t)') \circ
    (\Gamma_t (\zeta I + \Gamma_t' \Gamma_t)^{-1} \Gamma_t' \mu_t)
\end{equation*}
\begin{equation*}
    + \frac{1}{\zeta} \Diag (\Gamma_t (\zeta I + \Gamma_t' \Gamma_t)^{-1}
    \Gamma_t') \circ
    ((\grad \Gamma_t) (\zeta I + \Gamma_t' \Gamma_t)^{-1} \Gamma_t' \mu_t)
\end{equation*}
\begin{equation*}
    - \frac{1}{\zeta} \Diag (\Gamma_t (\zeta I + \Gamma_t' \Gamma_t)^{-1} (\grad \Gamma_t)') \circ
    \mu_t
\end{equation*}
\begin{equation} \label{eq:cov_nn_effects}
    - \frac{1}{\zeta} \Diag (\Gamma_t (\zeta I + \Gamma_t' \Gamma_t)^{-1} \Gamma_t') \circ
    (\grad \mu_t).
\end{equation}

In the case of zero hidden layers, then
\begin{equation}
    \mu_t = Z_t \beta_{\mu}, \;\text{ and }\;
    \Gamma_t = Z_t \beta_{\Sigma},
\end{equation}
for some parameter vectors $\beta_{\mu}$ and $\beta_{\Sigma}$. 

Then
\begin{equation}
    \grad \mu_t = (\grad Z_t) \beta_{\mu}, \;\;\;
    \grad \Gamma_t = (\grad Z_t) \beta_{\Gamma}.
\end{equation}

In the more general case, 
\begin{equation}
    \mu_{i,t} = g_{\mu} (Z_{i,t}), \; \text{ and }\;
    \Gamma_{i,t} = g_{\Sigma} (Z_{i,t}).
\end{equation}
where $g_\mu : \R^K \to \R$ and $g_\Sigma : \R^K \to V$ are the standard neural network functions. Then let $D g_{\mu}$ be the $1 \times K$ Jacobian matrix of derivatives of $g_\mu$ and $D g_\Sigma$ be the $V \times K$ Jacobian matrix of $g_\Sigma$. Then I can write:
\begin{equation}
    \grad \mu_{i,t} = D g_\mu (\grad Z_{i,t}), \;\;\;
    \grad \Gamma_{i,t} = D g_{\Sigma} (\grad Z_{i,t}).
\end{equation}

In Figures \ref{fig:calibration_and_weights} and Tables \ref{tab:cov_decomp}, I separately decompose the elasticity that includes and excludes covariance matrix effects. Equation (\ref{eq:cov_nn_effects}) includes covariance effects. The derivative that excludes covariance effects is given by:
\begin{equation} \label{eq:mean_nn_effects}
    \grad w_t = \frac{1}{\zeta} (\grad \mu_t)
    - \frac{1}{\zeta} \Diag (\Gamma_t (\zeta I + \Gamma_t' \Gamma_t)^{-1} \Gamma_t') \circ
    (\grad \mu_t).
\end{equation}

\subsubsection{BPZ$_F$ and RF Model Derivatives} \label{subsec:bpz_derivs}

These models are not continuous functions of $Z_{t}$. Ignoring the break points severely attenuates any measure of how much $f(Z_t)$ varies when $Z_t$ varies. In order to have a continuous demand function (necessary for equilibrium to exist) and to have reasonable estimates of the elasticity of these statistical arbitrageurs, I am forced to approximate these derivatives with numerical derivatives. This is done as I describe below. 

I calculate $\nabla_{\log (p_t)} f(Z_t) b$ directly. Here I use $\nabla_{\log (p_t)} Z_t$ (see Appendix \ref{subsec:derivative_details} to see how this is calculated). In particular, for these two models, I approximate:
\begin{equation}
    \nabla_{\log (p_t)} f(Z_t) b \approx \frac{f(Z_t + h \nabla_{\log (p_t)} Z_t)b - f(Z_t - h \nabla_{\log (p_t)} Z_t)b}{2h}.
\end{equation}

As $h$ goes to zero, this becomes a better and better approximation of the derivative (where it is defined). However, the actual derivative understates how much $f(Z_t)b$ changes when prices change because it ignores the large breakpoint jumps. Thus, I calculate this with values of $h$ that are not too small and not too large. 

I calculate this with $h \in \{0.0025, 0.005, 0.0075, 0.01\}$, and then average over these four numerical derivatives. This obviously corresponds to a 0.25\%, 0.5\%, 0.75\%, and 1\% price change, which seems like a reasonable degree of price change to approximate how much $f(Z_t)b$ reacts to changes in prices.

\clearpage
\newpage
\section{Additional Tables and Figures} \label{app:tables:figures}


\begin{figure}[!htpb]
    \centering
    \resizebox{.9\textwidth}{!}{\includegraphics{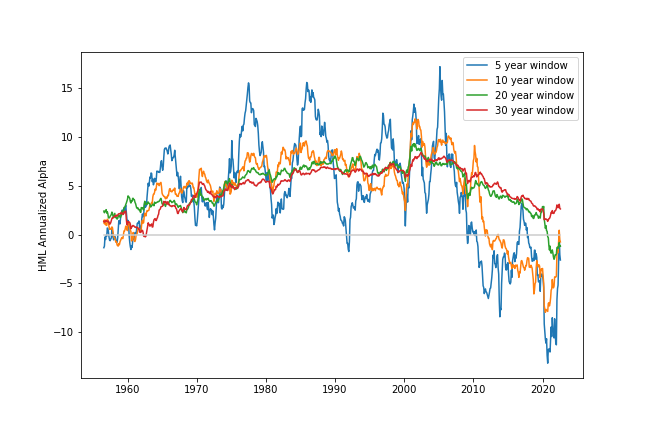}}
    \vspace{3mm}
    \caption{\textbf{Value $\alpha$.} This plot shows the $\alpha$ of value (HML), when value is regressed on the market with various rolling windows, labeled in the legend. The monthly factor returns data is downloaded from Kenneth French's website.}
    \label{fig:value_alpha}
\end{figure}

\begin{table}[!t] \centering
    \resizebox{0.8\textwidth}{!}{\begin{tabular}{@{\extracolsep{5pt}}llll}
    \\[-1.8ex]\hline
\hline \\[-2.4ex]
Model & Hyperparameter & Search Values & Selected \\
\hline \\[-2.4ex] \hline \\[-2.4ex]
    BPZ$_F$ & Trees & & 100 \\
     & Max depth & & 4 \\
     & Max features & & $\approx \log_2(K)$ \\
    & Mean penalty - $\lambda_0$ & $0$, $10^{-5}$, $10^{-3}$, $10^{-1}$, ..., $10^{5}$ & $10^{-1}$ \\
    & LARS nonzero coefficients & 5, 10, 15, ..., 50 & 10 \\
    & $L^2$ penalty - $\lambda_2$ & $0$, $10^{-5}$, $10^{-3}$, $10^{-1}$, ..., $10^{5}$ & $10^{3}$ \\
    \hline \\[-2.4ex]

    BPZ$_L$ & No. of PCA factors & 10, 25, 40, 55, $K$ (no PCA) & 40 \\
    & Mean penalty - $\lambda_0$ & 0, $10^{-10}$, $10^{-9}$, $10^{-8}$, ..., $10^{0}$ & $10^{-1}$ \\
    & LARS nonzero coefficients & 2, 4, 6, ..., 20 & 12 \\
    & $L^2$ penalty - $\lambda_2$ & 0, $10^{-10}$, $10^{-9}$, $10^{-8}$, ..., $10^{0}$ & 0 \\
    \hline \\[-2.4ex]

    CRW & No. of factors & 1, 2, 3, ..., 10 & 7 \\
    & Mean penalty - $\lambda_0$ & $0$, $10^{-5}$, $10^{-4}$, $10^{-3}$, ..., $10^{5}$ & $10^{-1}$ \\
    & $L^2$ penalty - $\lambda_2$ & $0$, $10^{-5}$, $10^{-4}$, $10^{-3}$, ..., $10^{5}$ & $10^{4}$ \\
    \hline \\[-2.4ex]

    FF3 & $L^1$ penalty - $\lambda_1$ & $0$, $10^{-10}$, $10^{-9}$, $10^{-8}$, ..., $10^{0}$ & $10^{-1}$ \\
    & $L^2$ penalty - $\lambda_2$ & $0$, $10^{-10}$, $10^{-9}$, $10^{-8}$, ..., $10^{0}$ & $10^{0}$\\
    \hline \\[-2.4ex]

    FF6 & $L^1$ penalty - $\lambda_1$ & $0$, $10^{-10}$, $10^{-9}$, $10^{-8}$, ..., $10^{0}$ & $10^{-1}$ \\
    & $L^2$ penalty - $\lambda_2$ & $0$, $10^{-10}$, $10^{-9}$, $10^{-8}$, ..., $10^{0}$ & $10^{0}$ \\
    \hline \\[-2.4ex]

    GKX & NN nodes and layers & $()$, $(32)$, $(32, 16)$, $(32, 16, 8)$ & $()$ \\
    & $L^1$ penalty of NN weights & $10^{-5}$, $10^{-3}$ & $10^{-5}$ \\
    & Learning Rate & $0.01$, $0.001$ & $0.01$ \\
    & No. of Factors & 1, 3, 5, 10 & 10 \\
    & Mean penalty - $\lambda_0$ & $0$, $10^{-1}$, $10^{-4}$, $10^{-3}$, ..., $10^{5}$ & $10^{-1}$ \\
    & $L^2$ penalty - $\lambda_2$ & $0$, $10^{-5}$, $10^{-4}$, $10^{-3}$, ..., $10^{5}$ & $0$ \\
    \hline \\[-2.4ex]

    HXZ & $L^1$ penalty - $\lambda_1$ & $0$, $10^{-10}$, $10^{-9}$, $10^{-8}$, ..., $10^{0}$ & $10^{-1}$ \\
    & $L^2$ penalty - $\lambda_2$ & $0$, $10^{-10}$, $10^{-9}$, $10^{-8}$, ..., $10^{0}$ & $10^{0}$ \\
    \hline \\[-2.4ex]

    KKN & No. of factors & 1, 2, 3, ..., 10 & 7 \\
    \hline \\[-2.4ex]

    KNS & No. of PCA factors & 10, 25, 40, 55, $K$ (no PCA) & $K$ \\
    & $L^1$ penalty - $\lambda_1$ & $0$, $10^{-10}$, $10^{-9}$, $10^{-8}$, ..., $10^{0}$ & $10^{-2}$ \\
    & $L^2$ penalty - $\lambda_2$ & $0$, $10^{-10}$, $10^{-9}$, $10^{-8}$, ..., $10^{0}$ & $10^{-1}$\\
    \hline \\[-2.4ex]

    KPS & No. of factors & 1, 2, 3, ..., 10 & 10 \\
    & Mean penalty - $\lambda_0$ & $0$, $10^{-5}$, $10^{-4}$, $10^{-3}$, ..., $10^{5}$ & $10^{5}$ \\
    & $L^2$ penalty - $\lambda_2$ & $0$, $10^{-5}$, $10^{-4}$, $10^{-3}$, ..., $10^{5}$ & $10^{5}$ \\
    \hline \\[-2.4ex]

    NN & NN nodes and layers & $()$, $(5)$, $(25)$, $(25, 5)$ & $()$ \\
    & $L^2$ penalty of NN weights & 0, 50, 100, ..., 1000 & 600 \\
    \hline \\[-2.4ex]

     RF & Trees & & 100 \\
     & Max depth & & 4 \\
     & Max features & & $\approx \log_2(K)$ \\
    \\[-2.5ex]
    \hline \hline \\[-1.8ex]
\end{tabular}}
    \vspace{4mm}
    \caption{\textbf{Hyperparameters.} This table shows the hyperparameter search values and selected values from the four-fold cross-validation design. }
    \label{tab:hypers}
\end{table}

\begin{table}[!t] \centering
    \resizebox{1.\textwidth}{!}{\begin{tabular}{@{\extracolsep{5pt}}llccccccccccccc}
\\[-1.8ex]\hline
\hline \\[-1.8ex]
& & \multicolumn{13}{c}{\textit{Average Elasticity}} \
\cr \cline{3-15}
\\[-1.8ex] Elasticity & Wins. & \multicolumn{1}{c}{BPZ$_F$} & \multicolumn{1}{c}{BPZ$_L$} & \multicolumn{1}{c}{BSV} & \multicolumn{1}{c}{CRW} & \multicolumn{1}{c}{DGU} & \multicolumn{1}{c}{FF3} & \multicolumn{1}{c}{FF6} & \multicolumn{1}{c}{GKX} & \multicolumn{1}{c}{HXZ} & \multicolumn{1}{c}{KNS} & \multicolumn{1}{c}{KPS} & \multicolumn{1}{c}{NN} & \multicolumn{1}{c}{RF}  \\
\\[-1.8ex] Weighting & & (1) & (2) & (3) & (4) & (5) & (6) & (7) & (8) & (9) & (10) & (11) & (12) & (13) \\
\\[-1.8ex] \hline \\[-1.4ex]

 Equal & None & 11.070$^{***}$ & 48.596$^{***}$ & 25.324$^{**}$ & 16.628$^{***}$ & 39.299$^{**}$ & 14.829$^{***}$ & 7.274$^{***}$ & 65.984$^{***}$ & 1.318$^{***}$ & 26.389$^{***}$ & 26.471$^{***}$ & 9.020$^{***}$ & 3.155$^{***}$ \\
& & (0.490) & (5.094) & (10.031) & (4.153) & (16.257) & (2.807) & (0.895) & (13.268) & (0.134) & (3.791) & (5.180) & (1.077) & (0.208) \\
 & $(1^{st}, 99^{th})$ & 6.313$^{***}$ & 19.630$^{***}$ & 5.709$^{***}$ & 4.806$^{***}$ & 7.455$^{***}$ & 5.405$^{***}$ & 3.303$^{***}$ & 25.786$^{***}$ & 1.073$^{***}$ & 10.210$^{***}$ & 9.114$^{***}$ & 3.973$^{***}$ & 2.243$^{***}$ \\
& & (0.353) & (0.367) & (0.128) & (0.166) & (0.140) & (0.069) & (0.042) & (0.367) & (0.014) & (0.168) & (1.156) & (0.219) & (0.048) \\
 & $(5^{th}, 95^{th})$ & 2.003$^{***}$ & 13.916$^{***}$ & 4.255$^{***}$ & 3.642$^{***}$ & 5.442$^{***}$ & 4.064$^{***}$ & 2.610$^{***}$ & 18.137$^{***}$ & 1.061$^{***}$ & 7.400$^{***}$ & 6.394$^{***}$ & 3.083$^{***}$ & 2.016$^{***}$ \\
& & (0.112) & (0.247) & (0.086) & (0.114) & (0.092) & (0.046) & (0.029) & (0.245) & (0.011) & (0.114) & (0.786) & (0.151) & (0.040) \\

\hline \\[-1.8ex]

 Portfolio & None & 0.285$^{***}$ & 6.428$^{***}$ & 2.275$^{***}$ & 2.034$^{***}$ & 3.209$^{***}$ & 1.911$^{***}$ & 1.553$^{***}$ & 7.394$^{***}$ & 0.843$^{***}$ & 3.630$^{***}$ & 2.951$^{***}$ & 2.496$^{***}$ & 2.185$^{***}$ \\
& & (0.094) & (0.190) & (0.093) & (0.057) & (0.437) & (0.022) & (0.017) & (0.552) & (0.012) & (0.049) & (0.504) & (0.595) & (0.037) \\
 & $(1^{st}, 99^{th})$ & -0.364$^{***}$ & 6.252$^{***}$ & 2.358$^{***}$ & 2.019$^{***}$ & 2.754$^{***}$ & 1.915$^{***}$ & 1.556$^{***}$ & 7.908$^{***}$ & 0.846$^{***}$ & 3.609$^{***}$ & 2.552$^{***}$ & 1.903$^{***}$ & 1.998$^{***}$ \\
& & (0.098) & (0.100) & (0.035) & (0.047) & (0.035) & (0.022) & (0.016) & (0.096) & (0.012) & (0.045) & (0.290) & (0.069) & (0.032) \\
 & $(5^{th}, 95^{th})$ & -1.233$^{***}$ & 6.252$^{***}$ & 2.354$^{***}$ & 2.019$^{***}$ & 2.753$^{***}$ & 1.969$^{***}$ & 1.561$^{***}$ & 7.923$^{***}$ & 0.866$^{***}$ & 3.600$^{***}$ & 2.545$^{***}$ & 1.893$^{***}$ & 1.864$^{***}$ \\
& & (0.082) & (0.100) & (0.035) & (0.047) & (0.034) & (0.021) & (0.016) & (0.096) & (0.011) & (0.045) & (0.290) & (0.069) & (0.032) \\

\hline \\[-1.8ex]

 Value & None & 12.781$^{***}$ & 36.835$^{***}$ & 7.105$^{***}$ & 17.635$^{**}$ & 25.920$^{**}$ & 3.003$^{***}$ & 0.290$^{}$ & 76.991$^{***}$ & -1.208$^{***}$ & 13.109$^{***}$ & 27.588$^{***}$ & 13.516$^{***}$ & 2.101$^{***}$ \\
& & (1.291) & (3.281) & (0.886) & (8.063) & (11.499) & (0.704) & (0.594) & (25.222) & (0.114) & (1.301) & (4.917) & (2.954) & (0.142) \\
 & $(1^{st}, 99^{th})$ & 3.010$^{***}$ & 12.925$^{***}$ & 3.006$^{***}$ & 3.638$^{***}$ & 3.871$^{***}$ & 0.986$^{***}$ & 0.756$^{***}$ & 19.438$^{***}$ & -0.276$^{***}$ & 5.031$^{***}$ & 8.363$^{***}$ & 3.694$^{***}$ & 1.525$^{***}$ \\
& & (0.218) & (0.589) & (0.147) & (0.198) & (0.174) & (0.037) & (0.025) & (0.886) & (0.022) & (0.214) & (1.098) & (0.306) & (0.071) \\
 & $(5^{th}, 95^{th})$ & 0.142$^{**}$ & 7.168$^{***}$ & 1.833$^{***}$ & 1.964$^{***}$ & 2.203$^{***}$ & 0.858$^{***}$ & 0.706$^{***}$ & 9.433$^{***}$ & -0.004$^{}$ & 3.253$^{***}$ & 3.510$^{***}$ & 2.163$^{***}$ & 0.955$^{***}$ \\
& & (0.059) & (0.266) & (0.071) & (0.085) & (0.087) & (0.026) & (0.021) & (0.311) & (0.007) & (0.117) & (0.476) & (0.152) & (0.033) \\

\hline
\hline \\[-1.8ex]

\textit{Note:} & \multicolumn{14}{r}{$^{*}$p$<$0.1; $^{**}$p$<$0.05; $^{***}$p$<$0.01} \\
\end{tabular}

}
    \vspace{4mm}
    \caption{\textbf{Statistical Arbitrageur Price Elasticity.}
    This is similar to Table \ref{tab:statistical arbitrageur_elasticity}, except equal weighting and winsorization at the 1$^{st}$ and 99$^{th}$ percentiles are shown. }
    \label{tab:statistical arbitrageur_elasticity_full}
\end{table}

\begin{table}[!t] \centering
    \resizebox{0.3\textwidth}{!}{\begin{tabular}{@{\extracolsep{5pt}}lcc}
\\[-1.8ex]\hline
\hline \\[-1.8ex]
& \multicolumn{2}{c}{\textit{NN Decomposition}} \
\cr \cline{2-3}
\\[-1.8ex] & Elasticity & Mean Component \\
\\[-1.8ex] & (1) & (2) \\
\hline \\[-1.8ex]
 NN & 2.163$^{***}$ & 0.750$^{***}$ \\
& (0.152) & (0.095) \\
\hline
\hline \\[-1.8ex]
\textit{Note:} & \multicolumn{2}{r}{$^{*}$p$<$0.1; $^{**}$p$<$0.05; $^{***}$p$<$0.01} \\
\end{tabular}}
    \vspace{4mm}
    \caption{\textbf{Mean and Covariance Effect Decomposition.} This table shows the value-weighted winsorized (5$^{th}$ and 95$^{th}$ percentiles) average price elasticity of the NN model in the first column, and of the NN model with only mean effects in the second column, during the out-of-sample period from February 1990 to January 2020, inclusive. Below the average values in parentheses, the standard errors, double clustered by month and stock, are shown. Table \ref{tab:models} displays the initialisms of these models. There are 1,253,175 month stock observations these values are calculated from.}
    \label{tab:cov_decomp}
\end{table}



\begin{table}[!t] \centering
    \resizebox{1\textwidth}{!}{\begin{tabular}{@{\extracolsep{5pt}}llccccccccccccc}
\\[-1.8ex]\hline
\hline \\[-1.8ex]
 & & \multicolumn{13}{c}{\textit{Change in Aggregate 
 Counterfactual Elasticity for Individual Stocks with Different Statistical Arbitrageurs}} \
\cr \cline{3-15}
\\[-1.8ex] Wins & AF $\theta_t$ & \multicolumn{1}{c}{BPZ$_F$} & \multicolumn{1}{c}{BPZ$_L$} & \multicolumn{1}{c}{BSV} & \multicolumn{1}{c}{CRW} & \multicolumn{1}{c}{DGU} & \multicolumn{1}{c}{FF3} & \multicolumn{1}{c}{FF6} & \multicolumn{1}{c}{GKX} & \multicolumn{1}{c}{HXZ} & \multicolumn{1}{c}{KNS} & \multicolumn{1}{c}{KPS} & \multicolumn{1}{c}{NN} & \multicolumn{1}{c}{RF}  \\
\\[-1.8ex] & & (1) & (2) & (3) & (4) & (5) & (6) & (7) & (8) & (9) & (10) & (11) & (12) & (13) \\
\hline \\[-1.8ex]

None & 0.0001 &   0.000 & 0.001 & 0.002 & 0.000 & 0.000 & 0.000 & 0.000 & 0.000 & 0.000 & 0.001 & 0.000 & 0.000 & 0.000 \\
& 0.0005 &   0.000 & 0.001 & 0.002 & 0.000 & 0.002 & 0.000 & 0.000 & 0.000 & 0.000 & 0.000 & 0.000 & 0.000 & 0.000 \\
& 0.001 &   0.000 & 0.001 & 0.004 & 0.000 & 0.004 & 0.000 & 0.000 & 0.001 & 0.000 & 0.000 & 0.000 & 0.000 & 0.000 \\
& 0.005 &   0.001 & 0.003 & 0.009 & 0.001 & 0.009 & -0.003 & -0.002 & 0.003 & -0.003 & -0.002 & 0.001 & 0.001 & 0.001 \\
& 0.01 &   0.002 & 0.006 & 0.013 & 0.002 & 0.012 & -0.006 & -0.004 & 0.004 & -0.006 & -0.005 & 0.002 & 0.002 & 0.002 \\
& 0.05 &   0.009 & 0.03 & 0.025 & 0.009 & 0.025 & -0.03 & -0.023 & 0.019 & -0.029 & -0.023 & 0.009 & 0.009 & 0.009 \\
& 0.1 &   0.018 & 0.05 & -0.004 & 0.018 & 0.037 & -0.06 & -0.044 & 0.025 & -0.059 & -0.042 & 0.019 & 0.02 & 0.018 \\
& 0.25 &   0.046 & 0.122 & -0.018 & 0.046 & 0.069 & -0.165 & -0.11 & 0.078 & -0.147 & -0.148 & 0.047 & 0.046 & 0.046 \\

\hline \\[-1.8ex]

($5^{th}$, $95^{th}$) & 0.0001 &   0.000 & 0.001 & 0.001 & 0.000 & 0.000 & 0.000 & 0.000 & 0.000 & 0.000 & 0.001 & 0.000 & 0.000 & 0.000 \\
& 0.0005 &   0.000 & 0.001 & 0.002 & 0.000 & 0.002 & 0.000 & 0.000 & 0.000 & 0.000 & 0.000 & 0.000 & 0.000 & 0.000 \\
& 0.001 &   0.000 & 0.001 & 0.004 & 0.000 & 0.003 & 0.000 & 0.000 & 0.001 & 0.000 & 0.000 & 0.000 & 0.000 & 0.000 \\
& 0.005 &   0.001 & 0.003 & 0.009 & 0.001 & 0.008 & -0.001 & -0.001 & 0.002 & -0.002 & -0.001 & 0.001 & 0.001 & 0.001 \\
& 0.01 &   0.002 & 0.005 & 0.014 & 0.001 & 0.01 & -0.003 & -0.002 & 0.003 & -0.003 & -0.002 & 0.001 & 0.001 & 0.001 \\
& 0.05 &   0.007 & 0.028 & 0.03 & 0.007 & 0.02 & -0.014 & -0.01 & 0.017 & -0.014 & -0.008 & 0.007 & 0.007 & 0.007 \\
& 0.1 &   0.013 & 0.044 & 0.007 & 0.013 & 0.03 & -0.028 & -0.019 & 0.02 & -0.028 & -0.015 & 0.014 & 0.015 & 0.013 \\
& 0.25 &   0.033 & 0.109 & 0.017 & 0.033 & 0.053 & -0.079 & -0.048 & 0.066 & -0.069 & -0.065 & 0.034 & 0.033 & 0.033 \\

\hline
\hline \\[-1.8ex]
\end{tabular}
}
    \vspace{4mm}
    \caption{\textbf{Elasticity of the Market with Statistical Arbitrageurs.} This table shows the value-weighted average (across assets and months) change in price elasticity of the aggregate market for individual stocks, with HSA investors. This is calculated simply as the elasticity in the counterfactual experiment minus the elasticity with no HSA investors. AF (arbitrageur fraction or $\theta_t$) represents the fraction of capital the HSA is granted in the counterfactual experiments. The results are either not winsorized (labeled None) or winsorized at the 5$^{th}$ and 95$^{th}$ percentiles (labeled (5$^{th}$, 95$^{th}$)). Each column represents a different statistical arbitrageur model, where Table \ref{tab:models} describes the initialisms of these models. There are 1,253,175 month stock observations these values are calculated from. The sample period is February 1990 to January 2020, inclusive.}
    \label{tab:mkt_elasticity_simple}
\end{table}

\begin{table}[!t] \centering
    \resizebox{1\textwidth}{!}{\begin{tabular}{@{\extracolsep{5pt}}lccccccccccccc}

\\[-1.8ex]\hline
\hline \\[-1.8ex]
\\[-1.8ex] 
& \multicolumn{13}{c}{Panel A: Log CAPM GRS Ratio, February 2010 - January 2020} \\ \cline{2-14}
\\[-1.8ex] 
\\[-1.8ex] AF ($\theta_t$) & \multicolumn{1}{c}{BPZ$_F$} & \multicolumn{1}{c}{BPZ$_L$} & \multicolumn{1}{c}{BSV} & \multicolumn{1}{c}{CRW} & \multicolumn{1}{c}{DGU} & \multicolumn{1}{c}{FF3} & \multicolumn{1}{c}{FF6} & \multicolumn{1}{c}{GKX} & \multicolumn{1}{c}{HXZ} & \multicolumn{1}{c}{KNS} & \multicolumn{1}{c}{KPS} & \multicolumn{1}{c}{NN} & \multicolumn{1}{c}{RF}  \\
\\[-1.8ex]
\hline \\[-1.8ex]

0.0001 &   $-0.141^{***}$ & $-0.434^{***}$ & $-0.096^{***}$ & $-0.019^{***}$ & $-0.062^{***}$ & $-0.114^{***}$ & $-0.247^{***}$ & $0.001^{***}$ & $-0.059^{***}$ & $0.066^{***}$ & $-0.003^{***}$ & $0.026^{***}$ & $-0.087^{***}$\\
0.0005 &   $-0.002^{***}$ & $0.165^{***}$ & $-0.09^{***}$ & $0.005^{***}$ & $-0.055^{***}$ & $-0.001^{***}$ & $-0.097^{***}$ & $0.000^{***}$ & $-0.054^{***}$ & $-0.031^{***}$ & $0.011^{***}$ & $0.000^{***}$ & $-0.042^{***}$\\
0.001 &   $0.001^{***}$ & $0.001^{***}$ & $-0.045^{***}$ & $0.000^{***}$ & $0.889^{***}$ & $-0.01^{***}$ & $-0.005^{***}$ & $0.001^{***}$ & $-0.061^{***}$ & $-0.088^{***}$ & $-0.145^{***}$ & $0.001^{***}$ & $-0.04^{***}$\\
0.005 &   $0.004^{***}$ & $0.004^{***}$ & $0.178^{***}$ & $0.007^{***}$ & $1.986^{***}$ & $0.002^{***}$ & $-0.001^{***}$ & $0.004^{***}$ & $-0.003^{***}$ & $-0.002^{***}$ & $-0.074^{***}$ & $0.004^{***}$ & $0.002^{***}$\\
0.01 &   $0.007^{***}$ & $0.007^{***}$ & $0.058^{***}$ & $0.003^{***}$ & $1.921^{***}$ & $-0.263^{***}$ & $0.001^{***}$ & $0.007^{***}$ & $-0.014^{***}$ & $-0.271^{***}$ & $-0.156^{***}$ & $0.007^{***}$ & $0.007^{***}$\\
0.05 &   $0.034^{***}$ & $0.028^{***}$ & $0.218^{***}$ & $0.007^{***}$ & $2.058^{***}$ & $0.012^{***}$ & $0.01^{***}$ & $0.027^{***}$ & $-0.077^{***}$ & $0.275^{***}$ & $0.088^{***}$ & $0.015^{***}$ & $0.028^{***}$\\
0.1 &   $0.005^{***}$ & $0.006^{***}$ & $0.116^{***}$ & $0.007^{***}$ & $2.516^{***}$ & $0.015^{***}$ & $0.037^{***}$ & $0.007^{***}$ & $-0.081^{***}$ & $0.737^{***}$ & $-0.236^{***}$ & $0.015^{***}$ & $0.006^{***}$\\
0.25 &   $-0.155^{***}$ & $-0.086^{***}$ & $0.325^{***}$ & $-0.085^{***}$ & $2.433^{***}$ & $0.026^{***}$ & $0.007^{***}$ & $-0.073^{***}$ & $0.04^{***}$ & $-0.03^{***}$ & $-0.147^{***}$ & $-0.087^{***}$ & $-0.086^{***}$\\

\hline
\hline \\[-1.8ex]
\\[-1.8ex] 
& \multicolumn{13}{c}{Panel B: Log GRS FF3 Ratio, February 2010 - January 2020} \\ \cline{2-14}
\\[-1.8ex] 
AF ($\theta_t$) & \multicolumn{1}{c}{BPZ$_F$} & \multicolumn{1}{c}{BPZ$_L$} & \multicolumn{1}{c}{BSV} & \multicolumn{1}{c}{CRW} & \multicolumn{1}{c}{DGU} & \multicolumn{1}{c}{FF3} & \multicolumn{1}{c}{FF6} & \multicolumn{1}{c}{GKX} & \multicolumn{1}{c}{HXZ} & \multicolumn{1}{c}{KNS} & \multicolumn{1}{c}{KPS} & \multicolumn{1}{c}{NN} & \multicolumn{1}{c}{RF}  \\
\\[-1.8ex] 
\hline \\[-1.8ex]

0.0001 &   $-0.164^{***}$ & $-0.474^{***}$ & $-0.11^{***}$ & $-0.019^{***}$ & $-0.058^{***}$ & $-0.118^{***}$ & $-0.29^{***}$ & $0.000^{***}$ & $-0.026^{***}$ & $0.054^{***}$ & $-0.003^{***}$ & $0.029^{***}$ & $-0.075^{***}$\\
0.0005 &   $-0.002^{***}$ & $0.162^{***}$ & $-0.079^{***}$ & $0.006^{***}$ & $-0.071^{***}$ & $-0.001^{***}$ & $-0.124^{***}$ & $0.000^{***}$ & $-0.033^{***}$ & $-0.032^{***}$ & $0.015^{***}$ & $0.000^{***}$ & $-0.04^{***}$\\
0.001 &   $0.001^{***}$ & $0.001^{***}$ & $-0.04^{***}$ & $0.000^{***}$ & $1.05^{***}$ & $-0.017^{***}$ & $-0.004^{***}$ & $0.001^{***}$ & $-0.091^{***}$ & $-0.086^{***}$ & $-0.144^{***}$ & $0.001^{***}$ & $-0.027^{***}$\\
0.005 &   $0.005^{***}$ & $0.005^{***}$ & $0.172^{***}$ & $0.008^{***}$ & $1.766^{***}$ & $0.001^{***}$ & $0.019^{***}$ & $0.006^{***}$ & $15.499^{***}$ & $-0.01^{***}$ & $-0.102^{***}$ & $0.006^{***}$ & $0.004^{***}$\\
0.01 &   $0.01^{***}$ & $0.01^{***}$ & $0.067^{***}$ & $0.005^{***}$ & $1.542^{***}$ & $-0.293^{***}$ & $0.008^{***}$ & $0.01^{***}$ & $-0.007^{***}$ & $-0.297^{***}$ & $-0.184^{***}$ & $0.009^{***}$ & $0.01^{***}$\\
0.05 &   $0.042^{***}$ & $0.036^{***}$ & $0.267^{***}$ & $0.015^{***}$ & $1.152^{***}$ & $0.036^{***}$ & $0.018^{***}$ & $0.035^{***}$ & $-0.082^{***}$ & $0.292^{***}$ & $0.093^{***}$ & $0.024^{***}$ & $0.036^{***}$\\
0.1 &   $0.016^{***}$ & $0.017^{***}$ & $0.126^{***}$ & $0.018^{***}$ & $1.836^{***}$ & $0.019^{***}$ & $0.041^{***}$ & $0.018^{***}$ & $-0.111^{***}$ & $0.713^{***}$ & $-0.264^{***}$ & $0.026^{***}$ & $0.017^{***}$\\
0.25 &   $-0.134^{***}$ & $-0.061^{***}$ & $0.449^{***}$ & $-0.059^{***}$ & $1.508^{***}$ & $0.029^{***}$ & $0.013^{***}$ & $-0.046^{***}$ & $0.044^{***}$ & $-0.025^{***}$ & $-0.126^{***}$ & $-0.061^{***}$ & $-0.061^{***}$\\

\hline
\hline \\[-1.8ex]
\\[-1.8ex] 
& \multicolumn{13}{c}{Panel C: Log GRS FF6 Ratio, February 2010 - January 2020} \\ \cline{2-14}
\\[-1.8ex] 
AF ($\theta_t$) & \multicolumn{1}{c}{BPZ$_F$} & \multicolumn{1}{c}{BPZ$_L$} & \multicolumn{1}{c}{BSV} & \multicolumn{1}{c}{CRW} & \multicolumn{1}{c}{DGU} & \multicolumn{1}{c}{FF3} & \multicolumn{1}{c}{FF6} & \multicolumn{1}{c}{GKX} & \multicolumn{1}{c}{HXZ} & \multicolumn{1}{c}{KNS} & \multicolumn{1}{c}{KPS} & \multicolumn{1}{c}{NN} & \multicolumn{1}{c}{RF}  \\
\\[-1.8ex] 
\hline \\[-1.8ex]

0.0001 &   $-0.223^{***}$ & $-0.42^{***}$ & $-0.154^{***}$ & $-0.02^{***}$ & $0.05^{***}$ & $-0.171^{***}$ & $-0.289^{***}$ & $-0.01^{***}$ & $-0.119^{***}$ & $0.043^{***}$ & $-0.009^{***}$ & $0.032^{***}$ & $-0.04^{***}$\\
0.0005 &   $-0.002^{***}$ & $0.134^{***}$ & $-0.124^{***}$ & $0.004^{***}$ & $-0.001^{***}$ & $-0.002^{***}$ & $46.05^{***}$ & $0.000^{***}$ & $-0.148^{***}$ & $-0.026^{***}$ & $-0.015^{***}$ & $0.000^{***}$ & $-0.029^{***}$\\
0.001 &   $0.002^{***}$ & $0.001^{***}$ & $-0.047^{***}$ & $0.002^{***}$ & $0.983^{***}$ & $-0.012^{***}$ & $-0.01^{***}$ & $0.001^{***}$ & $-0.18^{***}$ & $-0.059^{***}$ & $-0.204^{***}$ & $0.001^{***}$ & $-0.058^{***}$\\
0.005 &   $0.005^{***}$ & $0.005^{***}$ & $0.152^{***}$ & $0.009^{***}$ & $1.237^{***}$ & $0.006^{***}$ & $-0.019^{***}$ & $0.006^{***}$ & $-0.034^{***}$ & $-0.009^{***}$ & $-0.141^{***}$ & $0.006^{***}$ & $0.004^{***}$\\
0.01 &   $0.012^{***}$ & $0.011^{***}$ & $0.069^{***}$ & $0.002^{***}$ & $1.116^{***}$ & $-0.315^{***}$ & $-0.007^{***}$ & $0.011^{***}$ & $-0.003^{***}$ & $-0.447^{***}$ & $-0.308^{***}$ & $0.01^{***}$ & $0.011^{***}$\\
0.05 &   $0.067^{***}$ & $0.058^{***}$ & $0.189^{***}$ & $0.025^{***}$ & $0.698^{***}$ & $-0.007^{***}$ & $0.006^{***}$ & $0.056^{***}$ & $-0.1^{***}$ & $0.114^{***}$ & $0.025^{***}$ & $0.034^{***}$ & $0.058^{***}$\\
0.1 &   $0.057^{***}$ & $0.061^{***}$ & $0.094^{***}$ & $0.066^{***}$ & $1.161^{***}$ & $0.019^{***}$ & $0.069^{***}$ & $0.062^{***}$ & $-0.191^{***}$ & $0.586^{***}$ & $-0.291^{***}$ & $0.079^{***}$ & $0.061^{***}$\\
0.25 &   $-0.202^{***}$ & $-0.109^{***}$ & $0.33^{***}$ & $-0.107^{***}$ & $1.002^{***}$ & $0.053^{***}$ & $0.07^{***}$ & $-0.091^{***}$ & $-0.105^{***}$ & $-0.007^{***}$ & $-0.198^{***}$ & $-0.11^{***}$ & $-0.109^{***}$\\

\hline
\hline \\[-1.8ex]
\\[-1.8ex] 
& \multicolumn{13}{c}{Panel D: Log GRS HXZ Ratio, February 2010 - January 2020} \\ \cline{2-14}
\\[-1.8ex] 
AF ($\theta_t$) & \multicolumn{1}{c}{BPZ$_F$} & \multicolumn{1}{c}{BPZ$_L$} & \multicolumn{1}{c}{BSV} & \multicolumn{1}{c}{CRW} & \multicolumn{1}{c}{DGU} & \multicolumn{1}{c}{FF3} & \multicolumn{1}{c}{FF6} & \multicolumn{1}{c}{GKX} & \multicolumn{1}{c}{HXZ} & \multicolumn{1}{c}{KNS} & \multicolumn{1}{c}{KPS} & \multicolumn{1}{c}{NN} & \multicolumn{1}{c}{RF}  \\
\\[-1.8ex] 
\hline \\[-1.8ex]

0.0001 &   $-0.195^{***}$ & $-0.458^{***}$ & $-0.081^{***}$ & $-0.021^{***}$ & $-0.014^{***}$ & $-0.159^{***}$ & $-0.265^{***}$ & $0.003^{***}$ & $0.077^{***}$ & $0.078^{***}$ & $-0.009^{***}$ & $0.016^{***}$ & $-0.041^{***}$\\
0.0005 &   $-0.004^{***}$ & $0.126^{***}$ & $-0.171^{***}$ & $0.006^{***}$ & $0.039^{***}$ & $-0.001^{***}$ & $-0.094^{***}$ & $-0.001^{***}$ & $0.078^{***}$ & $0.017^{***}$ & $0.011^{***}$ & $0.000^{***}$ & $-0.035^{***}$\\
0.001 &   $0.001^{***}$ & $0.000^{***}$ & $-0.049^{***}$ & $0.001^{***}$ & $1.074^{***}$ & $-0.007^{***}$ & $0.01^{***}$ & $0.000^{***}$ & $-0.056^{***}$ & $-0.039^{***}$ & $-0.12^{***}$ & $0.000^{***}$ & $-0.019^{***}$\\
0.005 &   $0.000^{***}$ & $0.000^{***}$ & $0.133^{***}$ & $0.003^{***}$ & $1.668^{***}$ & $0.003^{***}$ & $-0.007^{***}$ & $0.002^{***}$ & $0.009^{***}$ & $-0.079^{***}$ & $-0.109^{***}$ & $0.001^{***}$ & $-0.002^{***}$\\
0.01 &   $0.001^{***}$ & $0.000^{***}$ & $0.059^{***}$ & $-0.006^{***}$ & $1.526^{***}$ & $-0.265^{***}$ & $-0.004^{***}$ & $0.000^{***}$ & $-0.025^{***}$ & $-0.181^{***}$ & $-0.181^{***}$ & $0.000^{***}$ & $0.000^{***}$\\
0.05 &   $0.009^{***}$ & $0.001^{***}$ & $0.265^{***}$ & $-0.022^{***}$ & $1.105^{***}$ & $0.007^{***}$ & $0.001^{***}$ & $-0.001^{***}$ & $0.032^{***}$ & $0.227^{***}$ & $0.067^{***}$ & $-0.008^{***}$ & $0.001^{***}$\\
0.1 &   $-0.024^{***}$ & $-0.019^{***}$ & $0.122^{***}$ & $-0.017^{***}$ & $1.883^{***}$ & $0.006^{***}$ & $0.009^{***}$ & $-0.019^{***}$ & $0.038^{***}$ & $0.894^{***}$ & $-0.251^{***}$ & $-0.008^{***}$ & $-0.019^{***}$\\
0.25 &   $-0.23^{***}$ & $-0.145^{***}$ & $0.415^{***}$ & $-0.143^{***}$ & $1.486^{***}$ & $0.006^{***}$ & $-0.014^{***}$ & $-0.132^{***}$ & $0.065^{***}$ & $-0.025^{***}$ & $-0.165^{***}$ & $-0.146^{***}$ & $-0.145^{***}$\\

\hline
\hline 
 & \multicolumn{13}{r}{\textit{Note:} $^{*}$p$<$0.1; $^{**}$p$<$0.05; $^{***}$p$<$0.01} \\

\end{tabular}
}
    \vspace{4mm}
    \caption{\textbf{Simple Experiments GRS.} This table shows the log ratio of the GRS test $F$ statistics of the counterfactual experiment anomaly portfolio outcomes to the baseline GRS test statistic. Thus, positive values imply that alpha across anomaly portfolios increased relative to baseline, while negative values implies that alpha decreased. Small values (close to zero) are approximately percentage changes. AF ($\theta_t$) represents the arbitrageur fraction used in the counterfactual experiments. Each column represents a different statistical arbitrageur model, where Table \ref{tab:models} describes the initialisms of these models. The GRS statistics are calculated using the last decade of the sample after the statistical arbitrageur has participated in the market for decades: February 2010 - January 2020. The panels correspond to different right-hand side factors, and the remaining portfolios are the test assets. Panel A represents results with the CAPM right-hand side factors, Panel B represents results with the \cite{ff3} right-hand side factors, Panel C represents results with the \cite{famafrench15} right-hand side factors in addition to momentum, and Panel D represents results with the \cite{zhang} right-hand side factors. Stars are given to represent whether the alpha remaining counterfactually is different from zero (alpha remained statistically significant), {not} whether baseline and counterfactual alpha are statistically different from one another. The critical values for the log GRS ratio for the CAPM model $-1.984$, $-1.905$, and $-1.759$ for 10\%, 5\%, and 1\% significance respectively. The critical values for the other models are relatively similar. }
    \label{tab:grs_baseline}
\end{table}

\begin{table}[!t] \centering
    \resizebox{0.9\textwidth}{!}{\begin{tabular}{@{\extracolsep{5pt}}lccccccccccccc}

\\[-1.8ex]\hline
\hline \\[-1.8ex]
\\[-1.8ex] 
& \multicolumn{13}{c}{Panel A: Log CAPM GRS Ratio, February 2000 - January 2010} \\ \cline{2-14}
\\[-1.8ex] 
\\[-1.8ex] AF ($\theta_t$) & \multicolumn{1}{c}{BPZ$_F$} & \multicolumn{1}{c}{BPZ$_L$} & \multicolumn{1}{c}{BSV} & \multicolumn{1}{c}{CRW} & \multicolumn{1}{c}{DGU} & \multicolumn{1}{c}{FF3} & \multicolumn{1}{c}{FF6} & \multicolumn{1}{c}{GKX} & \multicolumn{1}{c}{HXZ} & \multicolumn{1}{c}{KNS} & \multicolumn{1}{c}{KPS} & \multicolumn{1}{c}{NN} & \multicolumn{1}{c}{RF}  \\
\\[-1.8ex]
\hline \\[-1.8ex]

0.0001 &   $-0.082^{***}$ & $-0.3^{***}$ & $-0.431^{***}$ & $-0.001^{***}$ & $0.011^{***}$ & $-0.109^{***}$ & $-0.136^{***}$ & $0.001^{***}$ & $-0.118^{***}$ & $-0.033^{***}$ & $-0.007^{***}$ & $0.009^{***}$ & $-0.036^{***}$\\
0.0005 &   $0.000^{***}$ & $0.168^{***}$ & $-0.024^{***}$ & $-0.001^{***}$ & $0.523^{***}$ & $-0.001^{***}$ & $0.209^{***}$ & $0.000^{***}$ & $-0.198^{***}$ & $-0.013^{***}$ & $-0.035^{***}$ & $-0.004^{***}$ & $-0.026^{***}$\\
0.001 &   $-0.001^{***}$ & $-0.001^{***}$ & $0.003^{***}$ & $-0.001^{***}$ & $0.827^{***}$ & $-0.014^{***}$ & $0.02^{***}$ & $-0.001^{***}$ & $-0.224^{***}$ & $0.009^{***}$ & $-0.018^{***}$ & $-0.001^{***}$ & $-0.018^{***}$\\
0.005 &   $-0.004^{***}$ & $0.39^{***}$ & $-0.002^{***}$ & $-0.004^{***}$ & $1.146^{***}$ & $-0.004^{***}$ & $0.035^{***}$ & $-0.005^{***}$ & $-0.126^{***}$ & $-0.063^{***}$ & $0.199^{***}$ & $-0.004^{***}$ & $-0.001^{***}$\\
0.01 &   $-0.008^{***}$ & $-0.008^{***}$ & $-0.006^{***}$ & $-0.011^{***}$ & $1.79^{***}$ & $-0.041^{***}$ & $0.012^{***}$ & $-0.008^{***}$ & $-0.005^{***}$ & $-0.086^{***}$ & $0.205^{***}$ & $-0.008^{***}$ & $-0.008^{***}$\\
0.05 &   $-0.045^{***}$ & $-0.046^{***}$ & $0.007^{***}$ & $-0.049^{***}$ & $1.761^{***}$ & $0.04^{***}$ & $0.003^{***}$ & $-0.05^{***}$ & $-0.113^{***}$ & $-0.04^{***}$ & $0.1^{***}$ & $-0.042^{***}$ & $-0.046^{***}$\\
0.1 &   $-0.059^{***}$ & $-1.643^{***}$ & $0.146^{***}$ & $-0.06^{***}$ & $2.041^{***}$ & $-0.009^{***}$ & $-0.046^{***}$ & $-0.059^{***}$ & $-0.119^{***}$ & $0.657^{***}$ & $0.148^{***}$ & $-0.054^{***}$ & $-0.059^{***}$\\
0.25 &   $-0.058^{***}$ & $-0.064^{***}$ & $0.077^{***}$ & $-0.064^{***}$ & $2.11^{***}$ & $-0.039^{***}$ & $-0.072^{***}$ & $-0.064^{***}$ & $-0.151^{***}$ & $-0.016^{***}$ & $-0.072^{***}$ & $-0.064^{***}$ & $-0.064^{***}$\\

\hline
\hline \\[-1.8ex]
\\[-1.8ex] 
& \multicolumn{13}{c}{Panel B: Log GRS FF3 Ratio, February 2000 - January 2010} \\ \cline{2-14}
\\[-1.8ex] 
AF ($\theta_t$) & \multicolumn{1}{c}{BPZ$_F$} & \multicolumn{1}{c}{BPZ$_L$} & \multicolumn{1}{c}{BSV} & \multicolumn{1}{c}{CRW} & \multicolumn{1}{c}{DGU} & \multicolumn{1}{c}{FF3} & \multicolumn{1}{c}{FF6} & \multicolumn{1}{c}{GKX} & \multicolumn{1}{c}{HXZ} & \multicolumn{1}{c}{KNS} & \multicolumn{1}{c}{KPS} & \multicolumn{1}{c}{NN} & \multicolumn{1}{c}{RF}  \\
\\[-1.8ex] 
\hline \\[-1.8ex]

0.0001 &   $-0.061^{***}$ & $14.05^{***}$ & $-0.415^{***}$ & $-0.002^{***}$ & $0.017^{***}$ & $-0.117^{***}$ & $-0.162^{***}$ & $0.001^{***}$ & $-0.137^{***}$ & $10.628^{***}$ & $-0.009^{***}$ & $0.009^{***}$ & $-0.044^{***}$\\
0.0005 &   $0.000^{***}$ & $-0.041^{***}$ & $-0.022^{***}$ & $0.000^{***}$ & $0.627^{***}$ & $0.000^{***}$ & $0.178^{***}$ & $0.000^{***}$ & $-0.195^{***}$ & $-0.007^{***}$ & $-0.015^{***}$ & $-0.004^{***}$ & $-0.025^{***}$\\
0.001 &   $0.000^{***}$ & $0.000^{***}$ & $0.008^{***}$ & $-0.001^{***}$ & $0.942^{***}$ & $-0.014^{***}$ & $0.024^{***}$ & $-0.001^{***}$ & $-0.237^{***}$ & $0.005^{***}$ & $-0.056^{***}$ & $0.000^{***}$ & $-0.057^{***}$\\
0.005 &   $-0.002^{***}$ & $-0.062^{***}$ & $0.000^{***}$ & $-0.002^{***}$ & $0.983^{***}$ & $-0.005^{***}$ & $0.042^{***}$ & $-0.002^{***}$ & $-0.127^{***}$ & $-0.071^{***}$ & $0.175^{***}$ & $-0.002^{***}$ & $0.002^{***}$\\
0.01 &   $-0.004^{***}$ & $-0.004^{***}$ & $-0.004^{***}$ & $-0.007^{***}$ & $1.688^{***}$ & $-0.043^{***}$ & $0.015^{***}$ & $-0.004^{***}$ & $-0.015^{***}$ & $-0.088^{***}$ & $0.226^{***}$ & $-0.004^{***}$ & $-0.004^{***}$\\
0.05 &   $-0.034^{***}$ & $-0.034^{***}$ & $0.016^{***}$ & $-0.038^{***}$ & $1.302^{***}$ & $0.051^{***}$ & $0.013^{***}$ & $-0.037^{***}$ & $-0.135^{***}$ & $-0.047^{***}$ & $0.099^{***}$ & $-0.028^{***}$ & $-0.034^{***}$\\
0.1 &   $-0.055^{***}$ & $67.911^{***}$ & $0.196^{***}$ & $-0.055^{***}$ & $1.619^{***}$ & $0.000^{***}$ & $-0.038^{***}$ & $-0.055^{***}$ & $-0.145^{***}$ & $0.695^{***}$ & $0.103^{***}$ & $-0.051^{***}$ & $-0.055^{***}$\\
0.25 &   $-0.075^{***}$ & $-0.082^{***}$ & $0.07^{***}$ & $-0.083^{***}$ & $1.83^{***}$ & $-0.027^{***}$ & $-0.063^{***}$ & $-0.082^{***}$ & $-0.16^{***}$ & $-0.002^{***}$ & $-0.092^{***}$ & $-0.083^{***}$ & $-0.082^{***}$\\

\hline
\hline \\[-1.8ex]
\\[-1.8ex] 
& \multicolumn{13}{c}{Panel C: Log GRS FF6 Ratio, February 2000 - January 2010} \\ \cline{2-14}
\\[-1.8ex] 
AF ($\theta_t$) & \multicolumn{1}{c}{BPZ$_F$} & \multicolumn{1}{c}{BPZ$_L$} & \multicolumn{1}{c}{BSV} & \multicolumn{1}{c}{CRW} & \multicolumn{1}{c}{DGU} & \multicolumn{1}{c}{FF3} & \multicolumn{1}{c}{FF6} & \multicolumn{1}{c}{GKX} & \multicolumn{1}{c}{HXZ} & \multicolumn{1}{c}{KNS} & \multicolumn{1}{c}{KPS} & \multicolumn{1}{c}{NN} & \multicolumn{1}{c}{RF}  \\
\\[-1.8ex] 
\hline \\[-1.8ex]

0.0001 &   $-0.091^{***}$ & $-0.351^{***}$ & $-0.452^{***}$ & $-0.001^{***}$ & $0.086^{***}$ & $-0.024^{***}$ & $-0.139^{***}$ & $-0.002^{***}$ & $-0.068^{***}$ & $-0.023^{***}$ & $0.002^{***}$ & $0.003^{***}$ & $-0.033^{***}$\\
0.0005 &   $0.000^{***}$ & $-0.028^{***}$ & $-0.03^{***}$ & $-0.001^{***}$ & $0.699^{***}$ & $0.001^{***}$ & $0.314^{***}$ & $0.000^{***}$ & $-0.127^{***}$ & $0.007^{***}$ & $0.03^{***}$ & $0.001^{***}$ & $-0.015^{***}$\\
0.001 &   $0.001^{***}$ & $0.001^{***}$ & $0.001^{***}$ & $0.001^{***}$ & $0.985^{***}$ & $0.011^{***}$ & $0.032^{***}$ & $0.000^{***}$ & $-0.213^{***}$ & $0.018^{***}$ & $-0.078^{***}$ & $0.000^{***}$ & $-0.139^{***}$\\
0.005 &   $0.002^{***}$ & $0.074^{***}$ & $-0.031^{***}$ & $0.002^{***}$ & $0.978^{***}$ & $-0.003^{***}$ & $0.031^{***}$ & $0.002^{***}$ & $-0.03^{***}$ & $-0.049^{***}$ & $0.152^{***}$ & $0.000^{***}$ & $0.006^{***}$\\
0.01 &   $0.003^{***}$ & $0.003^{***}$ & $0.013^{***}$ & $0.000^{***}$ & $1.783^{***}$ & $-0.029^{***}$ & $0.013^{***}$ & $0.003^{***}$ & $-0.052^{***}$ & $-0.06^{***}$ & $0.11^{***}$ & $0.003^{***}$ & $0.003^{***}$\\
0.05 &   $-0.025^{***}$ & $-0.026^{***}$ & $0.048^{***}$ & $-0.029^{***}$ & $1.259^{***}$ & $0.044^{***}$ & $0.015^{***}$ & $-0.03^{***}$ & $-0.08^{***}$ & $0.01^{***}$ & $0.145^{***}$ & $-0.022^{***}$ & $-0.026^{***}$\\
0.1 &   $-0.037^{***}$ & $0.074^{***}$ & $0.25^{***}$ & $-0.036^{***}$ & $1.659^{***}$ & $0.004^{***}$ & $-0.029^{***}$ & $-0.037^{***}$ & $-0.122^{***}$ & $0.683^{***}$ & $0.112^{***}$ & $-0.036^{***}$ & $-0.036^{***}$\\
0.25 &   $0.011^{***}$ & $-0.001^{***}$ & $0.059^{***}$ & $-0.002^{***}$ & $1.644^{***}$ & $-0.022^{***}$ & $-0.051^{***}$ & $-0.002^{***}$ & $-0.168^{***}$ & $-0.034^{***}$ & $-0.023^{***}$ & $-0.004^{***}$ & $-0.001^{***}$\\

\hline
\hline \\[-1.8ex]
\\[-1.8ex] 
& \multicolumn{13}{c}{Panel D: Log GRS HXZ Ratio, February 2000 - January 2010} \\ \cline{2-14}
\\[-1.8ex] 
AF ($\theta_t$) & \multicolumn{1}{c}{BPZ$_F$} & \multicolumn{1}{c}{BPZ$_L$} & \multicolumn{1}{c}{BSV} & \multicolumn{1}{c}{CRW} & \multicolumn{1}{c}{DGU} & \multicolumn{1}{c}{FF3} & \multicolumn{1}{c}{FF6} & \multicolumn{1}{c}{GKX} & \multicolumn{1}{c}{HXZ} & \multicolumn{1}{c}{KNS} & \multicolumn{1}{c}{KPS} & \multicolumn{1}{c}{NN} & \multicolumn{1}{c}{RF}  \\
\\[-1.8ex] 
\hline \\[-1.8ex]

0.0001 &   $-0.148^{***}$ & $-0.422^{***}$ & $-0.468^{***}$ & $-0.001^{***}$ & $0.049^{***}$ & $-0.045^{***}$ & $-0.194^{***}$ & $-0.003^{***}$ & $-0.142^{***}$ & $-0.1^{***}$ & $-0.003^{***}$ & $0.005^{***}$ & $-0.023^{***}$\\
0.0005 &   $0.000^{***}$ & $-0.047^{***}$ & $-0.016^{***}$ & $0.000^{***}$ & $0.545^{***}$ & $0.001^{***}$ & $0.099^{***}$ & $0.000^{***}$ & $-0.284^{***}$ & $-0.001^{***}$ & $0.001^{***}$ & $-0.002^{***}$ & $-0.037^{***}$\\
0.001 &   $0.001^{***}$ & $0.001^{***}$ & $0.005^{***}$ & $0.001^{***}$ & $0.917^{***}$ & $0.005^{***}$ & $0.029^{***}$ & $0.000^{***}$ & $-0.357^{***}$ & $0.014^{***}$ & $-0.074^{***}$ & $0.001^{***}$ & $-0.185^{***}$\\
0.005 &   $0.003^{***}$ & $49.007^{***}$ & $-0.045^{***}$ & $0.003^{***}$ & $0.912^{***}$ & $-0.003^{***}$ & $0.028^{***}$ & $0.003^{***}$ & $-0.108^{***}$ & $-0.077^{***}$ & $0.11^{***}$ & $0.002^{***}$ & $0.008^{***}$\\
0.01 &   $0.006^{***}$ & $0.006^{***}$ & $0.014^{***}$ & $0.005^{***}$ & $1.763^{***}$ & $-0.034^{***}$ & $0.014^{***}$ & $0.007^{***}$ & $-0.043^{***}$ & $-0.069^{***}$ & $0.207^{***}$ & $0.006^{***}$ & $0.006^{***}$\\
0.05 &   $-0.01^{***}$ & $-0.012^{***}$ & $0.023^{***}$ & $-0.014^{***}$ & $1.211^{***}$ & $0.041^{***}$ & $0.022^{***}$ & $-0.015^{***}$ & $-0.127^{***}$ & $-0.001^{***}$ & $0.151^{***}$ & $-0.009^{***}$ & $-0.012^{***}$\\
0.1 &   $-0.024^{***}$ & $-0.066^{***}$ & $0.307^{***}$ & $-0.023^{***}$ & $1.453^{***}$ & $0.012^{***}$ & $-0.019^{***}$ & $-0.024^{***}$ & $-0.143^{***}$ & $0.682^{***}$ & $0.179^{***}$ & $-0.027^{***}$ & $-0.023^{***}$\\
0.25 &   $0.009^{***}$ & $-0.002^{***}$ & $0.076^{***}$ & $-0.002^{***}$ & $1.425^{***}$ & $-0.008^{***}$ & $-0.035^{***}$ & $-0.002^{***}$ & $-0.157^{***}$ & $0.01^{***}$ & $-0.023^{***}$ & $0.009^{***}$ & $-0.002^{***}$\\

\hline
\hline 
 & \multicolumn{13}{r}{\textit{Note:} $^{*}$p$<$0.1; $^{**}$p$<$0.05; $^{***}$p$<$0.01} \\

\end{tabular}
}
    \vspace{4mm}
    \caption{\textbf{Simple Experiments GRS, 2000 - 2010.} This table is similar to Table \ref{tab:grs_baseline}, except the table displays results from the decade (120 months) of February 2000 to January 2010 inclusive. }
    \label{tab:grs_2010}
\end{table}

\begin{table}[!t] \centering
    \resizebox{1\textwidth}{!}{\begin{tabular}{@{\extracolsep{5pt}}lccccccccccccc}

\\[-1.8ex]\hline
\hline \\[-1.8ex]
\\[-1.8ex] 
& \multicolumn{13}{c}{Panel A: Log CAPM GRS Ratio, February 2010 - January 2020} \\ \cline{2-14}
\\[-1.8ex] 
\\[-1.8ex] IAF ($\theta_0$) & \multicolumn{1}{c}{BPZ$_F$} & \multicolumn{1}{c}{BPZ$_L$} & \multicolumn{1}{c}{BSV} & \multicolumn{1}{c}{CRW} & \multicolumn{1}{c}{DGU} & \multicolumn{1}{c}{FF3} & \multicolumn{1}{c}{FF6} & \multicolumn{1}{c}{GKX} & \multicolumn{1}{c}{HXZ} & \multicolumn{1}{c}{KNS} & \multicolumn{1}{c}{KPS} & \multicolumn{1}{c}{NN} & \multicolumn{1}{c}{RF}  \\
\\[-1.8ex]
\hline \\[-1.8ex]

0.0001 &   $0.000^{***}$ & $0.107^{***}$ & $-0.225^{***}$ & $-0.019^{***}$ & $0.212^{***}$ & $-0.459^{***}$ & $-0.104^{***}$ & $0.009^{***}$ & $-0.034^{***}$ & $-0.007^{***}$ & $-0.009^{***}$ & $0.000^{***}$ & $-0.018^{***}$\\
0.0005 &   $0.000^{***}$ & $0.068^{***}$ & $-0.162^{***}$ & $0.001^{***}$ & $0.252^{***}$ & $-0.001^{***}$ & $-0.182^{***}$ & $-0.017^{***}$ & $0.064^{***}$ & $-0.32^{***}$ & $-0.015^{***}$ & $0.000^{***}$ & $-0.027^{***}$\\
0.001 &   $0.000^{***}$ & $0.006^{***}$ & $-0.621^{***}$ & $0.000^{***}$ & $0.109^{***}$ & $-0.453^{***}$ & $-0.243^{***}$ & $0.001^{***}$ & $-0.226^{***}$ & $-0.243^{***}$ & $0.075^{***}$ & $0.000^{***}$ & $-0.029^{***}$\\
0.005 &   $0.000^{***}$ & $0.051^{***}$ & $-0.045^{***}$ & $0.000^{***}$ & $0.099^{***}$ & $0.001^{***}$ & $-0.264^{***}$ & $-0.001^{***}$ & $0.114^{***}$ & $-0.092^{***}$ & $-0.007^{***}$ & $0.000^{***}$ & $0.000^{***}$\\
0.01 &   $0.000^{***}$ & $-0.273^{***}$ & $-0.343^{***}$ & $0.000^{***}$ & $0.454^{***}$ & $0.001^{***}$ & $-0.183^{***}$ & $-0.002^{***}$ & $0.107^{***}$ & $0.023^{***}$ & $-0.018^{***}$ & $0.000^{***}$ & $0.000^{***}$\\
0.05 &   $0.000^{***}$ & $-0.467^{***}$ & $0.03^{***}$ & $0.000^{***}$ & $1.304^{***}$ & $0.000^{***}$ & $-0.164^{***}$ & $-0.012^{***}$ & $-0.001^{***}$ & $0.011^{***}$ & $-0.002^{***}$ & $0.000^{***}$ & $0.000^{***}$\\
0.1 &   $0.000^{***}$ & $-0.316^{***}$ & $0.000^{***}$ & $0.000^{***}$ & $1.417^{***}$ & $0.000^{***}$ & $-0.381^{***}$ & $-0.001^{***}$ & $-0.001^{***}$ & $-0.013^{***}$ & $0.05^{***}$ & $0.000^{***}$ & $0.000^{***}$\\
0.25 &   $0.000^{***}$ & $-0.05^{***}$ & $0.279^{***}$ & $0.000^{***}$ & $-0.004^{***}$ & $0.000^{***}$ & $-0.033^{***}$ & $0.001^{***}$ & $0.000^{***}$ & $0.000^{***}$ & $-0.042^{***}$ & $0.000^{***}$ & $0.000^{***}$\\

\hline
\hline \\[-1.8ex]
\\[-1.8ex] 
& \multicolumn{13}{c}{Panel B: Log GRS FF3 Ratio, February 2010 - January 2020} \\ \cline{2-14}
\\[-1.8ex] 
IAF ($\theta_0$) & \multicolumn{1}{c}{BPZ$_F$} & \multicolumn{1}{c}{BPZ$_L$} & \multicolumn{1}{c}{BSV} & \multicolumn{1}{c}{CRW} & \multicolumn{1}{c}{DGU} & \multicolumn{1}{c}{FF3} & \multicolumn{1}{c}{FF6} & \multicolumn{1}{c}{GKX} & \multicolumn{1}{c}{HXZ} & \multicolumn{1}{c}{KNS} & \multicolumn{1}{c}{KPS} & \multicolumn{1}{c}{NN} & \multicolumn{1}{c}{RF}  \\
\\[-1.8ex] 
\hline \\[-1.8ex]

0.0001 &   $0.000^{***}$ & $0.054^{***}$ & $-0.244^{***}$ & $-0.019^{***}$ & $0.22^{***}$ & $-0.456^{***}$ & $-0.137^{***}$ & $0.009^{***}$ & $-0.051^{***}$ & $-0.011^{***}$ & $-0.011^{***}$ & $0.000^{***}$ & $-0.016^{***}$\\
0.0005 &   $0.000^{***}$ & $0.022^{***}$ & $-0.244^{***}$ & $0.001^{***}$ & $0.239^{***}$ & $-0.001^{***}$ & $-0.212^{***}$ & $-0.015^{***}$ & $0.089^{***}$ & $-0.345^{***}$ & $-0.011^{***}$ & $0.000^{***}$ & $-0.022^{***}$\\
0.001 &   $0.000^{***}$ & $-1.444^{***}$ & $-0.754^{***}$ & $0.000^{***}$ & $0.11^{***}$ & $-0.45^{***}$ & $-0.284^{***}$ & $0.001^{***}$ & $-0.223^{***}$ & $-0.225^{***}$ & $0.092^{***}$ & $0.000^{***}$ & $-0.021^{***}$\\
0.005 &   $0.000^{***}$ & $0.029^{***}$ & $-0.089^{***}$ & $0.000^{***}$ & $0.023^{***}$ & $0.000^{***}$ & $-0.294^{***}$ & $-0.001^{***}$ & $0.122^{***}$ & $-0.112^{***}$ & $-0.008^{***}$ & $0.000^{***}$ & $0.000^{***}$\\
0.01 &   $0.000^{***}$ & $-0.314^{***}$ & $-0.397^{***}$ & $0.000^{***}$ & $0.437^{***}$ & $0.001^{***}$ & $-0.212^{***}$ & $-0.002^{***}$ & $0.111^{***}$ & $0.006^{***}$ & $-0.018^{***}$ & $0.000^{***}$ & $0.000^{***}$\\
0.05 &   $0.000^{***}$ & $-0.527^{***}$ & $-2.041$ & $0.000^{***}$ & $1.332^{***}$ & $0.000^{***}$ & $-0.144^{***}$ & $-0.013^{***}$ & $-0.001^{***}$ & $0.01^{***}$ & $-0.004^{***}$ & $0.000^{***}$ & $0.000^{***}$\\
0.1 &   $0.000^{***}$ & $-0.348^{***}$ & $-0.11^{***}$ & $0.000^{***}$ & $1.382^{***}$ & $0.000^{***}$ & $-0.422^{***}$ & $0.000^{***}$ & $0.000^{***}$ & $-0.013^{***}$ & $0.05^{***}$ & $0.000^{***}$ & $0.000^{***}$\\
0.25 &   $0.000^{***}$ & $-0.123^{***}$ & $0.082^{***}$ & $0.000^{***}$ & $-0.005^{***}$ & $0.000^{***}$ & $-0.032^{***}$ & $0.001^{***}$ & $0.000^{***}$ & $0.000^{***}$ & $-0.042^{***}$ & $0.000^{***}$ & $0.000^{***}$\\

\hline
\hline \\[-1.8ex]
\\[-1.8ex] 
& \multicolumn{13}{c}{Panel C: Log GRS FF6 Ratio, February 2010 - January 2020} \\ \cline{2-14}
\\[-1.8ex] 
IAF ($\theta_0$) & \multicolumn{1}{c}{BPZ$_F$} & \multicolumn{1}{c}{BPZ$_L$} & \multicolumn{1}{c}{BSV} & \multicolumn{1}{c}{CRW} & \multicolumn{1}{c}{DGU} & \multicolumn{1}{c}{FF3} & \multicolumn{1}{c}{FF6} & \multicolumn{1}{c}{GKX} & \multicolumn{1}{c}{HXZ} & \multicolumn{1}{c}{KNS} & \multicolumn{1}{c}{KPS} & \multicolumn{1}{c}{NN} & \multicolumn{1}{c}{RF}  \\
\\[-1.8ex] 
\hline \\[-1.8ex]

0.0001 &   $0.001^{***}$ & $-0.009^{***}$ & $-0.266^{***}$ & $-0.02^{***}$ & $0.414^{***}$ & $-0.601^{***}$ & $-0.241^{***}$ & $0.012^{***}$ & $-0.136^{***}$ & $0.103^{***}$ & $0.001^{***}$ & $0.000^{***}$ & $-0.009^{***}$\\
0.0005 &   $0.000^{***}$ & $-0.043^{***}$ & $-0.273^{***}$ & $0.000^{***}$ & $0.3^{***}$ & $-0.001^{***}$ & $-0.269^{***}$ & $-0.011^{***}$ & $-0.132^{***}$ & $-0.338^{***}$ & $0.01^{***}$ & $0.000^{***}$ & $0.013^{***}$\\
0.001 &   $0.000^{***}$ & $-0.125^{***}$ & $-0.775^{***}$ & $0.000^{***}$ & $0.196^{***}$ & $-0.593^{***}$ & $-0.299^{***}$ & $0.001^{***}$ & $-0.185^{***}$ & $-0.242^{***}$ & $0.08^{***}$ & $0.000^{***}$ & $0.01^{***}$\\
0.005 &   $0.000^{***}$ & $0.07^{***}$ & $-0.167^{***}$ & $0.000^{***}$ & $0.411^{***}$ & $0.003^{***}$ & $-0.301^{***}$ & $0.001^{***}$ & $0.083^{***}$ & $-0.127^{***}$ & $-0.002^{***}$ & $0.000^{***}$ & $0.000^{***}$\\
0.01 &   $0.000^{***}$ & $-0.372^{***}$ & $-0.446^{***}$ & $0.000^{***}$ & $0.567^{***}$ & $0.004^{***}$ & $-0.223^{***}$ & $-0.002^{***}$ & $0.091^{***}$ & $-0.035^{***}$ & $-0.024^{***}$ & $0.000^{***}$ & $0.000^{***}$\\
0.05 &   $0.000^{***}$ & $-0.572^{***}$ & $-0.917^{***}$ & $0.000^{***}$ & $0.985^{***}$ & $-0.001^{***}$ & $-0.213^{***}$ & $-0.021^{***}$ & $-0.003^{***}$ & $0.003^{***}$ & $-0.006^{***}$ & $0.000^{***}$ & $0.000^{***}$\\
0.1 &   $0.000^{***}$ & $-0.388^{***}$ & $-0.143^{***}$ & $0.000^{***}$ & $1.379^{***}$ & $0.000^{***}$ & $-0.428^{***}$ & $-0.001^{***}$ & $-0.002^{***}$ & $-0.025^{***}$ & $0.046^{***}$ & $0.000^{***}$ & $0.000^{***}$\\
0.25 &   $0.000^{***}$ & $-0.168^{***}$ & $0.043^{***}$ & $0.000^{***}$ & $-0.011^{***}$ & $0.000^{***}$ & $-0.075^{***}$ & $0.000^{***}$ & $-0.001^{***}$ & $0.000^{***}$ & $-0.089^{***}$ & $0.000^{***}$ & $0.000^{***}$\\

\hline
\hline \\[-1.8ex]
\\[-1.8ex] 
& \multicolumn{13}{c}{Panel D: Log GRS HXZ Ratio, February 2010 - January 2020} \\ \cline{2-14}
\\[-1.8ex] 
IAF ($\theta_0$) & \multicolumn{1}{c}{BPZ$_F$} & \multicolumn{1}{c}{BPZ$_L$} & \multicolumn{1}{c}{BSV} & \multicolumn{1}{c}{CRW} & \multicolumn{1}{c}{DGU} & \multicolumn{1}{c}{FF3} & \multicolumn{1}{c}{FF6} & \multicolumn{1}{c}{GKX} & \multicolumn{1}{c}{HXZ} & \multicolumn{1}{c}{KNS} & \multicolumn{1}{c}{KPS} & \multicolumn{1}{c}{NN} & \multicolumn{1}{c}{RF}  \\
\\[-1.8ex] 
\hline \\[-1.8ex]

0.0001 &   $0.000^{***}$ & $0.068^{***}$ & $-0.253^{***}$ & $-0.021^{***}$ & $0.304^{***}$ & $-0.402^{***}$ & $-0.108^{***}$ & $0.01^{***}$ & $0.051^{***}$ & $0.028^{***}$ & $-0.01^{***}$ & $0.000^{***}$ & $-0.003^{***}$\\
0.0005 &   $0.000^{***}$ & $0.045^{***}$ & $-0.206^{***}$ & $0.001^{***}$ & $0.351^{***}$ & $0.000^{***}$ & $-0.187^{***}$ & $-0.005^{***}$ & $0.049^{***}$ & $-0.288^{***}$ & $0.022^{***}$ & $0.000^{***}$ & $-0.007^{***}$\\
0.001 &   $0.000^{***}$ & $-0.015^{***}$ & $-0.715^{***}$ & $0.000^{***}$ & $0.299^{***}$ & $-0.396^{***}$ & $-0.262^{***}$ & $0.001^{***}$ & $-0.051^{***}$ & $-0.244^{***}$ & $0.095^{***}$ & $0.000^{***}$ & $0.007^{***}$\\
0.005 &   $0.000^{***}$ & $0.018^{***}$ & $-0.083^{***}$ & $0.000^{***}$ & $0.239^{***}$ & $0.002^{***}$ & $-0.281^{***}$ & $-0.001^{***}$ & $0.064^{***}$ & $-0.089^{***}$ & $-0.004^{***}$ & $0.000^{***}$ & $0.000^{***}$\\
0.01 &   $0.000^{***}$ & $-0.294^{***}$ & $-0.368^{***}$ & $0.000^{***}$ & $0.53^{***}$ & $0.003^{***}$ & $-0.173^{***}$ & $0.000^{***}$ & $0.051^{***}$ & $-0.01^{***}$ & $-0.02^{***}$ & $0.000^{***}$ & $0.000^{***}$\\
0.05 &   $0.000^{***}$ & $-0.512^{***}$ & $-0.01^{***}$ & $0.000^{***}$ & $1.123^{***}$ & $0.000^{***}$ & $-0.172^{***}$ & $-0.003^{***}$ & $0.005^{***}$ & $0.001^{***}$ & $-0.006^{***}$ & $0.000^{***}$ & $0.000^{***}$\\
0.1 &   $0.000^{***}$ & $27.721^{***}$ & $-0.08^{***}$ & $0.000^{***}$ & $1.422^{***}$ & $0.000^{***}$ & $-0.386^{***}$ & $0.000^{***}$ & $0.001^{***}$ & $-0.013^{***}$ & $0.048^{***}$ & $0.000^{***}$ & $0.000^{***}$\\
0.25 &   $0.000^{***}$ & $-0.023^{***}$ & $-0.287^{***}$ & $0.000^{***}$ & $-0.006^{***}$ & $0.000^{***}$ & $-0.064^{***}$ & $0.000^{***}$ & $0.001^{***}$ & $0.000^{***}$ & $-0.02^{***}$ & $0.000^{***}$ & $0.000^{***}$\\

\hline
\hline 
 & \multicolumn{13}{r}{\textit{Note:} $^{*}$p$<$0.1; $^{**}$p$<$0.05; $^{***}$p$<$0.01} \\

\end{tabular}
}
    \vspace{4mm}
    \caption{\textbf{Recursive Experiments GRS.} This table is similar to Table \ref{tab:grs_baseline}, except the results represent the log ratio of the GRS statistics with the experiments where the statistical arbitrageur fraction is updated recursively. }
    \label{tab:grs_recursive}
\end{table}

\begin{table}[!t] \centering
    \resizebox{0.9\textwidth}{!}{\begin{tabular}{@{\extracolsep{5pt}}lccccccccccccc}

\\[-1.8ex]\hline
\hline \\[-1.8ex]
\\[-1.8ex] 
& \multicolumn{13}{c}{Panel A: Log CAPM GRS Ratio, February 2010 - January 2020} \\ \cline{2-14}
\\[-1.8ex] 
\\[-1.8ex] IAF ($\theta_0$) & \multicolumn{1}{c}{BPZ$_F$} & \multicolumn{1}{c}{BPZ$_L$} & \multicolumn{1}{c}{BSV} & \multicolumn{1}{c}{CRW} & \multicolumn{1}{c}{DGU} & \multicolumn{1}{c}{FF3} & \multicolumn{1}{c}{FF6} & \multicolumn{1}{c}{GKX} & \multicolumn{1}{c}{HXZ} & \multicolumn{1}{c}{KNS} & \multicolumn{1}{c}{KPS} & \multicolumn{1}{c}{NN} & \multicolumn{1}{c}{RF}  \\
\\[-1.8ex]
\hline \\[-1.8ex]

0.0001 &   $-0.1^{***}$ & $-0.25^{***}$ & $-0.281^{***}$ & $0.006^{***}$ & $-0.039^{***}$ & $0.004^{***}$ & $-0.059^{***}$ & $-0.203^{***}$ & $-0.03^{***}$ & $-0.151^{***}$ & $-0.002^{***}$ & $-0.02^{***}$ & $-0.01^{***}$\\
0.0005 &   $-0.001^{***}$ & $0.055^{***}$ & $0.338^{***}$ & $-0.028^{***}$ & $0.069^{***}$ & $-0.088^{***}$ & $0.3^{***}$ & $0.116^{***}$ & $-0.002^{***}$ & $0.136^{***}$ & $0.018^{***}$ & $-0.088^{***}$ & $0.001^{***}$\\
0.001 &   $0.000^{***}$ & $0.15^{***}$ & $-0.306^{***}$ & $-0.002^{***}$ & $0.199^{***}$ & $-0.255^{***}$ & $0.173^{***}$ & $0.01^{***}$ & $0.029^{***}$ & $1.061^{***}$ & $0.052^{***}$ & $0.001^{***}$ & $0.002^{***}$\\
0.005 &   $-0.001^{***}$ & $0.002^{***}$ & $0.002^{***}$ & $-0.003^{***}$ & $1.529^{***}$ & $0.092^{***}$ & $0.153^{***}$ & $0.004^{***}$ & $0.162^{***}$ & $-0.033^{***}$ & $-0.29^{***}$ & $0.001^{***}$ & $-0.16^{***}$\\
0.01 &   $0.004^{***}$ & $0.004^{***}$ & $0.001^{***}$ & $0.007^{***}$ & $2.008^{***}$ & $-0.224^{***}$ & $-0.026^{***}$ & $0.004^{***}$ & $-0.056^{***}$ & $0.088^{***}$ & $0.093^{***}$ & $0.003^{***}$ & $-0.203^{***}$\\
0.05 &   $0.023^{***}$ & $0.023^{***}$ & $0.429^{***}$ & $0.024^{***}$ & $2.164^{***}$ & $0.032^{***}$ & $0.003^{***}$ & $0.023^{***}$ & $-0.032^{***}$ & $0.031^{***}$ & $0.076^{***}$ & $0.023^{***}$ & $0.023^{***}$\\
0.1 &   $0.029^{***}$ & $0.331^{***}$ & $0.502^{***}$ & $0.041^{***}$ & $2.183^{***}$ & $0.011^{***}$ & $0.022^{***}$ & $0.029^{***}$ & $-0.234^{***}$ & $0.326^{***}$ & $0.15^{***}$ & $0.028^{***}$ & $0.028^{***}$\\
0.25 &   $-0.024^{***}$ & $-0.021^{***}$ & $1.341^{***}$ & $-0.023^{***}$ & $2.13^{***}$ & $-0.218^{***}$ & $0.031^{***}$ & $0.018^{***}$ & $-0.031^{***}$ & $0.823^{***}$ & $-0.294^{***}$ & $-0.018^{***}$ & $-0.021^{***}$\\

\hline
\hline \\[-1.8ex]
\\[-1.8ex] 
& \multicolumn{13}{c}{Panel B: Log GRS FF3 Ratio, February 2010 - January 2020} \\ \cline{2-14}
\\[-1.8ex] 
IAF ($\theta_0$) & \multicolumn{1}{c}{BPZ$_F$} & \multicolumn{1}{c}{BPZ$_L$} & \multicolumn{1}{c}{BSV} & \multicolumn{1}{c}{CRW} & \multicolumn{1}{c}{DGU} & \multicolumn{1}{c}{FF3} & \multicolumn{1}{c}{FF6} & \multicolumn{1}{c}{GKX} & \multicolumn{1}{c}{HXZ} & \multicolumn{1}{c}{KNS} & \multicolumn{1}{c}{KPS} & \multicolumn{1}{c}{NN} & \multicolumn{1}{c}{RF}  \\
\\[-1.8ex] 
\hline \\[-1.8ex]

0.0001 &   $-0.104^{***}$ & $-0.218^{***}$ & $-0.285^{***}$ & $0.006^{***}$ & $-0.038^{***}$ & $-0.003^{***}$ & $-0.079^{***}$ & $0.039^{***}$ & $-0.007^{***}$ & $-0.174^{***}$ & $-0.003^{***}$ & $-0.026^{***}$ & $-0.01^{***}$\\
0.0005 &   $-0.001^{***}$ & $-0.003^{***}$ & $0.295^{***}$ & $-0.028^{***}$ & $0.06^{***}$ & $-0.101^{***}$ & $0.247^{***}$ & $0.108^{***}$ & $0.026^{***}$ & $0.087^{***}$ & $0.022^{***}$ & $-0.08^{***}$ & $0.001^{***}$\\
0.001 &   $0.000^{***}$ & $-0.685^{***}$ & $-0.331^{***}$ & $-0.002^{***}$ & $0.219^{***}$ & $-0.27^{***}$ & $0.17^{***}$ & $0.007^{***}$ & $0.032^{***}$ & $1.116^{***}$ & $0.067^{***}$ & $0.001^{***}$ & $0.002^{***}$\\
0.005 &   $0.000^{***}$ & $0.003^{***}$ & $0.005^{***}$ & $-0.002^{***}$ & $1.417^{***}$ & $0.127^{***}$ & $0.13^{***}$ & $0.005^{***}$ & $0.133^{***}$ & $-0.032^{***}$ & $-0.317^{***}$ & $0.001^{***}$ & $-0.158^{***}$\\
0.01 &   $0.006^{***}$ & $0.005^{***}$ & $0.001^{***}$ & $0.008^{***}$ & $1.726^{***}$ & $-2.122$ & $12.305^{***}$ & $0.005^{***}$ & $-0.03^{***}$ & $0.093^{***}$ & $0.127^{***}$ & $0.004^{***}$ & $-0.184^{***}$\\
0.05 &   $0.029^{***}$ & $0.029^{***}$ & $0.387^{***}$ & $0.029^{***}$ & $1.786^{***}$ & $0.046^{***}$ & $0.017^{***}$ & $0.029^{***}$ & $-0.062^{***}$ & $0.038^{***}$ & $0.061^{***}$ & $0.029^{***}$ & $0.029^{***}$\\
0.1 &   $0.037^{***}$ & $-0.16^{***}$ & $0.456^{***}$ & $0.049^{***}$ & $1.532^{***}$ & $0.01^{***}$ & $0.024^{***}$ & $0.037^{***}$ & $-0.273^{***}$ & $0.303^{***}$ & $0.171^{***}$ & $0.036^{***}$ & $0.036^{***}$\\
0.25 &   $-0.011^{***}$ & $-0.007^{***}$ & $1.301^{***}$ & $-0.009^{***}$ & $1.08^{***}$ & $-0.271^{***}$ & $0.053^{***}$ & $0.03^{***}$ & $-0.006^{***}$ & $0.784^{***}$ & $-0.349^{***}$ & $-0.004^{***}$ & $-0.007^{***}$\\

\hline
\hline \\[-1.8ex]
\\[-1.8ex] 
& \multicolumn{13}{c}{Panel C: Log GRS FF6 Ratio, February 2010 - January 2020} \\ \cline{2-14}
\\[-1.8ex] 
IAF ($\theta_0$) & \multicolumn{1}{c}{BPZ$_F$} & \multicolumn{1}{c}{BPZ$_L$} & \multicolumn{1}{c}{BSV} & \multicolumn{1}{c}{CRW} & \multicolumn{1}{c}{DGU} & \multicolumn{1}{c}{FF3} & \multicolumn{1}{c}{FF6} & \multicolumn{1}{c}{GKX} & \multicolumn{1}{c}{HXZ} & \multicolumn{1}{c}{KNS} & \multicolumn{1}{c}{KPS} & \multicolumn{1}{c}{NN} & \multicolumn{1}{c}{RF}  \\
\\[-1.8ex] 
\hline \\[-1.8ex]

0.0001 &   $-0.195^{***}$ & $-0.158^{***}$ & $-0.191^{***}$ & $0.006^{***}$ & $0.000^{***}$ & $-0.001^{***}$ & $-0.026^{***}$ & $0.026^{***}$ & $-0.031^{***}$ & $-0.14^{***}$ & $-0.002^{***}$ & $-0.019^{***}$ & $-0.017^{***}$\\
0.0005 &   $0.000^{***}$ & $0.066^{***}$ & $0.227^{***}$ & $-0.029^{***}$ & $0.281^{***}$ & $-0.171^{***}$ & $0.264^{***}$ & $0.099^{***}$ & $-0.018^{***}$ & $0.038^{***}$ & $0.01^{***}$ & $-0.049^{***}$ & $-0.003^{***}$\\
0.001 &   $-0.001^{***}$ & $0.009^{***}$ & $-0.342^{***}$ & $-0.006^{***}$ & $0.314^{***}$ & $-0.255^{***}$ & $0.14^{***}$ & $-0.006^{***}$ & $-0.031^{***}$ & $1.1^{***}$ & $0.073^{***}$ & $0.001^{***}$ & $0.006^{***}$\\
0.005 &   $0.000^{***}$ & $0.003^{***}$ & $-0.011^{***}$ & $-0.004^{***}$ & $1.059^{***}$ & $-0.029^{***}$ & $0.172^{***}$ & $0.01^{***}$ & $-0.066^{***}$ & $-0.023^{***}$ & $-0.509^{***}$ & $0.003^{***}$ & $-0.232^{***}$\\
0.01 &   $0.006^{***}$ & $0.005^{***}$ & $0.007^{***}$ & $0.006^{***}$ & $1.239^{***}$ & $-0.26^{***}$ & $-0.084^{***}$ & $0.006^{***}$ & $-0.138^{***}$ & $0.069^{***}$ & $0.156^{***}$ & $0.005^{***}$ & $-0.321^{***}$\\
0.05 &   $0.032^{***}$ & $0.033^{***}$ & $0.27^{***}$ & $0.036^{***}$ & $1.161^{***}$ & $-0.016^{***}$ & $-0.015^{***}$ & $0.033^{***}$ & $-0.157^{***}$ & $0.059^{***}$ & $0.183^{***}$ & $0.036^{***}$ & $0.033^{***}$\\
0.1 &   $0.058^{***}$ & $-0.245^{***}$ & $0.426^{***}$ & $0.073^{***}$ & $0.989^{***}$ & $0.02^{***}$ & $0.043^{***}$ & $0.057^{***}$ & $-0.354^{***}$ & $0.244^{***}$ & $0.134^{***}$ & $0.058^{***}$ & $0.057^{***}$\\
0.25 &   $0.015^{***}$ & $0.018^{***}$ & $1.407^{***}$ & $0.013^{***}$ & $0.496^{***}$ & $-0.319^{***}$ & $0.027^{***}$ & $0.058^{***}$ & $-0.13^{***}$ & $0.739^{***}$ & $-0.334^{***}$ & $0.022^{***}$ & $0.019^{***}$\\

\hline
\hline \\[-1.8ex]
\\[-1.8ex] 
& \multicolumn{13}{c}{Panel D: Log GRS HXZ Ratio, February 2010 - January 2020} \\ \cline{2-14}
\\[-1.8ex] 
IAF ($\theta_0$) & \multicolumn{1}{c}{BPZ$_F$} & \multicolumn{1}{c}{BPZ$_L$} & \multicolumn{1}{c}{BSV} & \multicolumn{1}{c}{CRW} & \multicolumn{1}{c}{DGU} & \multicolumn{1}{c}{FF3} & \multicolumn{1}{c}{FF6} & \multicolumn{1}{c}{GKX} & \multicolumn{1}{c}{HXZ} & \multicolumn{1}{c}{KNS} & \multicolumn{1}{c}{KPS} & \multicolumn{1}{c}{NN} & \multicolumn{1}{c}{RF}  \\
\\[-1.8ex] 
\hline \\[-1.8ex]

0.0001 &   $-0.106^{***}$ & $-0.199^{***}$ & $-0.269^{***}$ & $0.006^{***}$ & $-0.011^{***}$ & $0.013^{***}$ & $-0.095^{***}$ & $-0.134^{***}$ & $0.077^{***}$ & $-0.156^{***}$ & $-0.001^{***}$ & $0.007^{***}$ & $-0.008^{***}$\\
0.0005 &   $-0.002^{***}$ & $0.015^{***}$ & $0.3^{***}$ & $-0.033^{***}$ & $0.085^{***}$ & $-0.059^{***}$ & $0.25^{***}$ & $0.091^{***}$ & $0.122^{***}$ & $0.101^{***}$ & $0.006^{***}$ & $-0.07^{***}$ & $0.004^{***}$\\
0.001 &   $0.000^{***}$ & $0.138^{***}$ & $-0.369^{***}$ & $-0.006^{***}$ & $0.253^{***}$ & $-0.194^{***}$ & $0.135^{***}$ & $0.011^{***}$ & $0.132^{***}$ & $0.983^{***}$ & $0.062^{***}$ & $0.001^{***}$ & $0.003^{***}$\\
0.005 &   $-0.002^{***}$ & $0.000^{***}$ & $0.014^{***}$ & $-0.003^{***}$ & $1.354^{***}$ & $0.199^{***}$ & $0.135^{***}$ & $0.004^{***}$ & $0.068^{***}$ & $-0.025^{***}$ & $-0.197^{***}$ & $-0.001^{***}$ & $-0.092^{***}$\\
0.01 &   $0.001^{***}$ & $0.000^{***}$ & $-0.004^{***}$ & $0.000^{***}$ & $1.613^{***}$ & $-0.255^{***}$ & $-0.047^{***}$ & $0.000^{***}$ & $0.091^{***}$ & $0.076^{***}$ & $0.125^{***}$ & $-0.001^{***}$ & $-0.24^{***}$\\
0.05 &   $0.003^{***}$ & $0.003^{***}$ & $0.381^{***}$ & $0.004^{***}$ & $1.612^{***}$ & $0.042^{***}$ & $-0.005^{***}$ & $0.003^{***}$ & $-0.054^{***}$ & $-0.02^{***}$ & $0.025^{***}$ & $0.003^{***}$ & $0.003^{***}$\\
0.1 &   $0.000^{***}$ & $-0.178^{***}$ & $0.481^{***}$ & $0.007^{***}$ & $1.458^{***}$ & $0.012^{***}$ & $0.009^{***}$ & $0.001^{***}$ & $-0.178^{***}$ & $0.271^{***}$ & $0.11^{***}$ & $0.000^{***}$ & $-0.001^{***}$\\
0.25 &   $-0.04^{***}$ & $-0.04^{***}$ & $1.22^{***}$ & $-0.042^{***}$ & $1.089^{***}$ & $-0.262^{***}$ & $0.014^{***}$ & $0.008^{***}$ & $0.036^{***}$ & $0.819^{***}$ & $-0.363^{***}$ & $-0.038^{***}$ & $-0.04^{***}$\\

\hline
\hline 
 & \multicolumn{13}{r}{\textit{Note:} $^{*}$p$<$0.1; $^{**}$p$<$0.05; $^{***}$p$<$0.01} \\

\end{tabular}
}
    \vspace{4mm}
    \caption{\textbf{Double Elasticity Experiments GRS.} This table is similar to Table \ref{tab:grs_recursive}, except the elasticity is doubled for the non-HSA investors in the counterfactual exercises. }
    \label{tab:grs_elasmult2}
\end{table}

\end{document}